\crefname{section}{Sect.}{Sect(s).}
\crefname{figure}{Fig.}{Fig(s).}
\crefname{theorem}{Thm.}{Thm(s).}
\crefname{definition}{Def.}{Def(s).}
\crefname{proposition}{Prop.}{Prop(s).}
\crefname{corollary}{Cor.}{Cor(s).}
\crefname{example}{Ex.}{Ex(s).}
\crefname{lemma}{Lemma}{Lemmas}
\crefname{example}{Ex.}{Ex(s).}
\definecolor{mygreen}{rgb}{0,0.6,0}
\tiny\color[gray]{0.3},
\it\color{mygreen},
\definecolor{mygreen}{rgb}{0,0.6,0}
\tiny\color[gray]{0.3},%
\it\color{mygreen},%
\title{Type-Based Analysis for Session Inference (Extended Abstract)%
\thanks{%
This research was supported, in part, by Science Foundation Ireland grant 13/RC/2094.
The first author was supported by MSR (MRL 2011-039).
}
}
\titlerunning{Type-Based Analysis for Session Inference}
\author{Carlo Spaccasassi \and Vasileios Koutavas}
\authorrunning{C. Spaccasassi \and V. Koutavas}
\institute{Trinity College Dublin}
\begin{document}

\maketitle

\begin{abstract}
  We propose a type-based analysis to infer the session protocols of channels in an ML-like concurrent functional language.
Combining and extending well-known techniques, we develop a type-checking system that separates the underlying ML type system from
the typing of sessions. Without using linearity, our system guarantees communication safety and partial lock freedom. It also
supports provably complete session inference for finite sessions with no programmer annotations. We exhibit the usefulness of our system with
interesting examples, including one which is not typable in substructural type systems.

\end{abstract}

  \section{Introduction}
   \label{sec:intro}

Concurrent programming often requires processes to communicate according to
intricate protocols. In mainstream programming languages these protocols are
encoded implicitly in the program's control flow, and no support is available
for verifying their correctness.

\begin{hide}
Honda \cite{Honda93} and others (e.g. \cite{Honda94,HondaEtal98,GayHole99})
\end{hide}
Honda \cite{Honda93} first
suggested the use of \emph{binary session types} to explicitly describe and check
protocols over communication channels with two endpoints.
Fundamentally, session type systems guarantee that a program respects
the order of communication events (session fidelity) and message types (communication safety)
described in a channel's session type.
A number of session type systems (e.g.,
\cite{Caires2010,PhenningSessionMonad,CastagnaEtal09})
also ensure
that processes fully execute the protocols of their open endpoints, as long as they do
not diverge
or block on opening new
sessions (partial lock freedom).

To date, binary session type disciplines have been developed for various process
calculi and high-level programming languages (see \cite{BETTYwg1} for an
overview) by following one of two main programming language design approaches:
using a single substructural type system for both session and traditional
typing
\begin{hide}
\cite{GV,Vasconcelos2006,Wadler2012,OOSessions06,SessionJava},
\end{hide}
\cite{GV,Vasconcelos2006, HondaEtal98, 
Wadler2012},
or using monads to separate the two
\cite{PhenningSessionMonad,PucellaT08}.

In this paper we propose a third design approach which uses \emph{effects}. Similar to previous work, our approach enables the
embedding of session types in programming languages with sophisticated type systems. Here we develop a high-level language where
intricate protocols of communication can be programmed and checked statically (\cref{sec:motivating-examples}). Contrary to both
monads and substructural type systems, our approach allows pure code to call library code with communication effects, without
having to refactor the pure code (e.g., to embed it in a monad or pass continuation channels through it---see \cref{ex:db}). We
apply our approach to \lang, a core of ML with session communication (\cref{sec:lang}).

Our approach separates traditional typing from session typing in a two-level system, which follows the principles of \emph{typed
based analysis} \cite{PalsbergTBA}. The first level employs a type-and-effect system, which adapts and extends the one of Amtoft,
Nielson and Nielson \cite{ANN} to session communication (\cref{sec:stage-1}). At this level the program is typed against an
ML type and a \emph{behaviour} which abstractly describes program structure and communication. Session protocols are not
considered here---they are entirely checked at the second level. Thus, each endpoint is given type $\tses{\rho}$, where $\rho$
statically approximates its source. The benefit of extending~\cite{ANN} is that we obtain a complete behaviour
inference algorithm, which extracts a behaviour for every program respecting ML types.

At the second level, our system checks that a behaviour, given an operational semantics, complies with the session types of channels and endpoints
(\cref{sec:session-types}). The session discipline realised here is inspired by the work of Castagna et al.~\cite{CastagnaEtal09}. This discipline guarantees that programs comply with session fidelity
and communication safety, but also, due to stacked interleaving of sessions, partial lock freedom.
However, one of the main appeals of our session typing discipline is that it enables a provably
\emph{complete session types inference} from behaviours which, with behaviour inference, gives us a complete method for
session inference from \lang, without programmer annotations (\cref{sec:inference}).
The two levels of our system only interact through behaviours, which we envisage will allow us to develop front-ends for
different languages and back-ends for different session disciplines.
%

To simplify the technical development we consider only sessions of finite interactions. However, we allow recursion in the source
language, as long as it is \emph{confined}: recursive code may only open new sessions and completely consume them
(see \cref{sec:motivating-examples}). In \cref{sec:extensions} we discuss an extension to recursive
types. Related work and conclusions can be found in \cref{sec:relwork}. Details missing from this extended abstract can be found
in the appendix for the benefit of the reviewers.

\begin{hide}
Concurrent programming often requires processes to communicate according to
intricate protocols. In mainstream programming languages these protocols are
encoded implicitly in the program's control flow, and no support is available
for verifying their correctness.

Honda \cite{Honda93} and then others (e.g. \cite{
})
suggested the use of \emph{binary session types} to make communication
protocols explicit and checkable in program typing. Binary session types
describe protocols over communication channels with two endpoints, whereby
values of different types can be exchanged, and branching choices can be
offered (external choice) and selected (internal choice). Since then, session
type disciplines have been developed for various process calculi and high-level
programming languages (see \cite{BETTYwg1} for an overview).

In this paper we are interested in providing a facility for static protocol
checking in existing high-level programming languages, and in particular ML, by
extending such languages with binary session types. We aim to an extension that
requires almost no programmer annotations, and a \emph{provably complete}
session type inference algorithm makes precise the few annotations needed.

Existing session type systems ensure that well-typed programs \emph{partially}
adhere to communication protocols, a property expressed as a subject reduction
lemma. A process $P$, containing an endpoint $x$ with session type
``$\inp \mathtt{int}.\eta $,'' can rely on the fact that \emph{if its environment
sends a value on $x$} then that value will be of type $\mathtt{int}$. Moreover,
$P$ may only use $x$ to receive---not send---a value, but it does not have to do
so. This is because the process can deadlock due to incorrect communication
interleaving, diverge, or block while waiting for a partner to start a new
connection. If the communication on $x$ happens, then $\eta$ describes the
remaining protocol for this endpoint.
Therefore a process partially adheres to a protocol up to divergence, deadlock
and absence of communication partners.

A number of session type systems (e.g.,
\cite{PhenningSessionMonad,
CastagnaEtal09
}) also remove the possibility of a deadlock, providing
a \emph{weak progress} guarantee. This roughly ensures that processes will
indeed fully execute the protocols of their open endpoints, as long as they do
not diverge\footnote{Some systems also remove the possibility of divergence by
considering only strongly normalising programs (e.g.,
\cite{PhenningSessionMonad}).} and they do not block for opening a new
connection. Even with these, arguably reasonable, exceptions, such systems are
particularly useful when communication is used to control a critical resource.

Here we also develop a session type system that guarantees weak progress.
Our approach is modular, allowing adaptations to weaker typing disciplines.

Many current efforts to developing session types for high-level languages use a
substructural type system which combines both expression and session typing
\cite{GV,Vasconcelos2006,Wadler2012
}. An alternative
approach is through the use of \emph{monads}
\cite{PhenningSessionMonad,PucellaT08}.

In this paper we put forward a new approach for adding session types to
high-level programming languages, which we present in a core of ML\@. Rather than
typing sessions directly on the source code, our approach is based on first
extracting the \emph{communication effect} of program expressions and then
imposing a session type discipline on this effect. This approach has the
following benefits:
\begin{itemize}
  \item It enables session types inference with minimal annotations which is
  sound but also provably complete. To our
    knowledge this is the first provably complete session types inference
    algorithm for ML\@. The only annotation necessary differentiates recursive
    functions that unfold recursive session types from those that use only
    finite session types.


  \item The session type system is modular. By decoupling the session type
  discipline from the precise syntax of the source
    language, we can in principle change one while keeping the other. We use our
    approach to develop a session type system with weak progress guarantees,
    similar to the systems mentioned previously; however with simple
    modifications our system can be weakened to resemble more the session type
    systems without such guarantees.
    Moreover, our approach could lead to a session type system for programs
    written in multiple programming languages.

  \item With this approach we are able to type programs where client code,
  oblivious to session communication, uses library
    functions containing open session endpoints (see~\ref{ex:db}). A
    substructural session type system would assign linear types to these
    functions, not allowing client code to use them more than once, thus
    requiring the restructuring of client code to thread linear endpoints to
    library function calls. With a monadic session type system, client code
    would need to be included in a monad, posing the challenge of composition
    with language features such as exceptions and mutable state.

\end{itemize}
\end{hide}
\begin{hide} 
To extract the communication effect of ML programs we adapt and extend the work
on type-and-effect systems developed by Amtoft, Nielson and Nielson
\cite{ANN,NNbehaviours}. Our extension provides a method for dealing with
aliasing of session endpoints using regions, obviating the need for a
substructural type system for ML\@. Furthermore, we develop a session type
discipline for communication effects inspired by Castagna et al.\
\cite{CastagnaEtal09} and show that it guarantees weak progress.

Here we focus on a pure core of ML\@. However, we believe that the techniques we use would work equally well in other high-level
languages---we leave this for future work. The language employs Hindley-Milner polymorphism and primitives for sending pure
monomorphic values, internal and external choices through choice labels $L$, and session
delegation and resumption. The ability to send functions containing communication effects would be an implicit form of delegation
and would unnecessarily complicate the exposition of our system.

For simplicity of presentation we develop our system for finite sessions and then show how it can be extended with recursive
sessions. However, even with finite session types, we use a novel typing for a class of useful, non-pure recursive functions,
which we call \emph{confined functions}. For example consider the function of type
$\tfun{\tunit}{\tunit}{}$:
\begin{equation}
 \efix{f}{\_}{\nbox{%
          \elet{z}{\eapp{\erequest{}{\keyw{init}}}{\cunit}}
          {\ematch{z}{%
              L_0 \Rightarrow \eapp{f}{\cunit},~~
              L_1 \Rightarrow \cunit
            }{}
           }}
           \tag{$\star$}\label{eq:intro-example}
}\end{equation}
When called, the function invokes $\erequest{}{\keyw{init}}$ requesting to open a session on a global channel $\keyw{init}$.
When the request is accepted by a partner process, a \emph{session endpoint} is bound to $z$ through which the processes can
communicate.
The process presents its partner with choices $L_0$ and $L_1$. If $L_0$ is chosen, the
function recurs; if $L_1$ is chosen the function terminates. In both cases no further communication on endpoint $z$ will occur,
and therefore $z$ can be given the finite session type $\Sigma\{\inp L_0. \tend,~\inp L_1. \tend\}$. This behaviour is representative of
systems that run a finite protocol an arbitrary number of times.
\end{hide}

\begin{hide}
The following section presents interesting examples using session communication that we would like to type in \lang.
\cref{sec:lang} gives the syntax and operational semantics of the language.
\cref{sec:types} presents the details of our typing system and the type
soundness result.
\cref{sec:inference} describes our type inference algorithm and its soundness
and completeness.
\cref{sec:extensions} discusses the extension of
our type system to a form of recursive session types. \cref{sec:relwork}
presents related work and conclusions.
Omitted details can be found in the Appendix.
\end{hide}

\section{Motivating Examples}\label{sec:motivating-examples}
\begin{hide}
Before presenting the details of the type system, we give and discuss two example implementations of a swap service, which
symmetrically exchanges values between pairs of processes connecting to it, and a database library which allows clients to send
queries to a concurrent database server. These are typical examples where process communication is used in programming.
\end{hide}

%

\begin{example}[A Swap Service]
\label{ex:swap1}
A coordinator process
uses the primitive $\eaccept{}{\keyw{swp}}$ to accept two connections on a channel \keyw{swp}
(we assume functions $\eaccept{}{c}$ and $\erequest{}{c}$ for every channel $c$),
opening two concurrent sessions with processes that want to exchange values.
It then coordinates the exchange and recurs.

\begin{minipage}{.45\textwidth}
\begin{lstlisting}[backgroundcolor=\color{white},numbers=none]
let fun coord(_) =
  let val p1 = $ \eaccept{}{\keyw{swp}}$ ()
      val x1 = $ \erecv{}{}$ p1
      val p2 = $ \eaccept{}{\keyw{swp}}$ ()
      val x2 = $ \erecv{}{}$ p2
  in $ \esend{}{}$ p2 x1; $\esend{}{} $ p1 x2; coord ()
in spawn coord;
\end{lstlisting}
\end{minipage}
\hfill
\begin{minipage}{.45\textwidth}
\begin{lstlisting}[backgroundcolor=\color{white},numbers=none]
let fun swap(x) =
  let val p = $\erequest{}{\keyw{swp}}$ ()
  in $\esend{}{} $ p x; $ \erecv{}{}$ p
in spawn (fn _ =$>$ swap 1);
   spawn (fn _ =$>$ swap 2);
\end{lstlisting}
\vspace{2em}
\end{minipage}

Each endpoint the coordinator receives from calling $\eaccept{}{\keyw{swp}}$ are used
according to the session type
$%
\inp T.\outp T.\tend
$.
This says that, on each endpoint, the coordinator will first read a value type $T$  ($\inp T$), then output a
value of the same type ($\outp T$) and close the endpoint ($\tend$). The interleaving of sends and receives
on the two endpoints achieves the desired swap effect.

\begin{hide}
To use this service we simply apply the
following
\textsf{swap} function to the value to be exchanged.
\[
  \begin{array}{@{}l@{\,}l@{\,}l@{}}
    \klet & \mathit{swap} = & \efunc{x}{\nbox{(
                \elet{z}{\eapp{\erequest{}{\keyw{swp}}}{\cunit}}{%
                \eapp{\esend{z}}{x};~
                \erecv{}{z})
    }}} \\
      \keyw{in} & \ldots
  \end{array}
\]
When applied to \textsf{x}, \textsf{swap} requests a connection on
$\keyw{swp}$,
receiving a session endpoint \textsf{p}, then sends \textsf{x} on \textsf{p}
and finally receives a value on
\textsf{p} which is returned as the result of the
function.
\end{hide}
Function $\keyw{swap}: \tfun{\tint}{T'}{}$
calls $\erequest{}{\keyw{swp}}$
and receives and endpoint which
is used
according to the session type $\outp \tint.\inp T'.\tend
$.
 By comparing the two session types above we
can see that the coordinator and the swap service can communicate without type errors, and indeed are typable, when $T=\tint=T'$.
Our type inference
algorithm automatically deduces the two session types from this code.

Because $\mathtt{swp}$ is a global channel, ill-behaved client code can
connect to it too:

\begin{minipage}{\textwidth}
\begin{lstlisting}[backgroundcolor=\color{white},numbers=none]
let val p1 = $\erequest{}{\keyw{swp}}$ () in send p1 1;
let val p2 = $\erequest{}{\keyw{swp}}$ () in send p2 2;
let val (x1$,$ x2) = ($\erecv{}{} $ p1$, \erecv{}{} $ p2) in $e_\mathsf{cl}$
\end{lstlisting}
\end{minipage}
\begin{hide}
\[
\nbox{
\elet{z_1}{%
  \eapp{\erequest{}{\keyw{swp}}}{\cunit}
}{\\
  \eapp{\esend{z_1}}{1};\\
  \elet{z_2}{%
    \eapp{\erequest{}{\keyw{swp}}}{\cunit}
  }{\\
    \eapp{\esend{z_2}}{2};\\
    \elet{x_1}{\erecv{}{z_1}}{\\
      \elet{x_2}{\erecv{}{z_2}}{\\
        e_\mathsf{cl}
      }
    }
  }
}
}
\]
\end{hide}
%
This client causes a deadlock, because the coordinator first sends
on \keyw{p2} and then on \keyw{p1}, but this code orders the corresponding receives in
reverse. The interleaving of sessions in this client is rejected by our type system because it is not \emph{well-stacked}:
$\erecv{}{\keyw{p1}}$ is performed before the most recent endpoint (\keyw{p2}) is closed. The interleaving in the coordinator, on
the other hand, is well-stacked.


\end{example}

\begin{example}[Delegation for Efficiency]
\label{ex:swap2}

In the previous example the coordinator is a bottleneck when exchanged values are large.
A more efficient implementation delegates exchange to the clients:

\begin{minipage}{.45\textwidth}
\begin{lstlisting}[backgroundcolor=\color{white},numbers=none]
let fun coord(_) =
  let val p1 = $\eaccept{}{\keyw{swp}}$ ()
  in $\eselectnew{\keyw{SWAP}}{}$ p1;
    let val p2 = $\eaccept{}{\keyw{swp}}$
    in $ \eselectnew{}{\keyw{LEAD}} $ p2;
       $ \edeleg{}{}$ p2 p1;
       coord()
\end{lstlisting}
\end{minipage}
\hfill
\begin{minipage}{.45\textwidth}
\begin{lstlisting}[backgroundcolor=\color{white},numbers=none]
let fun swap(x) =
  let val p = $ \erequest{}{\keyw{swp}} $ $ \cunit $
  in case p {
    $\keyw{SWAP}$: $\esend{}{}$ p x; $\erecv{}{} $ p
    $\keyw{LEAD}$: let val q = $ \eresume{}{}$ p
             val y = $\erecv{}{}$ q
           in $ \esend{}{}$ q x; y }
\end{lstlisting}
\end{minipage}

Function \keyw{swap} again connects to the coordinator over channel
$\keyw{swp}$, but now offers two choices with the labels $\keyw{SWAP}$ and $\keyw{LEAD}$.
If the coordinator selects the former, the swap method proceeds as
before; if it selects the latter, \keyw{swap}
resumes (i.e., inputs) another endpoint, binds
it to \keyw{q}, and performs a \keyw{rcv} and then a \keyw{send} on $\keyw{q}$.
The new coordinator accepts two sessions on $\keyw{swp}$, receiving two
endpoints: \keyw{p1} and \keyw{p2}. It selects \keyw{SWAP} on \keyw{p1}, $\keyw{LEAD}$ on \keyw{p2},
sends \keyw{p1} over \keyw{p2} and recurs.

When our system analyses the coordinator in isolation, it infers the protocol
$
\eta_{\keyw{coord}} = (\outp \keyw{SWAP}.\eta' \;\oplus\;\outp \keyw{LEAD}.\outp \eta'.\tend)
$
for both endpoints \keyw{p1} and \keyw{p2}.
When it analyses $\keyw{swap}:\tfun{T_1}{T_2}{}$, it infers
$
\eta_{\keyw{p}} = \Sigma\{%
\inp\keyw{SWAP}. \outp T_1. \inp T_2. \tend,\;
\inp\keyw{LEAD}. \inp\eta_{\keyw{q}}.\tend \}
$
 and
$\eta_{\keyw{q}}= \inp T_2. \outp T_1.\tend$
as the protocols of \keyw{p}
and \keyw{q}, respectively.
The former \emph{selects} either options $\keyw{SWAP}$ or $\keyw{LEAD}$
and the latter \emph{offers} both options.

%
%

If the coordinator is type-checked in isolation, then typing
succeeds with any $\eta'$: the coordinator can delegate any session.
However, because of duality, the typing of $\erequest{}{\keyw{swp}} $ in the
swap function implies that $\eta'=\eta_{\keyw{q}}$ and $T_1=T_2$.
Our inference algorithm can type this program and derive the above
session types.

\begin{hide}
It again connects to the coordinator over channel $\mathtt{swp}$, but now offers two choices: $\mathtt{Fst}$
and $\mathtt{Snd}$. If the coordinator selects the first one then the swap method behaves as before: it sends its value and
receives another which it returns. If the coordinator selects the second choice then $\mathit{swap}$ will resume (i.e., input) another endpoint
by calling $\eresume{}{z}$, bind it to $z'$, and then receive a value $y'$ from $z'$, send $x$ on $z'$ and finally return $y'$. Therefore,
the session type of endpoint $z$ is
\[
  \eta_{\texttt{swap}} =
  \Sigma\{ \nbox{
      \inp \mathtt{Fst}. \outp T_1. \inp T_2. \tend,
      ~\inp \mathtt{Snd}. \inp \eta. \tend \}
  }
  \qquad
  \eta_{\texttt{coord}} =
  \outp \mathtt{Fst}.\eta'
  ~\oplus~
  \outp \mathtt{Snd}.\outp \eta'.\tend
\]
denoting the choice between the two options $\mathtt{Fst}$ and $\mathtt{Snd}$, and the protocol followed in each one.
Here $\eta$ is the session type of endpoint $z'$ which is $\eta={\inp }T_2. \outp T_1.\tend$.
Again, $T_1$ and $T_2$ are the argument and return types of $\mathtt{swap}$.

The new coordinator is:
\[
  \begin{array}{@{}l@{\,}l@{\,}l@{}}
    \klet
         & \mathit{coord} = & \efix{f}{\_}{
    \nbox{\elet{z_1}{\eapp{\eaccept{}{\keyw{swp}}}{\cunit}
               \\} {\nbox{
                 \eapp{\eselectnew{\keyw{Fst}}}{z_1} \iseq \\
                \elet{z_2}{\eapp{\eaccept{}{\keyw{swp}}}{\cunit}
                \\}{\nbox{
                  \eapp{\eselectnew{\keyw{Snd}}}{z_2} \iseq
                \eapp{\edeleg{z_2}}{z_1} \iseq
                \eapp{f}{\cunit}
      }}}}}} \\
      \keyw{in} &
      \multicolumn{2}{@{}l@{}}{\!\!
        \espawn{\mathit{coord}} \iseq \ldots
      }
   \end{array}
\]

The new coordinator
accepts two sessions on $\mathtt{swp}$, receiving two endpoints: $z_1$ and $z_2$. It selects $\mathtt{Fst}$ on $z_1$
($\eapp{\eselectnew{\keyw{Fst}}}{z_1}$) and $\mathtt{Snd}$ on $z_2$ ($\eapp{\eselectnew{\keyw{Snd}}}{z_2}$).
The coordinator then sends $z_1$ over $z_2$ and recurs.

The protocol followed by the coordinator over the two endpoints is now more intricate, and in fact different for each one.
However, both endpoints must have the same session type because they are both generated by accepting a connection on
$\mathtt{swp}$.  This can be encoded with an \emph{internal choice}:
$
  \outp \mathtt{Fst}.\eta'
  ~~\oplus~~
  \outp \mathtt{Snd}.\outp \eta'.\tend
$.
In the case where the coordinator chooses the first choice, the rest of the session $\eta'$ over the endpoint is delegated, and
therefore it can be any session---$\eta'$ will be executed from the process that receives $z_1$ (running a $\mathit{swap}$ function).
If the coordinator selects the second choice then it simply delegates an endpoint with session type $\eta'$.

In our type system, if the coordinator is type-checked in isolation, then typing succeeds with any $\eta'$.
However, if both coordinator and swap function are typed in the same program, typing succeeds only when $\eta'=\eta$ and $T_1=T_2$.
These equalities are necessary to guarantee that the two dual endpoints of $\mathtt{swp}$ have dual session types.
Our inference algorithm is able to type this program and derive the above session types directly from the source code, without
any annotations.

Similarly to~\ref{ex:swap1}, a client program typable in our system cannot
cause the coordinator to deadlock by incorrectly ordering communication events.
\end{hide}
\end{example}

\begin{example}[A Database Library]
\label{ex:db}
In this example we consider the implementation of a library which allows clients to connect to a database.

\begin{minipage}{.45\textwidth}
\begin{lstlisting}[backgroundcolor=\color{white},numbers=none]
let fun coord(_) =
  let val p = $ \eaccept{}{\keyw{db}} $ $ \cunit $
      fun loop(_) = $\kmatch$ p {
        $\keyw{QRY}$: let val sql = $\erecv{}{}$ p
                val$\,$res = process sql
              in  $\esend{} $ p res; loop ()
        $\keyw{END}$: () }
  in spawn coord; loop ()
in spawn coord;
\end{lstlisting}
\end{minipage}
\hfill
\begin{minipage}{.45\textwidth}
\begin{lstlisting}[backgroundcolor=\color{white},numbers=none]
let fun clientinit(_) =
  let val con = $ \erequest{}{\keyw{db}} $ $ \cunit $
      fun query(sql) = $\eselectnew{\keyw{QRY}}$ con;
                       $\esend{\keyw{con}}{}$ sql;
                       $ \erecv{}{}$ con
      fun close(_) = $\eselectnew{\keyw{END}}$ con
  in (query$,$ close)
in $ e_{\mathsf{client}}$
\end{lstlisting}
\vspace{1em}
\end{minipage}

The coordinator accepts connections from clients on channel $\keyw{db}$. If a connection is established, after spawning a copy
of the coordinator to serve other clients, the coordinator enters a loop that serves the connected client. In this loop it offers
the client two options: $\keyw{QRY}$ and $\keyw{END}$. If the client selects $\keyw{QRY}$, the coordinator receives an SQL
query, processes it (calling $\keyw{process}:\keyw{sql}\rightarrow\keyw{dbresult}$, with these types are
defined in the library), sends back the result, and loops. If the client selects $\keyw{END}$ the connection with
the coordinator closes and the current coordinator process terminates.

Function $\keyw{clientinit}$ is exposed to the client, which can use it to request a connection with the database coordinator.
When called, it
establishes a connection $\keyw{con}$ and returns two functions to the client: $\keyw{query}$ and
$\keyw{close}$.
Then,
the client code $e_\mathsf{client}$
can apply the $\keyw{query}$ function to an $\keyw{sql}$ object and receive a
$\keyw{dbresult}$ as many times as necessary, and then invoke
$\keyw{close}$ to close the connection.
Using our two-level inference system with recursion \cref{sec:extensions}, we can infer the session type of the coordinator's
endpoint $\keyw{p}$:
$
\mu X. \Sigma\{ \inp \keyw{QRY}.\inp \keyw{sql}.\outp \keyw{dbresult}.X,~
\inp \keyw{END}.\tend \} $,
and check whether the client code $e_\mathsf{client}$ respects it.

This example is not typable with a substructural type system because
$\keyw{query}$ and $\keyw{close}$ share the same (linear) endpoint $\keyw{con}$.
Moreover, in a monadic system $e_\mathsf{client}$ will need to be converted to monadic form.
\begin{hide}
With out approach $e_\mathsf{client}$ can be an arbitrary functional expression (respecting the session protocol of \keyw{con}).
\end{hide}
\end{example}

  \section{Syntax and Operational Semantics of \lang}
  \label{sec:lang}
  
\begin{myfigure*}{\lang syntax and operational semantics.}{fig:lang}
\[\begin{array}{@{}l@{\;}r@{~}c@{~}l@{}}
    \textbf{Exp:}  &  e  & ::= & v
                                      \bnf  \epair{e}{e}
                                      \bnf  \eapp{e}{e}
                                      \bnf  \elet{x}{e}{e}
                                      \bnf  \eif{e}{e}{e} 
                                      \bnf  \espawn{e}
                                      \bnf \ematch{e}{L_i : e_i}{i \in I} 
\\[.5ex]
\textbf{Sys:} &  S  & ::= &  e \bnf S \parallel S 
\\[.5ex]
\textbf{Val:}  &  v  & ::= & 
                              x   
                              \bnf k \in \mathsf{Const}
                              \bnf \epair{v}{v}
                              \bnf \efunc{x}{e}
                              \bnf \efix{f}{x}{ e}
                              \bnf \eendp{p}{}
                              \\&&      \bnf &  \erequest{}{c} 
                              \bnf \eaccept{}{c}
                              \bnf \esend{}
                              \bnf \erecv{}{} 
                              \bnf  \eselectnew{L}
                              \bnf \edeleg{}
                              \bnf \eresume{}{}
\\[.5ex]
  \textbf{ECxt:}  &  E  & ::= & \hole
                                        \bnf  \epair{E}{e}
                                        \bnf  \epair{v}{E}
                                        \bnf  \eapp{E}{e}
                                        \bnf  \eapp{v}{E}
                                        \bnf  \elet{x}{E}{e}
                                        \bnf  \eif{E}{e}{e} 
                                        \\&&\bnf&  \espawn{E}
                                        \bnf \ematch{E}{L_i : e_i}{i \in I} 
\end{array}\]
\vspace{-2ex}
\[\begin{array}{%
    @{}l@{~}r@{\;}c@{\;}l@{~~}l@{~}l
    @{\qquad}
    @{}l@{~}r@{\;}c@{\;}l@{~~}l@{~}l
  }
    \irule![RIft][eift]{}{\eif{\ctrue}{e_1}{e_2} &\hookrightarrow& e_1}
  &
    \irule![RLet][elet]{}{\elet{x}{v}{e} &\hookrightarrow &e[v/x]}
    \\
    \irule![RIff][eiff]{}{\eif{\cfalse}{e_1}{e_2}  & \hookrightarrow& e_2}
                                                   &
    \irule![RFix][rfix]{}{\eapp{(\efix{f}{x}{e})}{v} &\hookrightarrow & e[\efix{f}{x}{e}/f][v/x]}
\end{array}\]
\vspace{-1ex}
\[\begin{array}{@{}l@{~}r@{\;}c@{\;}l@{~~}l@{~}ll@{~}r@{\;}c@{\;}l@{~~}l@{~}l@{}}
    \irule![RBeta][beta]{}{
      E[e] \parallel S &\arrow& E[e'] \parallel S
      \hfill
      \text{if } e \hookrightarrow e'
    }
    \\
    \irule![RSpn][spawn]{}{
      E[\espawn{v}] \parallel S &\arrow& E[()] \parallel \eapp{v}{()} \parallel S 
    }
    \\
    \irule![RInit][init]{}{
      E_1[\erequest{}{c}\;\cunit] \parallel E_2[\eaccept{}{c}\;\cunit] \parallel S 
      &\arrow&
      E_1[\eendp{p}{}] \parallel E_2[\eendp{\co p}{}] \parallel S 
      \quad
      \text{if } p,\co p \text{ fresh}
    }
    \\
    \irule![RCom][comm]{}{
      E_1[\esend{(\eendp{p}{}},{v})] \parallel E_2[\erecv{}{\eendp{\co p}{}}] \parallel S 
      &\arrow&
      E_1[()]   \parallel    E_2[v] \parallel S 
    }
    \\
    \irule![RDel][deleg]{}{
      E_1[\edeleg{(\eendp{p}{}}, {\eendp{p'}{})}] \parallel E_2[\eresume{}{\eendp{\co p}{}} ] \parallel S 
      &\arrow&
      E_1[()]\parallel E_2[\eendp{p'}{}] \parallel S 
    }
    \\
    \irule![RSel][select]{}{
      E_1[\eapp{\eselectnew{L_j}}{p}]
      \parallel
        E_2[\ematch{\eendp{\co p}{}}{L_i : e_i}{i\in I}] \parallel S 
      &\arrow&
      E_1[()]\parallel E_2[e_j] \parallel S 
      \hfill
      \text{if } j\in I
    }
\end{array}\]
\vspace{-1.2em}
\end{myfigure*} 

\cref{fig:lang} shows the syntax and operational semantics of \lang, a core of ML with session communication.
An 
expression can be one of the usual lambda expressions
or $\espawn{e}$ which evaluates $e$
to a function and asynchronously applies it to the unit value; it can also be
$\ematch{e}{L_i : e_i}{i \in I}$ which, as we will see, implements
finite \emph{external choice}.
We use standard syntactic sugar for writing
programs.
A \emph{system} $S$ is a parallel composition of 
closed expressions \emph{(processes)}.
\begin{hide}
A
running \emph{process} in \lang is a closed expression and a running \emph{system} $S$ is a parallel composition of
processes. We identify systems up to the reordering of parallel processes and
the removal of terminated, unit-value processes. A single process with no
active session endpoints is an \lang \emph{program}.
\end{hide}

\begin{icalp}
The operational semantics of \lang are standard; here we only discuss session-related rules.
\end{icalp}
\begin{hide}
The values of \lang contain the basic $\tunit$, $\tbool$, and $\tint$ constants and all standard integer and boolean operators.
Values also include pairs $(v,v')$, and first class recursive ($\efix{f}{x}{e}$) and non-recursive ($\efunc{x}{e}$) functions.
\end{hide}
Following the tradition of binary session types \cite{HondaEtal98}%
,
communication between processes happens over dynamically generated entities
called \emph{sessions} which have exactly two \emph{endpoints}. Thus, \lang values contain a countably infinite set of endpoints, ranged over by $p$. We assume a total involution ($\co{\cdot}$) over this set, with the property $\co p \not=p$, which
identifies \emph{dual endpoints}.
\begin{hide}
We write $o\freshfrom o'$ when the syntactic objects $o$ and $o'$ contain distinct endpoints.
\end{hide}

\begin{hide}
The language is equipped with a small-step, call-by-value operational semantics. 
\cref{fig:lang} \todo{cref conflicts with lipics style}
shows the \emph{redex}
expressions that perform beta reductions $\hookrightarrow$. Systems take small-step transitions $\arrow$ by decomposing a
system into an evaluation context $E$ with a beta redex in its hole (\iref{beta}), or by the effectful transitions discussed
below. Evaluation contexts include standard call-by-value contexts, but also the parallel system contexts $E \parallel S$ and
$S\parallel E$. An \iref{spawn} reduction generates new processes containing a single application.
\end{hide}

A process can request (or accept) a new session by calling $\erequest{}{c}$ (resp., $\eaccept{}{c}$) with the unit value, which
returns the endpoint (resp., dual endpoint) of a new session. Here $c$ ranges over an infinite set of global initialisation 
\emph{channels}. To simplify presentation, the language contains
$\erequest{}{c}$ and $\eaccept{}{c}$ for each channel $c$.

Once two processes synchronise on a global channel and each receives a fresh, dual endpoint (\iref{init} reduction),
they can exchange messages
\begin{icalp}
(\iref{comm}), \emph{delegate} endpoints (\iref{deleg}) and offer a number 
of choices $ L_{i \in I}$, from which the partner can select
one (\iref{select}).
Here $L$ ranges over a countably infinite set of
choice labels, and $I$ is a finite set of natural numbers;  $L_i$ denotes a unique label for each natural
number $i$ and we assume $\eselectnew{L_i}$ for each $L_i$.

The next two sections present the two-level type system of \lang.


\end{icalp}
\begin{hide}
. 
Applying $\esend{}$ to an
endpoint and a value will send this value over the session, whereas applying
$\erecv{}{}$ to an endpoint will receive a value over the session. The
synchronisation of these two applications leads to a synchronous communication
reduction (\iref{comm}).

A process can also offer a number of options to its dual with the construct
$\ematch{e}{L_i : e_i}{i \in I}$, implementing, as mentioned earlier, a finite external choice. Here $L$ ranges over a
countably infinite set of choice labels, and $I$ is a finite set of natural numbers. We assume a fixed enumeration of these
labels, thus $L_i$ denotes a unique label for each natural number $i$.
When a process offers a choice of labels on session endpoint $p$, its dual can select one of those labels with the expression
$\eapp{\eselectnew{L_j}}{\co p}$ (\iref{select}).

The intuition of session types is that once a session is open, the processes controlling its endpoints are in charge of executing
a predefined communication protocol. However, any of these processes may \emph{delegate} this obligation to another process, by
sending one endpoint $p$ over another endpoint $q$, with the application $\eapp{\edeleg{q}}{p}$. A process with an endpoint $\co q$ can
receive $p$ with the expression $\eresume{}{q}$, and continue executing the protocol over $p$ (\iref{deleg}). 

The example of the swap service in \ref{ex:swap2} used delegation to create a
direct connection between processes that run the swap function.
\end{hide}
  \begin{hide}
We give a type system for \lang organised in two levels.
The first level is a type-and-effect system adding communication effects and
endpoint flow-analysis to the Hindley-Milner type system.
This level extends the system of Amtoft, Nielson and Nielson \cite{ANN}.
The second level imposes a session discipline on the communication
effects of the first level.
\end{hide}

\section{First Level Typing: ML Typing and Behaviours} \label{sec:stage-1} 

\begin{myfigure*}{Syntax of types, behaviours, constraints, and session types.}{fig:type-syntax}
\[
  \begin{array}{@{}r@{~~}r@{\;}c@{\;}l}
    \textbf{Variables:} &
\multicolumn{3}{l}{%
  \alpha \textbf{(Type)}
  \quad
  \beta \textbf{(Behaviour)}
  \quad
  \psi \textbf{(Session)}
  \quad
  \rho \textbf{(Region)}
}
\\[.3ex]
\textbf{T.\ Schemas:} &  \TS  & ::= &  \typeschema{\vec\alpha\vec\beta\vec\rho
                                    \vec\psi
                                    }{C}{T}
\qquad \qquad \qquad \textbf{Regions:} ~~ r ~ ::= ~  l \bnf \rho
\\
  \textbf{Types:}            &  T  & ::= & \tunit
                                   \bnf  \tbool
                                   \bnf  \tint
                                   \bnf  \tpair{T}{T}
                                   \bnf  \tfun{T}{T}{\beta}
                                   \bnf  \tses{\rho}
                                   \bnf  \alpha
\\[.3ex]
\textbf{Constraints:} &  C  & ::= &       \su{T}{T}
                                  \bnf \cfd T
                                  \bnf \su{b}{\beta}
                                  \bnf \seql{\rho}{r}
                                  \bnf \seql{c}{\eta}
                                  \bnf \seql{\co c}{\eta}
                                  \bnf \eta \bowtie \eta  
                                  \bnf C,C \bnf \epsilon
\\[.3ex]
\textbf{Behaviours:}         &  b  & ::= &  \beta
                                       \bnf \tau
                                      \bnf \bseq{b}{b}
                                       \bnf \bichoice{b}{b}
                                       \bnf \orec{b}{\beta}
                                       \bnf \espawn{b}
                                       \bnf \pusho{l}{\eta}{}
\\[.3ex]
                        &&       \bnf &  \popo{\rho}{\outp T}
                                       \bnf \popo{\rho}{\inp T}
                                       \bnf \popo{\rho}{\outp \rho}
                                       \bnf \popo{\rho}{\inp l}
                                       \bnf \popo{\rho}{\outp L_i}
                                       \bnf \bechoice{}{\rho}{L_i}{b_i}
\\[.3ex]
\textbf{Type Envs:} &  \Gamma  & ::=  & x: \TS \bnf \Gamma,\Gamma\bnf\epsilon
\end{array}
\]
\end{myfigure*} 

Here we adapt and extend the type-and-effect system of Amtoft, Nielson and Nielson \cite{ANN} to session communication in \lang. A
judgement $\tjr{}{}{e}{T}{b}$ states that $e$ has type $T$ and behaviour $b$, under type environment $\Gamma$ and \emph{constraint
environment} $C$. The constraint environment relates type-level variables to terms and enables type inference. These components
are defined in \cref{fig:type-syntax}.

An \lang expression can have a standard type or an endpoint
type $\tses{\rho}$. Function types are annotated with a behaviour variable $\beta$.
\emph{Type variables} $\alpha$ are used for ML polymorphism.
As in~\cite{ANN}, Hindley-Milner polymorphism is extended with type schemas $\TS$ of the form
$\typeschema{\vec\gamma}{C_0}{T}$, where $\gamma$ ranges over variables
$\alpha,\beta,\rho,\psi$, and
$C_0$ imposes constraints on the quantified variables
with $\fv(C_0)\subseteq\{\vec\gamma\}$.
Type environments $\Gamma$ bind program variables to type schemas;
we let $\forall(\emptyset).T = T$.

\begin{myfigure*}{Type-and-Effect System for \lang Expressions (omitting rule for pairs).}{fig:typing-rules} 

\[  \begin{array}{@{}l@{\quad}l@{}}
\irule*[TLet][tlet]
  {\tjr{}{}{e_1}{\TS}{b_1}
    \quad \tjr{%
               }{,x:\TS}{e_2}{T}{b_2} }
  {\tjr{}{}{\elet{x}{e_1}{e_2}}{T}{\bseq{b_1}{b_2} } }
  &
\irule*[TVar][tvar]
  {}
  {\tjr{}{} {x} {\Gamma(x)} {\tau}}
\\[1em]
\irule*[TIf][tif]
  {%
    \tjr{}{}{e_1}{\tbool}{b_1}
    \quad
    \tjr{}{}{e_i}{T}{b_i} {~}_{(i\in\{1,2\})}
  }
  {\tjr{}{}{\eif{e_1}{e_2}{e_3}}{T}{\bseq{b_1}{(b_2 \oplus b_3)} } }
&
\irule* [TConst][tconst]
  {}
  {\tjr{}{}{k }{\mathit{typeof}(k)}{\tau} }
\\[1.4em]
\irule*[TApp][tapp]
  {\tjr{}{}{e_1}{\tfun{T'}{T}{\beta} }{b_1} \quad \tjr{}{}{e_2}{T'}{b_2} }
  {\tjr{}{}{\eapp{e_1}{e_2}}{T}{\bseq{\bseq{b_1}{b_2}}{\beta} } }
&
\irule*[TFun][tfun]
  {\tjr{}{,x:T}{e}{T'}{\beta} }
  {%
  \tjr{}{}
      {\efunc{x}{e}}
      {\tfun{T}{T'}{\beta}}{\tau}}
\\[1em]
\irule*[TMatch][tmatch]
  {%
    \tjr{}{}{e}{\tses{\rho}}{b}
    \quad \tjr{}{}{e_i}{T}{b_i}
    {~}_{(i\in I)}
  }
  {%
    \tjr{}{}
      {\ematch{e}{L_i : e_i}{i \in I} }
      {T}{\bseq{b}{\bechoice{}{\rho}{L_i}{b_i}} }
  }
&
\irule*[TEndp][tendp]
  {}
  {\tjr{}{}{\eendp{p}{l} }{\tses{\rho}{} }{\tau }}
  ~\condBox{$C\vdash\seql{\rho}{l}$}
\\[1.4em]
\irule[TSpawn][tspawn]{%
  \tjrBase {C} {\Gamma}
  {e}{\tfun{\tunit}{\tunit}{\beta}}{b}{}{}
}{%
  \tjr{}{}
      {\espawn{e}}
      {\tunit}
      {\bseq{b}{\espawn{\beta}}}
}
&
\irule*[TSub][tsub]
  {\tjr {} {} {e} {T} {b} }
  {\tjr {} {} {e} {T'} {\beta} }
  ~\condBox{\nbox[m]{%
      \coType{T}{T'}\\
      \sub{b}{\beta}
  }}
%
%
%
\\[1em]
\multicolumn{2}{@{}l@{}}{%
\irule[TRec][trec]
  {\tjrBase {C} {%
      \Gamma, f: \tfun{T}{T'}{\beta}, x: T
    }{%
      e
    }{T'}{b} {}{}
  }
  {\tjrBase {C} {\Gamma} {\efix
  {f} {x} {e} } {\tfun{T}{T'} {\beta}} {\tau} {}{}
}[~\condBox{\nbox[m]{%
      \under{C}{\confined(T,T')}\\
      \sub{\orec{b}{\beta}}{\beta}
    }
}]
}
\\[1.4em]
\multicolumn{2}{@{}l@{}}{%
\irule[TIns][tins]
  {\tjr{}{} {e} {\forall (\vec\gamma:C_0).T} {b} }
  {\tjr{}{} {e} {T\sigma}         {b} }
  [~\condBox{\nbox[m]{%
      dom(\sigma) \subseteq \{\vec\gamma\}\\
      \typeschema{\vec\gamma}{C_0}{T} \text{~is solvable by $ C $ and } \sigma
  }}]
}
\\[1.4em]
\multicolumn{2}{@{}l@{}}{%
\irule[TGen][tgen]
  {\tjr{\cup C_0}{} {e} {T}         {b} }
  {\tjr{}{} {e} {\forall (\vec\gamma:C_0).T} {b} }
  [~\condBox{\nbox[m]{%
      \{\vec\gamma\} \cap \fv(\Gamma, C, b) = \emptyset\\
      \forall (\vec\gamma:C_0).T  \text{~is solvable by $C$ and some } \sigma
  }}]
}
\end{array}\]
\end{myfigure*} 

The rules of our type-and-effect system are shown in \cref{fig:typing-rules}
which, as in \cite[Sec. 2.8]{ANN}, is a conservative extension of ML\@.
This system performs both ML type checking (including type-schema inference),
and behaviour checking (which enables behaviour inference). Rules
\iref{tlet}, \iref{tvar}, \iref{tif}, \iref{tconst}, \iref{tapp},
\iref{tfun}, \iref{tspawn} and the omitted rule for pairs perform standard type checking and
straightforward sequential ($\bseq{b_1}{b_2}$) and non-deterministic ($b_1 \oplus b_2$) behaviour composition;
$\tau$ is the behaviour with no effect.

Just as a type constraint $T \subseteq \alpha$ associates type $T$ with type variable $\alpha$, a behaviour constraint
$b \subseteq \beta$ associated behaviour $b$ to behaviour variable $\beta$. Intuitively, $\beta$ is the non-deterministic
composition of all its associated behaviours.
Rule \iref{tsub} allows the replacement of behaviour $b$ with variable $\beta$; such replacement in type annotations
yields a subtyping relation ($\coType{T}{T'}$).
Rules \iref{tins} and \iref{tgen} are taken from \cite{ANN} and extend ML's type schema instantiation and
generalisation rules, respectively.
Because we extend Hindley-Milner's let polymorphism, generalisation (\iref{tgen}) is only applied to the right-hand side
expression of the \keyw{let} construct.
The
following definition allows the instantiation of a type schema under a global
constraint environment $C$.
We write $C \vdash C'$ when $C'$ is included in the reflexive, transitive,
compatible closure of $C$.
\begin{definition}[Solvability]\label{def:solvability}
  $\typeschema{\vec\gamma}{C_0}{T}$ is \emph{solvable} by $C$ and substitution $\sigma$ when
  $\dom{\sigma}\subseteq\{\vec\gamma\}$ and $C \vdash C_0\sigma$.
\end{definition}

In \iref{trec}, the communication effect of the body of a recursive function should be \emph{confined},
which means it may only use endpoints it opens internally. For this reason, the function does not input nor return open endpoints
or other non-confined functions ($C \vdash \confined(T,T')$). Although typed under $\Gamma$ which may contain endpoints and
non-confined functions, the effect of the function body is recorded in its behaviour.
The second level of our system checks that if the function is called, no endpoints from its environment are affected.
It also checks that the function fully consumes internal endpoints
before it returns or recurs.

\begin{icalp}
A type $ T $ is confined
when it does not contain any occurrences of the endpoint type $ \tses \rho $ for
any $\rho$, and when any $ b $ in $ T $ is confined. A behaviour $ b $ is
confined when all of its possible behaviours are either $ \tau $ or
recursive.
\end{icalp}

\begin{hide}
Definition \ref{def:confined-types} (Confined Types) does not enforce the second
condition for confinement; i.e., it does not check whether the net communication effect of recursive behaviours is neutral.
This is done in the session typing discipline of the second level of our system
(\cref{sec:session-types}), and in fact only for those recursive functions that
are applied and whose behaviour is part of the communication effect of the program. This separation between the two levels simplifies type inference.
\end{hide}

To understand rule \iref{tendp}, we have to explain region variables ($\rho$), 
which are related to region constants through $C$.
Region constants are simple program annotations $l$ (produced during pre-processing) which uniquely identify the textual sources of
endpoints. We thus type an extended \lang syntax
\[
  \small
  \textbf{Values:}\qquad    v   ::= {\ldots}
                                \bnf \eendp{p}{l}
                                \bnf \erequest{l}{c}
                                \bnf \eaccept{l}{c}
                                \bnf \eresume{l}{}
\]
If a sub-expression has type $ \tses{\rho}$ and it evaluates to a
value $ \eendp{p}{l} $, then it must be that $C \vdash \rho \sim l$, denoting that
$p$ was generated from the code location identified by $l$.
This location will contain one of
$ \erequest{l}{c}$, $\eaccept{l}{c}$, or $\eresume{l}{}$.
These primitive functions (typed by \iref{tconst}) are given the following type schemas.
\\[1ex]$
~\hfill~
  \small
\begin{array}{@{}l@{\,:\,}l@{}}
\erequest{l}{c}   & \typeschema {\beta\rho\psi }
                                     {%
                                       \su{\pusho{l}{\psi}{%
                                       }
                                       }{\beta},
                                       \seql{\rho}{l},
                                       \seql{c}{\psi}
                                     }
                                     {%
                                      \tfun{\tunit}
                                      {\tses{\rho} }
                                        {\beta }
                                     }
\\ \eaccept{l}{c}    & \typeschema
                              {\beta\rho\psi}
                              {%
                                \su {\pusho{l}{\psi}{%
                                    }
                                } {\beta},
                                \seql{\rho}{l},
                                \seql {\co{c}}{\psi}
                              }
                              {%
                                \tfun{\tunit}{\tses{\rho}}{\beta}
                              }
\\ \eresume{l}{}             \qquad&
                              \typeschema{\beta\rho\rho'}{%
                                \su{\popo{\rho\inp }{l}}{\beta},~
                                \seql{\rho'}{l}
                              }{%
                                \tfun{\tses{\rho}}{\tses{\rho'}}{\beta}
                              }
\end{array}
~\hfill~
$\\[1ex]
An application of $\erequest{l}{c}$ starts a new session on the static endpoint $l$.
To type it, $C$ must contain its effect $\pusho{l}{\psi}{} \subseteq \beta$, where
$\psi$ is a session variable, representing the session type of $l$.
At this level session types are ignored (hence the use of a simple $\psi$); they become important in the second
level of our typing system. Moreover, $C$ must record that session variable $\rho$ is related to $l$ ($\rho\sim l$) and that
the ``request'' endpoint of channel $c$ has session type $\psi$ ($c\sim\psi$).
The only difference in the type schema of $\eaccept{l}{c}$ is that the ``accept'' endpoint of $c$ is related to $\psi$ ($\co c
\sim \psi$). Resume receives an endpoint ($\rho'$) over another one ($\rho$), recorded in its type schema
($\su{\popo{\rho\inp }{\rho'}}{\beta}$); $\rho$ is an existing endpoint but $\rho'$ is treated as an endpoint
generated by $\eresume{l}{}$, hence the constraint $\seql{\rho'}{l}$.

The following are the type schemas of the rest of the constant functions.
\\[1ex]$
~\hfill~
  \small
\begin{array}{@{}l@{\,:\,}l@{}}
   \erecv{}{}           &
                              \typeschema{\alpha\beta\rho}{%
                                \su{\popo{\rho}{\inp \alpha} }{\beta},
                                \cfd\alpha
                              }{%
                                \tfun{\tses{\rho}}{\alpha}{\beta}
                              }
\\ \esend{}              &
                              \typeschema{\alpha\beta\rho}{%
                                \su{\popo{\rho}{\outp \alpha}}{\beta},
                                \cfd\alpha
                              }{%
                                \tfun{\tpair{\tses{\rho}}{\alpha}}{\tunit}{\beta}
                              }
\\ \edeleg{}             &
                              \typeschema{\beta\rho\rho'}{%
                                \su{\popo{\rho}{\outp \rho'}}{\beta}
                              }{%
                                \tfun{\tpair{\tses{\rho}}{\tses{\rho'}}}{\tunit}{\beta}
                              }
\\ \eselectnew{L}       &
                              \typeschema{\beta\rho}{%
                                \su{\popo{\rho}{\inp L} }{\beta}
                              }{%
                                \tfun{\tses{\rho}}{\tunit}{\beta}
                              }
\end{array}
~\hfill~
$\\[1ex]
These record input ($\popo{\rho}{\inp \alpha}$), output ($ \rho\outp \alpha$), delegation ($\popo{\rho}{\outp \rho'}$), or selection
($\popo{\rho}{\outp L_i}$) behaviour. For input and output the constraint $\cfd{\alpha}$ must be in $C$, recording that the
$\alpha$ can be instantiated only with confined types.

\begin{hide}
\paragraph{Static Endpoints.}
First we require that textual sources of session endpoints to be
marked  with \emph{unique region labels} in a
pre-processing step, updating \lang syntax as follows.
\\[1ex]
  $\textbf{Values:}\qquad    v   ::=  {\ldots} \bnf \eendp{p}{l} \bnf
  \erequest{l}{c} \bnf \eaccept{l}{c} \bnf \eresume{l}{}$
\\[1ex]
With this extension, the system uses regions to statically approximate the endpoint that will be used at each communication at
runtime, effectively creating one type $\tses{l}$ for each endpoint source.\footnote{Technically, endpoint types are annotated
with session variables $\tses{\rho}$ which are related to endpoint labels through the  constraint environment $C$.} The
type-and-effect system uses \emph{region variables} $\rho$ to track the flow of labels at the type level. %
Dynamic endpoints generated at different source expressions are statically distinguished, but those generated from the same
expression are identified, resulting to the rejection of some type safe programs
(\cref{sec:session-types}). This can be remedied with standard context-sensitive solutions such as $k$-CFA \cite{kcfa}.

\paragraph{Functional Types.}
An expression in \lang can have a base type $\tunit$, $\tbool$, or $\tint$, a pair type $\tpair{T}{T'}$, a function type
$\tfun{T}{T'}{\beta}$, or a session endpoint type $\tses{\rho}$. \emph{Type variables} $\alpha$ are used for polymorphism (and type
inference). Each function type is annotated with a \emph{behaviour variable} $\beta$ and each session endpoint type with a
region variable $\rho$, respectively denoting the effect of the function body and the endpoint's textual source.

The types of session endpoints do not contain a session type because session types evolve during the
execution of the program.
In \ref{ex:swap1}, the two uses of $z$ in the body of $\mathit{swap}$ refer to the same endpoint but at
different states: at the first it can perform a send and then a receive, and at the second it can only perform the receive.
Therefore,
$\tses{\rho}$ only refers to the static identity of an endpoint through $\rho$,
ignoring its session type.

\paragraph{Communication Effects.}
Inspired by Castagna et al.\ \cite{CastagnaEtal09}, the behavioural effect of a \lang expression can be thought of as describing
operations on a \emph{stack of session endpoints} $\Delta$. This stack contains frames of the form $(l:\eta)$, where $l$ is a
static endpoint and $\eta$ a session type (described in
\cref{sec:session-types}).

The expression can push a new frame on the
stack ($\pusho{l}{\eta}{}$), or reduce the top session type by performing an input ($\popo{\rho}{\inp T}$) or output ($\popo{\rho}{\outp T}$)
of a value; a delegation ($\popo{\rho}{\outp \rho}$) or resumption ($\popo{\rho}{\inp l}$) of an endpoint; or an offer
($\bechoicetext{}{\rho}{L_i}{b_i}$) or selection ($\popo{\rho}{\outp L_i}$) of a choice. When the top session type of $\Delta$ is finished
then it is popped from the stack. The application of $\erequest{l}{c}$ or $\eaccept{l}{c}$ has a $\pusho{l}{\eta}{}$ effect; the
application of $\ksend$, $\krecv$, and $\kdeleg$, has the corresponding effect with $\rho$ calculated by the type of the first
argument ($\tses{\rho}$).
Departing from \cite{ANN}, function $\kresume$ is annotated with its own fresh label, instead of a variable which would be mapped
through the constraint environment $C$ to a single label. This allows typing programs where a $\kresume$ statement
inputs different endpoints.

\begin{example}
  Consider the following program $P$ that spawns two clients, one proxy, and one server; $P$
  is typable in \lang, provided $e_{1}$ and $e_{2}$ are.
  \[
  \begin{array}{@{}l@{\,}l@{\,}l@{}}
      \klet
      & \mathit{cli}_1 &= (\efunc{\_}{\elet{z_c}{\eapp{\erequest{l_{1}}{\keyw{c}}}{\cunit}} {e_{1}}}) \\
      & \mathit{cli}_2 &= (\efunc{\_}{\elet{z_c}{\eapp{\erequest{l_{2}}{\keyw{c}}}{\cunit}} {e_{2}}}) \\
      & \mathit{prx}  &= (\efunc{\_}{%
      \eletmany{z_{\mathit{c}} & = \eapp{\eaccept{l_c}{\keyw{c}}}{\cunit}\\
                z_{\mathit{s}} & = \eapp{\erequest{l_s}{\keyw{s}}}{\cunit}
      }{%
        \eapp{\edeleg{z_{\mathit{s}}}}{z_{\mathit{c}}}
    )} } \\
    & \mathit{srv}  &= (\efunc{\_}{%
      \eletstarmany{z_s &= \eapp{\eaccept{l_p}{\keyw{s}}}{\cunit} \\
                    x   &= \underline{\eresume{l}{z_s}}
      }{%
        \eapp{\esend{x}}{\keyw{1}};
        \eapp{\esend{x}}{\keyw{tt}}
    )}} \\
      \keyw{in} &
      \multicolumn{2}{@{}l@{}}{\!\!
      \espawn{\mathit{prx}_1};
      \espawn{\mathit{prx}_2};
      \espawn{\mathit{cli}};
      \espawn{\mathit{srv}};
      }
  \end{array}
\]
  Both clients request a session on $\keyw{c}$. The proxy accepts one of them and in turn requests a session with the
  server on $\keyw{s}$. Once the server accepts, the proxy delegates the client session over the server session. The server then sends two values to the connected client over the client session.

  In the absence of label $l$ on the underlined resume construct of the server, the type system would calculate that both $l_1$
  and $l_2$ endpoints can flow to $x$ at runtime. Therefore, $x$ will have type $\tses{\rho}$, with $\rho$ related to both $l_1$
  and $l_2$ in the constraint environment, which violates the unique-label requirement.
\end{example}

The remaining behaviours follow the structure of the code, as can be noted in
\cref{fig:typing-rules}:
$\tau$ is the silent behaviour of pure computations, $\bseq{b}{b'}$
allows sequencing, and $\bichoice{b}{b'}$ internal choice. Behaviour $\orec{b}{\beta}$ marks recursive behaviour.
Even though \lang does not have recursive session types, it does allow
recursive effectful functions such as the coordinator in Ex. \ref{ex:swap1}.
\end{hide}

\begin{hide} 
\paragraph{Constraints.}
Constraint sets $C$ have \emph{inclusion constraints} for types ($\su{T}{T'}$) and behaviours
($\su{b}{\beta}$), and equality constraints for regions ($\seql{\rho}{r}$). They also contain exactly two equality constraints
($\seql{c}{\eta}$ and $\seql{\co c}{\eta'}$) per global channel~$c$, one for the behaviour of the runtime endpoints returned by
$\eaccept{}{c}$ and one for those returned by $\erequest{}{c}$. These are the two \emph{kinds of endpoints} of $c$.

The simple inclusion constraints for behaviours are sufficient for typing any functionally-typable program without the
introduction of a sub-effecting relation~\cite{TalpinJ92}. Intuitively, $b \subseteq \beta$ means that $\beta$ \emph{may} behave
as $b$. Although not strictly necessary here, type constraints are more general to enable principal typing in the presence of
subtyping constraints $\su{\tint}{\keyw{real}}$ \cite[\S1.5.1]{ANN}, often appearing in the session-types literature. We write $C
\vdash \su{o}{o'}$ and $C \vdash \seql{o}{o'}$ for the reflexive and transitive closure of constraints in $C$; we say that these
constraints are \emph{derivable} from $C$. We write $C \vdash C'$ if all constraints of $C'$ are derivable from $C$.
The constraints $ \cfd T $ and $ \cfd b $ indicate that type $ T $
and behaviour $ b $ must be \emph{confined}. A type $ T $ is confined
when it does not contain any occurrences of the endpoint type $ \tses \rho $ for
any $\rho$, and when any $ b $ in $ T $ is confined. A behaviour $ b $ is
confined when all of its possible behaviours are either $ \tau $ or
recursive. The predicates $ C \vdash \confined(T) $ and $ C \vdash \confined(b)
$ are defined structurally as follows:
\todo{Hidden all definitions of confined. I'll add a brief description (i.e.
behaviours and types where no channel labels flow)}
\end{hide}

\begin{hide} 
\begin{definition}[Confined Behaviors]\label{def:confined-behav}
  $C \vdash \confined(o)$ is the least compatible relation on type schemas, types, and
  behaviours that admits the following axioms.
\begin{longtable}{@{}l@{\qquad}l@{}}
  \irule*[CTau][CTau]
    {b \in \{\tau,~\orec{b'}{\beta}\}}
    {C\vdash \confined(b)}
&
  \irule*[CAx-b][CAx-b]
    {\cfd b \in C }
    {C \vdash \confined(b) }
 \\[1.4em]
  \irule*[CICh][CICh]
    {C\vdash\confined(b_1) \quad
     C\vdash\confined(b_2)}
    {C\vdash\confined(b_1 \oplus b_2)}
&
  \irule*[CICh-Bw][CICh-Bw]
    {C\vdash\confined(b_1\oplus b_2)}
    {C\vdash\confined(b_i)}
    \condBox{$i\in \{1,2\}$}
\\[1.4em]
  \irule*[CSeq][CSeq]
    {C\vdash\confined(b_1) \quad
     C\vdash\confined(b_2)}
    {C\vdash\confined(b_1; b_2)}
&
  \irule*[CSeq-Bw][CSeq-Bw]
    {C\vdash\confined(b_1;b_2)}
    {C\vdash\confined(b_i)}
    \condBox{$i\in \{1,2\}$}
  \\[1.4em]
  \irule*[CSpw][CSpw]
    {C\vdash\confined(b)}
    {C\vdash\confined(\spawno{b}{})}
&
  \irule*[CSpw-Bw][CSpw-Bw]
    {C\vdash\confined(\spawno{b}{})}
    {C\vdash\confined(b)}
\\[1.4em]
  \irule*[CSub-b][CSub-b]
    {\sub{b_1}{b_2} \quad C\vdash \confined(b_2)  }
    {C \vdash \confined(b_1) }
    &
\end{longtable}
\end{definition}
\begin{definition}[Confined Types]\label{def:confined-types}
  $C \vdash \confined(o)$ is the least compatible relation on type schemas, types, and
  behaviours that admits the following axioms.
%
\begin{longtable}{@{}l@{\quad}l@{}}
  \irule*[CCons][CCons]
      {T \in \{\tint,\tbool,\tunit\} }
      {C \vdash \confined(T) }
  &
  \irule*[CAx-T][CAx-T]
      {\cfd T \in C }
      {C \vdash \confined(T) }
\\[1.4em]
    \irule*[CSub-T][CSub-T]
      {\sub{T_1}{T_2} \quad C\vdash \confined(T_2)  }
      {C \vdash \confined(T_1) }
&
    \irule*[CSub-T-Bw][CSub-T-Bw]
      {\forall T_1. (\sub{T_1}{T_2}) \Longrightarrow C\vdash \confined(T_1)  }
      {C \vdash \confined(T_2) }\footnote{\textcolor{red}{TODO}: is this rule
      necessary\inp 
      Consider $ T_1 = \tfun {T} {T'} {\beta_1} $,
      $ T_2 = \tfun {T} {T'} {\beta_2} $ and $ C = \{\su \tau {\beta_1},
      \su {\beta_1}{\beta_2}, \su {\rho\outp \tint} {\beta_2}, \cfd{T_1},
      \su{T_1}{T_2} \}$
      By Rule \iref{CSub-T-Bw}, $ T_2 $ has to be confined; but $ T_2 $ is not
      confined because of the constraint $ \su {\rho\outp \tint} {\beta_2} $. So this
      constraint set is rejected as not well-formed. I think that it should be
      accepted though, I don't see any danger in allowing it to be well-formed.
      }.
\\[1.4em]
    \irule*[CFun][CFun]{%
      C\vdash \confined(T,T')
      \quad
      C\vdash \confined(\beta)
    }{%
      C\vdash \confined(\tfun{T}{T'}{\beta})
    }
&
    \irule*[CFun-Bw][CFun-Bw]{%
      C\vdash \confined(\tfun{T}{T'}{\beta}) \quad o \in \{T,T',\beta\}
    }{%
      C\vdash \confined(o)
    }
\\[1.4em]
    \irule*[CTup][CTup]{%
      C\vdash \confined(T,T')
    }{%
      C\vdash \confined(\tpair{T}{T'})
    }
&
    \irule*[CTup-Bw][CTup-Bw]
    {%
      C\vdash \confined(\tpair{T}{T'}) \quad o \in \{T,T'\}
    }{%
      C\vdash \confined(o)
    }
\\[1.4em]
   \irule*[CTS][CTS]{%
     C,C_0 \vdash \confined(T)
    }{%
      C\vdash \confined( \forall (\vec\gamma: C_0). T )
    }
    &
\end{longtable}
  Moreover, $\confined_C(\Gamma)$ is the largest subset of $\Gamma$ such that
  for all bindings $ (x:\TS) \in \confined_C(\Gamma)$ we have
  $C\vdash\confined(\TS)$.
\end{definition}

The above definitions admit behaviours constructed by $\tau$ and
recursive behaviours (\iref{CTau}), and types that are constructed by such
behaviours and the base types $\tint,\tbool,\tunit$ (\iref{CCons}).
The definition allows for type and behaviour variables ($\alpha$, $\beta$) as
long as they are only related to confined types in $C$
(\iref{CAx-b}, \iref{CAx-T}). Sub-behaviours of confined behaviours and
sub-types of confined types are confined too (\iref{CSub-b}, \iref{CSub-T}).
Type schemas are confined if all their instantiations in $C$ are
confined (\iref{CTS}).
The definitions contain composition rules for composite
behaviours and types (\iref{CICh}, \iref{CSeq}, \iref{CSpw}, \iref{CFun},
\iref{CTup}); they also contain decomposition (or ``\textit{backward}'') rules
for the same composite constructs (\iref{CICh-Bw}, \iref{CSeq-Bw}, \iref{CSpw-Bw}, \iref{CFun-Bw},
\iref{CTup-Bw}).
The backward rules ensure that all the sub-components of a
confined behaviour or type in $ C $ are also confined.
\end{hide}

\begin{hide} 
We will work with well-formed constraints, satisfying
the following conditions.

\begin{definition}[Well-Formed Constraints] \label{def:wf-constraints} 
  $C$ is \emph{well-formed} if:
  \begin{enumerate}
    \item \emph{Type-Consistent:} for all type constructors $tc_1$, $tc_2$, if $(\su{tc_1(\vec t_1)}{tc_2(\vec t_2)})\in C$, then
      $tc_1=tc_2$, and for all $t_{1i} \in \vec t_1$ and $t_{2i} \in \vec t_2$, $(\su{t_{1i}}{t_{2i}})\in C$;
    \item \emph{Region-Consistent:} if $C \vdash \seql{l}{l'}$ then $l=l'$;
    \item \emph{Behaviour-Compact:} all cycles in behaviour constraints contain at least one $(\su{\orec{b}{\beta}}{\beta})\in C$;
      also if $(\su{\orec{b}{\beta}}{\beta'})\in C$ then $\beta=\beta'$ and $\forall(\su{b'}{\beta})\in C$, $b'=\orec{b}{\beta}$.
    \item \emph{Well-Confined:} for any type $ T $ and region $ \rho$, $
    C \vdash \confined(T) $ implies $ T \neq \tses \rho $. For all behaviours $
    b $, $ C \vdash \confined(b) $ implies that $ b \not\in\{%
    \popo{\rho}{\outp T}, \popo{\rho}{\inp T}, \popo{\rho}{\outp \rho}, \popo{\rho}{\inp l},
    \popo{\rho}{\outp L_i},$ $ \bechoice{}{\rho}{L_i}{b_i}%
    \} $.
  \end{enumerate}
\end{definition}
The first condition disallows constraints such as ($\su{\tint}{\tpair{T}{T'}}$) which lead to type
errors, and deduces ($\su{T_i}{T_i'}$) from ($\su{\tpair{T_1}{T_2}}{\tpair{T_1'}{T_2'}}$).

The second condition
requires that only endpoints from a single source can flow in each $\rho$. This
condition rejects Example \ref{ex:region-flow}.

\begin{example}
  The following program requests two session endpoints, bound to $x$ and $y$. It then binds one of these endpoints to $z$,
  depending on the value of $e$, and sends $\keyw{1}$ over $x$ and $\keyw{tt}$ over $z$.
  \[
    \begin{array}{@{}l@{\;}r@{\;}l@{}}
      \klet*
      & (x, y) &= (\eapp{\erequest{l_{1}}{\keyw{c}}}{\cunit},\eapp{\erequest{l_{2}}{\keyw{d}}}{\cunit})\\
      & z &= \eif{e}{x}{y}\\
      \multicolumn{3}{@{}l@{}}{%
        \kin~
        \eapp{\esend{x}}{\keyw{1}};
        \eapp{\esend{z}}{\keyw{tt}}
    }
  \end{array}
\]
  This program is not typable because communications on the $\keyw{c}$- and $\keyw{d}$-endpoints depend on  the
  value returned from $e$, which cannot be statically determined. In our framework, $z$ will have type $\tses{\rho}$
  and the constrain environment will contain $C \vdash \rho \sim l_1,~ \rho\sim l_2$. The program will be rejected by the second
  condition of \ref{def:wf-constraints}.
\end{example}

In related work (e.g., \cite{GayHole05}) such a program is rejected because of the use of substructural types for session
endpoints.

The third condition of Def.~\ref{def:wf-constraints} disallows recursive behaviours through the environment without the use of a
$\orec{b}{\beta}$ effect. The second part of the condition requires that there is at most one recursive constraint in the
environment using variable $\beta$. This condition is necessary to guarantee type preservation and the decidability of session
typing.

The fourth condition enforces that all types and behaviours marked as $
\cfd{-} $ in $ C $ are confined.
\paragraph{Polymorphism.}
The type system extends Hindley-Milner polymorphism with type schemas $\TS$ of the form
$\typeschema{\vec\gamma}{C}{T}$, where $\gamma$ ranges over any variable $\alpha,\beta,\rho,\psi$.
Type environments $\Gamma$ bind unique variable names to type schemas; we let $\forall(\emptyset).T = T$.
Besides type ($\vec\alpha$), behaviour ($\vec\beta$), and region
($\vec\rho$) variables, type schemas also generalise \emph{session variables} $\vec\psi$. A type schema contains a set $C$ which
imposes constraints on quantified variables. For $\TS$ to be \emph{well-formed}, it must be
$\fv(C)\subseteq\{\vec\gamma\}$. The types schemas of the constant \lang functions are:
\[
\begin{array}{@{}l@{\,:\,}l@{}}
\erequest{l}{c}   & \typeschema {\beta\rho\psi }
                                     {%
                                       \su{\pusho{l}{\psi}{%
                                       }
                                       }{\beta},
                                       \seql{\rho}{l},
                                       \seql{c}{\psi}
                                     }
                                     {%
                                      \tfun{\tunit}
                                      {\tses{\rho} }
                                        {\beta }
                                     }
\\ \eaccept{l}{c}    & \typeschema
                              {\beta\rho\psi}
                              {%
                                \su {\pusho{l}{\psi}{%
                                    }
                                } {\beta},
                                \seql{\rho}{l},
                                \seql {\co{c}}{\psi}
                              }
                              {%
                                \tfun{\tunit}{\tses{\rho}}{\beta}
                              }
\\ \esend{}              &
                              \typeschema{\alpha\beta\rho}{%
                                \su{\popo{\rho}{\outp \alpha}}{\beta},
                                \cfd\alpha
                              }{%
                                \tfun{\tpair{\tses{\rho}}{\alpha}}{\tunit}{\beta}
                              }
\\ \erecv{}{}           &
                              \typeschema{\alpha\beta\rho}{%
                                \su{\popo{\rho}{\inp \alpha} }{\beta},
                                \cfd\alpha
                              }{%
                                \tfun{\tses{\rho}}{\alpha}{\beta}
                              }
\\ \eselectnew{L}       &
                              \typeschema{\beta\rho}{%
                                \su{\popo{\rho}{\inp L} }{\beta}
                              }{%
                                \tfun{\tses{\rho}}{\tunit}{\beta}
                              }
\\ \edeleg{}             &
                              \typeschema{\beta\rho\rho'}{%
                                \su{\popo{\rho}{\outp \rho'}}{\beta}
                              }{%
                                \tfun{\tpair{\tses{\rho}}{\tses{\rho'}}}{\tunit}{\beta}
                              }
\\ \eresume{l}{}             \qquad&
                              \typeschema{\beta\rho\rho'}{%
                                \su{\popo{\rho\inp }{\rho'}}{\beta},~
                                \seql{\rho'}{l}
                              }{%
                                \tfun{\tses{\rho}}{\tses{\rho'}}{\beta}
                              }
\end{array}
\]
The effect of $\erequest{l}{c}$ ($\eaccept{l}{c}$) is to push a new static session endpoint $l$ on the stack.
For simplicity of typing, we assume functions $\erequest{l}{c}$ and $\eaccept{l}{c}$ for every global channel $c$.
The session type of
the endpoint is a variable $\psi$, to be substituted with a concrete session type (or a fresh variable in the case of inference)
at instantiation of the polymorphic type. This $\psi$ has to be equal to the session type of the static endpoint $c$ (resp.,
$\co{c}$), expressed by the constraint $\seql{c}{\psi}$ (resp., $\seql{\co c}{\psi}$). The return type of the function is
$\tses{\rho}$, where $\seql{\rho}{l}$. The types of the rest of the functions follow the same principles.
The following definition allows the instantiation of a type schema under a global constraint environment $C$.
\begin{definition}[Solvability]
  $\typeschema{\vec\gamma}{C_0}{T}$ is \emph{solvable} from $C$ using substitution $\sigma$ when
  $\dom{\sigma}\subseteq\{\vec\gamma\}$ and $C \vdash C_0\sigma$.
  \TS is solvable from $C$ if it exists $\sigma$ such that $\TS$ is solvable from $C$ using $\sigma$.
\end{definition}

\end{hide}

\begin{hide}
\paragraph{Typing Rules.}
The rules of our type-and-effect system are shown in \cref{fig:typing-rules}.
Most typing rules
are standard---we discuss only those different from \cite{ANN}.
\todo{I've hidden rule explanations and the def. of sub-typing - the def.
should go to the appendix}
\end{hide}
\begin{hide}
Rule \iref{tmatch} types a case expression with an external-choice behaviour of
the same number of branches. The choice labels $L_i$ ($i\in I$) in the code
determine those in the behaviour. The computation of each branch will return a
value of the same type $T$ but possibly have a different effect $b_i$.
Rule \iref{tsub} is for subtyping and sub-effecting. The latter can only
replace behaviours with variables, avoiding a more complex relation.
\begin{definition}[Functional Subtyping] 
  \label{def:func-subtyping}
  $\subType{T}{T'}$ is the least reflexive, transitive, compatible relation on types with the axioms:
  \begin{gather*}
    \irule*{(T_1 \subseteq T_2) \in C}{\subType{T_1}{T_2}}
    ~~~
    \irule*{C \vdash \seql{\rho}{\rho'} } {\subType{\tses \rho} {\tses {\rho'}} }
    ~~~
    \irule*{%
      \coType {T_1'}{T_1} \quad \sub{\beta}{\beta'} \quad \coType{T_2}{T_2'}
    }{%
      \coType{\tfun{T_1}{T_2}{\beta} }{\tfun {T_1'}{T_2'}{\beta'} }
    }
  \end{gather*}
\end{definition} 

\end{hide}


\section{Second Level Typing: Session Types} 
\label{sec:session-types}

\begin{hide}
The type-and-effect system we presented so far is parametric to session type annotations in behaviours $\pusho{l}{\eta}{}$ and
constraints ($\seql{c}{\eta}$, $\seql{\co c}{\eta}$).
\end{hide}
Session types describe the communication protocols of endpoints;
their syntax~is:
\\[1ex]$%
~\hfill~
  \small
\begin{array}{@{}r@{~}c@{~}l@{}}
  \eta & ::= &      \tend
               \bnf \outp T.\eta
               \bnf \inp T.\eta
               \bnf \outp \eta.\eta
               \bnf \inp \eta.\eta
               \bnf \sichoice {i}{I} {\eta}
               \bnf \sechoice {i}{I_1}{I_2}{\eta}
               \bnf \psi
\end{array}
~\hfill~
$\\[1ex]
A session type is finished ($\tend$) or it can describe further interactions:
the input ($\inp T.\eta$) or output ($\outp T.\eta$) of a \emph{confined} value $T$,
or the delegation ($\outp \eta'.\eta$) or resumption ($\inp \eta'.\eta$) of an
endpoint of session type $\eta'$, or the offering of  
non-deterministic selection ($\sichoicetext {i}{I} {\eta}$) of a label $L_i$,
signifying that session type $\eta_i$ is to be followed next.

Moreover, a session type can offer an external choice $\sechoicetext {i}{I_1}{I_2}{\eta}$ to its communication partner. Here $I_1$
contains the labels that the process \emph{must} be able to accept and $I_2$
the labels that it \emph{may} accept. We require that $I_1$ and $I_2$ are
disjoint and $I_1$ is not empty.
\begin{hide}
Hence, session types give a lower ($I_1$) and an upper ($I_1\cup I_2$)
bound of the labels in external choices.
These two sets of labels are not necessary for typing external choice---we
could use only the first set. However,
\end{hide}
\begin{icalp}
Although a single set would suffice,
\end{icalp}
the two sets make type
inference deterministic and independent of source code order.
\begin{hide}
It also makes typing more efficient, modular and intuitive.
\end{hide}
\begin{hide}
\begin{example} 
  Consider a program $P[e_1][e_2]$ containing the expressions:
  \[
    \begin{array}{@{}r@{~}c@{~}l@{}}
      e_1 &\defeq& \elet{x}{\eapp{\eaccept{l_1}{c}}{\cunit}}{\ematch{x}{L_1 \Rightarrow e,~L_2 \Rightarrow e^{\star}}{}}
      \\
      e_2 &\defeq& \elet{x}{\eapp{\erequest{l_2}{c}}{\cunit}}{\eapp{\eselectnew{L_2}}{x}}
    \end{array}
  \]
  Suppose $e^\star$ contains a type error, possibly because of a mismatch in session types with another part of $P$. If a type
  inference algorithm run on $P[e_1][e_2]$ first examines $e_1$, it will explore both branches of the choice, tentatively
  constructing the session type $\Sigma\{L_1.\eta_1,~L_2.\eta_2\}$, finding the error in $e^\star$. One strategy might then be to
  backtrack from typing $e^{\star}$ (and discard any information learned in the $L_2$ branch of this and possibly other choices in the code)
  and continue with the session type $\Sigma\{L_1.\eta_1\}$. However, once $e_2$ is encountered, the previous error in $e^\star$
  should be reported. A programmer, after successfully type checking $P[e_1][()]$, will be surprised to discover a type error in
  $e_1$ after adding in $e_2$. The type-and-effect system here avoids such situations by typing all choice branches, even if they
  are \emph{inactive}, at the expense of rejecting some---rather contrived---programs. A similar approach is followed in the
  type-and-effect system of the previous section by requiring \emph{all} branches to have the same type (Rule~\iref{tmatch} in
  \cref{fig:typing-rules}).
\end{example} 
\end{hide}

\begin{myfigure*}{Abstract Interpretation Semantics.}{fig:abstract-interpr}
\footnotesize

  $\begin{array}{@{}r@{\;}l@{~}l@{}}
\RefTirName{End}:\hfill
    \dstep {\St l \tend } {b
            &} {\Delta}{b}
            &
  \\[1ex]
\RefTirName{Beta}:\hfill
\dstep {\Delta}          {\beta
            &} {\Delta}{b}
            & \text{if } \sub{b}{\beta}
  \\[1ex]
\RefTirName{Plus}:\hfill
\dstep {\Delta}          {b_1 \oplus b_2
            &} {\Delta}{b_i}
            & \text{if } i \in \{1, 2\}
  \\[1ex]
\RefTirName{Push}:\hfill
\dstep {\Delta}           {\pusho{l}{\eta}{} &}
          {\St {l}{\eta}}    {\tau}
          & \text{if } l \not\in \Delta.\mathsf{labels}
  \\[1ex]
\RefTirName{Out}:\hfill
\dstep {\St {l}{\outp T.\eta}} {\popo{\rho}{\outp T'} &}
          {\St {l}{\eta}} {\tau}
          & \text{if } \nbox{C\vdash\seql{\rho}{l},~ 
          \subt{T'}{T}}
  \\[1ex]
\RefTirName{In}:\hfill
\dstep {\St {l}{\inp T.\eta}} {\popo{\rho}{\inp T'} &}
          {\St {l}{\eta}} {\tau}
          & \text{if }
          \nbox{C\vdash\seql{\rho}{l},~ 
          \subt{T}{T'}}
  \\[1ex]
\multicolumn{2}{@{}l@{}}{
  \RefTirName{Del}:
  \mcconf {\stBase {l}{\outp \eta_d.\eta}{\St {l_d}{\eta_d'}}}
          {\popo{\rho}{\outp \rho_d}}
}\\
&\xrightarrow{}_{C}
\mcconf {\St {l}{\eta}} {\tau}
          & \text{if }
          \nbox{%
          C\vdash\seql{\rho}{l},~ \seql{\rho_d}{l_d},~ \subt{\eta_d'}{\eta_d}}
  \\[1ex]
\RefTirName{Res}:\hfill
\dstep {(l:\inp \eta_r.\eta)}
          {\popo{\rho}{\inp l_r} &}
          {\stBase {l}{\eta}{(l_r:\eta_r)}} {\tau }
          & \text{if } (l\neq l_r),~ C\vdash\seql{\rho}{l}
  \\[1ex]
\RefTirName{ICh}:\hfill
\dstep {\St {l}{\sichoice{i}{I}{\eta}}}
            {\popo{\rho}{\outp L_j}
            & }
            {\St {l}{\eta_j} }    {\tau}
            & \text{if } (j \in I),~ C\vdash\seql{\rho}{l}
  \\[1ex]
\multicolumn{2}{@{}l@{}}{
  \RefTirName{ECh}:
  \mcconf{\St {l}{\sechoice{i}{I_1}{I_2}{\eta}}}{\bechoice{j\in J}{\rho}{L_j}{b_j}}
}\\
&\xrightarrow{}_{C}
\mcconf{\St {l}{\eta_k}}{b_k}
            &\text{if }\nbox{%
             k\in J,~ C\vdash\seql{\rho}{l},~
             \\I_1\subseteq J \subseteq I_1 \cup I_2
            }
  \\[1ex]
\RefTirName{Rec}:\hfill
\dstep {\Delta} {\orec{b}{\beta} &} {\Delta} {\tau}
            & \nbox{\text{if }
            \stackop {'} {b } {\epsilon } {},\\
            C' = (C\removeRHS \beta) {\cup} (\su{\tau}{\beta})
          }
  \\[1ex]
\RefTirName{Spn}:\hfill
\dstep {\Delta} {\espawn{b} &} {\Delta} {\tau}
          & \text{if } \stackop {} {b} {\epsilon}{}
  \\[1ex]
\RefTirName{Seq}:\hfill
\dstep {\Delta}          {b_1;b_2
            &} {\Delta'}{b_1';b_2}
            & \text{if } \dstep {\Delta} {b_1}{\Delta'}{b_1'}
  \\[1ex]
\RefTirName{Tau}:\hfill
  \dstep {\Delta}          {\bseq{\tau}{b}
  &} {\Delta}{b}
            &
  \end{array}$
\end{myfigure*}

We express our session typing discipline as an \emph{abstract interpretation
semantics} for behaviours shown
\begin{hide}
in \cref{fig:abstract-interpr}, which conservatively approximates the
communication effect of expressions at runtime.
\end{hide}
\begin{icalp}
in \cref{fig:abstract-interpr}.
\end{icalp}
It describes transitions of the form $\dstep {\Delta} {b} {\Delta'}{b'}$, where $b,b'$ are behaviours.
The $\Delta$ and $\Delta'$ are stacks on which static endpoint labels together with their corresponding session
types $(l:\eta)$ can be pushed and popped.
Inspired by Castagna et al. \cite{CastagnaEtal09}, in the transition $\dstep {\Delta} {b} {\Delta'}{b'}$, behaviour $b$ can
only use the top label in the stack to communicate, push another label on the stack, or pop the top label provided
its session type is $\tend$. This stack principle gives us a partial lock freedom property
(\cref{thm:type-soundness}).

Rule \RefTirName{End} from \cref{fig:abstract-interpr} simply removes a finished
stack frame, and rule \RefTirName{Beta} looks up behaviour variables in $C$; \RefTirName{Plus} chooses one of the branches of non-deterministic behaviour. The
\RefTirName{Push} rule extends the stack by adding one more frame to it, as long as the label has not been added before on the
stack (see \cref{ex:double-example}). Rules \RefTirName{Out} and \RefTirName{In} reduce the top-level session
type of the stack by an output and input, respectively. The requirement here is that the labels in the stack and the behaviour
match, the usual subtyping \cite{GayHole05} holds for the communicated types, and that the communicated types are \emph{confined}.
Note that sending confined (recursive) functions does not require delegation of
endpoints.

Transfer of endpoints is done by delegate and resume
(rules \RefTirName{Del} and \RefTirName{Res}). Delegate sends the second endpoint in the stack over the first; resume mimics
this by adding a new endpoint label in the second position in the stack. Resume requires a one-frame stack to guarantee
that the two endpoints of the same session do not end up in the same stack, thus avoiding deadlock \cite{CastagnaEtal09}. If we
abandon the partial lock freedom property guaranteed by our type system, then the conditions in \RefTirName{Res} can be relaxed
and allow more than one frame.

A behaviour reduces an internal choice session type by selecting one of its labels (\RefTirName{ICh}). A behaviour
offering an external choice is reduced non-deterministically to any of its branches (\RefTirName{ECh}). The behaviour must offer
all \emph{active} choices ($I_1\subseteq J$) and all behaviour branches must be typable by the session type
($J \subseteq I_1\cup I_2$).

As we previously explained, recursive functions in \lang must be confined. This means that the communication
effect of the function body is only on endpoints that the function opens internally, and the session type of these endpoints
is followed to completion (or delegated) before the function returns or recurs. This is enforced in Rule~\RefTirName{Rec}, where
$\orec{b}{\beta}$ must have no net effect on the stack, guaranteed by
$\stackop {'} {b } {\epsilon } {}$. Here $C' = (C\removeRHS \beta) {\cup} (\su{\tau}{\beta})$ is the original
$C$ with constraint $(\orec{b}{\beta}\subseteq\beta)$ replaced by $(\su{\tau}{\beta})$
(cf., Def.~\ref{def:wf-constraints}).
This update of $C$ prevents the infinite unfolding of $\orec{\beta}{b}{}$.
Spawned processes must also be confined (\RefTirName{Spn}).
We work with \emph{well-formed} constraints:

\begin{hide}
We will treat these stacks linearly, in the sense that a label can be pushed onto a stack only if it has not been previously
pushed on (and possibly popped from) that stack. Therefore, every stack
$\Delta$ contains an implicit set of the labels $\Delta.\mathsf{labels}$ that
have been pushed onto it.
\end{hide}

\begin{definition}[Well-Formed Constraints] \label{def:wf-constraints} 
  $C$ is \emph{well-formed} if:
  \begin{enumerate}
    \item \emph{Type-Consistent:} for all type constructors $tc_1$, $tc_2$, if $(\su{tc_1(\vec t_1)}{tc_2(\vec t_2)})\in C$, then
      $tc_1=tc_2$, and for all $t_{1i} \in \vec t_1$ and $t_{2i} \in \vec t_2$, $(\su{t_{1i}}{t_{2i}})\in C$.
    \item \emph{Region-Consistent:} if $C \vdash \seql{l}{l'}$ then $l=l'$.
    \item \emph{Behaviour-Compact:} behaviour constraints cycles contain
    a $(\su{\orec{b}{\beta}}{\beta})\in C$; also if
    $(\su{\orec{b}{\beta}}{\beta'})\in C$ then $\beta=\beta'$ and
    $\forall(\su{b'}{\beta})\in C$, $b'=\orec{b}{\beta}$.
    \item \emph{Well-Confined:} if $C \vdash \confined(T) $ then $ T \neq \tses \rho $; 
      also if $ C \vdash \confined(b) $ then $ b \not\in\{%
    \popo{\rho}{\outp T}, \popo{\rho}{\inp T}, \popo{\rho}{\outp \rho}, \popo{\rho}{\inp l},
    \popo{\rho}{\outp L_i},$ $ \bechoice{}{\rho}{L_i}{b_i}
    \} $.
  \end{enumerate}
\end{definition}

The first and fourth conditions are straightforward.
The third condition disallows recursive behaviours
through the environment without the use of a $\orec{b}{\beta}$ effect.
All well-typed \lang programs contain only such recursive behaviours because recursion is only possible through the use of a
recursive function.
The second part of the condition requires that there is at most one recursive constraint in the
environment using variable $\beta$. This is necessary for type preservation and decidability of session
typing.
The second condition of \cref{def:wf-constraints}
requires that only endpoints from a single source can flow in each $\rho$,
preventing aliasing of endpoints generated at different source locations.

\begin{myfigure*}{Examples of aliasing}{fig:aliasing}
  (a)~
\begin{minipage}[t]{.41\textwidth}\vspace{-1.5em}
\begin{lstlisting}[backgroundcolor=\color{white},numbers=none]
let val (p1$,$ p2) = ($\erequest{l_1}{c}, \erequest{l_2}{d}$)
    val p3 = if $ e $ then p1 else p2
in send p3 $\ctrue$
\end{lstlisting}
\end{minipage}
\hfill
  (b)~
  \begin{minipage}[t]{.40\textwidth}\vspace{-1.5em}
\begin{lstlisting}[backgroundcolor=\color{white},numbers=none]
let fun f  = $\erequest{l}{c}$
    val p1 = f ()
in send p1 1;
   let val p2 = f () in send p1 2;
\end{lstlisting}
\end{minipage}
\end{myfigure*}

\begin{example}[Aliasing of Different Sources]\label{ex:region-flow}
  Consider the program in \cref{fig:aliasing}~(a).
  Which endpoint flows to \keyw{p3} cannot be statically determined and therefore the program cannot yield a consistent session
  type for channels $c$ and $d$.
  The program will be rejected in our framework because \texttt{p3} has type $\tses{\rho}$
  and from the constrain environment $C \vdash \rho \sim l_1,~\rho\sim l_2$, which fails \cref{def:wf-constraints}.
\end{example}

Because endpoints generated from the same source
code location are identified in our system, 
stacks are treated \emph{linearly}: an endpoint label $l$ may
only once be pushed onto a stack. Every stack $\Delta$ contains an
implicit set of the labels $\Delta.\mathsf{labels}$ to record previously pushed labels.

\begin{example}[Aliasing From Same Source] \label{ex:double-example}
  Consider the program in \cref{fig:aliasing}~(b) where endpoint \keyw{p1} has type $\tses{\rho}$, with $C \vdash \rho\sim l$.
  The program has behaviour 
  $\pusho{l}{\eta}{}; \popo{\rho}{\outp \tint}; \pusho{l}{\eta}{}; \popo{\rho}{\outp \tint}; \tau$.
  Label $l$ is pushed on the stack twice and the behaviour complies with the session type $\eta=\outp \tint. \tend$.
  However the program does not respect this session type because it sends two integers on \keyw{p1} and none on \keyw{p2}.
  Our system rejects this program due to the violation of stack linearity.
\begin{hide}
Because region analysis identifies endpoints generated from the same source
code location, the above behaviour does not differentiate between the labels in
the second push and the final output.
Therefore, the only reason to reject this behaviour is because it pushes the
same location on the stack twice.
\end{hide}
\end{example}

Our system also rejects the correct version of the program in \cref{fig:aliasing}~(b), where the last send is replaced by
$\eapp{\esend{\keyw{p2}}}{\keyw{2}}$. This is because the label $l$ associated with the variable $\rho$ of a type $\tses{\rho}$ is
\emph{control flow insensitive}. Existing techniques can make labels control flow sensitive (e.g., \cite{TofteT94,kcfa}).

Using the semantics of \cref{fig:abstract-interpr} we define the following predicate which requires behaviours to follow to
completion or delegate all $(l:\eta)$ frames in a stack.
\begin{definition}[Strong normalization]
$\stackop{}{b}{\Delta}{\vec \Delta'}$ when
for all $b',\Delta'$ such that
$ \stackstepp{}{\Delta}{b}{\Delta'}{b'} \donearrow$ we have $ b' = \tau $
and $ \Delta' \in \{\vec\Delta'\} $.
We write
$\stackop{}{b}{\Delta}{}$
when
$\stackop{}{b}{\Delta}{\epsilon}$, where $\epsilon$ is the empty stack.
\end{definition}
\begin{hide}
 Because of finite session stacks and session types, behaviour-compact constraint environments, and
because of the rule for recursive behaviour explained below, there are no infinite sequences of reductions over this semantics.
Therefore $\stackop{}{b}{\Delta}{\vec \Delta'}$ is decidable.
\end{hide}
\begin{hide}
Note that according to the rules of \cref{fig:abstract-interpr}, in order for
$\stackop{}{b}{\Delta}{}$ to be true, session variables $\psi$ can only appear as delegated types not used in the reductions. For typable programs, our session type inference
algorithm can indeed infer session types that will satisfy this requirement
(\cref{sec:combine-types}).
\end{hide}

\begin{hide}
Similarly to recursive functions, a spawn will create a new, confined process.
This is guaranteed by Rule \iref{tspawn} in \cref{fig:typing-rules}, and Rule~
here which requires that the effect $b$ of the spawned process satisfies
$\stackop{}{b}{\Delta}{}$.
\end{hide}
\begin{hide}
\begin{remark}[Type annotations]
  The rule for recursive behaviour gives a method to safely bypass the linearity principle of the stacks, which requires that each
  label is pushed on the stack at most once, and type more programs. Because of this restriction, the $\mathit{swap}$ method of
  \ref{ex:swap1} can only be applied once in the body of the let.
  To explain this let us consider the type schema of the $\mathit{swap}$ function:
  $\forall(\alpha_1\alpha_2\beta:C_0)\tfun{\alpha_1}{\alpha_2}{\beta}$, where
  $$C = (\pusho{l}{\eta}{};\popo{\rho}{\outp \alpha_1};\popo{\rho}{\inp \alpha_2};\tau \subseteq \beta), (\rho \sim l),
  (\alpha_1\subseteq\alpha_2) $$
  The expression
  $(\eapp{\mathit{swap}}{\cunit}; \eapp{\mathit{swap}}{\cunit})$
  which applies $\mathit{swap}$ twice at type $\tfun{\tunit}{\tunit}{\beta}$ will have a behaviour (omitting some $\tau$s):
  \[\nbox{%
    \pusho{l}{\eta}{};\popo{\rho}{\outp \tunit};\popo{\rho}{\inp \tunit};\tau;~
    \pusho{l}{\eta}{};\popo{\rho}{\outp \tunit};\popo{\rho}{\inp \tunit};\tau
  }\]
  This behaviour clearly violates stack linearity because it pushes $l$ twice on the stack, and therefore the expression is not
  typable in our system.

  However, the communication effect of $\mathit{swap}$ is confined, and therefore we can define the function using the $\krec$
  construct instead of $\kfun$. In this case, the type schema of $\mathit{swap}$ will contain the constraint set
  $$C = (\orec{(\pusho{l}{\eta}{};\popo{\rho}{\outp \alpha_1};\popo{\rho}{\inp \alpha_2};\tau)}{\beta} \subseteq \beta), (\rho \sim l) $$
  and the above expression will have behaviour:
  \[
    b= \nbox{%
      \orec{(\pusho{l}{\eta}{};\popo{\rho}{\outp \tunit};\popo{\rho}{\inp \tunit};\tau)}{\beta};~
      \orec{(\pusho{l}{\eta}{};\popo{\rho}{\outp \tunit};\popo{\rho}{\inp \tunit};\tau)}{\beta}
  }
  \]
  It is easy to verify that this behaviour satisfies $\stackop{}{b}{\epsilon}{}$.

  Therefore the $\krec$ construct can be used by programmers as an annotation to mark the functions that are confined and
  therefore it is safe to apply multiple times in programs.
\end{remark}
\end{hide}
\begin{hide}
\paragraph{Endpoint Duality}
So far our type discipline does not check session types for duality. The
program: \[\nbox{%
    \espawn{(\efunc{\_}{\elet{x}{\eapp{\erequest{}{c}}{\cunit}}{\eapp{\esend{x}}{\keyw{\ctrue}}}})};~
  \elet{x}{\eapp{\eaccept{}{c}}{\cunit}}{\erecv{}{x} + \keyw{1}}
}\]
should not be typable because its processes use dual endpoints at incompatible
session types.
Therefore
\end{hide}
\begin{icalp}
Lastly,
\end{icalp}
session types on dual session endpoints ($\seql{c}{\eta}$,
$\seql{\co c}{\eta'}$) must be dual ($C \vdash \eta \dual{} \eta')$
\begin{icalp}
The definition of duality is standard, with the exception that
internal choice is dual to external choice only if
the labels in the former are included in the \emph{active} labels in the latter.
\end{icalp}

\begin{definition}[Valid Constraint Environment] \label{def:valid-type-solution}
  $C$ is \emph{valid} if there exists a substitution $\sigma$ of variables $\psi$
  with closed session types, such that  $C\sigma$ is well-formed and
  for all $(\seql{c}{\eta}), (\seql{\co c}{\eta'})\in C\sigma$ we have
  $C\vdash \eta \dual{} \eta'$.
\end{definition}

\begin{hide}
\begin{definition}[Duality] \label{def:duality}
$ C\vdash \eta \dual{} \eta'$ if the following rules and their symmetric ones are satisfied.
\todo{moved all rules in a single line}
\begin{gather*}
  \irule*{}{%
    C \vdash  \tend \dual{} \tend
  }
  \quad
  \irule*{%
    C \vdash  \subt{T}{T'}
    \\\\
    C \vdash  \eta \dual{} \eta'
  }{%
    C \vdash  {\outp T}.\eta \dual{} {\inp T'}.\eta'
  }
  \quad
  \irule*{%
    C \vdash  \subt{\eta_0}{\eta_0'}
    \\\\
    C \vdash  \eta \dual{} \eta'
  }{%
    C \vdash  {\outp \eta_0}.\eta \dual{} {\inp \eta_0'}.\eta'
  }
  \quad
  \irule*{%
    \forall i\in I_0.~~ C \vdash  \eta_i \dual{} \eta'_i
  }{%
    C \vdash  {\sichoiceBase{i}{I_0}{\eta}{i}} \dual{} {\sechoiceBase{i}{I_0I_1}{I_2}{\eta'}{i}})
  }
\end{gather*}
where $C \vdash  \subt{\eta}{\eta'}$ is Gay\&Hole \cite{GayHole05} subtyping, with $C$ needed for
inner uses of $C \vdash  \subt{T}{T'}$, extended to our form of external choice, where
$C \vdash\sechoicetext{i}{I_1}{I_2}{\eta'}
\subt{}
\sechoicetext{i}{J_1}{J_2}{\eta'}$
when
$I_1\subseteq J_1$
and
$J_1\cup J_2\subseteq I_1\cup I_2$
and
$\forall( i \in J_1\cup J_2).~
C \vdash  \subt{\eta_i}{\eta_i'}$.

\end{definition}
\end{hide}


\subsection*{Combining the Two Levels} \label{sec:combine-types}
\begin{hide}
We
\todo{Important: we need to state the level 1 result that reductions in
the op. semantics imply reductions in the abstract interpretation semantics.}
are interested in typing source-level programs which contain a single process
$e$ with no open endpoints. However, to prove type soundness, we need to type
running systems containing multiple running processes and open endpoints. For
each running process we need to maintain a stack $\Delta$, containing the
session types of the endpoints opened by the process.
  We write
  $\tjSysrVec{}{e}{b}{\Delta}$ if $C$ is well-formed and valid, $(\tjrBaseVec{}{}{e}{T}{b}{C}{\emptyset})$, and
  $(\stackopVec{}{b}{\Delta}{})$, for some type $T$.
  Program $e$ is well-typed if $\tjSysr{}{e}{b}{\epsilon}$ for some $C$ and $b$.
\end{hide}

The key property here is \emph{well-stackedness}, the fact that in a running system where each process has a corresponding stack
of endpoints,
there is a way to repeatedly remove pairs of endpoints with dual session types from the top of two stacks, until all stacks are
empty.
\begin{definition}[Well-stackedness]                \label{def:well-stackedness}
$C \wellstacked \c S$ is the least relation satisfying:
\begin{gather*}
   \irule[][]
     {~}
     {C \wellstacked \epsilon  }
~
   \irule[][] {%
      C \wellstacked  \c S,~ \tconf{\Delta}{b}{e},~ \tconf{\Delta'}{b'}{e'}
      \\
      C \vdash \eta\dual {} \eta'
      \\
        p,\co p\freshfrom \Delta, \Delta', \c S
    }{%
      C \wellstacked \c S,
      \tconf{(p^l:\eta) \cdot \Delta}{b}{e},
      \tconf{(\co p^{l'}:\eta') \cdot \Delta'}{b'}{e'}
    }
\end{gather*}
\end{definition}

Note that this does not mean that programs are deterministic. Multiple pairs of endpoints may be at the top of a set
of stacks.
\begin{icalp}
Duality of endpoints guarantees that communications are safe;
the ordering of endpoints in removable pairs implies the absence of deadlocks.
\end{icalp}

We let $P$, $Q$ range over tuples of the form $\tconf{\Delta}{b}{e}$ and
$\c S$ over sequences of such tuples.
In this section stack frames $ (p^l : \eta ) $ store both
endpoints and their labels.
\begin{hide}
(and trivially lift $\Downarrow_C$ to such stacks).
  Program $e$ is well-typed if $\tjSysr{}{e}{b}{\epsilon}$ for some $C$ and $b$.
\end{hide}
\begin{icalp}
We write
  $\tjSysrVec{}{e}{b}{\Delta}$ if $C$ is well-formed and valid, $(\tjrBaseVec{}{}{e}{T}{b}{C}{\emptyset})$, and
  $(\stackopVec{}{b}{\Delta}{})$, for some $\vect T$.
  We write $C \wellstacked \c S$ if $\vect{\Delta}$ is well-stacked.
\end{icalp}
\begin{hide}
\begin{definition}[Well-stackedness]                \label{def:well-stackedness}
$C \wellstacked \c S$ is the least relation satisfying the rules:
\begin{gather*}
   \irule[][]
     {}
     {C \wellstacked \epsilon  }
~
   \irule[][] {%
      C \wellstacked  \c S,~ \tconf{\Delta}{b}{e},~ \tconf{\Delta'}{b'}{e'}
      \\
      C \vdash \eta\dual {} \eta'
      \\
        p,\co p\freshfrom \Delta, \Delta', \c S
    }{%
      C \wellstacked \c S,
      \tconf{(p^l:\eta) \cdot \Delta}{b}{e},
      \tconf{(\co p^{l'}:\eta') \cdot \Delta'}{b'}{e'}
    }
\end{gather*}
\end{definition}
\end{hide}
\begin{icalp}
Well-typed systems enjoy session
fidelity and preserve typing and well-stackedness.

 \begin{theorem} \label{thm:preservation}
Let  $\c S = \vect{\mcconf{\Delta}{b}, e}$
and
$\tjParSys{}{\c S}$
and
$\wst{\c S}$
and
$\vect{e} \arrow \vect{e}'$; then
there exist $\vect{\Delta}',\vect{b}'$ such that
$\c S'=\vect{\tconf{\Delta'}{b'}{e'}}$ and:
\\
\begin{tabular}{@{\quad}l@{\quad}l}
  1. $\tjParSys{}{\c S'}$
  & (Type Preservation)
\\2. $\vect{\mcconf \Delta b} \rightarrow^*_C \vect {\mcconf{\Delta'}{b'}}$
  & (Session Fidelity)
\\3. $ \wst{\c S'} $
  & (Well-Stackedness Preservation)
\end{tabular}
\end{theorem}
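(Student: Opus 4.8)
The plan is to prove the three conclusions \emph{simultaneously}, by a case analysis on the operational rule justifying $\vect{e} \arrow \vect{e}'$, i.e.\ on which of \iref{beta}, \iref{spawn}, \iref{init}, \iref{comm}, \iref{deleg}, \iref{select} fires. This is a subject-reduction argument in which the invariants reinforce one another: well-stackedness of the current stack tops is what licenses each communication; the matching abstract steps it licenses both witness session fidelity $\vect{\mcconf\Delta b} \rightarrow^*_C \vect{\mcconf{\Delta'}{b'}}$ and \emph{produce} the new stacks $\vect{\Delta}'$; and these same steps, via typability of the reduct, give the new behaviours $\vect{b}'$. Note that re-establishing $\tjParSys{}{\c S'}$ requires preserving not just typability but also the strong-normalization component $\stackop{}{b'}{\Delta'}{}$ bundled into $\tjParSys{}{\cdot}$; this follows because each $\mcconf{\Delta'}{b'}$ is a reduct of the strongly-normalizing $\mcconf{\Delta}{b}$ (by the fidelity step just exhibited), and any reduct of a strongly-normalizing configuration is itself strongly normalizing.

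Before the case analysis I would establish four auxiliary lemmas. First, a value lemma: every value types with behaviour $\tau$ (up to \iref{tsub}), so already-evaluated subterms contribute no effect. Second, a substitution lemma for \iref{beta} steps, with the \iref{rfix} case using that the recursive behaviour is guarded by $\orec{b}{\beta}$. Third, and most importantly, a decomposition lemma for evaluation contexts: from the typing of $E[e_0]$ one reads off the behaviour of the active redex $e_0$, and because $E$ holds only values to the left of its hole, the behaviour preceding that of $e_0$ reduces to $\tau$ by the silent rules \RefTirName{Beta}, \RefTirName{Plus}, \RefTirName{Seq}, \RefTirName{Tau}. This is the lemma tying the syntactic position of the redex to the \emph{poised} shape of the behaviour, guaranteeing that the next non-$\tau$ abstract action is exactly the communication performed and that it acts on the top of the process's stack. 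Fourth, a lemma that duality is stable under the relevant direction of Gay--Hole subtyping, i.e.\ $C\vdash \eta_1 \dual{} \eta_2$ together with the appropriate $\subt{\eta_2}{\eta_3}$ yields $C\vdash \eta_1 \dual{} \eta_3$; this carries duality across the subtyping slack permitted in \RefTirName{Out}, \RefTirName{In}, and \RefTirName{Del}.

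With these in hand the non-delegation cases are routine. For \iref{beta} the stack is untouched, substitution preserves typing, the behaviour reduces by $\tau$-steps (so $\Delta'=\Delta$), and well-stackedness is immediate. For \iref{spawn} the spawned process starts from the empty stack; \iref{tspawn} and the definition of $\tjParSys{}{\cdot}$ ensure its behaviour is confined and satisfies $\stackop{}{b}{\epsilon}{}$, so \RefTirName{Spn} applies and the fresh empty-stacked process is compatible with the well-stacked sequence. For \iref{init} both partners push a fresh frame (\RefTirName{Push}); validity of $C$ (\cref{def:valid-type-solution}) forces the two pushed session types to be dual, so the fresh dual endpoints $p,\co p$ form a removable top pair that is prepended to the derivation of $\wst{\c S}$. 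For \iref{comm} the decomposition lemma poises the two behaviours on \RefTirName{Out} and \RefTirName{In} at their stack tops, which by $\wst{\c S}$ carry dual types $\outp T.\eta_1$ and $\inp T'.\eta_2$; duality forces $T\le T'$ (communication safety) and leaves $\eta_1,\eta_2$ dual, preserving well-stackedness. \iref{select} is analogous via \RefTirName{ICh}/\RefTirName{ECh}, where the choice clause of duality and the condition $I_1\subseteq J \subseteq I_1\cup I_2$ ensure the selected label is offered and its continuation typable.

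The main obstacle is delegation, \iref{deleg}, the only step in which an endpoint genuinely migrates between processes: \RefTirName{Del} pops the second frame $(l':\eta_d')$ off the sender's two-frame stack while \RefTirName{Res} installs a new frame in second position on the receiver's single-frame stack. Three things must be checked together. First, that the sender is poised with the delegating endpoint on top and the delegated $p'$ immediately below, and the receiver with $\co p$ as its \emph{sole} frame---this follows from the decomposition lemma and the one-frame side condition of \RefTirName{Res} that the discipline imposes to keep dual endpoints out of one stack. Second, that session fidelity holds, i.e.\ both abstract steps fire under $\subt{\eta_d'}{\eta_d}$. Third, and most delicately, that $\wst{\c S'}$ holds: before the step $p'$ was paired with its dual $\co{p'}$ somewhere in $\c S$ at type $\eta_d'$, whereas afterwards $p'$ resides on the receiver's stack at the received type $\eta_r$. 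Using duality of $p,\co p$ on the delegating endpoint one obtains $\eta_d'\le\eta_d\le\eta_r$, and the duality-under-subtyping lemma then shows $\co{p'}$ remains dual to the migrated frame, so the global removal ordering of \cref{def:well-stackedness} can be rebuilt with the freshly relabelled endpoint. Reassembling the well-stackedness derivation after an endpoint has changed both its host stack and its label is where the single-frame restriction on \RefTirName{Res} and the freshness bookkeeping of \cref{def:well-stackedness} do the real work, and it is the step demanding the most care.
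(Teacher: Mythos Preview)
Your proposal is correct and covers all the essential points, including the delicate delegation case. The paper's proof is organized slightly differently: rather than performing the case analysis directly on the operational rules, it first introduces an explicit \emph{typed reductions} relation $\tred$ on configurations $\c S$ (one rule per operational rule, with the required behaviour and stack shape baked into the rule), and then factors the theorem into two lemmas: (i) $\tred$ preserves well-typedness and well-stackedness, and (ii) every untyped step $S\arrow \vect{e}'$ is simulated by some $\c S\tred\c S'$. Your decomposition lemma for evaluation contexts is exactly the paper's Type Decomposition lemma (paired with a Type Composition lemma for reassembly), and your duality-under-subtyping lemma is packaged in the paper as a well-stackedness-under-subtyping lemma: from $\wst{\c S,\tconf{\Delta\cdot(p^l:\eta)\cdot\Delta'}{b}{e}}$ and $C\vdash\subt{\eta}{\eta'}$ conclude the same with $\eta'$ in place of $\eta$. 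The paper's factoring buys a clean separation between ``the abstract machine can track the concrete one'' and ``the abstract machine preserves invariants'', which localises the inversion-on-typing work once and for all in lemma (ii); your direct approach avoids the overhead of defining $\tred$ but repeats some of that inversion reasoning per case. One small imprecision: in the delegation case the sender's stack need not be exactly two frames---\RefTirName{Del} requires only that the delegating and delegated endpoints occupy the top two positions, with arbitrary $\Delta_1$ below.
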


Session fidelity and well-stackedness preservation imply \emph{communication safety},
since the former guarantees that processes are faithful to session types
in the stacks, while the latter that session types
are dual for each pair of open endpoints $ p $ and $ \bar p$.
Moreover, well-stackedness implies \emph{deadlock freedom}.
$P$ \emph{depends on} $Q$ if the endpoint at the top of $P$'s stack has dual endpoint in $Q$.

\begin{lemma}[Deadlock Freedom]
  $\wst{\c S}$; dependencies in $\c S$ are acyclic.
\end{lemma}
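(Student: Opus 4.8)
The plan is to exploit the inductive structure of well-stackedness to impose a global order on the open endpoint pairs, and then to show that this order strictly increases along every genuine dependency edge. First I would read off, from a derivation of $\wst{\c S}$, a \emph{rank} for each open session: since the only non-trivial rule of \cref{def:well-stackedness} introduces exactly one fresh dual pair $(p,\co p)$, the steps of the derivation totally order these pairs, and I assign to $(p,\co p)$ the index of the step that introduces it, counting the innermost (base-closest) step first. The key structural observation is that each rule application pushes $p$ and $\co p$ onto the tops of two \emph{distinct} stacks, and every later push happens at a strictly larger rank. Hence, reading any single stack $\Delta$ from bottom to top, the ranks of its frames are strictly increasing; all frames of a stack carry distinct ranks (because $p$ and $\co p$ always go to different processes); and the top frame of $\Delta$ is precisely the one of maximal rank.

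Next I would set $m(P)$ to be the rank of the top frame of $P$'s stack and analyse a dependency edge $P \to Q$, where the top endpoint of $P$ is some $p$ of rank $k = m(P)$ and its dual $\co p$, also of rank $k$, occurs in $Q$. There are two cases. If $\co p$ is itself the top frame of $Q$, then $P$ and $Q$ share a \emph{ready} dual pair at both tops; these are exactly the removable pairs of \cref{def:well-stackedness} and impose no blocking. Otherwise $\co p$ lies strictly below the top of $Q$, so by the monotonicity above the top frame of $Q$ has rank strictly greater than $k$, that is $m(Q) > m(P)$.

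Finally I would conclude by finiteness. Along any cycle $m$ is non-decreasing, hence constant; using the distinctness of ranks within a stack, any equal-rank segment collapses to a single mutual pair $P \rightleftarrows Q$ whose tops are dual, i.e. a ready pair carrying no real dependency. A cycle using a genuine (buried-dual) edge would instead force $m$ to strictly increase around it, which is impossible since ranks range over a finite set. Thus the blocking-dependency graph is acyclic.

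I expect the main obstacle to be the bookkeeping around the reordering of parallel processes: systems are identified up to permutation, so the ``last two'' processes named in the rule may be any two, and I must check that the rank extracted from one fixed derivation is stable enough to justify the per-stack monotonicity and distinctness claims. The second delicate point is reading the dependency relation against the buried/ready dichotomy rather than the bare ``$\co p$ occurs in $Q$'' reading, so that the ready pairs of \cref{def:well-stackedness}—which correspond to immediately communicable redexes and hence to no deadlock—are correctly excluded from the cycle analysis.
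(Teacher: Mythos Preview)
Your argument is correct and is essentially an unrolled version of the paper's proof, which is the single phrase ``by induction on $\wst{\c S}$.'' Your rank function is precisely what that induction provides: at each step of the derivation one fresh dual pair is placed on the tops of two stacks, so the step index serves as your rank, and the inductive hypothesis gives acyclicity for the smaller system. The paper's implicit inductive step would observe that the two processes receiving the new top frames become $\mapsto$-sinks (their tops' duals sit at each other's tops, never buried), so any $\mapsto$-cycle in the conclusion must avoid them and hence already be a cycle in the premise. Your potential-function phrasing says the same thing globally: $m$ strictly increases along every $\mapsto$-edge.

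Your two stated concerns are minor. The reordering issue does not bite: you only need \emph{some} derivation of $\wst{\c S}$ to define the rank, and acyclicity is a property of the underlying graph, independent of which derivation you fix. The ready/waiting distinction is exactly the one the paper makes formally in the appendix: the relation asserted acyclic is the waiting relation $\mapsto$, where the dual is strictly buried, so your exclusion of ready pairs is correct by definition rather than delicate.
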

\end{icalp}

\begin{hide}
\begin{theorem}[Type Preservation] \label{thm:preservation} 
Let  $\tjParSys{}{\c S}$ and $\wst{\c S}$.
If $S \arrow \vect{e}'$ then there exist $\vect{\Delta}',\vect{b}'$ such that
$\c S'=\vect{\tconf{\Delta'}{b'}{e'}}$ and $\tjParSys{}{\c S'}$ and $ \wst{\c S'}$.
\end{theorem}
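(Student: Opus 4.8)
The plan is to prove all three conclusions simultaneously by case analysis on the operational step $\vect{e} \arrow \vect{e}'$, following the reduction rules of \cref{fig:lang}. Because every reduction fires inside evaluation contexts, I would first establish a \emph{context decomposition lemma}: whenever $\tjr{}{}{E[e_0]}{T}{b}$, the behaviour $b$ factors into the behaviour of the redex $e_0$ sequenced before a residual context behaviour, which the abstract \textsf{Seq} rule carries along unchanged while the redex behaviour reduces. This isolates each case to a redex together with the top of a stack and lets me rebuild $\tjr{}{}{E[e_0']}{T}{b'}$ from a typing of $e_0'$. The argument also relies on the usual auxiliary results for the first level: a \emph{substitution lemma} (for \iref{elet} and \iref{rfix}), \emph{inversion lemmas} for the fixed type schemas of the session primitives, and a \emph{duality-under-subtyping lemma} stating that $C\vdash\eta\dual{}\eta'$ is stable under replacing either side by a type related through the Gay--Hole subtyping used by the abstract semantics.

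For the purely functional steps --- the $\hookrightarrow$ redexes \iref{eift}, \iref{eiff}, \iref{elet} and \iref{rfix} lifted through \iref{beta}, together with \iref{spawn} --- the stacks do not change, so Session Fidelity follows from the \textsf{Beta}, \textsf{Plus}, \textsf{Seq}, \textsf{Tau} and \textsf{Rec} transitions, Type Preservation from the substitution and inversion lemmas, and Well-Stackedness Preservation is immediate since no frame is moved. The \iref{spawn} case additionally uses that the spawned process is typed by \iref{tspawn} with a \emph{confined} behaviour: its \textsf{Spn} side condition gives $\stackop{}{b}{\epsilon}{}$, so the new process contributes no open endpoints and simply extends the well-stacked derivation with an empty stack.

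The communication steps \iref{init}, \iref{comm} and \iref{select} carry the weight of Session Fidelity and are where well-stackedness is genuinely exploited. In each, inversion identifies the two participating processes as a \emph{removable pair} in the derivation of $\wst{\c S}$, so their active frames sit at the tops of the two stacks with dual session types $C\vdash\eta\dual{}\eta'$. For \iref{init} I push two fresh dual frames by \textsf{Push} on both sides; validity of $C$ (\cref{def:valid-type-solution}) provides $C\vdash\eta\dual{}\eta'$ for the session types of $c$ and $\co c$, so the new top pair can be peeled first and $\wst{\c S'}$ is re-established. For \iref{comm} I apply \textsf{Out} on the sender and \textsf{In} on the receiver; inverting the duality of $\outp T.\eta$ and $\inp T'.\eta'$ yields duality of the residuals $\eta$ and $\eta'$, so the peeling order of $\c S$ still works. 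The \iref{select} case is analogous via \textsf{ICh} and \textsf{ECh}, where the active-label condition $I_1\subseteq J$ in external-choice duality guarantees the selected label is offered and the chosen branches stay dual.

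The delegation step \iref{deleg} is, I expect, the main obstacle, since it is the only reduction that moves an endpoint \emph{between} two stacks and therefore forces the pairing witnessed by $\wst{\c S}$ to be rebuilt rather than reused. The sender fires \textsf{Del}, popping the delegated frame from the second position of its stack, while the receiver fires \textsf{Res}, inserting the resumed endpoint into the second position of its one-frame stack. Two things must be shown. First, the migrating endpoint stays dual to its partner $\co{p'}$: this follows from the original pairing of $\co{p'}$ with the delegated frame, the subtyping facts $C\vdash\subt{\eta_d'}{\eta_d}$ and $C\vdash\subt{\eta_d}{\eta_r}$ coming from the \textsf{Del} side condition and from duality of the two delegation frames, and the duality-under-subtyping lemma. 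Second, the reshuffled global configuration must still admit a peeling order: here the one-frame requirement of \textsf{Res} and the second-position placement are exactly what keep the two endpoints of a session out of a common stack and forbid a cyclic dependency, so the delicate work is to re-derive the well-stackedness witness for the new configuration --- essentially re-establishing the acyclicity invariant behind the Deadlock Freedom lemma --- after $p$ and $\co p$ each advance by one step and $p'$ migrates. With this rearrangement in place, Type Preservation for \iref{deleg} reduces to inversion on the fixed type schemas of $\edeleg{}$ and $\eresume{}{}$.
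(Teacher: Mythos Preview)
Your proposal is correct and follows essentially the same strategy as the paper: case analysis on the operational rule, relying on context decomposition/composition lemmas (the paper's Lem.~\ref{lem:type-decomp} and~\ref{lem:type-comp}) and a well-stackedness--under--subtyping lemma (the paper's Lem.~\ref{lem:wst-subtype}, your ``duality-under-subtyping''), with delegation singled out as the interesting case and handled via the chain $\eta_d' \mathrel{<:} \eta_d \mathrel{<:} \eta_r$ exactly as you describe.

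The one organisational difference worth noting is that the paper does not perform the case analysis directly on $\arrow$. It first introduces an auxiliary \emph{typed reduction} relation $\tred$ on configurations $\c S$ (\cref{fig:typed-reductions}) which packages each operational step together with the matching behaviour and stack transitions, and then proves two separate lemmas: that $\tred$ preserves typing and well-stackedness, and that every untyped step $S\arrow\vect{e}'$ is simulated by some $\c S\tred\c S'$. Your direct approach fuses these two lemmas; the paper's factoring makes the behaviour/stack bookkeeping explicit once in the definition of $\tred$ rather than repeatedly inside each case, but the underlying work in each reduction case---and in particular the rearrangement of the well-stackedness witness for \iref{deleg}---is identical.
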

\end{hide}

Type soundness is more technical.
We divide system transitions to communication transitions between processes ($\comStep$)
and internal transitions ($\intStep$).
Let
$\c S \comStep \c S'$ ($\c S \intStep \c S'$) when
$ S \arrow S'$, derived by Rule
\iref{init}, \iref{comm}, \iref{deleg} or \iref{select} of \cref{fig:lang}
(resp., any other rule); $S \comStepW S' $ when $ S \intStep^*\comStep\intStep^* S'$.
\begin{hide}

We also define dependencies between processes of a running system according to the following definition.
\begin{definition}[Dependencies]
Let $ P = \tconf{\Delta}{b}{e}$ and $ Q = \tconf{\Delta'}{b'}{e'} $ be processes in $ \c S $.
\begin{description}
  \item[$ P $ and $ Q $ are \emph{ready} ($ P \leftrightharpoons Q$):] if
    $\Delta = (p^l:\eta)\cdot\Delta_0$ and
    $\Delta' = (\co p^{l'}:\eta') \cdot \Delta_0'$;

  \item[$ P $ is \emph{waiting} on $ Q $ ($P \mapsto Q$):] if
    $ \Delta = (p^l:\eta) \cdot \Delta_0$ and
    $ \Delta' = \Delta_1' \cdot (\co p^{l'}:\eta') \cdot \Delta_0' $
    and $\Delta_1' \neq \epsilon $;

  \item[$ P $ \emph{depends} on $ Q, R $ ($P \Mapsto (Q,R)$):] if
    $P = Q \leftrightharpoons R $, or
    $P \mapsto^+ Q \leftrightharpoons R $.

\end{description}
\end{definition}
\end{hide}
%
\begin{theorem}[Type Soundness] \label{thm:type-soundness}
 Let $ \tjParSys{}{\c S}$ and $ \wst{\c S}$. Then
 \begin{enumerate}
   \item $ \c S \comStepW  \c S'$, or
   \item $ \c S \intStep^* (\c F,\c D,\c W,\c B) $ such that:
\begin{hide}
 \begin{description}
   \item[Processes in $\c F$ are finished:] $ \forall\tconf{\Delta}{b}{e} \in \c F.~\Delta=\epsilon, b=\tau$ and $e=v$.
   \item[Processes in $\c D$ diverge:] $ \forall\tconf{\Delta}{b}{e} \in \c D.~\tconf{\Delta}{b}{e} \intStep^\infty$.
   \item[Processes in $\c W$ wait on channels:] $ \forall\tconf{\Delta}{b}{e} \in \c W.~e = E[\erequest{l}{c}] $ or
     $ e = E[\eaccept{l}{c}]$.
   \item[Processes in $\c B$ block on sessions:]
     $\forall P=\tconf{\Delta}{b}{e} \in \c B.~e = E[e_0]$ and
     $e_0$ is
     $\esend{v}{}$, $\eapp{\erecv{}{}}{v}$, $\edeleg{v}$, $\eapp{\eresume{}{}}{v}$,
   $\eapp{\eselectnew{L}}{v}$, or $\ematch{v}{L_i \Rightarrow e_i}{i \in I}$
     and
     $\exists Q \in (\c D, \c W).~\exists R \in (\c D, \c W, \c B).~\c S\vdash P \Mapsto (Q, R)$.
 \end{description}
\end{hide}
\begin{icalp}
  \begin{description}
    \item[Finished processes, $\c F$:] $
  \forall P \in \c F.~P = \tconf \epsilon \tau v $, for some $ v $;
   \item[Diverging processes, $\c D$:] $ \forall P \in \c D.~P \intStep^\infty$;
   \item[Waiting proc., $\c W$:] $
   \forall P \in \c W.~P = \tconf{\Delta}{b}{E[e]} $ and $ e \in \{%
   \erequest{l}{c}, \eaccept{l}{c} \}$;
   \item[Blocked processes, $\c B$:]
     $\forall P \in \c B.~P = \tconf{\Delta}{b}{E[e]} $ and
     $e \in \{%
     \esend{v}{}, \eapp{\erecv{}{}}{v},$ $ \edeleg{v}, \eapp{\eresume{}{}}{v},
   \eapp{\eselectnew{L}}{v}, \ematch{v}{L_i \Rightarrow e_i}{i \in I}\}$
   and $P$ transitively depends on a process in $\c D\cup\c W$.
  \end{description}
\end{icalp}
 \end{enumerate}
 \end{theorem}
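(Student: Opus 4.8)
The plan is to combine the preservation result (\cref{thm:preservation}) with a progress argument driven by well-stackedness and the acyclicity of dependencies. First I would observe that \cref{thm:preservation} lets me reduce internally while keeping the invariants $\tjParSys{}{\c S}$ and $\wst{\c S}$: whenever $\c S \intStep \c S'$ both are preserved, and every operational step is mirrored in the abstract-interpretation semantics of \cref{fig:abstract-interpr} (Session Fidelity). The dichotomy is then clean: if along some internal reduction a communication redex becomes enabled, we are in case~1, i.e. $\c S \comStepW \c S'$; otherwise I must establish case~2. So I would assume that no weak communication step is possible from $\c S$ and construct the classified configuration $(\c F,\c D,\c W,\c B)$.

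Under that assumption I would reduce internally, running each process that terminates under $\intStep$ to a normal form and placing each process with an infinite internal reduction into $\c D$ (so $P\intStep^\infty$). Since only finitely many internal steps normalise the terminating processes, after finitely many $\intStep$ steps every non-diverging process takes no internal step. Next I would give a canonical-forms analysis of such a stuck process $\tconf{\Delta}{b}{e}$: either $e$ is a value, or $e=E[e_0]$ with $e_0$ a session or initialisation redex. If $e$ is a value, its residual behaviour is $\tau$, and strong normalisation $\stackop{}{\tau}{\Delta}{}$ (guaranteed by well-typedness) forces $\Delta=\epsilon$, so the process is finished and goes to $\c F$. If $e_0\in\{\erequest{l}{c},\eaccept{l}{c}\}$ the process is waiting on a channel and goes to $\c W$. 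Otherwise $e_0$ is one of the session operations $\esend{v}{}, \eapp{\erecv{}{}}{v}, \edeleg{v}, \eapp{\eresume{}{}}{v}, \eapp{\eselectnew{L}}{v}, \ematch{v}{L_i\Rightarrow e_i}{i\in I}$ and the process is a candidate for $\c B$; it remains to show each such process transitively depends on a process in $\c D\cup\c W$.

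This last step is the crux. For a blocked process $P$ with top frame $(p^l:\eta)$, the stack discipline of \cref{fig:abstract-interpr} forces $P$'s pending operation to concern its \emph{top} endpoint $p$, and Session Fidelity matches that operation to the head of $\eta$; well-stackedness (\cref{def:well-stackedness}) then guarantees that $\co p$ occurs in a unique partner $Q$ at a frame of session type $\eta'$ with $C\vdash\eta\dual{}\eta'$, so $P$ depends on $Q$. If $\co p$ were at the top of $Q$ and $Q$ were poised on its matching session action, duality (communication safety) would make the two actions complementary and a communication step (\iref{comm}, \iref{deleg}, or \iref{select}) would be enabled---contradicting the no-communication assumption. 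Hence either $\co p$ is buried in $Q$ (so $Q$ is itself blocked on a higher endpoint and depends further), or $\co p$ is on top of $Q$ but $Q$ is not poised on its session action, which, having exhausted internal reductions, leaves only $Q\in\c D$ or $Q\in\c W$. Following the dependency relation from $P$ yields a chain of processes, each depending on the next; by the Deadlock Freedom lemma this relation is acyclic, and since the system is finite the chain terminates, necessarily at a process that is not blocked on a session---a target of a dependency has a non-empty stack, hence is not finished, and by the stack discipline cannot be stuck at a session action on a non-top endpoint---so it lies in $\c D\cup\c W$. This places $P$ in $\c B$ with the required transitive dependency and completes the classification.

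The main obstacle, beyond the bookkeeping of interleaving divergence with normalisation, is making the partner analysis airtight. I must use Session Fidelity to know that a blocked process's pending operation is exactly the one dictated by the top session type of its stack, and combine this with duality so that two processes exposed on dual endpoints necessarily offer complementary actions; this is what converts ``no enabled communication'' into ``the partner is not ready,'' and it is precisely where the abstract-semantics simulation of \cref{thm:preservation} and the acyclicity of the Deadlock Freedom lemma must be deployed together.
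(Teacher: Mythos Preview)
Your proposal is correct and follows essentially the same route as the paper: split the system into a diverging part and a part that normalises under $\intStep$, classify the normalised processes by canonical forms into $\c F,\c W,\c B$, and then for each $P\in\c B$ follow the dependency relation, using acyclicity (from well-stackedness) to reach a ready pair and duality to argue that two blocked processes ready on dual top-of-stack endpoints would fire a communication redex, forcing the chain's endpoint into $\c D\cup\c W$. The paper packages these steps as a corollary (the $\c F,\c D,\c W,\c B$ decomposition) and a lemma (ready partners communicate; $\mapsto$ is a DAG; every $P\in\c B$ has a $\Mapsto$ target), but the substance and the key uses of Session Fidelity and the Deadlock-Freedom lemma are identical to yours.
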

 A well-typed and well-stacked \lang system will either be able to perform a communication, or, after finite internal steps, it
 will reach a state where some processes are values ($\c F$), some
 internally diverge ($\c D$), some are waiting for a partner process to open a session ($\c W$), and some are blocked on a session
 communication ($\c B$).
 Crucially, in states where communication is not possible,  $\c B$ transitively depends on $\c D \cup \c W$. Thus,
 in the absence of divergence and in the presence of enough 
 processes to start new sessions, no processes can be blocked; the system will either perform a communication or
 it will terminate \emph{(partial lock freedom)}.
 \begin{corollary}[Partial Lock Freedom]
 If $ \tjParSys{}{\c S}$,~$ \wst{\c S}$,
 and $\c S \not \Longrightarrow_{\mathsf c} $
 and $ \c S \intStep^* (\c F, \emptyset,\emptyset,\c B)$ then $\c B=\emptyset$.
 \end{corollary}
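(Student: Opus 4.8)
The plan is to derive the corollary directly from \cref{thm:type-soundness}, applied not to $\c S$ but to the specific configuration reached by the hypothesis. Write $\c S' = (\c F,\emptyset,\emptyset,\c B)$ for the system with $\c S \intStep^* \c S'$. First I would check that $\c S'$ meets the premises of Type Soundness. Since $\intStep$ is a restriction of the system transition $\arrow$, iterating \cref{thm:preservation} along the reduction $\c S \intStep^* \c S'$ yields both $\tjParSys{}{\c S'}$ (Type Preservation) and $\wst{\c S'}$ (Well-Stackedness Preservation); these are exactly the hypotheses needed to invoke Type Soundness on $\c S'$.

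Next I would argue that $\c S'$ can take neither a communication nor an internal step. For communication, $\c S'\not\Longrightarrow_{\mathsf c}$: since $\Longrightarrow_{\mathsf c}$ is ${\intStep^*}\comStep{\intStep^*}$ and $\c S\intStep^*\c S'$, any weak communication out of $\c S'$ would prepend to give $\c S\Longrightarrow_{\mathsf c}$, contradicting the hypothesis. For internal steps, I would inspect the two kinds of processes present in $\c S'$: a finished process in $\c F$ is a value, and a blocked process in $\c B$ has the shape $\tconf{\Delta}{b}{E[e]}$ whose unique next redex $e$ is a communication primitive that cannot fire without a partner. Because $\c D=\emptyset$, there are no diverging processes either, so every process exposes either no redex or a stuck communication redex; hence $\c S'$ is an $\intStep$-normal form. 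Here the determinism of call-by-value evaluation contexts is what guarantees that a blocked process exposes no \emph{other} internal redex.

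Now I would apply \cref{thm:type-soundness} to $\c S'$. Alternative~(1) is excluded by $\c S'\not\Longrightarrow_{\mathsf c}$, so alternative~(2) must hold: $\c S'$ reduces by internal steps to a configuration partitioned as $(\c F,\c D,\c W,\c B)$ in which every blocked process transitively depends on a process in $\c D\cup\c W$. But $\c S'$ is already an $\intStep$-normal form, so this internal reduction is empty and the partition coincides with $\c S'$ itself, giving $\c D=\c W=\emptyset$. Consequently every process in $\c B$ would have to depend transitively on a process of the empty set, which is impossible; therefore $\c B=\emptyset$, as required.

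The preservation bookkeeping and the propagation of $\not\Longrightarrow_{\mathsf c}$ through internal steps are routine. The only genuinely delicate point I expect is the normal-form collapse: one must be sure that, with $\c D=\emptyset$, the whole of $\c S'$ is truly stuck under $\intStep$, so that the \emph{existentially} quantified reduction in alternative~(2) of Type Soundness degenerates onto $\c S'$ and the dependency condition applies to $\c S'$'s own blocked set. Once that collapse is justified, the emptiness of $\c D\cup\c W$ does the rest, and the corollary is the statement that in the absence of divergence and of sessions awaiting a partner no process can remain blocked.
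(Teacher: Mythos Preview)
Your proposal is correct. The paper provides no explicit proof of the corollary; it presents it as an immediate consequence of \cref{thm:type-soundness}, the one-line argument being: since alternative~(1) is excluded by $\c S \not\Longrightarrow_{\mathsf c}$, alternative~(2) applies, and the dependency condition there says every process in $\c B$ transitively depends on some process in $\c D\cup\c W$; with $\c D=\c W=\emptyset$ this is vacuously impossible, hence $\c B=\emptyset$.

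Your route is a bit more elaborate: rather than reading the partition in the corollary's hypothesis as \emph{the} witness supplied by Type Soundness for $\c S$, you transport typing and well-stackedness to $\c S'$ via \cref{thm:preservation}, re-apply Type Soundness to $\c S'$, and use the $\intStep$-normal-form argument to force the existential reduction in alternative~(2) to collapse onto $\c S'$ itself. This is sound and arguably more careful about the existential quantifier in alternative~(2), but it is not needed under the paper's intended reading, where the hypothesis $\c S\intStep^*(\c F,\emptyset,\emptyset,\c B)$ simply instantiates the partition already guaranteed by the theorem. Either way, the substantive step---that an empty $\c D\cup\c W$ leaves blocked processes with nothing to depend on---is identical.
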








  \section{Inference Algorithm}
  \label{sec:inference}
  We use three inference algorithms, $\algW$, $ \algoSI $ and $ \algoDual $.
The first infers functional types and communication effects and corresponds to the first level of our type system. The other
two infer session types from the abstract interpretation rules of \cref{fig:abstract-interpr} ($\algoSI$) and the duality
requirement of \cref{def:valid-type-solution} ($\algoDual$), corresponding to
the second level of the type system.

Algorithm $ \algW $ is a straightforward adaptation of the homonymous
algorithm from \cite{ANN}: given an expression $ e $, $ \algW $ calculates its
type $ t $, behaviour $ b $ and constraints set $ C $; no session
information is calculated.  $\algW$ generates
pairs of fresh constraints $ \seql c \psi $ and $ \seql{\co c}{ \psi'} $ for
each global channel $ c $ in the source program; $\psi$ and $\psi'$ are
unique. Results of $\algW$'s soundness and completeness follow
from \cite{ANN}.

For all constraints $ (\seql c \psi) \in C $, Algorithm $ \algoSI $
infers a substitution $ \sigma $ and a refined set $ C' $ such that
$\stackop{'}{b\sigma}{\epsilon}{\epsilon}$.
The substitution only maps $\psi$ variables to session types.
The final $C'$ is derived from $C$ by applying $\sigma$ and
possibly adding more type constraints of the form $(T \subseteq T')$.
The core of this algorithm is the abstract
interpreter $ \algoMC $, which explores all possible
transitions from $\mcconf \epsilon b $.

Algorithm $ \algoMC $ is designed in a continuation-passing style, using
a \emph{continuation stack} $ K ::= \emptyK \bnf \consK{b}{K} $.

As transition paths are explored, previously discovered branches of internal and external choices in session types may need to be expanded. For example, 
if Algorithm $ \algoMC $ encounters a configuration $ \mcconf {(l :
\sichoicetext {i}{I} {\eta})}{l!L_j} $ where $ j \not\in I $,
the inference algorithm needs to add the newly discovered label $ L_j $ to 
the internal choice on the stack. 

To do this, internal and external choices are removed from the syntax of
sessions, and replaced with special variables $ \psiint $ and $ \psiext $.
These variables are bound by unique \emph{choice constraints}, 
extending the syntax of constraints (\cref{fig:type-syntax}):
\begin{align*}
  C ::= &  
  \ldots \bnf 
  \ceq{\sichoicetext{i}{I}{\eta}}{\psiint}
          \bnf \ceq{\sechoicetext{i}{I_1}{I_2}{\eta}}{\psiext}
\end{align*}
$\algoMC$ updates $\psiint$ and $\psiext$ constraints in $C$ with newly discovered branches.
For example it may add new labels to an internal choice, or move
active labels to inactive in an external choice.

We now give more detail for some inference steps of Algorithm $\algoMC$.
The full algorithm can be found in an online technical report\footnote{%
{Spaccasassi}, C., {Koutavas}, V.: {Type-Based Analysis for Session Inference}.
  ArXiv e-prints  (Oct 2015), \url{http://arxiv.org/abs/1510.03929v3}.
}.
Algorithm $ \algoMC $ terminates successfully when all sessions on the stack have terminated,
the input behaviour is $ \tau $ and the continuation stack is empty:
\lstinputlisting[
    backgroundcolor=\color{white}, 
    escapeinside={(*}{*)},
    firstnumber=6,
    firstline=8,
    lastline=12,
    numbers=none]{codeSI2}
When this clause succeeds, $ \Delta $ may be empty or it may contain frames of the form $(l:\psi)$ or $(l:\tend)$.
The helper function \textsf{finalise} $ \Delta $ returns a
substitution $ \sigma $ that maps all such $\psi$'s to $\tend$. If this is not possible (i.e., a session on $\Delta$ is not
finished) \textsf{finalise} raises an error.

New frames are pushed on the stack when the behaviour is $
\pusho{l}{\eta}{}$:
\lstinputlisting[
    backgroundcolor=\color{white}, 
    escapeinside={(*}{*)},
    firstnumber=11,
    firstline=21,
    lastline=28,
    numbers=none]{codeSI2}
where \textsf{checkFresh} 
checks that $ l $ has never been in $ \Delta $. 

When the behaviour is an operation that pops a session from the stack, such
as a send ($ l!T $), $ \algoMC $ looks up the top frame on the
stack, according to the stack principle. There are two cases to
consider: either the top frame contains a fresh variable $\psi$, or some type
has been already inferred. The algorithm here is:
\begin{minipage}{\textwidth}
\lstinputlisting[
    backgroundcolor=\color{white}, 
    escapeinside={(*}{*)},
    firstnumber=16,
    firstline=31,
    lastline=42,
    numbers=none]{codeSI2}
\end{minipage}
\begin{minipage}{\textwidth}
\lstinputlisting[
    backgroundcolor=\color{white}, 
    escapeinside={(*}{*)},
    firstnumber=16,
    firstline=44,
    lastline=48,
    numbers=none]{codeSI2}
\end{minipage}
In the first case, $ \algoMC $ checks that $\rho$ in the behaviour corresponds to $l$ at the top of the stack.
It then produces the substitution
$ \subst{\psi}{!\alpha. \psi'} $, where $ \alpha $ and $ \psi'$ are fresh, and 
adds $(T \subseteq \alpha)$ to $ C $. The second case produces no substitution.

The clauses for delegation are similar:
\lstinputlisting[
    backgroundcolor=\color{white}, 
    escapeinside={(*}{*)},
    firstline=68,
    lastline=87,
    numbers=none]{codeSI2}
The main difference here is that, in the second clause, the \textsf{sub} function checks that 
$C \vdash  \subt{\eta_d'}{\eta_d}$ and performs relevant inference.
Moreover, the input $ \Delta $ must contain at least two
frames (the frame below the top one is delegated).

The cases for receive, label selection and offer, and resume are similar (see online report). In the cases for label selection and
offering, the algorithm updates the $ \psiint $ and $ \psiext $ variables, as discussed above. In the case of resume, the
algorithm checks that the stack contains one frame.

In behaviour sequencing and branching,
substitutions are applied eagerly and composed iteratively, and new constraints
are accumulated in $ C $:
\lstinputlisting[
    backgroundcolor=\color{white}, 
    escapeinside={(*}{*)},
    firstnumber=16,
    firstline=206,
    lastline=209,
    numbers=none]{codeSI2}
\lstinputlisting[
    backgroundcolor=\color{white}, 
    escapeinside={(*}{*)},
    firstnumber=16,
    firstline=212,
    lastline=220,
    numbers=none]{codeSI2}

When a recursive behaviour $ \orec{\beta}{b}{}$ is encountered,
Algorithm $ \algoMC $ needs to properly setup the input constraints $ C $
according to Rule \RefTirName{Rec} of \cref{fig:abstract-interpr}:
\lstinputlisting[
    backgroundcolor=\color{white}, 
    escapeinside={(*}{*)},
    firstnumber=16,
    firstline=229,
    lastline=237,
    numbers=none]{codeSI2}
Here the algorithm first calls $\algoMC$ on $\mcconf{\epsilon}{b}$,
checking that the recursion body $ b $ is self-contained under
$ C' $, in which the recursion variable $\beta$ is bound to $\tau$.
This update of $C$ prevents the infinite unfolding of $\orec{\beta}{b}{}$.
It then restores back the constraint on $
\beta $, applies the substitution $ \sigma_1$, and continues inference.

The clause for $\espawn{b}$ is similar, except that $ C $ is unchanged.
Variables $ \beta $ are treated as the internal choice 
of all behaviours $ b_i$ bound to $ \beta $ in $ C $:
\lstinputlisting[
    backgroundcolor=\color{white}, 
    escapeinside={(*}{*)},
    firstnumber=16,
    firstline=240,
    lastline=243,
    numbers=none]{codeSI2}
    
Inference fails when $ \algoMC $ reaches a stuck configuration
$\mcconf \Delta b$ other than $\mcconf \epsilon \tau$,
corresponding to an error in the session type discipline.
\begin{hide}
\todo{TODO: describe MC with rec\inp }
\end{hide}

To prove termination of $\algoSI$,
we first define the translation $\groundify{b}$, that
replaces $ \beta$ variables in $ b $ with the internal choice $
\bigoplus\set{b_i}{(\singleSu{b_i}{\beta}) \in C}$.
Due to behaviour-compactness (\cref{def:wf-constraints}), $\groundify b$ is a finite \emph{ground} term,
i.e.\ a finite term without $ \beta $ variables.
Except for Rule \RefTirName{Beta}, transitions in
\cref{fig:abstract-interpr} never expand $ b $; they either consume $
\Delta $ or $ b $.
Since $ \groundify{b} $ is finite when $ C$ is well-formed, $
\mcconf \epsilon {\groundify{b}} $ generates a finite state space and Algorithm
$ \algoMC $ always terminates.

\begin{icalp}
Similar to ML type inference,
\begin{hide}
  \todo{Cite Kfoury, J.; Tiuryn; Urzyczyn, P. (1990).
  "ML typability is dexptime-complete".}
\end{hide}
the worst-case complexity of $ \algoMC $ is exponential to program size:
$\algoMC$ runs in time linear to the size of $\groundify{b}$, which
in the worst case is exponentially larger than $b$, which is linear to program size.
The worst case appears in pathological programs where, e.g., each function calls all previously defined functions.
We intend to explore whether this is an issue in practice, especially with an optimised
dynamic programming implementation of $\algoMC$.
\end{icalp}

\begin{hide}
Thanks to $ \beta$-compactness, $\groundify b$ is a finite \emph{ground} term,
i.e. a finite term without $ \beta $ variables; moreover, the ground
translation $\groundify -$ is fully abstract w.r.t. the abstract interpretation
semantics.

Let $\execStates \Delta b C $ denote the set of all states reachable from
configuration $ \mc \Delta b $ under $C$, and $ \execSize{\Delta}{b}{C}$ denote
its cardinality. Let $ \behavDomain$ be the set of all behaviours $ b $.

Except for Rule \RefTirName{Beta}, transitions in the abstract
interpretation semantics never expand the behaviour $ b $, but rather they
consume it.
Therefore, when $ b$ is a finite ground behaviour, it is
possible to provide a finite upper bound $\bsize - ::
\behavDomain \rightarrow \mathcal N $ to the size of $\execStates \Delta b C $:
\begin{theorem}[Finite state-space]
Let $ \mc \Delta b $ be a configuration such that $ b $ is a finite ground
behaviour. For any well-formed $ C $, $ \execSize{\Delta}{b}{C} \leq \bsize{\mc
\Delta b} $.
\end{theorem}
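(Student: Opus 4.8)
The plan is to define the bound $\bsize{-}$ by structural recursion on the finite ground behaviour $b$, carrying the stack $\Delta$ as a parameter, so that $\bsize{\mcconf{\Delta}{b}}$ over-approximates the number of reachable configurations, and then to prove $\execSize{\Delta}{b}{C} \le \bsize{\mcconf{\Delta}{b}}$ by well-founded induction. The crucial enabling observation, already noted before the statement, is that on a $\beta$-compact, \emph{ground} $b$ the only rule that could grow the behaviour, \RefTirName{Beta}, can never fire, since $b$ contains no $\beta$ variables; every other rule either strictly consumes a top-level behaviour constructor or, in the case of \RefTirName{End}, leaves $b$ fixed while shrinking $\Delta$. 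Concretely, the lexicographic measure (behaviour size, $|\Delta|$), with $\tau$ assigned size zero, strictly decreases on every transition of \cref{fig:abstract-interpr}, which both guarantees termination and makes the recursion defining $\bsize$ well-founded.

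First I would settle the base and the simple composite cases. For $b=\tau$ the only transitions are repeated applications of \RefTirName{End}, each popping one $\tend$ frame, so at most $|\Delta|+1$ states are reachable and I set $\bsize{\mcconf{\Delta}{\tau}} = |\Delta|+1$. The single-operation behaviours $\pusho{l}{\eta}{}$ and the various $\popo{\rho}{\cdot}$ step in one move (\RefTirName{Push}, \RefTirName{Out}, \RefTirName{In}, \RefTirName{ICh}, \RefTirName{Del}, \RefTirName{Res}) to some $\mcconf{\Delta'}{\tau}$, where $\Delta'$ differs from $\Delta$ only by a pushed frame or a reduced top session type, so $\bsize$ is $1$ plus the already-defined $\tau$ value. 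Non-deterministic choice and external choice step into one of finitely many branches (\RefTirName{Plus}, \RefTirName{ECh}, with $J$ ranging over a finite label set), so $\bsize$ sums the sub-bounds of the branches plus one; recursion and spawn each contribute a single consuming step to $\tau$ (\RefTirName{Rec}, \RefTirName{Spn}), reducing them to the $\tau$ case.

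The main obstacle is the sequencing case $b = b_1;b_2$, handled by \RefTirName{Seq} and \RefTirName{Tau}. Reductions first progress entirely inside $b_1$ (each step lifted through \RefTirName{Seq}, keeping the suffix $;b_2$), and only once $b_1$ has been driven to $\tau$ at some \emph{exit stack} $\Delta^\ast$ does \RefTirName{Tau} hand control to $b_2$, which then runs from $\Delta^\ast$. Thus the reachable states split into those of $\mcconf{\Delta}{b_1}$ (with $;b_2$ appended) together with, for each exit stack $\Delta^\ast$ at which $b_1$ can terminate, those of $\mcconf{\Delta^\ast}{b_2}$. The difficulty is that $\Delta^\ast$ depends on the non-deterministic choices taken inside $b_1$, so $\bsize$ of the sequence is not determined by $b_1$ and $b_2$ in isolation. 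The resolution is that the set of exit stacks is finite: by stack linearity (\cref{ex:double-example}) every frame label is pushed at most once and is drawn from the finitely many labels occurring in the finite term $b$, and every residual session type on a frame is a suffix of one already appearing in $b$. I would therefore over-approximate the exit stacks by the finite, purely syntactic set of admissible stacks and set $\bsize{\mcconf{\Delta}{b_1;b_2}} = \bsize{\mcconf{\Delta}{b_1}} + \sum_{\Delta^\ast} \bsize{\mcconf{\Delta^\ast}{b_2}}$, a finite sum; since $b_1,b_2$ are strict subterms the recursion stays well-founded.

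Finally I would discharge two secondary points and close the induction. Finite branching of $\to_C$ holds because each rule has only finitely many instances on a fixed configuration: choices range over finite index sets $I,J$, and the subtyping and region side-conditions are determined by $C$; this makes every sum above finite. The side-conditions $\stackop{'}{b_0}{\epsilon}{}$ and $\stackop{}{b_0}{\epsilon}{}$ of \RefTirName{Rec} and \RefTirName{Spn} are separate, already-terminating sub-computations over the ground $b_0$ under the updated $C' = (C \removeRHS \beta) \cup (\su{\tau}{\beta})$, which keeps $b_0$ ground so the recursion is not reopened; they merely decide whether the single step fires and contribute no states to $\execStates{\Delta}{b}{C}$. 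With these in hand, the inductive step shows that each one-step successor is counted within its parent's budget, and summing over the finitely many successors yields $\execSize{\Delta}{b}{C} \le \bsize{\mcconf{\Delta}{b}}$. Applied to $\groundify{b}$, this is exactly the termination guarantee underpinning the decidability of $\stackop{}{b}{\Delta}{\vec\Delta'}$, and hence of the inference algorithm $\algoSI$ built on $\algoMC$.
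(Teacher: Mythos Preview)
Your approach is workable but takes a substantially more complicated route than the paper. The paper does \emph{not} define $\bsize{\mcconf{\Delta}{b}}$ recursively in $b$ with $\Delta$ as a parameter; instead it defines three purely additive size functions: $\bsize{b}$ by structural recursion on $b$ alone (with the key trick that $\pusho{l}{\eta}{}$ and $\popo{\rho}{?l_r}$ are assigned size $2$ while the other one-step operations get size $1$), $\bsize{\Delta}$ as simply the number of frames, and $\bsize{\mcconf{\Delta}{b}} = 1 + \bsize{\Delta} + \bsize{b}$. The main lemma is then that every transition of \cref{fig:abstract-interpr} strictly decreases this additive measure; sequencing in particular is immediate, since from $\dstep{\Delta}{b_1}{\Delta'}{b_1'}$ and the inductive hypothesis $\bsize{\Delta} + \bsize{b_1} > \bsize{\Delta'} + \bsize{b_1'}$ one just adds $1 + \bsize{b_2}$ to both sides.

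This completely avoids your exit-stack enumeration. You were forced into that analysis because you tried to bound the \emph{number} of reachable states compositionally in the structure of $b$, which indeed requires knowing where $b_1$ can leave the stack before $b_2$ starts. The paper instead bounds the \emph{length of every path} by a single integer and lets finiteness of the state space follow. The price you pay for your route is the ``finite, purely syntactic set of admissible stacks'' argument, which is correct in spirit (finitely many labels in $b$, finitely many session-type suffixes) but is both heavier to formalise and yields a much looser bound. What your approach would buy, if carried through, is a bound that is genuinely compositional in $b$; the paper's additive measure is simpler but does not give you $\execSize{\Delta}{b_1;b_2}{C}$ in terms of $\execSize{-}{b_1}{-}$ and $\execSize{-}{b_2}{-}$ directly.
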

\end{hide}

\begin{hide}
As a corollary of this theorem and the definition of $ \bsize{-}$,
we can derive a useful induction principle, whereby
the set of states reachable from $ \mc \Delta b$ is strictly decreasing after
any transition:
\begin{corollary}
Let $ b $ be a finite term, and $ C $ well-formed.
If $ \dstep{\Delta}{b}{\Delta'}{b'}$, then $ \execStates{\Delta'}{b'}{C} \subset
\execStates{\Delta}{b}{C} $.
\end{corollary}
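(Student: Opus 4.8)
The plan is to prove the two containments that make up the strict inclusion separately: the non-strict inclusion $\execStates{\Delta'}{b'}{C} \subseteq \execStates{\Delta}{b}{C}$, and then strictness by exhibiting a state that lies in the right-hand set but not in the left. Throughout I treat $b$ as its ground translation $\groundify b$, which is a finite ground behaviour by behaviour-compactness (\cref{def:wf-constraints}); this is exactly the hypothesis of the preceding Finite state-space theorem. The inclusion itself is routine and I would dispatch it directly from the definition of reachability: since $\dstep{\Delta}{b}{\Delta'}{b'}$ is a single step of the semantics of \cref{fig:abstract-interpr}, every finite path $\mc{\Delta'}{b'} \rightarrow^*_C s$ extends to $\mc{\Delta}{b} \rightarrow_C \mc{\Delta'}{b'} \rightarrow^*_C s$, so each $s \in \execStates{\Delta'}{b'}{C}$ is reachable from $\mc{\Delta}{b}$.

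For strictness I would take $\mc{\Delta}{b}$ itself as the separating witness. It lies in $\execStates{\Delta}{b}{C}$ because reachability is reflexive, so it remains to show $\mc{\Delta}{b} \notin \execStates{\Delta'}{b'}{C}$, i.e.\ that $\mc{\Delta}{b}$ cannot be reached back from its own successor. This is where I would invoke the well-founded measure underlying the Finite state-space theorem and the definition of $\bsize{-}$. As observed before that theorem, on a ground behaviour every transition consumes either $\Delta$ or $b$, and the only expanding rule, \RefTirName{Beta}, cannot fire since $\groundify b$ contains no $\beta$ variables. I would therefore fix a measure $\mu(\mc{\Delta}{b})$ that adds the constructor-size of $b$ to the total session-type size carried in the frames of $\Delta$, and verify that each of the ten rules of \cref{fig:abstract-interpr} strictly decreases $\mu$. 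Since $\mu$ takes values in $\mathbb{N}$, the transitive closure $\rightarrow^+_C$ is then irreflexive, so no state reaches itself; in particular $\mc{\Delta}{b}$ is not reachable from $\mc{\Delta'}{b'}$. Combined with the inclusion, this yields the strict containment.

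The main obstacle is the design of $\mu$, because two rules, \RefTirName{Push} and \RefTirName{Res}, enlarge the stack rather than shrink it. I would resolve this by giving $\tau$ measure zero and measuring a frame $(l:\eta)$ by the constructor-size of $\eta$, so that the type $\eta$ pushed onto the stack by \RefTirName{Push} is merely relocated from the consumed $\pusho{l}{\eta}{}$ behaviour, whose outer constructor is then paid for; symmetrically, the type $\eta_r$ introduced by \RefTirName{Res} is relocated from the $\inp\eta_r$ prefix of the top frame that is simultaneously stripped. Under this accounting both rules net-decrease $\mu$ by one, while \RefTirName{End}, \RefTirName{Plus}, \RefTirName{Out}, \RefTirName{In}, \RefTirName{Del}, \RefTirName{ICh}, \RefTirName{ECh}, \RefTirName{Rec} (which replaces $\orec{b}{\beta}$ by $\tau$) and \RefTirName{Tau} each remove at least one constructor outright. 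Finiteness of $\groundify b$ guarantees $\mu$ is finite and hence a legitimate bound, and once $\mu$ is shown strictly decreasing on every rule, acyclicity — and therefore the desired strict inclusion — follows immediately.
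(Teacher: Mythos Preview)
Your argument is correct and follows the same skeleton as the paper: the non-strict inclusion is immediate from reachability, and strictness comes from a strictly decreasing natural-number measure on configurations, which makes $\rightarrow_C^+$ irreflexive so that $\mc{\Delta}{b}$ itself witnesses the gap. The paper does not give a standalone proof of this corollary; it simply invokes the configuration-size function $\bsize{-}$ and Lemma~\ref{lem:term-step-size-shrinks}, which already shows $\bsize{\mc{\Delta}{b}} > \bsize{\mc{\Delta'}{b'}}$ for every step on a ground behaviour.

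The one genuine difference is your choice of measure. You measure a frame $(l:\eta)$ by the constructor-size of $\eta$, so that \RefTirName{Push} and \RefTirName{Res} merely relocate session-type material from behaviour to stack and the outer constructor pays for the net decrease. The paper's $\bsize{-}$ is coarser: a frame contributes exactly $1$ regardless of $\eta$, and to compensate it assigns $\bsize{\pusho{l}{\eta}{}} = \bsize{\popo{\rho}{\inp l'}} = 2$ while all other atomic behaviours get size $1$. Both choices work for this corollary, but the paper's has the advantage of being invariant under session substitutions (Lemma~\ref{lem:fin-subst}), a property it later exploits in the termination proof of the inference algorithm $\algoMC$, where substitutions are applied to the stack between recursive calls. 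Your measure would not be substitution-invariant, so you would need an extra argument there. Two small points: you omit \RefTirName{Spn} and \RefTirName{Seq} from your rule enumeration (the former is handled exactly like \RefTirName{Rec}; the latter needs the obvious inner induction), and your count of ``ten rules'' is off, but neither affects the soundness of the plan.
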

\end{hide}

Soundness and completeness of $\algoSI$ follow from the these properties of $\algoMC$.
\begin{icalp}
\begin{lemma}[Soundness of $ \algoMC $]\label{prop:sound-mc}
Let $ C $ be well-formed
and $ \algoMC(\mcconf{\Delta}{b}, C) = (\sigma_1, C_1) $;
then $ \Delta\sigma_1 = \Delta' $ and $\stackopBase
{\Delta'}{b\sigma_1}{C_1}{}$.
\end{lemma}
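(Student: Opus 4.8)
The plan is to prove a strengthened statement by well-founded induction on the execution of $\algoMC$. Since $\algoMC$ is written in continuation-passing style, I would first generalise the claim to an arbitrary continuation stack $K$: reading $K = \consK{b_1}{\consK{b_2}{\cdots}}$ as the sequential behaviour $\overline K = b_1; b_2; \cdots$, the invariant to prove is that whenever $\algoMC(\mcconf{\Delta}{b}, K, C) = (\sigma_1, C_1)$ we have $\stackopBase{\Delta\sigma_1}{(b;\overline K)\sigma_1}{C_1}{}$, with the lemma being the instance $K = \emptyK$. Well-foundedness comes for free from the termination argument already established via $\groundify{-}$: because $\groundify{(b;\overline K)}$ is a finite ground term under well-formed $C$, the state space of $\mcconf{\Delta}{\groundify{(b;\overline K)}}$ is finite and every recursive call of $\algoMC$ strictly decreases it, since each transition of \cref{fig:abstract-interpr} other than \RefTirName{Beta} consumes part of $\Delta$ or $b$, and \RefTirName{Beta} is only unfolded finitely often by behaviour-compactness.

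The induction then proceeds by case analysis on the clause of $\algoMC$ that fires, exhibiting in each case the matching transition of \cref{fig:abstract-interpr} and appealing to the induction hypothesis on the continuation. For the success clause ($b = \tau$, $K = \emptyK$), \textsf{finalise} returns a $\sigma_1$ mapping every residual $\psi$ on $\Delta$ to $\tend$, so $\Delta\sigma_1$ consists solely of $\tend$-frames and repeated applications of \RefTirName{End} drive the configuration to $\mcconf{\epsilon}{\tau}$; this is the base case. For \RefTirName{Push}, \RefTirName{Out}/\RefTirName{In}, \RefTirName{Del}/\RefTirName{Res}, \RefTirName{ICh}, and the selection and offer clauses, I would check that the step the algorithm commits to is enabled under $(\sigma_1, C_1)$: in the subcase where the top frame still carries a fresh $\psi$, the manufactured substitution (e.g.\ $\psi \mapsto\ !\alpha.\psi'$) together with the added constraint $T \subseteq \alpha$ is exactly what makes the rule's side condition ($C_1 \vdash \rho \sim l$ and the subtyping premise) hold, while in the already-inferred subcase the \textsf{sub} function discharges the subtyping obligation directly; the induction hypothesis on the residual configuration then yields normalisation of the tail. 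The \RefTirName{Seq} clause is absorbed into the continuation, and \RefTirName{Rec} and \RefTirName{Spn} are handled by the nested call of $\algoMC$ on $\mcconf{\epsilon}{b}$, for \RefTirName{Rec} under the updated $C' = (C \removeRHS \beta) \cup (\su{\tau}{\beta})$, which by the induction hypothesis certifies $\stackop{'}{b}{\epsilon}{}$ as the rule demands.

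The genuinely nondeterministic rules --- \RefTirName{Plus}, \RefTirName{ECh} and \RefTirName{Beta} --- require a \emph{coverage} argument: to conclude $\stackop{}{\cdot}{\cdot}{}$, which quantifies over \emph{all} maximal reduction paths, I must show that $\algoMC$ explores every branch enabled by $(\sigma_1, C_1)$. For \RefTirName{Plus} the algorithm recurses on both disjuncts; for \RefTirName{Beta} it forms the internal choice of all $b_i$ with $\su{b_i}{\beta} \in C$; for \RefTirName{ECh} it explores each index $k$ in the offered set while recording the discovered labels in the $\psiext$ choice-constraint, so that the final active/inactive partition satisfies $I_1 \subseteq J \subseteq I_1 \cup I_2$. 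Each explored branch supplies a sub-derivation to which the induction hypothesis applies, and since $\algoMC$ succeeded it reached no stuck configuration other than $\mcconf{\epsilon}{\tau}$, so no path can terminate abnormally.

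The main obstacle is the \emph{coherence between early steps and the final outputs} $(\sigma_1, C_1)$. These are accumulated across the whole branching exploration, yet the statement applies them \emph{globally} to the initial configuration, so each early transition must be justified under the final $(\sigma_1, C_1)$ rather than the partial data in force when it was taken. For the $\psi$-refinements this is monotone --- each $\psi$ is replaced by a head constructor over a fresh tail that only later substitutions touch --- so a routine substitution lemma shows the step survives. The delicate point is the treatment of $\psiint$ and $\psiext$, whose defining constraints are \emph{rewritten} (labels added to internal choices, active labels demoted to inactive in external choices) rather than merely extended. I therefore expect the crux to be an invariant stating that these rewrites are monotone in the appropriate order --- internal-choice label sets only grow and external-choice active sets only shrink within their fixed upper bound --- from which a lemma follows that any side condition derivable from an intermediate $C$ remains derivable from the final $C_1$. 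With that monotonicity in hand, lifting each locally justified step to the global $(\sigma_1, C_1)$ is mechanical, and every case closes.
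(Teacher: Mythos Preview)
Your proposal is correct and follows essentially the same route as the paper. The paper's proof is by lexicographic induction on $\bsize{\mcconf{\Delta}{\groundify{K[b]}}}$ and $\bsize{\groundify{b}}$ (the second component is needed precisely for your \RefTirName{Seq} case, where pushing $b_2$ onto $K$ leaves the first measure unchanged), with case analysis on the $\algoMC$ clause, exactly as you outline.

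Two points where the paper's formalisation is slightly more specific than your sketch. First, the statement actually proved has $\Delta\sigma_1 \equiv \Delta'$ rather than equality, where $\equiv$ is an explicit equivalence that erases $(l:\tend)$ frames; this absorbs the laziness you allude to in the base case. Second, your ``monotonicity invariant'' for the $\psiint/\psiext$ rewrites is packaged as a \emph{constraint refinement} relation $\cj{C_1}{C\sigma_1}$ (choice constraints may only move down in the subtyping order, everything else is preserved), and the lifting of locally justified steps to the final $(\sigma_1,C_1)$ is isolated as a Liskov-style substitution lemma: if $\cj{C_2}{C_1\sigma}$ and $\Delta_2$ is a stack-subtype of $\Delta_1\sigma$, then any $C_1$-transition from $\mcconf{\Delta_1}{b}$ can be replayed as a $C_2$-transition from $\mcconf{\Delta_2}{b\sigma}$. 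This is exactly the mechanism you anticipated; you have the right decomposition.
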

\end{icalp}
\begin{hide}
\begin{lemma}[Soundness of $ \algoMC $]\label{prop:sound-mc}
Let $ C $ be well-formed in $ \mcconf{\Delta}{b} $.
If $ \algoMC(\mcconf{\Delta}{b}, C, K) = (\sigma_1, C_1) $,
then $ \Delta\sigma_1 \equiv \Delta' $ and $\stackopBase
{\Delta'}{K[b]\sigma_1}{C_1}{}$.
\end{lemma}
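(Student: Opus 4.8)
The plan is to prove the statement by well-founded induction on the execution of $\algoMC$. As the termination argument preceding the lemma establishes, for well-formed $C$ the ground translation $\groundify{K[b]}$ is a finite ground behaviour and the configuration $\mcconf{\Delta}{\groundify{K[b]}}$ reaches only finitely many states; moreover every transition of \cref{fig:abstract-interpr} other than \RefTirName{Beta} strictly consumes either the stack or the behaviour, and \RefTirName{Beta} disappears under grounding. Hence the number of states reachable from $\mcconf{\Delta}{\groundify{K[b]}}$ under $C$ is finite and strictly decreases along each recursive call of $\algoMC$; I take this count as the induction measure, with induction hypothesis the full statement for every configuration of strictly smaller measure. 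Throughout I maintain the invariant that $C_1$ is obtained from $C\sigma_1$ by adding only inclusion constraints $\su{T}{T'}$ (as in the $l!T$ and delegation clauses), so that $C_1$ stays well-formed (\cref{def:wf-constraints}) and every region, subtyping and confinement side condition relied upon by a fired clause is derivable from $C_1$.

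The base case is the successful-termination clause: $b=\tau$, $K=\emptyK$, and $\Delta$ contains only frames of the form $(l:\psi)$ or $(l:\tend)$. Here $\sigma_1 = \textsf{finalise}(\Delta)$ maps each remaining $\psi$ to $\tend$, so every frame of $\Delta'=\Delta\sigma_1$ is $(l:\tend)$. The only applicable rule of \cref{fig:abstract-interpr} on such a configuration is \RefTirName{End}, which pops one finished frame; iterating it empties the stack, while the residual behaviour $\tau$ admits no further transition. Thus every maximal reduction from $\mcconf{\Delta'}{\tau}$ terminates at $\mcconf{\epsilon}{\tau}$, which is exactly $\stackopBase{\Delta'}{\tau}{C_1}{}$, and since $K[b]\sigma_1=\tau$ this is the required conclusion.

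For the inductive step I proceed by case analysis on the clause of $\algoMC$ that fires, using that each clause is in one-to-one correspondence with a rule (or family) of \cref{fig:abstract-interpr}. For the pop clauses (\RefTirName{Out}, \RefTirName{In}, \RefTirName{Del}, \RefTirName{Res}, \RefTirName{ICh}) the algorithm inspects the top frame: if it is a fresh variable $\psi$ it emits the refining substitution (e.g.\ $\subst{\psi}{\outp\alpha.\psi'}$) and records the matching constraint in $C$, otherwise it checks the relevant subtyping via \textsf{sub}; in both subcases this yields precisely the side condition of the corresponding rule and a unique successor of smaller measure, to which the induction hypothesis applies. The \RefTirName{Push} clause uses \textsf{checkFresh} to enforce the $l\notin\Delta.\mathsf{labels}$ premise, pushes $(l:\eta)$, and recurses. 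The branching clauses \RefTirName{Plus}, \RefTirName{ECh}, \RefTirName{Beta} exhaustively explore, respectively, both summands, all branches $k$ with $I_1\subseteq J\subseteq I_1\cup I_2$ (updating the $\psiext$ constraint), and all behaviours bound to $\beta$; the \RefTirName{Rec} and \RefTirName{Spn} clauses first check that the body self-normalizes from the empty stack under the updated constraints (with $\beta$ bound to $\tau$, as in rule \RefTirName{Rec}) and then continue; and \RefTirName{Seq}/\RefTirName{Tau} either push the continuation onto $K$ or strip a leading $\tau$. In each case the one-step successors of $\mcconf{\Delta\sigma_1}{(K[b])\sigma_1}$ under $C_1$ are exactly the configurations explored by $\algoMC$, each strongly normalizing under the constraints returned for it by the induction hypothesis; since the algorithm composes the branch substitutions eagerly and unions their constraints into $\sigma_1,C_1$, every maximal path from $\mcconf{\Delta\sigma_1}{(K[b])\sigma_1}$ begins with such a successor and then follows a normalizing path, giving $\stackopBase{\Delta\sigma_1}{(K[b])\sigma_1}{C_1}{}$.

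The main obstacle is the interaction between substitution threading and the non-deterministic branching rules. A session-variable refinement discovered early must be propagated through the continuation $K$ and, in branching clauses, across already-explored branches, so I must establish a commutation property: applying the fully composed $\sigma_1$ to the initial configuration lets the semantics step in lock-step with $\algoMC$'s exploration, i.e.\ $\mcconf{\Delta}{K[b]}\sigma_1$ reduces under $C_1$ exactly along the paths the algorithm traversed, with no extra transitions enabled by the refinement. The delicate point for \RefTirName{Plus} and \RefTirName{ECh} is that strong normalization requires \emph{all} branches to normalize, so I must show that the monotone accumulation of constraints is consistent---constraints added while normalizing one branch never invalidate the side conditions of another---and that the single $\sigma_1$ simultaneously witnesses normalization of every branch. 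The \RefTirName{Rec} clause needs separate care: I must verify that binding $\beta$ to $\tau$ in the updated constraints blocks infinite unfolding while still certifying the $\stackop{}{b}{\epsilon}{}$ premise of rule \RefTirName{Rec}, so that the recursive behaviour genuinely has no net stack effect. Once the commutation property and branch-consistency of constraints are in place, the remaining cases are routine verifications against the rules of \cref{fig:abstract-interpr}.
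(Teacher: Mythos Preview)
Your overall strategy matches the paper's: induction on a size measure for the algorithm's recursion, with a case analysis on the fired $\algoMC$ clause, and you correctly identify the branching/substitution-threading interaction as the crux. There is, however, a concrete gap in your induction measure, and a difference in how the branching obstacle is discharged.

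\textbf{The measure does not decrease on \RefTirName{Seq}.} You take as measure the number of states reachable from $\mcconf{\Delta}{\groundify{K[b]}}$. But the sequencing clause transforms the call $\algoMC(\mcconf{\Delta}{b_1;b_2},C,K)$ into $\algoMC(\mcconf{\Delta}{b_1},C,b_2\cdot K)$, and by definition $K[b_1;b_2]=(b_2\cdot K)[b_1]$ are the \emph{same} behaviour. Hence the reachable-state count is identical and your measure does not strictly decrease. The paper handles this by using a \emph{lexicographic} order on the pair $\bigl(\bsize{\mcconf{\Delta}{\groundify{K[b]}}},\ \bsize{\groundify{b}}\bigr)$: the first component is unchanged on \RefTirName{Seq}, but $\bsize{\groundify{b_1}}<\bsize{\groundify{b_1;b_2}}$, so the second component drops. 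Your proposal needs this (or an equivalent secondary measure) to go through.

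\textbf{The branching obstacle is factored out, not proved inline.} What you call the ``commutation property'' and ``branch-consistency of constraints'' is exactly what the paper isolates as two auxiliary results: a \emph{constraint-refinement} lemma showing that whenever $\algoMC$ returns $(\sigma_1,C_1)$ one has $C_1 \vdash C\sigma_1$ (refinement only adds type constraints and tightens choice constraints along subtyping), and a \emph{Liskov substitution principle} showing that if $C_2 \vdash C_1\sigma$ and $\Delta_2$ is a stack-subtype of $\Delta_1\sigma$ under $C_2$, then every step $\mcconf{\Delta_1}{b}\rightarrow_{C_1}\mcconf{\Delta_1'}{b'}$ is simulated by $\mcconf{\Delta_2}{b\sigma}\rightarrow_{C_2}\mcconf{\Delta_2'}{b'\sigma}$ with stack-subtyping preserved. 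In the $b_1\oplus b_2$ case the inductive hypothesis on the first branch yields normalization under some $(\sigma_1',C_1')$; the second branch is then explored under $C_1'\sigma_1'$ and returns $(\sigma_2',C_2')$. To conclude that the first branch still normalizes under the \emph{final} $(\sigma_2'\sigma_1',C_2')$, one needs precisely these two lemmas. Your sketch asserts this monotonicity but does not supply it; the paper's decomposition makes it a separate induction (on the derivation of the transition) that can be invoked uniformly in every branching case, including \RefTirName{ECh} and the iterated calls on behaviour variables.

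A minor point: the existential $\Delta'$ with $\Delta\sigma_1\equiv\Delta'$ is there because $\algoMC$ detects terminated frames lazily (it may leave $(l:\tend)$ frames on the stack until \textsf{finalise} or \textsf{closeTop} runs). Your inductive-step conclusion writes $\stackopBase{\Delta\sigma_1}{(K[b])\sigma_1}{C_1}{}$ directly, eliding this; you will need to thread the $\equiv$ relation through the cases, in particular when \textsf{checkFresh} or \textsf{closeTop} removes finished frames mid-run.
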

\end{hide}
\begin{hide}
The proof is by induction on the execution size of $ \mc \Delta b$,
where the equivalence relation $ \Delta \equiv \Delta'$ removes closed sessions
$(l: \tend)$ from the stack $ \Delta $, in order to account for $ \algoMC$'s
lazy inference of finished sessions.
\end{hide}

\begin{hide}
Completeness of Algorithm $ \algoSI $ also depends on the completeness of $\algoMC$,
and it is also proved by induction on the execution size:
\end{hide}
\begin{icalp}
\begin{lemma}[Completeness of $\algoMC $]
Let 
$ C $
be well-formed
and
$ \stackopBase { (\Delta} { b)\sigma } { C } { } $;
then
$ \algoMC(\mcconf{\Delta}{b}, C_0) = (\sigma_1, C_1) $
and
$\exists \sigma' $ such that
$ \cj {C} {C_1\sigma'} $
and
$ \forall \psi \in dom(\sigma)$,
$\coType{\sigma(\psi)}{\sigma'(\sigma_1(\psi))}$.
\end{lemma}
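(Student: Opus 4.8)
The plan is to prove this as a principal-solution (most-general-solution) statement by well-founded induction on the finite state space generated by $\mcconf{\Delta}{b}$. Here $C_0$ is the input constraint skeleton passed to $\algoMC$ (with session variables $\psi$ still unresolved) and $C$ is a hypothesised well-formed solution refining $C_0$; completeness asserts that $\algoMC$ succeeds and returns the most general refinement $C_1$, of which $C$ is an instance via the witness $\sigma'$. As the paper already observes, behaviour-compactness (\cref{def:wf-constraints}) makes $\groundify{b}$ a finite ground term, so $\mcconf{\epsilon}{\groundify{b}}$ reaches only finitely many abstract-interpretation states, and apart from \RefTirName{Beta} every transition of \cref{fig:abstract-interpr} only consumes $\Delta$ or $b$; this gives the strictly decreasing induction measure. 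Because $\algoMC$ is written in continuation-passing style I would in fact strengthen the statement to carry a continuation stack $K$ (as in the hidden soundness lemma), proving: if $\stackopBase{\Delta\sigma}{(K[b])\sigma}{C}{}$ then $\algoMC(\mcconf{\Delta}{b},C_0,K)$ succeeds with output $(\sigma_1,C_1)$ and there is $\sigma'$ with $\cj{C}{C_1\sigma'}$ and $\coType{\sigma(\psi)}{\sigma'(\sigma_1(\psi))}$ for all $\psi\in dom(\sigma)$. The stated lemma is the $K=\emptyK$ instance, and the whole argument proceeds by case analysis on $b$ (and the top frame of $\Delta$), each case mirroring exactly one clause of $\algoMC$ and one rule of \cref{fig:abstract-interpr}.

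For the base case ($b=\tau$, $K=\emptyK$) the hypothesis $\stackopBase{\Delta\sigma}{\tau}{C}{}$ forces every frame $(l:\eta)$ of $\Delta$ to satisfy $\eta\sigma=\tend$, since $\tau$ can only reduce a stack by \RefTirName{End}, which requires a finished top session; hence the frames of $\Delta$ are of the form $(l:\psi)$ or $(l:\tend)$, the \textsf{finalise} clause fires, and the returned $\sigma_1$ sends each such $\psi$ to $\tend$. Taking $\sigma'$ to agree with $\sigma$ on pre-existing variables yields $\coType{\sigma(\psi)}{\tend}$ by reflexivity. For the single-step pop cases --- \RefTirName{Out}, \RefTirName{In}, \RefTirName{Del}, \RefTirName{Res}, \RefTirName{ICh} --- I would use strong normalization to force the top session type to have the shape the corresponding rule demands: e.g.\ for a send $\popo{\rho}{\outp T'}$ with top frame $(l:\eta)$, strong normalization together with $C\vdash\seql{\rho}{l}$ implies $\eta\sigma=\outp T.\eta_0$ with $C\vdash\subt{T'}{T}$. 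If $\eta$ is a variable $\psi$ the algorithm returns $\sigma_1=\subst{\psi}{\outp\alpha.\psi'}$ with fresh $\alpha,\psi'$ and adds $(\su{T'}{\alpha})$ to $C_1$; I extend $\sigma'$ by $\alpha\mapsto T$, $\psi'\mapsto\eta_0$, which validates $\cj{C}{(\su{T'}{\alpha})\sigma'}$ exactly by the subtyping side condition, while the residual continuation is handled by the induction hypothesis on the strictly smaller reachable state space. If $\eta$ is already structured, the algorithm adds only the subtyping constraint and $\sigma'$ is unchanged on the (empty set of) new variables. The delegation and resume cases are analogous, additionally checking the two-frames-minimum shape for \RefTirName{Del} and the single-frame shape for \RefTirName{Res}, both of which strong normalization supplies.

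The compound cases carry most of the bookkeeping. For \RefTirName{Seq} I split $\stackopBase{\Delta\sigma}{(b_1;b_2)\sigma}{C}{}$ into normalization of $b_1\sigma$ from $\Delta\sigma$ to its set of terminal stacks and, from each, normalization of $b_2\sigma$; the CPS algorithm processes $b_1$ with $b_2$ pushed on $K$, so the two sub-results compose, with substitutions applied eagerly and composed left-to-right and $\sigma'$ assembled from the per-call witnesses (mutually disjoint because distinct sub-calls introduce distinct fresh variables). For \RefTirName{Beta} I unfold $\beta$ using its defining constraints $\su{b_i}{\beta}$ in $C_0$; for \RefTirName{Rec} I follow the clause that rebinds $\beta\mapsto\tau$, noting that $\stackopBase{\Delta\sigma}{\orec{b\sigma}{\beta}}{C}{}$ entails $\stackopBase{\epsilon}{b\sigma}{C'}{}$ under the modified $C'=(C\removeRHS\beta)\cup(\su{\tau}{\beta})$, so the inner recursive call's induction hypothesis applies, after which I restore the constraint on $\beta$, apply the obtained substitution, and continue.

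The step I expect to be the main obstacle is the choice machinery --- \RefTirName{ICh}/\RefTirName{ECh} together with the $\psiint,\psiext$ choice variables --- where $\algoMC$ incrementally \emph{discovers} branches along different transition paths and merges them into a single choice constraint. Here completeness is not merely ``the clause fires'': I must show that after adding a newly seen label to an internal choice, or moving an active label to inactive in an external choice, the \emph{accumulated} choice type remains an upper bound (under the session subtyping of \cite{GayHole05}, extended to active/inactive external choices) of the solution's $\sigma(\psi)$, and that merges arising from independent paths stay mutually consistent. This demands a monotonicity invariant on the $\psiint/\psiext$ constraints: each update only weakens the inferred internal choice or loosens the external-choice bound in a way that preserves the existence of $\sigma'$. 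Establishing this invariant --- and in particular verifying that the slack encoded by $\coType{\sigma(\psi)}{\sigma'(\sigma_1(\psi))}$ is oriented correctly through input (contravariant) and through output and choice (covariant) positions --- is the delicate part; once it is in place, the conclusion $\cj{C}{C_1\sigma'}$ follows by checking that every constraint $\algoMC$ adds is, under $\sigma'$, one of the subtyping side conditions that strong normalization already guarantees.
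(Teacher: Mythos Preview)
Your proposal is correct and follows essentially the same route as the paper's proof (Proposition~\ref{prop:mc-completeness} in the appendix): strengthen to carry the continuation stack $K$, induct on a finite measure derived from the reachable state space of $\mcconf{\Delta}{K[b]}$, and case-split so that each clause of $\algoMC$ is matched against the corresponding rule of \cref{fig:abstract-interpr}. The paper organises the case split as rule induction on the transition $\dstep{\Delta\sigma}{K[b]\sigma}{\Delta'}{b'}$ rather than on the syntax of $b$, but that is a cosmetic difference. Your diagnosis of the choice machinery as the delicate step is accurate: the paper's argument for \RefTirName{ICh}/\RefTirName{ECh} is precisely your ``monotonicity invariant'', showing that each update to a $\psiint$/$\psiext$ constraint produces a session subtype of the solution's choice, so the refinement $\cj{C}{C_1\sigma'}$ survives the update.

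One small gap: your single measure (size of the reachable state space) does \emph{not} strictly decrease at the \RefTirName{Seq} clause, because $\mc{\Delta}{K[b_1;b_2]}$ and $\mc{\Delta}{(b_2\cdot K)[b_1]}$ are literally the same abstract-interpretation configuration. The paper uses a lexicographic order whose secondary component is the structural size of the current behaviour $b$; this is what licenses the inductive call when $\algoMC$ rewrites $(\Delta,\,b_1;b_2,\,K)$ to $(\Delta,\,b_1,\,b_2\cdot K)$. You should make this secondary measure explicit.
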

\end{icalp}
\begin{hide}
\begin{lemma}[Completeness of $\algoMC $]
Let $ \mc \Delta {K[b]} $ be well-formed in $ C $ and in $ C_0\sigma $. If
$ \stackopBase
{ (\Delta}
{ K[b])\sigma }
{ C }
{ } $
and $ \cj { C } { C_0\sigma }$, then $\exists \sigma' $ such that
$ \algoMC(\mcconf{\Delta}{b}, C_0)
= (\sigma_1, C_1) $ terminates
and
$ \cj {C} {C_1\sigma'} $
and
$ \forall \psi \in dom(\sigma).~\coType{\sigma(\psi)}{\sigma'(\sigma_1(\psi))}
$.
\end{lemma}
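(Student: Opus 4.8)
The plan is to prove the lemma by well-founded induction on the number of states reachable from $\mcconf{\Delta\sigma}{(K[b])\sigma}$ under $C$. This measure is finite because $C$ is well-formed and $\groundify{-}$ yields a finite ground term (exactly as used above for termination of $\algoMC$), and since, apart from \RefTirName{Beta}, the transitions of \cref{fig:abstract-interpr} never expand $b$ but only consume $\Delta$ or $b$, the reachable state space strictly shrinks after each step, furnishing a well-founded ordering. Termination of $\algoMC$ having already been established, the two remaining obligations are that $\algoMC$ does not fail and that a witness $\sigma'$ factoring $\sigma$ through $(\sigma_1,C_1)$ exists; throughout I maintain the invariant $\cj{C}{C_0\sigma}$. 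The argument is a case split on the head of $b$ that mirrors the clauses of $\algoMC$. The key leverage in every case is the hypothesis $\stackopBase{(\Delta}{K[b])\sigma}{C}{}$: because $\mcconf{\Delta\sigma}{(K[b])\sigma}$ strongly normalizes, it must take a \emph{successful} transition under a determinate rule, and inspecting that rule pins down the head shape of the relevant component of $\sigma$, which is exactly what is needed to show that the matching clause of $\algoMC$ applies and that the refinement it commits to is most general.

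First I would dispatch the base and structural cases. When $b=\tau$ and $K$ is empty, $\mcconf{\Delta\sigma}{\tau}$ normalizes to $\epsilon$ using only \RefTirName{End}, so every frame of $\Delta\sigma$ is $(l:\tend)$ and hence every residual $(l:\psi)$ in $\Delta$ has $\sigma(\psi)=\tend$; the success clause invokes \textsf{finalise}, which maps exactly these $\psi$ to $\tend$, and I take $\sigma'$ to agree with $\sigma$ on all other variables. For sequencing and for $b=\beta$ I would push the residual behaviour onto the continuation stack $K$ and appeal to \RefTirName{Seq} and \RefTirName{Beta}; for $b_1\oplus b_2$ I use \RefTirName{Plus}, run $\algoMC$ on both branches, compose the two inferred substitutions and accumulate their constraints; for $\espawn{b}$ I recurse with the empty stack. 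Each residual call is on a configuration with a strictly smaller reachable state space, so the induction hypothesis applies and the factorisations compose.

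The communication cases carry the core of the factorisation argument. For a pop behaviour, namely send, receive, delegation, resumption, or selection, the corresponding rule (\RefTirName{Out}, \RefTirName{In}, \RefTirName{Del}, \RefTirName{Res}, \RefTirName{ICh}) fires only because $\sigma$ gives the top frame a session type whose head constructor matches. When that frame is still an uninstantiated variable $\psi$, $\algoMC$ commits to the most general refinement, e.g.\ $\subst{\psi}{\outp\alpha.\psi'}$ together with $(T\subseteq\alpha)$; I would show $\sigma(\psi)$ is an instance of this shape and extend $\sigma'$ to send $\alpha,\psi'$ to the communicated type and the tail of $\sigma(\psi)$, letting Gay--Hole subtyping \cite{GayHole05} absorb the direction of $\coType$. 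When the frame is already concrete, the algorithm's \textsf{sub} call discharges the required subtype check, which succeeds precisely because the concrete transition did, and I reuse the current $\sigma'$. In both sub-cases the induction hypothesis applies to the residual, and the composed substitution preserves $\cj{C}{C_1\sigma'}$.

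The main obstacle will be the choice cases \RefTirName{ICh} and \RefTirName{ECh}, where session types carry the choice sets but $\algoMC$ represents them indirectly through the $\psiint$ and $\psiext$ variables and their choice-constraints, refining these as new labels are discovered. Here I must argue that for an internal selection the algorithm soundly enlarges the $\psiint$-constraint with any newly seen label while keeping it an instance of $\sigma$'s choice, and that for an external offer $\bechoice{j\in J}{\rho}{L_j}{b_j}$ the behaviour supplies all active labels ($I_1\subseteq J$) that $\sigma$'s session type demands, so that the algorithm's reclassification of labels from active to inactive leaves the $\psiext$-constraint solvable and still more general than $\sigma$ up to the extended external-choice subtyping. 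Maintaining the $\coType$ invariant across this active-to-inactive relaxation, while simultaneously threading the eagerly-composed substitutions and the growing set $C_1$ so that $\cj{C}{C_1\sigma'}$ is preserved, is the delicate bookkeeping of the proof. The recursion clause \RefTirName{Rec} is a secondary difficulty: there $\algoMC$ performs the same constraint surgery as the semantics, rebinding $\beta$ to $\tau$, so I would apply the induction hypothesis under the modified environment and then transport the resulting factorisation back across the restoration of $\beta$'s constraint, a step justified by the behaviour-compactness clause of \cref{def:wf-constraints} that forbids any other recursive cycle on $\beta$.
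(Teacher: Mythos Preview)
Your approach is essentially the same as the paper's: an induction on the reachable state space of $\mcconf{\Delta\sigma}{(K[b])\sigma}$ under $C$, with a case analysis that mirrors the algorithm's clauses and the abstract-interpretation rules. Two points, however, are not covered by your sketch and are exactly where the paper does extra work.

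First, your single induction measure does not strictly decrease on the administrative clauses. When $b=b_1;b_2$, the algorithm merely rewrites $K[b_1;b_2]$ as $(b_2\cdot K)[b_1]$; these are the \emph{same} configuration, so the reachable state space is identical. The same happens when $b=\beta$ is replaced by the internal choice of all its bounds in $C$: the ground translation is unchanged. The paper fixes this by using a lexicographic order whose secondary component is the structural size of $b$, which does decrease on these bookkeeping steps. Without it your ``each residual call is on a configuration with a strictly smaller reachable state space'' is false for \textsc{Seq} and for the $\beta$-expansion step.

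Second, and more substantively, in the branching cases (\textsc{Plus} and the iteration over branches in \textsc{ECh}) you cannot simply ``compose the two inferred substitutions'' and re-invoke the induction hypothesis. After processing the first branch you hold $(\sigma_1,C_1)$ and a factor $\sigma''$ with $\coType{\sigma(\psi)}{\sigma''(\sigma_1(\psi))}$; to apply the induction hypothesis to the second branch you must show that $\mcconf{\Delta}{K[b_2]}$ still strongly normalises under the \emph{refined} substitution $\sigma''\sigma_1$, not under the original $\sigma$. This does not follow from the hypothesis alone: it requires a monotonicity result saying that replacing session types by supertypes (and refining the constraint set) preserves $\Downarrow_C$. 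The paper establishes and uses precisely such a Liskov-style substitution principle; it observes that the inferred $\sigma''\sigma_1$ produces supertypes only at delegated positions, where the variance is right. Your proposal flags the ``delicate bookkeeping'' but does not supply this ingredient, and without it the inductive step for $\oplus$ and for external choice does not go through.
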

\end{hide}
Completeness states that $ \algoMC $ computes the most
general constraints $ C_1 $ and substitution $ \sigma_1 $, because, for any $ C
$ and $ \sigma $ such that $(\mcconf \Delta b)\sigma $ type checks, $ C $ specialises $ C_1 $ and
$ \sigma $ is an instance of $ \sigma_1 $, after some extra substitution
$ \sigma' $ of variables (immaterial for type checking).


Algorithm $ \algoDual $ collects all $ \seql c \eta_1 $ and $ \seql {\bar c}
{\eta_2}$ constraints in $ C' $, generates duality constraints $
\duals{\eta_1}{\eta_2} $ and iteratively checks them,
possibly substituting $\psi$ variables. It ultimately returns a
$ C'' $ which is a valid type solution according to \cref{def:valid-type-solution}.
\begin{hide}
For example, if $ \eta_1 = \outp T_1.\eta_1' $ and $ \eta_2 = \inp T_2.\eta_2' $,
the empty substitution and the constraints $ \atomSu{\su{T_1}{T_2},
\duals{\eta_1'}{\eta_2'} }$ are generated.
When comparing internal and external choices, $ \algoDual $ checks that
all the branches in the internal choice are included in the
set of active branches in the external choice.
When one of the two sessions is a variable $ \psi $, the
algorithm collects all constraints $ \duals{\psi}{\eta_i} $ and calculates the
least supertype of the dual of all $ \eta_i $ sessions.
Algorithm $ \algoDual $ succeeds when no more simplification step can be taken,
and only duality constraints among session variables remain.
\end{hide}
Soundness and completeness of Algorithm $ \algoDual $ is straightforward.


We now show how $\algoSI$ infers the correct session types for \cref{ex:swap1} from \cref{sec:motivating-examples}. 
We assume that Algorithm $\algW $ has already produced a behaviour $ b $ and constraints $ C $ for this example.
For clarity, we simplify $b$ and $C$: we remove spurious $ \tau $s from behaviour sequences, replace region
variables $\rho$ with labels (only one label flows to each $\rho$), and perform simple substitutions of $\beta$ variables.
 
\begin{example}[A Swap Service]

There are three textual sources of endpoints in this example: the two
occurrences of $\eaccept{}{\keyw{swp}}$  in \textsf{coord}, and
$\erequest{}{\keyw{swp}}$ in \textsf{swap}. A pre-processing step automatically
annotates them with three unique labels $l_1, l_2 $ and $ l_3 $.
Algorithm $ \algW $ infers $ b $ and $ C $ for
\cref{ex:swap1}; the behaviour $ b $ (simplified) is:
$$ \espawn{ (\beta_{coord})}; \espawn{ (\beta_{
swap
})}; \espawn{ (\beta_{
swap
})}$$

In this behaviour three processes are spawned: one with a $
\beta_{coord} $ behaviour, and two with a $ \beta_{swap}$ behaviour. 
The behaviour associated to each of these variables is described in $ C $, along with
other constraints:
\begin{enumerate}
  \item 
$\orec{ (\pusho{ l_{1}} {\psi_{1}}{}; l_{1}?\alpha_{1}; 
         \pusho{l_{2}} {\psi_{1}}{}; 
         l_{2}?\alpha_{2}; l_{2}!\alpha_{1};
         l_{1}!\alpha_{2}); \beta_{coord}}{\beta_{coord}} 
\subseteq \beta_{coord}$
\item $ \pusho{ l_{3}} {\psi_{2}}{}; l_{3}!\tint; l_{3}?\alpha_{3} \subseteq
\beta_{swap}$
\item $ \overline{{swap}} \sim\psi_{1} $
\item $ {swap} \sim\psi_{2} $
\end{enumerate}

The above behaviour and environment are the inputs to Algorithm $\algoSI$, implementing session type inference according to the
second level of our framework.
The invocation $
\algoSI(b, C) $ calls $\mcclause
\epsilon b C \emptyK $, where the first $\epsilon$ is the empty endpoint stack $\Delta$ and the second $\epsilon$ is the empty
continuation stack. Behaviour $ b $ is decomposed as $ b = K[b']$, where 
 $ b' = \espawn{ (\beta_{ coord })} $ 
 and $ K 
$ is the continuation $ [~];\espawn{ (\beta_{ swap })}; \espawn{ (\beta_{ swap
})} $.
The algorithm thus calls
$\mcclause {\epsilon} {\espawn{ (\beta_{ coord })}} C K $, which, after replacing $\beta_{coord}$ and unfolding its
inner recursive behaviour becomes:
$$
\mcclause \epsilon {
\pusho{ l_{1}} {\psi_{1}}{}; l_{1}?\alpha_{1}; 
         \pusho{l_{2}} {\psi_{1}}{}; 
         l_{2}?\alpha_{2}; l_{2}!\alpha_{1};
         l_{1}!\alpha_{2}; \beta_{coord}}
 {C_1} \epsilon
$$

Here $ C_1 $ is equal to $ C $ above, with the exception of replacing Constraint 1 with the 
constraint $(\tau \subseteq \beta_{coord})$. Inference is now
straightforward: the frame $(l_1:\psi_1)$ is first pushed on the endpoint stack. From 
behaviour $ l_1?\alpha_1$ the algorithm applies substitution
$[\psi_1 \mapsto ?\alpha_4.\psi_4]$, where $ \psi_4$ and $ \alpha_4
$ are fresh, and generates constraint $ (\alpha_4 \subseteq \alpha_1) $
obtaining $C_2$. We thus get:
$$
\mcclause{
 (l_1:\psi_4) 
}{
         \pusho{l_{2}} {?\alpha_4.\psi_4}{}; 
         l_{2}?\alpha_{2}; l_{2}!\alpha_{1};
         l_{1}!\alpha_{2}; \beta_{coord}}
 {C_2} \epsilon
$$

After the next push, the endpoint stack becomes
$(l_2:?\alpha_4.\psi_4)\cdot(l_1:\psi_4)$. The next behaviour $ l_2?\alpha_2$
causes $ \algoMC $ to create constraint ($ \alpha_4 \subseteq \alpha_2$) obtaining $ C_3 $, and to
consume session $ ?\alpha_4$ from the top frame of the endpoint stack.
$$
\mcclause{
 (l_2:\psi_4)\cdot(l_1:\psi_4)
}{
         l_{2}!\alpha_{1}; l_{1}!\alpha_{2}; \beta_{coord}}
 {C_3} \epsilon
$$
Because of $l_2!\alpha_1$, $\algoMC $ generates
$[\psi_3 \mapsto {!}\alpha_5.\psi_5]$ and $ (\alpha_1
\subseteq \alpha_5)$ obtaining $C_4$. 
$$
\mcclause{
  (l_2:\psi_5)\cdot(l_1:!\alpha_5.\psi_5)
}{
         l_{1}!\alpha_{2}; \beta_{coord}}
 {C_4} \epsilon
$$
Since $l_1 $ in the behaviour and $l_2$ at the top of the endpoint stack do not match, $\algoMC$ infers that $ \psi_5$ must be the
terminated session $ \tend$. Therefore it substitutes
$[\psi_5 \mapsto \tend]$ obtaining $C_5$.
Because of the substitutions, $C_5$ contains 
$ \overline{{swap}} \sim {?}\alpha_4.!\alpha_5.\tend$. 
After analysing $ \beta_{swap} $, $\algoMC$ produces $C_6$ where $ swap \sim
{!}\tint.!\alpha_6.\tend$.

During the above execution $\algoMC$ verifies that 
the stack principle is respected and no endpoint label is pushed on the stack twice.
Finally the algorithm calls $\algoDual(C_6)$ which performs a duality check between the constraints of $\co{swap}$ and $swap$,
inferring
substitution $[\alpha_4 \mapsto \tint,~\alpha_6\mapsto\alpha_5]$.
The accumulated constraints on type variables $\alpha$ give the resulting session types of the swap channel endpoints:
$ (\overline{swap} \sim ?\tint.!\tint.\tend)$ and $(swap \sim !\tint.?\tint.\tend)$.

\end{example}
%
%

  \section{A Proposal for Recursive Session Types}
  \label{sec:extensions}
  \newcommand{\finarrowBase}[5]{#1\vDash #2\Downarrow_{#3, #4}^\mathsf{fin} #5}
\newcommand{\recarrowBase}[5]{#1\vDash #2\Downarrow_{#3, #4}^\mathsf{rec} #5}  

The system we have presented does not include recursive session types. Here we propose an extension to the type
system with recursive types. The inference algorithm for this extension is non-trivial and we leave it to future work.

In this extension, a recursive behaviour may partially use a recursive session type and rely on the continuation behaviour to
fully consume it.
First we add \emph{guarded} recursive session types:
$\eta ::= \ldots \bnf \mu X. \eta \bnf X$.
The first level of our type system remains unchanged, as it is parametric to
session types, and already contains recursive functions and behaviours.

A recursive behaviour $\orec{b}{\beta}$ operating on an endpoint $l$ with session type $\mu X. \eta$ may: (a) run in an infinite loop,
always unfolding the session type; (b) terminate leaving $l$ at type $\tend$; (c) terminate leaving $l$ at type $\mu X. \eta$.
Behaviour $b$ may have multiple execution paths, some terminating, ending at $\tau$, and some recursive, ending
at a recursive call $\beta$. They all need to leave $l$ at the same type, either $\tend$ or $\mu X. \eta$;
the terminating paths of $b$ determine which of the two session types $l$ will have after $\orec{b}{\beta}$.
If $b$ contains no terminating paths then we assume that $l$ is fully consumed by $\orec{b}{\beta}$ and type
the continuation with $l$ at $\tend$.

To achieve this, we add a 
\emph{stack environment} $D$ in the rules of \cref{fig:abstract-interpr}, which maps labels $ l $ to stacks $ \Delta $.
If $\Delta_1 = (l:\mu X.\eta)$,
we call an $l$-\emph{path} from $ \mc {\Delta_1} b_1$ any finite sequence of
transitions such that $\mc {\Delta_1} {b_1}\rightarrow_{C, D} \ldots 
\rightarrow_{C, D}
\mc {\Delta_n} {b_n} \not\rightarrow_{C, D} $.
A $l$-path is called 
$l$-\emph{finitary} if there is no $ b_i = \tau^l$ for any configuration $i $ in
the series; otherwise we say that the path is $l$-\emph{recursive}. 
We write $ \finarrowBase { (l:\mu X.\eta) }{b}{C} {D}
{\Delta'}$ when the last configuration of all $l$-finitary paths from 
$ \mc { (l:\mu X.\eta) }{b}  $ is $\mc{\Delta'}{\tau} $.
Similarly, we write $\recarrowBase { (l:\mu X.\eta) }{b}{C} {D} {\Delta'}$
when the last configuration of all $l$-recursive paths from 
$ \mc { (l:\mu X.\eta) }{b}  $ is $\mc{\Delta'}{\tau^l} $. 
When no $l-$ paths from $ \mc
{(l:\mu X.\eta)} b $ is $l$-finitary, we stipulate $ \finarrowBase { (l:\mu
X.\eta) }{b}{C} {D} {(l:\tend)}$ holds.
We add the following rules to those of \cref{fig:abstract-interpr}.

\begin{center}
  \begin{tabular}{l}
\begin{minipage}{.5\textwidth}
\irule*[Rec2][Rec2]
  {
   \nbox{
       \finarrowBase { (l:\mu X.\eta) }{b}{C'} {D'} {\Delta'}
    \\ \recarrowBase { (l:\mu X.\eta) }{b}{C'} {D'} {\Delta'}
   }
  }
  {
     \dstepBase{ \stBase{l}{\mu X. \eta}{\Delta}  }{  \orec{b}{\beta} }
           { \Delta'\cdot\Delta  }{  b' }{C,D}
  }
\end{minipage}
  \condBox{  
  \nbox{
              \Delta' \in \{(l:\tend), (l:\mu X. \eta)\}
            \\ C' = (C\removeRHS{\beta}) {\cup} (\su{\tau^l}{\beta})
            \\ D' = D \subst l {\Delta'} 
  }
  }
\\[4em]
\irule*[RCall][RCall]
  { \  }
  {
   \dstepBase {(l:\mu X.\eta)}{\tau^l}
              {D(l)}{\tau}
              {C, D} 
  }
\irule*[Unf][Unf]
  {
    \dstepBase{ \stBase l {\eta \subst{X}{\mu X.\eta } }  \Delta }{  b }
          { \Delta'}{ b'}{C,D}
  }
  {
    \dstepBase{ \stBase{l}{\mu X. \eta}{\Delta}  }{  b }
          { \Delta'  }{  b' }{C,D}
  }
  \end{tabular}
\end{center}
 
Rule \iref{Rec2}  requires that both $l$-finitary and $l$-recursive paths converge to the same stack $\Delta'$, either $ (l:\tend) $ or $
(l:\mu X.\eta) $. In this rule, similarly to rule \RefTirName{Rec} in \cref{fig:abstract-interpr}, we replace the recursive
constraint $(\orec{b}{\beta} \subseteq \beta)$ with $(\tau^l \subseteq \beta)$, representing a trivial recursive call of $\beta$.
This guarantees that all $l$-paths have a finite number of states. The $D$ environment is extended with $l \mapsto \Delta'$,
used in Rule \iref{RCall} to obtain the session type of $l$ after a recursive call.
Rule \iref{Unf} simply unfolds a recursive session type.

  \section{Related Work and Conclusions}
  \label{sec:relwork}
  We presented a new approach for adding binary session types to high-level programming languages, and applied it to a core of ML with
session communication. In the extended language our system checks the session protocols of interesting programs, including one
where pure code calls library code with communication effects, without having to refactor the pure code (\cref{ex:db}).
Type soundness guarantees partial lock freedom, session fidelity and
communication safety. 

Our approach is modular, organised in two levels, the first focusing on the type system of the source language and second on
typing sessions; the two levels communicate through \emph{effects}. In the fist level we adapted and extended the work of Amtoft,
Nielson and Nielson~\cite{ANN} to session communication, and used it to extract the communication effect of programs.
In the second level we developed a session typing discipline inspired by Castagna et al.~\cite{CastagnaEtal09}.
This modular approach achieves a provably complete session inference for finite sessions without programmer
annotations.

\begin{hide}
This extends foundational work, such as Milner's original polymorphic type inference 
and Tofte and Talpin's region analysis~\cite{TofteT94}. This effect is a term in a restricted process
algebra which lends itself well for session type checking.

Our session type discipline is presented as an abstract interpretation where $\stackop{}{b}{\Delta}{}$ means that all paths
from $b$ with stack $\Delta$ and environment $C$ reduce to a terminal configuration $\mcconf{\epsilon}{\tau}$. An equivalent
definition for $\stackop{}{b}{\Delta}{}$ can be given using inference rules.
\end{hide}

Another approach to checking session types in high-level languages is to use substructural type systems. For
example, Vasconcelos et al.~\cite{Vasconcelos2006} develop such a system for a functional language with
threads, and Wadler~\cite{Wadler2012} presents a linear functional language with effects. Type soundness in the former guarantees
session fidelity and communication safety, and in the latter also lock freedom and strong normalisation.
Our system is in between these two extremes: lock freedom is guaranteed only when processes do not diverge and their
requests for new sessions are met. Other systems give similar guarantees (e.g.,
\cite{PhenningSessionMonad,
CastagnaEtal09%
}).
\begin{hide}
The modularity of the two levels in our system may enable the use of other, perhaps more permissive disciplines.
\end{hide}


Toninho et al.~\cite{PhenningSessionMonad} add session-typed communication to a functional language using a monad. Monads, similar
to effects, cleanly separate session communication from the rest of the language features which, unlike effects, require parts of
the program to be written in a monadic style. Pucella and Tov~\cite{PucellaT08} use an
indexed monad to embed session types in Haskell,
however with
limited endpoint delegation: delegation relies on moving
capabilities, which cannot escape their static scope.
Our \cref{ex:swap2} is not typable in that system
 because of this.
 In~\cite{PucellaT08} session types are inferred by Haskell's type inference. However, the programmer must
guide inference with expressions solely used to
manipulate type structures.

Tov~\cite{TovThesis} has shown that session types can be encoded in a language with a general-purpose substructural type system.
Type inference alleviates the need for typing annotations in the examples
considered. Completeness of session inference relies on completeness of
inference in the general language, which is not 
clear. 
\begin{hide}
Our work gives
a provably complete session types inference system for \lang without the use of
substructural types, making it applicable to high-level programming languages
without such types.
\end{hide}

Igarashi et al.~\cite{IgarashiK00} propose a reconstruction
algorithm for finite types in the linear $ \pi $ calculus.
Inference is complete and requires no annotations. 
Padovani~\cite{Padovani15} extends this work to pairs, disjoint sums and regular recursive types.

Mezzina~\cite{Mezzina08} gives an inference algorithm for session types in a calculus of services. The type system does
not support recursive session types and endpoint delegation. It does allow, however to type replicated processes that only use
finite session types, similar to our approach.

\begin{hide}
We are currently working on extending our system to include more ML features such as exceptions and mutable state. We also intend
to improve the accuracy of our regions for approximating the endpoints in a program, using context-sensitive techniques from static
analysis (e.g., $k$-CFA~\cite{kcfa}). We are also currently working on a prototype implementation of our system.
\end{hide}

\begin{hide}
Scribble \cite{YoshidaHNN13} is a programming language with multiparty session types~\cite{HondaYC08}
\end{hide}
\begin{hide}
\todo{Added Tofte's regions and more relaxed typing for recursive sessions. 
I would reference Kobayashi's static analysis of deadlock freedom.}
We are currently working on extending our system to include more ML features, 
such as exceptions and mutable state, and to relax both the recursive session
typing and the stack discipline to accept more safe programs.
We also intend to improve the accuracy of our regions for approximating the
endpoints in a program, using the region inference of \cite{TofteT94},
which involves the non-trivial task of adding behaviours to their effect
systems. We believe that regions will provide more precise analysis of endpoint
life-times. We are also considering using context-sensitive techniques from
static analysis (e.g., $k$-CFA \cite{kcfa}).
\end{hide}

\bibliographystyle{splncs03}    
\bibliography{refs-short}
\clearpage
\appendix
\centerline{\Large\bf Appendix}
%
%
\newcommand{\qedhere}{\qed}
\section{Omitted definitions and examples}
\label{sec:omitted}
The following section presents the complete definitions of concepts and
examples that could not be spelled out in full details because of space
constraints.

\subsection{Full type system and definitions for the first level}
\[  \begin{array}{@{}l@{\quad}l@{}}
\irule*[TPair][tpair]
  {
    \tjr  {} {} {e_1} {T_1} {b_1}
    \quad
    \tjr  {} {} {e_2} {T_2} {b_2}
  } {
    \tjr{} {} {(e_1, e_2)} {T_1 \times T_2} { \bseq{b_1}{b_2} }
  }
  &
\irule*[TVar][tvar]
  { }
  { \tjr{}{} {x} {\Gamma(x)} {\tau}} 
\\[1em]
\irule*[TIf][tif]
  {
    \tjr{}{}{e_1}{\tbool}{b_1}
    \quad
    \tjr{}{}{e_i}{T}{b_i} {~}_{(i\in\{1,2\})}
  }
  { \tjr{}{}{\eif{e_1}{e_2}{e_3}}{T}{\bseq{b_1}{(b_2 \oplus b_3)} } }
&
\irule* [TConst][tconst]
  { }
  { \tjr{}{}{ k }{\mathit{typeof}(k)}{\tau} }
\\[1.4em]
\irule*[TApp][tapp]
  { \tjr{}{}{e_1}{ \tfun{T'}{T}{\beta} }{b_1} \quad \tjr{}{}{e_2}{T'}{b_2} } 
  { \tjr{}{}{\eapp{e_1}{e_2}}{ T}{ \bseq{\bseq{b_1}{b_2}}{\beta} } }
&
\irule*[TFun][tfun]
  { \tjr{}{,x:T}{e}{T'}{\beta} }
  {
  \tjr{}{}
      {\efunc{x}{e}}
      {\tfun{T}{T'}{\beta}}{\tau}}
\\[1em]
\irule*[TMatch][tmatch]
  {
    \tjr{}{}{e}{\tses{\rho}}{b}
    \quad \tjr{}{}{e_i}{T}{b_i}
    {~}_{(i\in I)}
  }
  { 
    \tjr{}{}
      { \ematch{e}{L_i : e_i}{i \in I} } 
      {T}{ \bseq{b}{\bechoice{}{\rho}{L_i}{b_i}} } 
  }
&
\irule*[TEndp][tendp]
  { }
  { \tjr{}{}{ \eendp{p}{l} }{ \tses{\rho}{} }{ \tau }}
  ~\condBox{$C\vdash\seql{\rho}{l}$}
\\[1.4em]
\irule*[TLet][tlet]
  { \tjr{}{}{e_1}{\TS}{b_1}
    \quad \tjr{
               }{,x:\TS}{e_2}{T}{b_2} }
  { \tjr{}{}{\elet{x}{e_1}{e_2}}{T}{\bseq{b_1}{b_2} } }
&
\irule*[TSub][tsub]
  { \tjr {} {} {e} {T} {b} }
  { \tjr {} {} {e} {T'} {\beta} }
  ~\condBox{\nbox[m]{
      \coType{T}{T'}\\
      \sub{b}{\beta}
  }}
\\[1em]
\irule[TSpawn][tspawn]{
  \tjrBase {C} { \confined_{C}(\Gamma) }
  {e}{\tfun{\tunit}{\tunit}{\beta}}{b}{}{}
}{
  \tjr{}{}
      {\espawn{e}}
      {\tunit}
      {\bseq{b}{\espawn{\beta}}}
}
&
\\[1em]
\multicolumn{2}{@{}l@{}}{
\irule[TRec][trec]
  { \tjrBase {C} {
      \confined_{C}(\Gamma), f: \tfun{T}{T'}{\beta}, x: T
    }{
      e
    }{T'}{b} {}{}
  }
  { \tjrBase {C} {\Gamma} {\efix
  {f} {x} {e} } {\tfun{T}{T'} {\beta}} { \tau} {}{}
}[~\condBox{\nbox[m]{
      \under{C}{\confined(T,T')}\\
      \sub{\orec{b}{\beta}}{\beta}
    }
}]
}
\\[1.4em]
\multicolumn{2}{@{}l@{}}{
\irule[TIns][tins]
  { \tjr{}{} {e} {\forall (\vec\gamma:C_0).T} {b} }
  { \tjr{}{} {e} {T\sigma}         {b} }
  [~\condBox{\nbox[m]{
      dom(\sigma) \subseteq \{\vec\gamma\}\\ 
      \typeschema{\vec\gamma}{C_0}{T} \text{ is solvable from $ C $ by } \sigma
  }}]
}
\\[1.4em]
\multicolumn{2}{@{}l@{}}{
\irule[TGen][tgen]
  { \tjr{\cup C_0}{} {e} {T}         {b} }
  { \tjr{}{} {e} {\forall (\vec\gamma:C_0).T} {b} }
  [~\condBox{\nbox[m]{
      \{\vec\gamma\} \cap \fv(\Gamma, C, b) = \emptyset\\
      \forall (\vec\gamma:C_0).T  \text{ is WF, solvable from } C
  }}]
}
\end{array}\]

\begin{definition}[Functional Subtyping] 
  \label{def:func-subtyping}
  $\subType{T}{T'}$ is the least reflexive, transitive, compatible relation on types with the axioms:
  \begin{gather*}
    \irule*{ (T_1 \subseteq T_2) \in C}{\subType{T_1}{T_2}}
    ~~~
    \irule*{ C \vdash \seql{\rho}{\rho'} } { \subType{\tses \rho} {\tses {\rho'}} }
    ~~~
    \irule*{
      \coType {T_1'}{T_1} \quad \sub{\beta}{\beta'} \quad \coType{T_2}{T_2'}
    }{
      \coType{ \tfun{T_1}{T_2}{\beta} }{ \tfun {T_1'}{T_2'}{\beta'} }
    }
  \end{gather*}
\end{definition}

\subsection{Confined types and behaviours}
\begin{definition}[Confined Behaviors]\label{def:confined-behav}
  $C \vdash \confined(b)$ is the least compatible relation on 
  behaviours that admits the following axioms:
  
\begin{longtable}{@{}l@{\qquad}l@{}} 
  \irule*[CTau][CTau]
    { b \in \{\tau,~\orec{b'}{\beta}\}}
    {C\vdash \confined(b)}
&
  \irule*[CAx-b][CAx-b]
    { \cfd b \in C }
    { C \vdash \confined(b) }
 \\[1.4em] 
  \irule*[CICh][CICh]
    {C\vdash\confined(b_1) \quad 
     C\vdash\confined(b_2)}
    {C\vdash\confined(b_1 \oplus b_2)}
&
  \irule*[CICh-Bw][CICh-Bw]
    {C\vdash\confined(b_1\oplus b_2)}
    {C\vdash\confined(b_i)}
    \condBox{$i\in \{1,2\}$}
\\[1.4em] 
  \irule*[CSeq][CSeq]
    {C\vdash\confined(b_1) \quad 
     C\vdash\confined(b_2)}
    {C\vdash\confined(b_1; b_2)}
& 
  \irule*[CSeq-Bw][CSeq-Bw]
    {C\vdash\confined(b_1;b_2)}
    {C\vdash\confined(b_i)}
    \condBox{$i\in \{1,2\}$}
  \\[1.4em]
  \irule*[CSpw][CSpw]
    {C\vdash\confined(b)}
    {C\vdash\confined(\spawno{b}{})}
& 
  \irule*[CSpw-Bw][CSpw-Bw]
    {C\vdash\confined(\spawno{b}{})}
    {C\vdash\confined(b)}
\\[1.4em]
  \irule*[CSub-b][CSub-b]
    { \sub{b_1}{b_2} \quad C\vdash \confined(b_2)  }
    { C \vdash \confined(b_1) }
    &
\end{longtable}
\end{definition}

\begin{definition}[Confined Types]\label{def:confined-types}
  $C \vdash \confined(o)$ is the least compatible relation on type schemas and
  types that admits the following axioms.
%
\begin{longtable}{@{}l@{\quad}l@{}}
  \irule*[CCons][CCons]
      { T \in \{\tint,\tbool,\tunit\} }
      { C \vdash \confined(T) }
  &
  \irule*[CAx-T][CAx-T]
      { \cfd T \in C }
      { C \vdash \confined(T) }
\\[1.4em]
    \irule*[CSub-T][CSub-T]
      { \sub{T_1}{T_2} \quad C\vdash \confined(T_2)  }
      { C \vdash \confined(T_1) }
&
    \irule*[CSub-T-Bw][CSub-T-Bw]
      { \forall T_1. (\sub{T_1}{T_2}) \Longrightarrow C\vdash \confined(T_1)  }
      { C \vdash \confined(T_2) }
\\[1.4em]
    \irule*[CFun][CFun]{
      C\vdash \confined(T,T')
      \quad
      C\vdash \confined(\beta)
    }{
      C\vdash \confined(\tfun{T}{T'}{\beta})
    }
&
    \irule*[CFun-Bw][CFun-Bw]{
      C\vdash \confined(\tfun{T}{T'}{\beta}) \quad o \in \{T,T',\beta\}
    }{
      C\vdash \confined(o)
    }
\\[1.4em]
    \irule*[CTup][CTup]{
      C\vdash \confined(T,T')
    }{
      C\vdash \confined(\tpair{T}{T'})
    }
&
    \irule*[CTup-Bw][CTup-Bw]
    {
      C\vdash \confined(\tpair{T}{T'}) \quad o \in \{T,T'\}
    }{
      C\vdash \confined(o)
    }
\\[1.4em]
   \irule*[CTS][CTS]{
     C,C_0 \vdash \confined(T) 
    }{
      C\vdash \confined( \forall (\vec\gamma: C_0). T )
    }
    &
\end{longtable}
  We define $\confined_C(\Gamma)$ as the largest subset of $\Gamma$ such that
  for all bindings $ (x:\TS) \in \confined_C(\Gamma)$ we have
  $C\vdash\confined(\TS)$.
\end{definition}

The above definitions admit behaviours constructed by $\tau$ and
recursive behaviours (\iref{CTau}), and types that are constructed by such
behaviours and the base types $\tint,\tbool,\tunit$ (\iref{CCons}).
The definitions allow for type and behaviour variables ($\alpha$, $\beta$) as
long as they are only related to confined types in $C$
(\iref{CAx-b}, \iref{CAx-T}). Sub-behaviours of confined behaviours and
sub-types of confined types are confined too (\iref{CSub-b}, \iref{CSub-T}).
Type schemas are confined if all their instantiations in $C$ are
confined (\iref{CTS}).
The definitions contain composition rules for composite
behaviours and types (\iref{CICh}, \iref{CSeq}, \iref{CSpw}, \iref{CFun},
\iref{CTup}); they also contain decomposition (or ``\textit{backward}'') rules
for the same composite constructs (\iref{CICh-Bw}, \iref{CSeq-Bw}, \iref{CSpw-Bw}, \iref{CFun-Bw},
\iref{CTup-Bw}).
The backward rules ensure that all the sub-components of a
confined behaviour or type in $ C $ are also confined.


\subsection{Active and inactive labels}
  Consider a program $P[e_1][e_2]$ containing the expressions:
  \[
    \begin{array}{@{}r@{~}c@{~}l@{}}
      e_1 &\defeq& \elet{x}{\eapp{\eaccept{l_1}{c}}{\cunit}}{\ematch{x}{L_1 \Rightarrow e,~L_2 \Rightarrow e^{\star}}{}}
      \\
      e_2 &\defeq& \elet{x}{\eapp{\erequest{l_2}{c}}{\cunit}}{\eapp{\eselectnew{L_2}}{x}}
    \end{array}
  \]
  Suppose $e^\star$ contains a type error, possibly because of a mismatch in session types with another part of $P$. If a type
  inference algorithm run on $P[e_1][e_2]$ first examines $e_1$, it will explore both branches of the choice, tentatively
  constructing the session type $\Sigma\{L_1.\eta_1,~L_2.\eta_2\}$, finding the error in $e^\star$. One strategy might then be to
  backtrack from typing $e^{\star}$ (and discard any information learned in the $L_2$ branch of this and possibly other choices in the code)
  and continue with the session type $\Sigma\{L_1.\eta_1\}$. However, once $e_2$ is encountered, the previous error in $e^\star$
  should be reported. A programmer, after successfully type checking $P[e_1][()]$, will be surprised to discover a type error in
  $e_1$ after adding in $e_2$. The type-and-effect system here avoids such situations by typing all choice branches, even if they
  are \emph{inactive}, at the expense of rejecting some---rather contrived---programs. A similar approach is followed in the
  type-and-effect system of the previous section by requiring \emph{all} branches to have the same type (Rule~\iref{tmatch} in
  \cref{fig:typing-rules}).

\subsection{Duality}
The
program: \[\nbox{
    \espawn{(\efunc{\_}{\elet{x}{\eapp{\erequest{}{c}}{\cunit}}{\eapp{\esend{x}}{\keyw{\ctrue}}}})};~
  \elet{x}{\eapp{\eaccept{}{c}}{\cunit}}{\erecv{}{x} + \keyw{1}}
}\]
should not be typable because its processes use dual endpoints at incompatible
session types. Therefore session types on dual session endpoints
($\seql{c}{\eta}$, $\seql{\co c}{\eta'}$) must be dual, where duality is
defined as follows:

\begin{definition}[Duality] \label{def:duality}
$ C\vdash \eta \dual{} \eta'$ if the following rules and their symmetric ones are satisfied.
\begin{gather*}
  \irule*{ }{
    C \vdash  \tend \dual{} \tend
  }
  \quad
  \irule*{
    C \vdash  \subt{T}{T'}
    \\\\
    C \vdash  \eta \dual{} \eta'
  }{
    C \vdash  {!T}.\eta \dual{} {?T'}.\eta'
  }
  \quad
  \irule*{
    C \vdash  \subt{\eta_0}{\eta_0'}
    \\\\
    C \vdash  \eta \dual{} \eta'
  }{
    C \vdash  {!\eta_0}.\eta \dual{} {?\eta_0'}.\eta'
  }
  \\
  \irule*{
    \forall i\in I_0.~~ C \vdash  \eta_i \dual{} \eta'_i
  }{
    C \vdash  {\sichoiceBase{i}{I_0}{\eta}{i}} \dual{} {\sechoiceBase{i}{I_0I_1}{I_2}{\eta'}{i}})
  }
\end{gather*}
where $C \vdash  \subt{\eta}{\eta'}$ is Gay\&Hole \cite{GayHole05} subtyping, with $C$ needed for 
inner uses of $C \vdash  \subt{T}{T'}$, extended to our form of external choice, where
$C \vdash\sechoicetext{i}{I_1}{I_2}{\eta'}
\subt{}
\sechoicetext{i}{J_1}{J_2}{\eta'}$
when
$I_1\subseteq J_1$
and
$J_1\cup J_2\subseteq I_1\cup I_2$
and
$\forall( i \in J_1\cup J_2).~
C \vdash  \subt{\eta_i}{\eta_i'}$.

\end{definition}

\subsection{Dependencies}
Thanks to the definition of well-stackedness, we can give a more precise account
of type soundness. We can define dependencies between processes of a running
system according to the following definition.
\begin{definition}[Dependencies]
Let $ P = \tconf{\Delta}{b}{e}$ and $ Q = \tconf{\Delta'}{b'}{e'} $ be processes in $ \c S $.
\begin{description}
  \item[$ P $ and $ Q $ are \emph{ready} ($ P \leftrightharpoons Q$):] if
    $\Delta = (p^l:\eta)\cdot\Delta_0$ and
    $\Delta' = (\co p^{l'}:\eta') \cdot \Delta_0'$;

  \item[$ P $ is \emph{waiting} on $ Q $ ($P \mapsto Q$):] if
    $ \Delta = (p^l:\eta) \cdot \Delta_0$ and
    $ \Delta' = \Delta_1' \cdot (\co p^{l'}:\eta') \cdot \Delta_0' $
    and $\Delta_1' \neq \epsilon $;

  \item[$ P $ \emph{depends} on $ Q, R $ ($P \Mapsto (Q,R)$):] if
    $P = Q \leftrightharpoons R $, or
    $P \mapsto^+ Q \leftrightharpoons R $.

\end{description}
\end{definition}

The type soundness theorem (\cref{thm:type-soundness}) can be reformulated to
show that blocked processes depend on a process that is either waiting,
diverging or blocked:
\begin{corollary}[Type Soundness with dependencies]
\label{thm:type-soundness}
 Let $ \tjParSys{}{\c S}$ and $ \wst{\c S}$. Then
 \begin{enumerate}
   \item $ \c S \comStepW  \c S'$, or 
   \item $ \c S \intStep^* (\c F,\c D,\c W,\c B) $ such that:
 \begin{description}
   \item[Processes in $\c F$ are finished:] $ \forall\tconf{\Delta}{b}{e} \in \c F.~ \Delta=\epsilon, b=\tau$ and $e=v$.
   \item[Processes in $\c D$ diverge:] $ \forall\tconf{\Delta}{b}{e} \in \c D.~ \tconf{\Delta}{b}{e} \intStep^\infty$.
   \item[Processes in $\c W$ wait on channels:] $ \forall\tconf{\Delta}{b}{e} \in \c W.~ e = E[\erequest{l}{c}] $ or
     $ e = E[\eaccept{l}{c}]$.
   \item[Processes in $\c B$ block on sessions:]
     $\forall P=\tconf{\Delta}{b}{e} \in \c B.~ e = E[e_0]$ and 
     $e_0$ is
     $\esend{v}{}$, $\eapp{\erecv{}{}}{v}$, $\edeleg{v}$, $\eapp{\eresume{}{}}{v}$,
   $\eapp{\eselectnew{L}}{v}$, or $\ematch{v}{L_i \Rightarrow e_i}{i \in I}$
     and
     $\exists Q \in (\c D, \c W).~ \exists R \in (\c D, \c W, \c B).~  \c S\vdash P \Mapsto (Q, R)$.
 \end{description}
 \end{enumerate}
 \end{corollary}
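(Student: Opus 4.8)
The plan is to prove the dichotomy by first asking whether a communication is reachable from $\c S$ by internal steps and, if not, reducing internally to a stable snapshot whose blocked processes I analyse with the well-stackedness machinery. Throughout I invoke \cref{thm:preservation}, so that typing $\tjParSys{}{\c S}$, well-stackedness $\wst{\c S}$, and session fidelity (the behaviour faithfully tracking the stacked session types) are preserved along every reduction; this is what lets me reason about the abstract-interpretation stacks of \cref{fig:abstract-interpr} while arguing about the operational system. The first ingredient is a routine per-process progress property for the first-level type-and-effect system: by canonical forms (a value of type $\tses{\rho}$ is an endpoint $\eendp{p}{l}$, a value of a function type is a $\lambda$, a fix, or a partially applied constant, and so on), every closed well-typed process is a value, can take an internal step $\intStep$, or has the form $E[e_0]$ with $e_0$ a communication primitive applied to endpoint values: a request/accept on a global channel (a \emph{waiting} redex) or a send/receive/delegate/resume/select/match on some endpoint $p$ (a \emph{session} redex).

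If some finite internal reduction of $\c S$ enables one of the operational communication rules \iref{init}, \iref{comm}, \iref{deleg}, \iref{select}, then $\c S \comStepW \c S'$ and we are in case~1. So assume no internal reduction enables a communication. I then reduce internally to a snapshot in which every process is classified. Working up to removal of terminated unit-value processes, each non-diverging process runs deterministically to a normal form that is a value, a waiting redex, or a session redex, while any process admitting an infinite internal reduction (including one that spawns without bound, which up to unit-removal loops back to an equivalent state) is placed in $\c D$ and stopped at a finite prefix; since terminating processes reach normal form in boundedly many steps and spawn only finitely many children beforehand, this snapshot $(\c F,\c D,\c W,\c B)$ is reached after finitely many $\intStep$ steps. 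Values go to $\c F$: a value has behaviour $\tau$, and the preserved strong-normalisation predicate forces its stack to consist only of finished ($\tend$) frames, which are discarded, so its stack is $\epsilon$. Waiting redexes go to $\c W$, and session redexes, which by assumption cannot fire, go to $\c B$.

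The heart of the proof is that every $P\in\c B$ transitively depends on a process in $\c D\cup\c W$. Let $P$ be blocked on a session redex; by the stack principle of \cref{fig:abstract-interpr} this redex acts on the endpoint $p$ at the top of $P$'s stack, and by session fidelity $P$'s pending action is exactly the one prescribed by the top session type. By \cref{def:well-stackedness} the dual $\co p$ occurs in exactly one other process $Q$. If $\co p$ sits at the top of $Q$, then $P$ and $Q$ are \emph{ready}, with dual top session types: were $Q$ also blocked on $\co p$, its pending action would be complementary to $P$'s (a matching send/receive, select/match, or delegate/resume), so a communication rule would fire, contradicting our assumption. Hence $Q$ is not acting on $\co p$; its stack is nonempty so it is not finished, and since a blocked process blocks only on its top frame it cannot be blocked on anything other than $\co p$ either, so $Q\in\c W\cup\c D$ and $P$ depends on it directly. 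Otherwise $\co p$ is buried below $Q$'s top, so $P$ is \emph{waiting on} $Q$, and I recurse on $Q$'s own top endpoint.

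Finally I must see that this waiting chain terminates in $\c W\cup\c D$. Each link moves from a process to one strictly deeper in the dual-pairing structure, and the Deadlock Freedom lemma (a consequence of $\wst{\c S}$) states precisely that the top-dual dependency relation is acyclic; with finitely many processes the chain is finite. Its last process holds the relevant buried dual, so its stack is nonempty and it is not in $\c F$, and nothing deeper is waited on, so either it is ready with a partner, which by the preceding argument lies in $\c W\cup\c D$, or it is itself waiting or diverging. Either way $P$ transitively reaches $\c W\cup\c D$, giving the blocked-process condition; combined with case~1 this proves the theorem, and the Partial Lock Freedom corollary follows by taking $\c D=\c W=\emptyset$, which forces $\c B=\emptyset$. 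I expect the main obstacle to be the ``ready implies communicable'' step: rigorously establishing, through session fidelity and the behaviour-to-session-type correspondence of the abstract interpretation, that two processes holding dual top endpoints and both blocked on them have strictly complementary pending operations---not, for instance, two receives---so that a communication rule is genuinely enabled, which is exactly what rules out deadlock.
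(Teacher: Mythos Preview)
Your proposal is correct and follows essentially the same route as the paper: classify processes via a per-process progress property into $\c F,\c D,\c W,\c B$ after internal reduction (the paper's Cor.~\ref{cor:progress-a}), then chase the $\mapsto$-chain from any $P\in\c B$ using acyclicity of well-stacked dependencies until a ready pair is found, and argue that not both members of that pair can be in $\c B$ lest a communication fire (the paper's Lem.~\ref{lem:progress-b}). The paper factors the argument into those two named auxiliary results while you do it inline, and you are actually more explicit than the paper about the ``ready blocked pair $\Rightarrow$ communication enabled'' step, which you rightly flag as the crux---in the paper this is stated only for $\c W$ (the \iref{init} case) and left implicit for $\c B$.
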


\subsection{Subtyping in recursive session types}
We add a brief discussion about subtyping for the recursive session
types in \cref{sec:extensions}. We only consider recursive session types that
are guarded, and therefore contractive; thus the subtyping relation for the
extended session types only needs to perform a standard unfolding of recursive
types \cite{GayHole05}.
Note that the definition of functional subtyping (\cref{def:func-subtyping}) is
unaffected because it involves behaviours but not session types.

Subtyping is used in the definition of duality (\cref{def:duality}), which remains unaffected. Our inference algorithm relies on a
decision procedure for duality ($\mathcal{D}$) which now needs to use a decision procedure for subtyping of equi-recursive types,
such as the algorithm of Kozen, Parsberg and Schwartzbach%
\begin{icalp}
\footnote{Kozen, D., Palsberg, J., Schwartzbach, M.I.: Efficient recursive subtyping.
  Mathematical Structures in Computer Science  5(1),  113--125 (1995)}.
\end{icalp}
\begin{hide}
\cite{KozenEtalSubtyping}.
\end{hide}
This is the main modification of our
inference algorithm when adding recursive session types.

\clearpage
\section{Type Soundness}
\label{sec:type-soundness}
\def\nxt{\mathsf{nxt}}

To prove type soundness (progress and preservation) we first define the typed reductions of $\c S$ configurations shown in
\cref{fig:typed-reductions}. Recall that we let $\c S$ range over $\vect{\tconf{\Delta}{b}{e}}$, 
write $S$ for $\vect{e}$ when $\c S = \vect{\tconf{\Delta}{b}{e}}$, and identify $\c S$ and $S$ up to reordering.
Here we write $\vec b$ for an arbitrary sequential composition of behaviours, which may be empty ($\epsilon$). We also
superscripts in sequences of terms to identify them (e.g., $\vec\tau^1$ may be different than $\vec\tau^2$), and we
identify sequential compositions up to associativity and the axiom $\bseq{\epsilon}{b}=\bseq{b}{\epsilon}=b$.

In \cref{sec:stage-1} we assumed that programs are annotated by unique region
labels in a pre-processing step. This is necessary to achieve the maximum accuracy of our system (and reject fewer programs). Because beta reductions can duplicate annotations here
we drop the well-annotated property. Type soundness for general annotated programs implies type soundness for uniquely annotated
programs.

\begin{myfigure*}{Typed Reductions.}{fig:typed-reductions}\small 
  \[\begin{array}{@{}l@{\qquad}l@{}}
\irule*[TREnd][trend]{
  \tconf {\Delta} {b} {e},~ \c S \tred \c S'
}{
  \tconf {(l:\tend)\cdot\Delta} {b} {e},~ \c S \tred \c S'
}
&
\irule*[TRBeta][trbeta]{ 
  \sub{b'}{\beta}
  \\
  \tconf {\Delta} {b[b'/\beta]} {e},~ \c S \tred \c S'
}{
  \tconf {\Delta} {b} {e},~ \c S \tred \c S'
}
\\[1.5em]
\multicolumn{2}{@{}l@{}}{
\irule*[TRIft][trift]{ }{
  \tconf {\Delta} {\bseq{\vec\tau}{\bseq{\tau}{\bseq{(b_1 \oplus b_2)}{\vec b_\nxt}}}} {E[\eif{\ctrue}{e_1}{e_2}]}
  \tred
  \tconf {\Delta} {\bseq{\vec\tau}{\bseq{b_1}{\vec b_\nxt}}} {E[e_1]}
}
}
\\[1.5em]
\multicolumn{2}{@{}l@{}}{
\irule*[TRIff][triff]{ }{
  \tconf {\Delta} {\bseq{\vec\tau}{\bseq{\tau}{\bseq{(b_1 \oplus b_2)}{\vec b_\nxt}}}} {E[\eif{\cfalse}{e_1}{e_2}]}
  \tred
  \tconf {\Delta} {\bseq{\vec\tau}{\bseq{b_2}{\vec b_\nxt}}} {E[e_2]}
}
}
\\[1.5em]
\multicolumn{2}{@{}l@{}}{
\irule*[TRLet][trlet]{ }{
  \tconf{\Delta}{\bseq{\vec\tau}{\bseq{\tau}{\vec b_\nxt}}}{E[\elet{x}{v}{e}]}
  \tred
  \tconf{\Delta}{\bseq{\vec\tau}{\vec b_\nxt}}{E[e[v/x]]}
}
}
\\[1.5em]
\multicolumn{2}{@{}l@{}}{
\irule*[TRApp][trapp]{ }{
  \tconf{\Delta}{\bseq{\vec\tau}{\bseq{\tau}{\bseq{\tau}{\vec b_\nxt}}}}{E[\eapp{(\efunc{x}{e})}{v}]}
  \tred
  \tconf{\Delta}{\bseq{\vec\tau}{\vec b_\nxt}}{E[e[v/x]]}
}
}
\\[1.5em]
\multicolumn{2}{@{}l@{}}{
\irule*[TRRec][trrec]{ }{\nbox{
  \tconf {\Delta} {\bseq{\vec\tau}{\bseq{\tau}{\bseq{\tau}{\bseq{\orec{b}{\beta}}{\vec b_\nxt}}}}} {E[\eapp{(\efix{f}{x}{e})}{v}]}
  \\\tred
  \tconf {\Delta} {\bseq{\vec\tau}{\bseq{b[\orec{b}{\beta}/\beta]}{\vec b_\nxt}}} {E[e[\efix{f}{x}{e}/f][v/x]]}
}}
}
\\[3em]
\multicolumn{2}{@{}l@{}}{
\irule*[TRSpn][trspwn]{ }{
  \tconf {\Delta} {\bseq{\vec\tau}{\bseq{\tau}{\bseq{(\espawn{b})}{\vec b_\nxt}}}} {E[\espawn{v}]},~ \c S
  \tred
  \tconf {\Delta} {\bseq{\vec\tau}{\bseq{\tau}{\vec b_\nxt}}} {E[()]},~
  \tconf {\epsilon} {\bseq{\tau}{\bseq{\tau}{b}}} {\eapp{v}{()}},~ \c S
}
}
\\[1.5em]
\multicolumn{2}{@{}l@{}}{
\irule*[TRInit][trinit]{
   p,\co p\freshfrom E_1,E_2,\c S,\Delta_1,\Delta_2
}{\nbox{
    \tconf{\Delta_1} {\bseq{\vec\tau^1}{\bseq{\tau}{\bseq{\tau}{\bseq{\pusho{l_1}{\eta_1}{}}{\vec b_{\nxt 1}}}}}} {E_1[\eapp{\erequest{l_1}{c}}{()}]},~
    \\
    \tconf{\Delta_2} {\bseq{\vec\tau^2}{\bseq{\tau}{\bseq{\tau}{\bseq{\pusho{l_2}{\eta_2}{}}{\vec b_{\nxt 2}}}}}} {E_2[\eapp{\eaccept{l_2}{c}}{()}]},~ \c S
   \\\tred
   \tconf{(p^{l_1} : \eta_1)\cdot\Delta_1} {\bseq{\vec\tau^1}{\bseq{\tau}{\vec b_{\nxt 1}}}} {E_1[\eendp{p^{l_1}}{}]},~
   \tconf{(\co p^{l_2} : \eta_2)\cdot\Delta_2} {\bseq{\vec\tau^2}{\bseq{\tau}{\vec b_{\nxt 2}}}} {E_2[\eendp{\co p^{l_2}}{}]},~ \c S
}}
}
\\[4.5em]
\multicolumn{2}{@{}l@{}}{
\irule*[TRCom][trcom]{ }{\nbox{
    \tconf{(p^{l_1}:{!T_1}.\eta_1)\cdot\Delta_1}{
      \bseq{\vec\tau^1}{\bseq{\tau}{\bseq{\tau}{\bseq{\tau}{\bseq{\popo{\rho_1}{!T_1'}}{\vec b_{\nxt 1}}}}}}
    }{
      E_1[(\esend{(\eendp{p^{l_1}}{}},{v}))]
    },~
  \\
  \tconf{(\co p^{l_2}:{?T_2}.\eta_2)\cdot\Delta_2}{
    \bseq{\vec\tau^2}{\bseq{\tau}{\bseq{\tau}{\bseq{\popo{\rho_2}{?T_2'}}{\vec b_{\nxt 2}}}}}
  }{
    E_2[\erecv{}{\eendp{\co p^{l_2}}{}}]
  },~ \c S
  \\\tred
  \tconf{(p^{l_1}:\eta_1)\cdot\Delta_1} {\bseq{\vec\tau^1}{\bseq{\tau}{\vec b_{\nxt 1}}}} {E_1[()]},~
  \tconf{(\co p^{l_2}:\eta_2)\cdot\Delta_2} {\bseq{\vec\tau^2}{\bseq{\tau}{\vec b_{\nxt 2}}}} {E_2[v]},~ \c S
}}
}
\\[4.5em]
\multicolumn{2}{@{}l@{}}{
  \irule*[TRDel][trdel]{
  }{\nbox{
  \tconf{
    (p^{l_1}:{!\eta_{d}}.\eta_1)\cdot(p'^{l_1'}:\eta_{d}')\cdot\Delta_1
  }{
    \bseq{\vec\tau^1}{\bseq{\tau}{\bseq{\tau}{\bseq{\tau}{\bseq{\popo{\rho_1}{!\rho_d}}{\vec b_{\nxt 1}}}}}}
  }{
    E_1[\edeleg{(\eendp{p}{l_1}}, {\eendp{p'}{l_1'})}]
  },~
  \\
  \tconf{
    (\co p^{l_2}:{?\eta_{r}}.\eta_2)
  }{
    \bseq{\vec\tau^2}{\bseq{\tau}{\bseq{\tau}{\bseq{\popo{\rho_2}{?\rho_r}}{\vec b_{\nxt 2}}}}}
  }{
    E_2[\eresume{l_r}{\eendp{\co p}{l_2}} ]
  },~ \c S
  \\\tred
  \tconf{(p^{l_1}:\eta_1)\cdot\Delta_1} {\bseq{\vec\tau^1}{\bseq{\tau}{\vec b_{\nxt 1}}}} {E_1[()]},~
  \tconf{(\co p^{l_2}:\eta_2)\cdot(p'^{l_r}:\eta_r)} {\bseq{\vec\tau^2}{\bseq{\tau}{\vec b_{\nxt 2}}}} {E_2[\eendp{p'}{l_r}]},~ \c S
}}
}
\\[4.5em]
\multicolumn{2}{@{}l@{}}{
\irule*[TRSel][trsel]{
  k\in I
}{\nbox{
  \tconf{
    \left(p^{l_1} : {\sichoicetext{i}{I}{\eta}}\right)\cdot\Delta_1
  }{
    \bseq{\vec\tau^1}{\bseq{\tau}{\bseq{\tau}{\bseq{\popo{\rho_1}{!L_k}}{\vec b_{\nxt 1}}}}}
  }{
    E_1[\eapp{\eselectnew{L_k}}{\eendp{p}{l_1}}]
  },~
  \\
  \tconf{
    \left(\co p^{l_2}:{\sechoicetext{i}{I_1}{I_2}{\eta'}}\right)\cdot\Delta_2
  }{
    \bseq{\vec\tau^2}{\bseq{\tau}{\bseq{\left(\bechoicetext{j\in J}{\rho_2}{L_j}{b_j}\right)}{\vec b_{\nxt 2}}}}
  }{
    E_2[\ematch{\eendp{\co p}{l_2}}{L_j \Rightarrow e_j}{j\in J}]
  },~ \c S
  \\\tred
  \tconf{(p^{l_1}:\eta_k)\cdot\Delta_1}{\bseq{\vec\tau^1}{\bseq{\tau}{\vec b_{\nxt 1}}}}{E_1[()]},~
  \tconf{(\co p^{l_2}:\eta_k')\cdot\Delta_2}{\bseq{\vec\tau^2}{\bseq{b_k}{\vec b_{\nxt 2}}}}{E_2[e_k]},~ \c S
}}
}
\end{array}\]
\end{myfigure*}

\begin{lemma}[Weakening] \label{lem:type-weakening} 
  Suppose $\tjr{}{}{e}{\TS}{b}$. Then 
  $\tjr{}{\uplus \Gamma'}{e}{\TS}{b}$.
\end{lemma}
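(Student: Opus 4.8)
The plan is to argue by induction on the derivation of $\tjr{}{}{e}{\TS}{b}$, not on the structure of $e$, since rules \iref{tsub}, \iref{tins} and \iref{tgen} preserve the subject expression while altering only its type or behaviour. The induction hypothesis is taken fully general over the constraint environment $C$, the type environment $\Gamma$, and the weakening environment $\Gamma'$ (so that it may be applied to subderivations under enlarged constraints, as in \iref{tgen}, and under extended environments, as in the binder rules). Throughout I use that $\uplus$ is disjoint union, i.e.\ $\mathsf{dom}(\Gamma)\cap\mathsf{dom}(\Gamma')=\emptyset$, and I adopt the usual convention that bound program variables are chosen apart from $\mathsf{dom}(\Gamma')$.

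The base and compositional cases are routine. For \iref{tvar} the result type is $\Gamma(x)$, and since the union is disjoint with $x\in\mathsf{dom}(\Gamma)$ we have $(\Gamma\uplus\Gamma')(x)=\Gamma(x)$, so the conclusion holds under $\Gamma\uplus\Gamma'$. Rules \iref{tconst} and \iref{tendp} do not inspect $\Gamma$ and have side conditions involving only $C$, so they carry over verbatim. For \iref{tpair}, \iref{tapp}, \iref{tif}, \iref{tmatch} and \iref{tsub} I apply the induction hypothesis to each premise with the same $\Gamma'$ and re-apply the rule; the side conditions of \iref{tsub} and \iref{tins} ($\coType{T}{T'}$, $\sub{b}{\beta}$, solvability of the schema) constrain only $C$, $\sigma$ and types, not $\Gamma$, so they are undisturbed. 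For the binder-introducing rules \iref{tfun} and \iref{tlet} I first rename the bound variable apart from $\mathsf{dom}(\Gamma')$, after which the premise environment $\Gamma,x{:}T$ satisfies $(\Gamma,x{:}T)\uplus\Gamma'=(\Gamma\uplus\Gamma'),x{:}T$ and the induction hypothesis applies directly.

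The steps requiring genuine care are \iref{tspawn}, \iref{trec} and \iref{tgen}. The first two type their premises under $\confined_{C}(\Gamma)$ rather than $\Gamma$ (\cref{def:confined-types}). Since $\confined_{C}$ keeps exactly those bindings whose schema is confined, it acts binding-wise and hence $\confined_{C}(\Gamma\uplus\Gamma')=\confined_{C}(\Gamma)\uplus\confined_{C}(\Gamma')$; I therefore apply the induction hypothesis to the premise with the \emph{smaller} weakening environment $\confined_{C}(\Gamma')$, obtaining a derivation under $\confined_{C}(\Gamma)\uplus\confined_{C}(\Gamma')=\confined_{C}(\Gamma\uplus\Gamma')$, and re-apply the rule (its conditions $C\vdash\confined(T,T')$ and $\sub{\orec{b}{\beta}}{\beta}$ do not mention $\Gamma$). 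I expect the main obstacle to be \iref{tgen}: after applying the induction hypothesis to the premise $\tjr{\cup C_0}{}{e}{T}{b}$, the side condition $\{\vec\gamma\}\cap\fv(\Gamma,C,b)=\emptyset$ must be re-established for $\Gamma\uplus\Gamma'$, which additionally demands $\{\vec\gamma\}\cap\fv(\Gamma')=\emptyset$. I resolve this in the standard way: the generalised variables $\vec\gamma$ are bound in $\forall(\vec\gamma{:}C_0).T$, so I rename them to fresh variables avoiding $\fv(\Gamma')$; reading type schemas up to this $\alpha$-renaming leaves the stated conclusion $\TS$ unchanged while restoring the freshness condition, and the solvability condition depends only on $C$ and is unaffected. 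This exhausts the rules and completes the induction.
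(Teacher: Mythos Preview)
Your proof is correct and follows the same approach as the paper, which simply states ``By induction on $\tjr{}{}{e}{\TS}{b}$.'' You have supplied the details the paper omits, in particular the non-trivial handling of \iref{tspawn}/\iref{trec} via the distributivity $\confined_C(\Gamma\uplus\Gamma')=\confined_C(\Gamma)\uplus\confined_C(\Gamma')$ and the $\alpha$-renaming of $\vec\gamma$ in \iref{tgen} to restore the freshness side condition; both are the right moves.
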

\begin{proof}
  By induction on $\tjr{}{}{e}{\TS}{b}$.
\end{proof}
\begin{lemma}[Type Decomposition] \label{lem:type-decomp} 
  Suppose $\tjr{}{}{E[e]}{\TS}{b}$. Then there exist $\TS'$, $b'$, $\vec b_\nxt$ and fresh $x$ such that
  $b = \bseq{\vec\tau}{\bseq{b'}{\vec b_\nxt}}$
  and
  $\tjr{}{}{e}{\TS'}{b'}$
  and
  $\tjr{}{, x{:}\TS'}{E[x]}{\TS}{\bseq{\vec\tau}{\bseq{\tau}{b_\nxt}}}$.
\end{lemma}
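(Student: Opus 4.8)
The plan is to argue by induction on the typing derivation of $\tjr{}{}{E[e]}{\TS}{b}$, performing a case analysis on its last rule that is guided by the shape of $E$; the two workhorses are the Weakening Lemma (\cref{lem:type-weakening}), used to move the side-expressions into the extended environment $\Gamma,x:\TS'$, and the associativity and unit laws of sequencing ($\bseq{\epsilon}{b}=\bseq{b}{\epsilon}=b$), used to reassemble behaviours into the required shape. If $E=\hole$ then $E[e]=e$, and whatever the last rule is we may take $\TS'=\TS$, $b'=b$, empty $\vec\tau$ and $\vec b_{\nxt}$, and a fresh $x$: the first conjunct is exactly the hypothesis, and $\tjr{}{,x:\TS'}{x}{\TS}{\tau}$ holds by \iref{tvar}, matching $\bseq{\vec\tau}{\bseq{\tau}{\vec b_{\nxt}}}=\tau$. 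The leaf rules \iref{tvar}, \iref{tconst}, \iref{tfun}, \iref{trec} and \iref{tendp} can only type an $E[e]$ whose head is not a context former, which forces $E=\hole$, so they are subsumed by this base case.

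For the inductive step I would treat each compound context former by inverting the matching structural rule and recursing into the sub-context. Take $E=\eapp{E_1}{e_2}$ as representative: the last rule is \iref{tapp}, with premises $\tjr{}{}{E_1[e]}{\tfun{T'}{T}{\beta}}{b_1}$ and $\tjr{}{}{e_2}{T'}{b_2}$ and conclusion behaviour $\bseq{\bseq{b_1}{b_2}}{\beta}$. Applying the induction hypothesis to the first premise gives $\TS'$, $b'$, $\vec b_{\nxt}'$ and a fresh $x$ with $b_1=\bseq{\vec\tau}{\bseq{b'}{\vec b_{\nxt}'}}$, together with $\tjr{}{}{e}{\TS'}{b'}$ and a context judgment for $E_1[x]$ whose behaviour is $b_1''=\bseq{\vec\tau}{\bseq{\tau}{\vec b_{\nxt}'}}$. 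Weakening the typing of $e_2$ to $\Gamma,x:\TS'$ and re-applying \iref{tapp} yields $\tjr{}{,x:\TS'}{\eapp{E_1[x]}{e_2}}{T}{\bseq{\bseq{b_1''}{b_2}}{\beta}}$. Taking $\vec b_{\nxt}=\bseq{\bseq{\vec b_{\nxt}'}{b_2}}{\beta}$ and reassociating, the original behaviour is $\bseq{\vec\tau}{\bseq{b'}{\vec b_{\nxt}}}$ and the context behaviour is $\bseq{\vec\tau}{\bseq{\tau}{\vec b_{\nxt}}}$, as required. The cases $\eapp{v}{E_2}$, pairs, \iref{tif}, \iref{tmatch}, \iref{tspawn} and \iref{tlet} are entirely analogous; for \iref{tlet} I would keep $x$ distinct from the let-bound variable, and I note that its first premise may assign a type schema, which the statement already accommodates since $\TS'$ ranges over schemas.

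The main obstacle is the non-syntax-directed rules \iref{tsub}, \iref{tins} and \iref{tgen}, which may be the last step irrespective of the shape of $E[e]$. Rules \iref{tins} and \iref{tgen} leave the behaviour unchanged, so the decomposition produced by the induction hypothesis on their premise transfers verbatim, and I reattach the rule to the reconstructed context judgment for $E[x]$, re-discharging its side conditions: for \iref{tgen} the generalized variables are disjoint from $\fv(\Gamma,C,b)$ and hence from the smaller behaviour $\bseq{\vec\tau}{\bseq{\tau}{\vec b_{\nxt}}}$ and from $\TS'$, while the premise is read under the enlarged constraint environment $C\cup C_0$. The truly delicate rule is \iref{tsub}, which rewrites the behaviour to a bare variable $\beta$ and so admits no nontrivial decomposition $\bseq{\vec\tau}{\bseq{b'}{\vec b_{\nxt}}}$. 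I would dispatch it by observing---in line with rule \iref{trbeta} of \cref{fig:typed-reductions}, which unfolds a behaviour variable before any redex becomes available---that the decomposition is only needed when $b$ is structured; concretely I would first normalize derivations so that \iref{tsub} occurs only at the leaves, after which for non-trivial $E$ the last rule is always structural or behaviour-preserving and the reasoning above applies. Securing this normalization (equivalently, restricting the lemma to non-variable behaviours) is the single step I expect to require real care; everything else is the routine bookkeeping of weakening and reassociation sketched above.
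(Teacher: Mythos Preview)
Your argument is essentially the paper's, whose proof is the single line ``By structural induction on $E$''; the only difference is that you induct on the typing derivation rather than on $E$, which amounts to the same case analysis once the requisite inversions are performed. Your explicit identification of the \iref{tsub} wrinkle---and the proposed resolution via derivation normalisation, mirroring how rule \iref{trbeta} unfolds behaviour variables before any structural typed-reduction rule fires---goes beyond what the paper's one-liner addresses and is the right way to make the argument watertight.
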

\begin{proof}
  By structural induction on $E$.
\end{proof}
\begin{lemma}[Type Composition] \label{lem:type-comp} 
  Suppose
  $\tjr{}{, x{:}\TS'}{E[x]}{\TS}{b}$
  and
  $\tjr{}{}{e}{\TS'}{b'}$.
  Then there exists $\vec b_\nxt$ such that
  $b = \bseq{\vec\tau}{\bseq{\tau}{\vec b_\nxt}}$
  and
  $\tjr{}{}{E[e]}{\TS}{\bseq{\vec\tau}{\bseq{b'}{\vec b_\nxt}}}$.
\end{lemma}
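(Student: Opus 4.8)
The plan is to prove the statement by structural induction on the evaluation context $E$, mirroring the proof of its dual, \cref{lem:type-decomp}: whereas decomposition reads a typing of $E[e]$ and peels off the context to expose the hole, composition runs this backwards, reassembling a typing of $E[e]$ from the given typing of $E[x]$ and of $e$. Throughout I exploit two facts noted in the text: sequential composition $\bseq{\cdot}{\cdot}$ is associative with unit $\epsilon$; and the hole always sits in evaluation position, so the fresh variable $x$ occupying it is typed by \iref{tvar} with behaviour $\tau$ — this is exactly the distinguished $\tau$ in $\bseq{\vec\tau}{\bseq{\tau}{\vec b_\nxt}}$. Because $x$ is fresh it occurs only at the hole, so after substituting $e$ all occurrences vanish and the sibling subterms of $E$ never mention $x$; hence they remain typable in $\Gamma$ (the strengthening counterpart of the weakening, \cref{lem:type-weakening}, used for decomposition).

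In the base case $E = \hole$ we have $E[x] = x$, and \iref{tvar} forces $\TS = \TS'$ and $b = \tau$. Taking $\vec\tau = \epsilon$ and $\vec b_\nxt = \epsilon$ gives $b = \bseq{\vec\tau}{\bseq{\tau}{\vec b_\nxt}}$, and the required judgement $\tjr{}{}{E[e]}{\TS}{\bseq{\vec\tau}{\bseq{b'}{\vec b_\nxt}}}$ collapses to $\tjr{}{}{e}{\TS'}{b'}$, the second hypothesis. For the inductive step I treat each context constructor by inverting the unique syntax-directed rule whose conclusion matches its head. For instance, when $E = \eapp{E_1}{e_2}$, inverting \iref{tapp} yields $\tjr{}{,x{:}\TS'}{E_1[x]}{\tfun{T_2}{\TS}{\beta}}{b_1}$ and $\tjr{}{,x{:}\TS'}{e_2}{T_2}{b_2}$ with $b = \bseq{\bseq{b_1}{b_2}}{\beta}$. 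The induction hypothesis on $E_1$ gives $b_1 = \bseq{\vec\tau_1}{\bseq{\tau}{\vec b_{\nxt 1}}}$ and $\tjr{}{}{E_1[e]}{\tfun{T_2}{\TS}{\beta}}{\bseq{\vec\tau_1}{\bseq{b'}{\vec b_{\nxt 1}}}}$; re-associating, $b = \bseq{\vec\tau_1}{\bseq{\tau}{\bseq{\vec b_{\nxt 1}}{\bseq{b_2}{\beta}}}}$, so I set $\vec\tau = \vec\tau_1$ and $\vec b_\nxt = \bseq{\vec b_{\nxt 1}}{\bseq{b_2}{\beta}}$, and reapply \iref{tapp} to the induction-hypothesis judgement for $E_1[e]$ together with the strengthened typing of $e_2$ to obtain $\tjr{}{}{E[e]}{\TS}{\bseq{\vec\tau}{\bseq{b'}{\vec b_\nxt}}}$. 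The cases $\epair{E}{e}$, $\epair{v}{E}$, $\eapp{v}{E}$, $\elet{x}{E}{e}$, $\eif{E}{e}{e}$, $\espawn{E}$ and $\ematch{E}{L_i:e_i}{i\in I}$ are identical in shape, using \iref{tpair}, \iref{tlet}, \iref{tif}, \iref{tspawn} and \iref{tmatch} respectively; in the cases where a value $v$ stands to the left of the hole, its behaviour is itself a string of $\tau$'s, which simply extends the leading $\vec\tau$.

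The main obstacle is the non-syntax-directed rules \iref{tsub}, \iref{tins} and \iref{tgen}, which may be interleaved with the syntax-directed rules and which the structural induction on $E$ does not see. I would handle these by an inner inversion that threads them through: \iref{tins} and \iref{tgen} only rewrite the type schema and leave the behaviour untouched, so they are reapplied verbatim to the reassembled judgement (the side conditions on $\fv(\Gamma,C,b)$ survive because $x$ is fresh); and \iref{tsub} only refines a type and sub-effects a behaviour via $\sub{b}{\beta}$, which is compatible with the sequential shape since at the leaves and at the hole the behaviour being subsumed is $\tau$. Concretely, I expect to push each \iref{tsub}/\iref{tins}/\iref{tgen} application below the syntax-directed rule it sits above (or, equivalently, carry a behaviour-subsumption slack in a strengthened induction hypothesis) so that the reassembly of $E[e]$ can reinstate exactly the same structural steps. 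This bookkeeping — ensuring the $\bseq{\vec\tau}{\bseq{\tau}{\vec b_\nxt}}$ shape is preserved in the presence of sub-effecting — is the only genuinely delicate part; the rest is routine re-association of $\bseq{\cdot}{\cdot}$ using its unit $\epsilon$.
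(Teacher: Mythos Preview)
Your approach—structural induction on $E$—is the paper's approach. The only divergence is that the paper's one-line proof invokes \cref{lem:type-weakening} rather than the strengthening you reach for on sibling subterms; these are dual factorizations of the same argument (weaken the typing of $e$ once to $\Gamma,x{:}\TS'$ so the whole induction runs uniformly in the extended context, then drop the unused $x$ at the end, versus strengthen each sibling back to $\Gamma$ as you go), and either is routine.
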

\begin{proof}
  By structural induction on $E$ using Lem.\ \ref{lem:type-weakening}.
\end{proof}
\begin{lemma} \label{lem:wst-subtype} 
  Suppose
  $\wst{\c S, \tconf{\Delta \cdot (p^l:\eta) \cdot \Delta'}{b}{e}}$
  and $C \vdash \subt{\eta}{\eta'}$.
  Then
  $\wst{\c S, \tconf{\Delta \cdot (p^l:\eta') \cdot \Delta'}{b}{e}}$.
\end{lemma}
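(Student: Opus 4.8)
The plan is to induct on the derivation of $\wst{\c S, \tconf{\Delta \cdot (p^l:\eta) \cdot \Delta'}{b}{e}}$ according to \cref{def:well-stackedness}, and to reduce the entire statement to a single auxiliary fact about duality: replacing one endpoint's session type by a supertype preserves duality with its partner. Concretely, I would first isolate the following claim and prove the lemma modulo it. \emph{Claim:} if $C \vdash \eta \dual{} \eta''$ and $C \vdash \subt{\eta}{\eta'}$, then $C \vdash \eta' \dual{} \eta''$. The intuition is that $\dual{}$ is monotone in each argument with respect to $\subt{}{}$, which is in turn what makes the active/inactive bookkeeping of \cref{def:duality} well behaved.

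Granting the Claim, the lemma follows by induction on the well-stackedness derivation. Since \cref{def:well-stackedness} builds stacks bottom-up by prepending dual pairs, the frame $(p^l:\eta)$ enters the modified process at exactly one step, and every later step only prepends the frames of $\Delta$ on top of it. In the inductive step I analyse the last rule application. If it does not touch the modified process, I apply the induction hypothesis to the sub-derivation (which still carries the full modified stack) and re-apply the rule unchanged, since its freshness side-condition $p,\co p \freshfrom \Delta,\Delta',\c S$ and duality premise involve only other frames. If it prepends the top frame of the modified stack, there are two sub-cases: when $\Delta \neq \epsilon$ the prepended frame lies in $\Delta$ and is paired with some $\xi''$ under an unchanged premise $C \vdash \mathrm{head}(\Delta) \dual{} \xi''$, so I use the induction hypothesis on the shorter stack $\mathrm{tail}(\Delta)\cdot(p^l:\eta)\cdot\Delta'$ and re-apply the rule; when $\Delta = \epsilon$ the prepended frame is $(p^l:\eta)$ itself, paired with $(\co p^{l'}:\eta'')$ under premise $C \vdash \eta \dual{} \eta''$, and I replace that premise by $C \vdash \eta' \dual{} \eta''$, which is exactly the Claim. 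The empty-stack base case is vacuous.

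I would prove the Claim by induction on the derivation of $C \vdash \eta \dual{} \eta''$, case-analysing the last rule and inverting $C \vdash \subt{\eta}{\eta'}$. The $\tend$ case is immediate. For output, $\eta = \outp T.\sigma$ and $\eta'' = \inp T''.\sigma''$ with $C \vdash \subt{T}{T''}$ and $C \vdash \sigma \dual{} \sigma''$; inverting output subtyping gives $\eta' = \outp T'.\sigma'$ with $C \vdash \subt{T'}{T}$ and $C \vdash \subt{\sigma}{\sigma'}$, so transitivity of $\subt{}{}$ yields $C \vdash \subt{T'}{T''}$ and the induction hypothesis on continuations yields $C \vdash \sigma' \dual{} \sigma''$. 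The input and delegation/resumption cases are entirely analogous, combining covariance (inputs, resumed types) or contravariance (outputs, delegated types) with transitivity of $\subt{}{}$ and the continuation induction hypothesis.

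The main obstacle is the choice case, and it is here that the active/inactive distinction of external choice earns its keep. When $\eta$ is the external choice (and $\eta''$ the internal one), inverting the paper's external-choice subtyping ($I_1 \subseteq J_1$ and $J_1 \cup J_2 \subseteq I_1 \cup I_2$) shows the active set only grows, so the partner's selected labels stay inside the active set and the branchwise induction hypothesis applies over exactly those labels. The delicate sub-case is the dual one, where $\eta$ is the internal choice: here I must verify that $\subt{}{}$ on internal choices never enlarges the label set but only refines continuations, since otherwise the supertype $\eta'$ could select a label its fixed partner $\eta''$ does not actively accept and well-stackedness would genuinely fail. I would make this reliance explicit and discharge the case by noting that, with the label set preserved, duality of $\eta'$ and $\eta''$ reduces branchwise to the induction hypothesis. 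This is precisely why the system pushes all choice-width flexibility onto the external side via active/inactive labels, keeping internal-choice subtyping depth-only and the Claim true.
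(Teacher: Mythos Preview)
The paper states this lemma without proof, so there is no reference argument to compare against; your inductive reduction to a duality-monotonicity claim is the natural route and matches how the lemma is used in the delegation case of preservation.

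There is, however, a genuine gap in your handling of the internal-choice case of the Claim. You write that you ``must verify that $\subt{}{}$ on internal choices never enlarges the label set but only refines continuations'', and then proceed as if this were so. Under the Gay--Hole subtyping that the paper explicitly adopts (Def.~\ref{def:duality}), internal-choice subtyping is \emph{not} depth-only: one has $\oplus\{!L_i.\eta_i\}_{i\in I} \subt{} \oplus\{!L_j.\eta_j'\}_{j\in J}$ whenever $I\subseteq J$, so the supertype may carry strictly more labels. A concrete counterexample to your Claim is then immediate: take $\eta = \oplus\{!L_1.\tend\}$, $\eta' = \oplus\{!L_1.\tend,\,!L_2.\tend\}$, and $\eta'' = \Sigma\{?L_1.\tend\}_{\emptyset}$. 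Here $C\vdash \subt{\eta}{\eta'}$ and $C\vdash \eta \dual{} \eta''$ both hold, but $C\vdash \eta' \dual{} \eta''$ fails because $L_2$ is not among the active labels of $\eta''$. Since well-stackedness literally requires duality of paired top frames, your Claim is not just a convenient lemma but is equivalent to the statement you are proving; if it fails, so does the lemma as written.

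To repair the argument you must either (i) check whether the paper's intended internal-choice subtyping is in fact width-restricted (the inference code for \textsf{sub} on $\psi^\oplus$ is suggestive but not conclusive), or (ii) weaken the lemma to the specific instance needed in the preservation proof, where the subtyping chain $\subt{\eta_d'}{\eta_d}$ and $\subt{\eta_d}{\eta_r}$ arises from a \textsf{Del}/\textsf{Res} pair and the partner $\co{p'}$'s type is constrained accordingly. Either way, the internal-choice case cannot be dismissed as ``depth-only'' without further justification.
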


\subsection{Preservation} 

Preservation relies on two lemmas, the first is that typed reductions  of \cref{fig:typed-reductions} preserve well-typedness,
well-stackedness and well-annotatedness; the other is that untyped reductions can be simulated by the typed reductions.

\begin{lemma}
  Suppose $\tjParSys{}{\c S}$ and $\wst{\c S}$ and $\c S \tred \c S'$. Then $\tjParSys{}{\c S'}$ and $\wst{\c S'}$.
\end{lemma}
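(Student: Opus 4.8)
The plan is to proceed by case analysis on the rule deriving $\c S \tred \c S'$ (\cref{fig:typed-reductions}). Recall that $\tjParSys{}{\c S}$ asks, for each process $\tconf{\Delta}{b}{e}$ of $\c S$, that $e$ be typed with behaviour $b$ under the fixed, well-formed and valid $C$, and that $\stackop{}{b}{\Delta}{}$. Since no rule alters $C$, its well-formedness and validity (\cref{def:wf-constraints,def:valid-type-solution}) carry over at once, and I only re-establish, for every process touched by the step, (i) that the residual expression still carries the residual behaviour, (ii) that this behaviour still strongly normalizes with the residual stack, and (iii) that the whole system is well-stacked. Two facts are used throughout: the decomposition/composition \cref{lem:type-decomp,lem:type-comp}, which peel a redex out of its evaluation context and plug the contractum back, and the observation that $\stackop{}{b}{\Delta}{}$ together with one abstract-interpretation step $\dstep{\Delta}{b}{\Delta'}{b'}$ yields $\stackop{}{b'}{\Delta'}{}$, since every maximal reduction from $\mc{\Delta'}{b'}$ is a tail of one from $\mc{\Delta}{b}$ and hence ends in $\mc{\epsilon}{\tau}$.

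For the local rules \RefTirName{TRIft}, \RefTirName{TRIff}, \RefTirName{TRLet}, \RefTirName{TRApp} and \RefTirName{TRBeta} the stack is unchanged, so well-stackedness is verbatim; typing is the usual subject-reduction argument, using the value-substitution lemma for $\beta$-redexes and \cref{lem:type-weakening}, and the prefix of $\tau$'s each rule consumes together with its reducing behaviour is mirrored by a run of \RefTirName{Tau}, \RefTirName{Seq}, \RefTirName{Plus} and \RefTirName{Beta} steps, so (ii) follows from the tail observation. \RefTirName{TRSpn} additionally creates a process $\tconf{\epsilon}{b'}{\eapp{v}{()}}$; its behaviour satisfies $\stackop{}{b'}{\epsilon}{}$, which is exactly the side condition of the \RefTirName{Spn} step used when the spawning process was typed, and a process with empty stack is a spectator in any well-stackedness derivation, so $\wst{\c S'}$ is immediate. \RefTirName{TREnd} discards a $(l{:}\tend)$ frame; by the \RefTirName{End} rule the behaviour can never reference it, and in the well-stackedness derivation such a top frame is paired with a dual $\tend$ frame removed alongside it, so (ii) and (iii) hold.

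The genuinely communicating rules \RefTirName{TRInit}, \RefTirName{TRCom} and \RefTirName{TRSel} act on the tops of two stacks simultaneously, and here well-stackedness is the crux. Typing of the residual expressions is routine, and strong normalization follows as above from the matching \RefTirName{Push}, \RefTirName{Out}/\RefTirName{In} and \RefTirName{ICh}/\RefTirName{ECh} steps. For well-stackedness, \RefTirName{TRInit} pushes fresh dual endpoints $(p^{l_1}{:}\eta_1)$ and $(\co p^{l_2}{:}\eta_2)$ whose session types are those of $c$ and $\co c$, hence dual by validity of $C$; this is precisely the premise of the inductive well-stackedness rule, so the derivation grows by one step. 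For \RefTirName{TRCom} and \RefTirName{TRSel} the paired top frames evolve from $\outp T_1.\eta_1 \mid \inp T_2.\eta_2$ (resp. an internal/external choice pair) to $\eta_1 \mid \eta_2$ (resp. the selected branches $\eta_k \mid \eta_k'$); inverting the relevant clause of \cref{def:duality} shows the residual pair is again dual, so I rebuild the derivation by substituting this dual pair for the old one in the same position, using \cref{lem:wst-subtype} to absorb the value-subtyping side condition.

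The main obstacle is \RefTirName{TRDel}, where delegation relocates the second frame $(p'{:}\eta_d')$ of the sender's stack into the second position of the resumer's stack at session type $\eta_r$, while the communicating tops evolve $\outp\eta_d.\eta_1 \mid \inp\eta_r.\eta_2 \mapsto \eta_1 \mid \eta_2$. From well-stackedness of $\c S$ and the delegation clause of \cref{def:duality} I obtain $C\vdash\eta_1\dual{}\eta_2$ and $C\vdash\subt{\eta_d}{\eta_r}$, which with the \RefTirName{Res} side condition $C\vdash\subt{\eta_d'}{\eta_d}$ gives $C\vdash\subt{\eta_d'}{\eta_r}$. The plan is to build a fresh well-stackedness derivation for $\c S'$ in two steps. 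First I promote the delegated frame in place: by \cref{lem:wst-subtype} and $\subt{\eta_d'}{\eta_r}$ the type of $p'$ may be raised from $\eta_d'$ to $\eta_r$ while keeping its duality with the untouched partner $\co{p'}$ in a third process. Second I account for the relocation of $p'$ and for the top pair $(p,\co p)$ evolving to $\eta_1\dual{}\eta_2$. The delicate point is that moving $p'$ must not create a cyclic dependency: since $p'$ lay below $p$ in the sender and now lies below $\co p$ in the resumer, the ordering constraint that the $p'$-pair be added before the $p$-pair is preserved, so a legal derivation order still exists; the one-frame requirement of \RefTirName{Res}, carried into \RefTirName{TRDel}, ensures $p'$ and $\co{p'}$ never inhabit the same stack, keeping the dependency acyclic. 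A final, subtler point arises in \RefTirName{TRRec}, whose contractum unfolds $\orec{b}{\beta}$ to $b[\orec{b}{\beta}/\beta]$: to preserve strong normalization I invoke the self-containment condition $\stackop{'}{b}{\epsilon}{}$ of the \RefTirName{Rec} rule under $C' = (C\removeRHS\beta)\cup(\su{\tau}{\beta})$, in which $\beta$ is bound to $\tau$; the substituted occurrences of $\orec{b}{\beta}$ therefore have no net stack effect, behaving exactly like that $\tau$, so the unfolded behaviour normalizes with the same residual stack.
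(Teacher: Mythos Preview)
Your proposal is correct and follows essentially the same approach as the paper: induction on the typed reduction derivation, using \cref{lem:type-decomp,lem:type-comp} to handle evaluation contexts, and concentrating on \iref{trdel} as the hard case, where the chain $\subt{\eta_d'}{\eta_d}$ (from the abstract-interpretation step) and $\subt{\eta_d}{\eta_r}$ (from duality) combines with \cref{lem:wst-subtype} to rebuild well-stackedness after relocating the delegated frame. One small slip: the side condition $C\vdash\subt{\eta_d'}{\eta_d}$ is attached to the \RefTirName{Del} rule of \cref{fig:abstract-interpr}, not to \RefTirName{Res}; the argument is otherwise unaffected. Your treatment is in fact more explicit than the paper's in two places---the acyclicity argument for the relocated frame in \iref{trdel}, and the unfolding argument for \iref{trrec}---both of which the paper leaves implicit.
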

\begin{proof}
  By induction on $\c S \tred \c S'$ using the type composition and decomposition (Lem.(s) \ref{lem:type-comp} and
  \ref{lem:type-decomp}).
  
  The most interesting case is that of delegation (\iref{trdel}). In this case we have
  \[\c S = 
    \nbox{
      \tconf{
        (p^{l_1}:{!\eta_{d}}.\eta_1)\cdot(p'^{l_1'}:\eta_{d}')\cdot\Delta_1
      }{
        \bseq{\vec\tau^1}{\bseq{\tau}{\bseq{\tau}{\bseq{\tau}{\bseq{\popo{\rho_1}{!\rho_d}}{\vec b_{\nxt 1}}}}}}
      }{
        E_1[\edeleg{(\eendp{p}{l_1}}, {\eendp{p'}{l_1'})}]
      },~
      \\
      \tconf{
        (\co p^{l_2}:{?\eta_{r}}.\eta_2)
      }{
        \bseq{\vec\tau^2}{\bseq{\tau}{\bseq{\tau}{\bseq{\popo{\rho_2}{?\rho_r}}{\vec b_{\nxt 2}}}}}
      }{
        E_2[\eresume{l_r}{\eendp{\co p}{l_2}} ]
      },~ \c S_0
      \\\tred
      \tconf{(p^{l_1}:\eta_1)\cdot\Delta_1} {\bseq{\vec\tau^1}{\bseq{\tau}{\vec b_{\nxt 1}}}} {E_1[()]},~
      \tconf{(\co p^{l_2}:\eta_2)\cdot(p'^{l_r}:\eta_r)} {\bseq{\vec\tau^2}{\bseq{\tau}{\vec b_{\nxt 2}}}}
      {E_2[\eendp{p'}{l_r}]},~ \c S_0
     = \c S'}
  \]
  and by $\tjParSys{}{\c S}$:
  \begin{gather*}
    \tjrBase{C}{\emptyset}{
      E_1[\edeleg{(\eendp{p}{l_1}}, {\eendp{p'}{l_1'})}]
    }{T_1}{
      \bseq{\vec\tau^1}{\bseq{\tau}{\bseq{\tau}{\bseq{\tau}{\bseq{\popo{\rho_1}{!\rho_d}}{\vec b_{\nxt 1}}}}}}
    }{}{}
    \\
    \tjrBase{C}{\emptyset}{
      E_2[\eresume{l_r}{\eendp{\co p}{l_2}} ]
    }{T_2}{
      \bseq{\vec\tau^2}{\bseq{\tau}{\bseq{\tau}{\bseq{\popo{\rho_2}{?\rho_r}}{\vec b_{\nxt 2}}}}}
    }{}{}
    \\
    \stackop{}{
      \bseq{\vec\tau^1}{\bseq{\tau}{\bseq{\tau}{\bseq{\tau}{\bseq{\popo{\rho_1}{!\rho_d}}{\vec b_{\nxt 1}}}}}}
    }{
      (p^{l_1}:{!\eta_{d}}.\eta_1)\cdot(p'^{l_1'}:\eta_{d}')\cdot\Delta_1
    }{}
    \\
    \stackop{}{
      \bseq{\vec\tau^2}{\bseq{\tau}{\bseq{\tau}{\bseq{\popo{\rho_2}{?\rho_r}}{\vec b_{\nxt 2}}}}}
    }{
      (\co p^{l_2}:{?\eta_{r}}.\eta_2)
    }{}
  \end{gather*}
  By type decomposition Lem.\ \ref{lem:type-decomp} and inversion on the rules of \cref{fig:typing-rules,fig:abstract-interpr}:
  \begin{align*}
    &
    \tjrBase{C}{\emptyset}{
      \edeleg{(\eendp{p}{l_1}}, {\eendp{p'}{l_1'})}
    }{\tunit}{
      \bseq{\tau}{\bseq{\tau}{\bseq{\tau}{\popo{\rho_1}{!\rho_d}}}}
    }{}{}
    &&
    \tjrBase{C}{x{:}\tunit}{
      E_1[x]
    }{T_1}{
      \bseq{\vec\tau^1}{\bseq{\tau}{\vec b_{\nxt 1}}}
    }{}{}
     \\
    &
    \tjrBase{C}{\emptyset}{
      \eresume{}{\eendp{\co p}{l_2}} 
    }{\tses{\rho_r}}{
      \bseq{\tau}{\bseq{\tau}{\popo{\rho_2}{?\rho_r}}}
    }{}{}
    &&
    \tjrBase{C}{x{:}\tses{\rho_r}}{
      E_2[x]
    }{T_2}{
      \bseq{\vec\tau^2}{\bseq{\tau}{\vec b_{\nxt 2}}}
    }{}{}
    \\
    &
    \stackop{}{
      \bseq{\tau}{\vec b_{\nxt 1}}
    }{
      (p^{l_1}:\eta_1)\cdot\Delta_1
    }{}
    &&
    C \vdash \seql{\rho_1}{l_1},~
    \seql{\rho_d}{l_1'},~
    \subt{\eta_d'}{\eta_d}
    \\
    &
    \stackop{}{
      \bseq{\tau}{\vec b_{\nxt 2}}
    }{
      (\co p^{l_2}:\eta_2) \cdot
      (p'^{l_r}:\eta_{r})
    }{}
    &&
    C \vdash \seql{\rho_2}{l_2},~
    \seql{\rho_r}{l_r}
  \end{align*}
  Note that the transition rules considered in $\Downarrow_C$ do not take into account the concrete endpoints $p$, $p'$ and $\co
  p$---they are existentially quantified in these rules. By Lem.\ \ref{lem:type-comp} and the rules of \cref{fig:abstract-interpr}:
  \begin{align*}
    &
    \tjrBase{C}{\emptyset}{
      E_1[\cunit]
    }{T_1}{
      \bseq{\vec\tau^1}{\bseq{\tau}{\vec b_{\nxt 1}}}
    }{}{}
    &&
    \stackop{}{
      \bseq{\vec\tau^1}{\bseq{\tau}{\vec b_{\nxt 1}}}
    }{
      (p^{l_1}:\eta_1)\cdot\Delta_1
    }{}
     \\
    &
    \tjrBase{C}{\emptyset}{
      E_2[p'^{l_r}]
    }{T_2}{
      \bseq{\vec\tau^2}{\bseq{\tau}{\vec b_{\nxt 2}}}
    }{}{}
    &&
    \stackop{}{
      \bseq{\vec\tau^2}{\bseq{\tau}{\vec b_{\nxt 2}}}
    }{
      (\co p^{l_2}:\eta_2) \cdot
      (p'^{l_r}:\eta_{r})
    }{}
  \end{align*}
  Therefore $\tjParSys{}{\c S'}$.

  From $\wst{\c S}$ we deduce:
  \begin{align*}
    &
    \wst{\c S_0,~ \tconf{
      (p^{l_1}:{!\eta_{d}}.\eta_1)\cdot(p'^{l_1'}:\eta_{d}')\cdot\Delta_1
    }{b_1}{e_1},~ \tconf{
      (\co p^{l_2}:{?\eta_{r}}.\eta_2)
  }{b_2}{e_2}}
  \\
  &
    \wst{\c S_0,~ \tconf{
      (p'^{l_1'}:\eta_{d}')\cdot\Delta_1
    }{b_1}{e_1},~ \tconf{
      \epsilon
  }{b_2}{e_2}}
  \\
  &
  C \vdash {!\eta_{d}}.\eta_1 \dual{} {?\eta_{r}}.\eta_2,~ \eta_1 \dual{} \eta_2
  \qquad
  p,\co p\freshfrom p',\Delta_1, \c S_0
  \qquad
  p'\freshfrom \Delta_1, \c S_0
  \end{align*}
  where $b_1$, $e_1$, $b_2$ and $e_2$ are the appropriate behaviours and expressions shown above.
  Therefore, $ C \vdash \subt{\eta_d'}{\eta_r}$ and from Lem.\ \ref{lem:wst-subtype} we deduce
  \begin{align*}
    &
    \wst{\c S_0,~ \tconf{
      \Delta_1
    }{b_1}{e_1},~ \tconf{
      (p'^{l_1'}:\eta_{d}')\cdot\epsilon
    }{b_2}{e_2}}
  \\\text{and~~}&
     \wst{\c S_0,~
       \tconf{
         (p^{l_1}:\eta_1)\cdot\Delta_1
       }{b_1}{e_1},~ \tconf{
         (\co p^{l_2}:\eta_2) \cdot (p'^{l_r}:\eta_{r})
       }{b_2}{e_2}}
  \end{align*}
  which completes the proof for this case. The rest of the cases are similarly proven.
\end{proof}

\begin{lemma}
  Suppose $\tjParSys{}{\c S}$ and $S \arrow \vect{e'}$. There exists $\c S'$ such that $\c S \tred \c S'$ and $S'=\vect{e}$.
\end{lemma}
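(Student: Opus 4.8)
The plan is to prove this simulation lemma by case analysis on the derivation of the untyped step $S \arrow \vect{e'}$, i.e.\ on which rule of \cref{fig:lang} is applied. For each untyped rule I construct the witness $\c S'$ by applying the corresponding typed rule of \cref{fig:typed-reductions} to the active process(es), leaving $\c S$'s other processes unchanged, and then check that the underlying system of $\c S'$ is exactly $\vect{e'}$. The existential in the statement gives the freedom to choose the stack and behaviour components $\Delta',b'$ of $\c S'$; these are dictated by the matching typed rule. The rule correspondence is: \iref{beta} with an \iref{eift}/\iref{eiff}/\iref{elet}/\iref{rfix} redex matches \iref{trift}/\iref{triff}/\iref{trlet}/\iref{trapp} or \iref{trrec}; \iref{spawn} matches \iref{trspwn}; and \iref{init}, \iref{comm}, \iref{deleg}, \iref{select} match \iref{trinit}, \iref{trcom}, \iref{trdel}, \iref{trsel}.

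For each case the first task is to exhibit the process behaviour in the canonical shape that the typed rule pattern-matches against. Starting from $\tjrBase{C}{\emptyset}{E[e_0]}{T}{b}{}{}$ with $e_0$ the redex, I apply Type Decomposition (\cref{lem:type-decomp}) to split $b = \bseq{\vec\tau}{\bseq{b'}{\vec b_\nxt}}$ where $b'$ is the behaviour of $e_0$, and then invert the typing rules (\iref{tif}, \iref{tlet}, \iref{tapp}, \iref{trec}, \iref{tspawn}, together with the type schemas of the constant functions) to read off the structure of $b'$. For instance, for $\eif{\ctrue}{e_1}{e_2}$ rule \iref{tif} gives $b' = \bseq{b_c}{(b_1 \oplus b_2)}$ and \iref{tconst} forces $b_c = \tau$, yielding exactly the shape of \iref{trift} up to associativity of behaviour sequencing. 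The two administrative typed rules \iref{trend} and \iref{trbeta} bridge any remaining gap: \iref{trend} pops spurious finished frames $(l{:}\tend)$, while \iref{trbeta} unfolds a behaviour variable $\beta$ inserted by \iref{tsub} into one of its constraints $b'' \subseteq \beta$. Because both rules recurse on $\tred$ in their premises, this normalisation is absorbed into the single $\tred$ step being constructed, so no iterated-closure argument is needed.

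The two-process communication cases are where the real work lies, and they are the main obstacle. Take \iref{comm}: the untyped step fires on $E_1[\esend{(\eendp{p}{}},{v})]$ and $E_2[\erecv{}{\eendp{\co p}{}}]$, and \iref{trcom} requires $p$ to head the first process's stack as $(p^{l_1}{:}\outp T_1.\eta_1)$ and $\co p$ to head the second's as $(\co p^{l_2}{:}\inp T_2.\eta_2)$. The label part follows from typing: by \iref{tendp} the value $\eendp{p}{l_1}$ has type $\tses{\rho_1}$ with $C\vdash\seql{\rho_1}{l_1}$, its send behaviour (after unfolding the latent $\beta$ via \iref{trbeta}) begins with $\popo{\rho_1}{\outp T_1'}$, and since $\stackop{}{b}{\Delta}{}$ succeeds only via rule \RefTirName{Out}, the top frame of $\Delta$ must carry label $l_1$; dually for $l_2$. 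The delicate point, which must be extracted from well-typedness of the configuration, is that the \emph{concrete} endpoint stored in that top frame is $p$ itself (respectively $\co p$), not merely some endpoint sharing its label. This endpoint-to-frame correspondence---maintained as an invariant since each frame is introduced together with its endpoint by \iref{trinit}, and kept coherent by the stack linearity of labels (\cref{def:wf-constraints})---is the crux; once it is in hand, \iref{trcom} applies and its conclusion leaves $E_1[()]$ and $E_2[v]$, matching $\vect{e'}$. The cases \iref{trdel} and \iref{trsel} are analogous, using \RefTirName{Del}/\RefTirName{Res} and \RefTirName{ICh}/\RefTirName{ECh} respectively, and \iref{trinit} additionally chooses fresh dual $p,\co p$ to discharge its side condition $p,\co p \freshfrom E_1,E_2,\c S,\Delta_1,\Delta_2$ and reads the pushed session types $\eta_1,\eta_2$ off the $\pusho{l_i}{\eta_i}{}$ behaviours supplied by the schemas of $\erequest{l_1}{c}$ and $\eaccept{l_2}{c}$. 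In every case the resulting $\c S'$ has underlying system $\vect{e'}$, which completes the argument.
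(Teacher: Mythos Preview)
Your proof is correct and follows the paper's own approach: case analysis on the untyped reduction rule, with the required behaviour shape obtained from \cref{lem:type-decomp} plus inversion on the typing rules, and the required stack shape obtained from inversion on $\Downarrow_C$. You supply considerably more detail than the paper's two-line sketch---in particular you correctly flag the endpoint-to-stack-frame correspondence as the crux of the communication cases, a point the paper's proof leaves implicit (it simply says ``the necessary structure of the stacks is deduced by inversion on the rules of the $\Downarrow_C$ relation,'' even though, as noted in the preservation proof, $\Downarrow_C$ ignores concrete endpoints).
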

\begin{proof}
  By the definitions of the reduction relations $(\tred)$ and $(\arrow)$ in \cref{fig:lang,fig:typed-reductions}. In this proof
  the structure of behaviours needed for establishing the $(\tred)$ reductions are deduced using Lem.\ \ref{lem:type-decomp} and
  inversion on the typing rules; the necessary structure of the stacks is deduced by inversion on the rules of the $\Downarrow_C$
  relation (\cref{fig:abstract-interpr}).
\end{proof}

\emph{
The proof of preservation (\cref{thm:preservation}) is a direct consequence of the preceding two lemmas.
}

\subsection{Type Soundness} 

We first extend the notion of internal and communication steps to the reductions of \cref{fig:typed-reductions}.
\begin{lemma}
  Let
  $\c S \tred \c S'$. 
  \begin{itemize}
    \item $\c S \tredCom \c S'$ if the transition is derived with use of the rules \iref{trinit}, \iref{trcom}, \iref{trdel}, \iref{trsel};
    \item $\c S \tredInt \c S'$ otherwise.
  \end{itemize}
\end{lemma}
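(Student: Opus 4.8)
The statement is in effect a definition of $\tredCom$ and $\tredInt$ by cases on the rules used in a typed reduction, so the plan is to show that these two cases really do form a partition of $\tred$ — exhaustive and mutually exclusive — and that this partition is consistent with the untyped classification into $\comStep$ and $\intStep$ introduced earlier. I would obtain all of this by analysing the shape of derivations of $\c S \tred \c S'$ in \cref{fig:typed-reductions}.

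First I would separate the rules into two kinds. The rules \iref{trend} and \iref{trbeta} are \emph{structural}: each has a single $\tred$ premise and merely adjusts the stack of one process (discarding a finished frame $(l:\tend)$) or its behaviour (replacing a variable $\beta$ by some $b'$ with $\sub{b'}{\beta}$), leaving the expression components $\vect e$ unchanged. The remaining ten rules — \iref{trift}, \iref{triff}, \iref{trlet}, \iref{trapp}, \iref{trrec}, \iref{trspwn}, \iref{trinit}, \iref{trcom}, \iref{trdel}, \iref{trsel} — are axioms, their premises being only side conditions rather than $\tred$ judgements. Consequently any derivation of $\c S \tred \c S'$ is a finite chain of structural steps bottoming out in exactly one of these ten axioms, which I will call the \emph{core} rule of the reduction; this core rule is therefore unique. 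The classification then follows at once: a reduction lies in $\tredCom$ precisely when its core rule is one of \iref{trinit}, \iref{trcom}, \iref{trdel}, \iref{trsel}, and in $\tredInt$ otherwise. Since every reduction has a unique core rule drawn from one of the two disjoint groups, the cases are exhaustive and disjoint, so $\tredCom$ and $\tredInt$ are well-defined.

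To confirm agreement with the untyped notions, I would project each typed reduction onto its expression component. By inspection of \cref{fig:typed-reductions}, the four communication axioms act on $\vect e$ exactly as \iref{init}, \iref{comm}, \iref{deleg}, \iref{select} of \cref{fig:lang}, the six internal axioms as the corresponding expression-level reductions, and \iref{trend}, \iref{trbeta} leave $\vect e$ fixed. Hence $\c S \tredCom \c S'$ holds exactly when the underlying $S \arrow S'$ is a communication step, matching $\comStep$, and similarly for $\tredInt$ and $\intStep$. The one delicate point, and the step I expect to be the main obstacle, is this treatment of the structural rules: because \iref{trend} and \iref{trbeta} correspond to no expression-level reduction, the argument must lean on the uniqueness of the core rule to guarantee that the classification is unambiguous even when a genuine core step is preceded by several silent structural steps.
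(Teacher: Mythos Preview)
Your proposal is correct, but it is considerably more elaborate than what the paper actually does: the paper provides \emph{no proof} for this statement at all. Despite being labelled a lemma, it functions purely as a definition extending the split $\comStep/\intStep$ from untyped to typed reductions, and the paper simply states the two clauses and moves on.

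Your additional work --- identifying \iref{trend} and \iref{trbeta} as structural rules with a $\tred$ premise, arguing that every derivation has a unique axiomatic core, and checking that the resulting partition agrees with the untyped classification after projecting onto expressions --- is all sound and arguably makes the definition watertight in a way the paper does not bother to. In particular your observation that \iref{trend} and \iref{trbeta} leave the expression component unchanged, so that the classification is determined entirely by the core rule, is the right way to justify that ``derived with use of'' is unambiguous. None of this appears in the paper, which simply takes well-definedness for granted.
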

A diverging process is one that has an infinite sequence of internal transitions $(\intStep)$.
\begin{definition}[Divergence]
  A process $P$ diverges if $P \tredInt \c S_1 \tredInt \c S_2 \tredInt \c S_3 \tredInt \ldots$.
  A system $\c S$ diverges if for any $P \in \c S$, $P$ diverges.
\end{definition}
These transitions may spawn new processes, and divergence can be the result of an infinite spawn chain.
\begin{example} Let
    $P \defeq \tconf{\epsilon}{b}{\eapp{(\efix{f}{x}{\espawn{f}})}{\cunit}}$
  where
  $b \defeq \bseq{\tau}{\bseq{\tau}{\orec{(
          \bseq{\tau}{\espawn{\beta}}
    )}{\beta}}}$; $P$ is a diverging process.
\end{example}
To prove progress we first divide a system into its diverging and non-diverging parts. The non-diverging part of the system can
only take a finite number of internal transitions.
\begin{lemma} 
  Let $\tjParSys{}{\c S}$ and $\wst{\c S}$. Then $\c S = \c D, \c{ND}$ and $\c D$ diverges and for some $\c{ND}'$,
  $\c{ND}\tredInt[C\,*]{}\c{ND}'\not\tredInt$.
\end{lemma}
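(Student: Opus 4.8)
The plan is to take $\c D$ to be exactly the collection of processes of $\c S$ that diverge and $\c{ND}$ the remaining, non-diverging ones, so that $\c S = \c D, \c{ND}$ holds by construction. That $\c D$ diverges is then immediate from the definition of system divergence, since every process of $\c D$ diverges by its very membership. The entire content of the lemma therefore reduces to the single claim that the system $\c{ND}$, all of whose processes are non-diverging, admits \emph{no} infinite sequence of internal transitions. Once this is established, well-foundedness of $(\tredInt)$ on $\c{ND}$ yields a maximal \emph{finite} internal reduction $\c{ND} \tredInt \cdots \tredInt \c{ND}'$ ending in a configuration with $\c{ND}' \not\tredInt$, which is exactly what is required.

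First I would record two preliminary facts. (i)~Internal reductions never communicate across processes, so each process of $\c{ND}$ reduces independently and its internal behaviour as part of $\c{ND}$ coincides with its behaviour in the singleton system containing it; this is the bridge between ``every process of $\c{ND}$ is non-diverging'' and ``$\c{ND}$ has no infinite internal run''. (ii)~The relation $(\tredInt)$ is finitely branching: the redex of $e$ is fixed by call-by-value evaluation, the auxiliary rules \iref{trend} and \iref{trbeta} only massage the stack and behaviour en route to a genuine step rather than constituting standalone steps, and the finitely many choices available in \iref{trbeta} are bounded by the finite constraint set $C$. The standing hypotheses $\tjParSys{}{\c S}$ and $\wst{\c S}$ enter only to guarantee, through Type Decomposition (Lemma~\ref{lem:type-decomp}) and inversion on \cref{fig:typing-rules}, that whenever $e$ exposes a redex the accompanying behaviour and stack have the shape demanded by the matching rule of \cref{fig:typed-reductions}, so that the internal step is actually enabled.

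The core is a König's-lemma argument that handles spawning. Assuming for contradiction an infinite internal run $\c{ND} \tredInt \c S_1 \tredInt \c S_2 \tredInt \cdots$, I would organise it as a process-genealogy forest $\mathcal F$ whose roots are the processes of $\c{ND}$: each step advancing a process $P$ contributes an edge from $P$ to its residual, and when that step is an $\espawn{\cdot}$ (rule \iref{trspwn}) a second edge from $P$ to the freshly created process. By fact (ii) this forest is finitely branching, and it is infinite since every step adds at least one node; as $\c{ND}$ has finitely many roots, some root-tree is infinite, and König's lemma supplies an infinite path from a root $P_0 \in \c{ND}$. Replaying this path inside the singleton $\{P_0\}$ — advancing the current lineage at a continuation edge, and performing the spawn and then descending into the child at a spawn edge, each contributing exactly one internal step — exhibits an infinite internal run from $\{P_0\}$, so $P_0$ diverges, contradicting $P_0 \in \c{ND}$.

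The main obstacle is precisely this last argument. Divergence is defined at the granularity of a single process, yet an infinite internal run of $\c{ND}$ may be spread across an unbounded, dynamically growing population of spawned processes (as in the infinite spawn chain exhibited just before the lemma). The delicate point is to show that any such distributed infinite run always \emph{concentrates} into the divergence of one process; the genealogy forest together with König's lemma is the device accomplishing this, and the care required is in checking that each forest edge corresponds to a single internal step of $\{P_0\}$, so that the replayed run is genuinely infinite. With the absence of infinite internal runs of $\c{ND}$ in hand, the existence of $\c{ND}'$ with $\c{ND} \tredInt \cdots \tredInt \c{ND}' \not\tredInt$ is obtained by reducing maximally and invoking well-foundedness, completing the proof.
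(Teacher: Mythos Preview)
Your proposal is correct, but it is considerably more elaborate than the paper's own proof, which is literally the single line ``By definition of diverging system (and its negation).'' The paper takes the partition $\c S = \c D,\c{ND}$ into diverging and non-diverging processes and regards the rest as immediate; you, by contrast, actually supply the missing argument that a system built entirely from non-diverging processes has no infinite internal run. In that sense your write-up fills a gap the paper sweeps under the rug.

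That said, the König-lemma machinery is heavier than required. The point you are worried about --- that an infinite internal run of $\c{ND}$ might be ``spread'' across an unbounded population of spawned descendants --- is already handled by the paper's definition of divergence: a process $P$ diverges when the \emph{system} $\{P\}$ admits an infinite internal run, and that system may itself grow via spawning. Consequently a plain pigeonhole argument suffices. Internal steps are lineage-local (an internal step touches exactly one process and possibly creates one child, never interacting with another lineage), so every step in an infinite run of $\c{ND}$ is attributable to exactly one of the finitely many root processes $P_1,\ldots,P_n$; some $P_i$ therefore accounts for infinitely many steps, and replaying just those steps yields an infinite internal run of $\{P_i\}$, contradicting $P_i\in\c{ND}$. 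You do not need to extract a single infinite \emph{path} in the genealogy forest, only to observe that one lineage contributes infinitely often. Your König argument proves a stronger statement (existence of an infinite single-thread trace) than the definition of divergence actually demands.
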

\begin{proof}
  By definition of diverging system (and its negation).
\end{proof} 

The non-diverging part of the system that cannot take any more internal steps consists of processes that are values, or stuck on global
channels or session primitives.

\begin{lemma} 
  Let $\tjParSys{}{\c{ND}}$ and $\wst{\c{ND}}$ and $\c{ND}\not\tredInt$. Then $\c{ND} = \c F, \c W, \c B$ and:
 \begin{description}
   \item[Processes in $\c F$ are finished:] $ \forall\tconf{\Delta}{b}{e} \in \c F.~ \Delta=\epsilon, b=\tau, e=v$.
   \item[Processes in $\c W$ wait on global channels:] $ \forall\tconf{\Delta}{b}{e} \in \c W.~ e = E[\erequest{l}{c}] $ or
     $ e = E[\eaccept{l}{c}]$.
   \item[Processes in $\c B$ block on a session primitive:]
     $\forall P=\tconf{\Delta}{b}{e} \in \c B.~ e = E[e_0]$ and $e_0 \in \{\esend{v}{}, \erecv{}{}~v, \edeleg{v}, \eresume{}{}~v,
       \eselectnew{L}~v, \ematch{v}{L_i \Rightarrow e_i}{i \in I}\}$.
 \end{description}
\end{lemma}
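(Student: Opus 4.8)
The plan is to prove this as a standard progress-style \emph{canonical classification} lemma, by case analysis on the shape of each process' expression. First I would fix an arbitrary process $\tconf{\Delta}{b}{e} \in \c{ND}$; from $\tjParSys{}{\c{ND}}$ I obtain a closed first-level typing $\tjrBase{C}{\emptyset}{e}{T}{b}{}{}$ together with the strong-normalisation guarantee $\stackop{}{b}{\Delta}{}$. The backbone is the usual call-by-value observation that a closed expression is either a value or decomposes \emph{uniquely} as $E[e_0]$ with $e_0$ a head-redex candidate; combined with a routine canonical-forms property (a closed value of type $\tbool$ is $\ctrue$ or $\cfalse$; of a function type it is an abstraction $\efunc{x}{e}$, a recursive abstraction $\efix{f}{x}{e}$, or a primitive function constant; of a pair type it is a pair; of an endpoint type it is an endpoint $\eendp{p}{l}$), this pins down the finitely many possibilities for $e_0$.

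For each case where $e_0$ is a \emph{pure or administrative} redex --- a conditional on a boolean value, a $\keyw{let}$ binding a value, an application of an $\efunc{x}{e}$- or $\efix{f}{x}{e}$-abstraction to a value, or a $\espawn{v}$ --- I would show that an internal typed reduction necessarily fires, contradicting $\c{ND}\not\tredInt$, so this case cannot occur. The behaviour side is handled by \cref{lem:type-decomp}, which splits $b$ as $\bseq{\vec\tau}{\bseq{b_0}{\vec b_\nxt}}$ with $\tjrBase{C}{\emptyset}{e_0}{T_0}{b_0}{}{}$, while inversion on the rules of \cref{fig:typing-rules} fixes the head shape of $b_0$ (for a conditional, $b_0 = \bseq{\tau}{(b_2 \oplus b_3)}$, since the guard is already a value). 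This is exactly the left-hand pattern demanded by \iref{trift}, \iref{triff}, \iref{trlet}, \iref{trapp}, \iref{trrec}, or \iref{trspwn}; and if the head of the continuation is instead an unresolved behaviour variable, \iref{trbeta} fires first. Either way an internal step exists.

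The remaining non-value cases place the process into $\c W$ or $\c B$. If $e_0$ is $\erequest{l}{c}$ or $\eaccept{l}{c}$ applied to $\cunit$, the only applicable rule is the communication rule \iref{trinit}, so no internal reduction is enabled and the process belongs to $\c W$. If $e_0$ is a session primitive applied to endpoint values --- a send, a receive, a $\edeleg{}$, a $\eresume{}{}$, a $\eselectnew{L}$, or a $\ematch{v}{L_i : e_i}{i\in I}$ on an endpoint --- then only \iref{trcom}, \iref{trdel}, or \iref{trsel} could apply, again giving no internal step, so the process belongs to $\c B$; here canonical forms is what guarantees the arguments have the pair/endpoint shape these rules require. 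Finally, if $e$ is a value $v$, I would place the process in $\c F$ and discharge $b=\tau$ and $\Delta=\epsilon$: inversion on value typing shows $b$ is assembled from $\tau$ by sequencing and sub-effecting, and since $\not\tredInt$ forbids any residual \iref{trbeta}, $b$ carries no pending communication action; then $\stackop{}{b}{\Delta}{}$ forces every frame of $\Delta$ to be consumed in the abstract semantics of \cref{fig:abstract-interpr}, i.e.\ all frames are completed $\tend$ frames popped by rule \RefTirName{End}, which yields $\Delta=\epsilon$.

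The main obstacle I anticipate is the coordination between the two syntactic layers in the value and pure-redex cases: keeping the bookkeeping of the $\bseq{\vec\tau}{\cdots}$ prefixes honest so that the behaviour head delivered by inversion matches precisely the left-hand side of the corresponding typed-reduction rule, and, in the value case, arguing from $\stackop{}{b}{\Delta}{}$ that the stack is genuinely empty rather than a mere residue of finished $\tend$ frames. Everything else is routine structural case analysis once \cref{lem:type-decomp} and canonical forms are in hand.
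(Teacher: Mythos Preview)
Your approach is the same as the paper's: classify each process by case analysis on its expression, showing that any shape not in $\c F$, $\c W$, or $\c B$ admits an internal typed step, and then read off the $\Delta=\epsilon$, $b=\tau$ constraints for the value case from typing and well-stackedness. The paper's proof is a two-line sketch; you are supplying the details it elides.

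One small correction in the value case: your appeal to $\not\tredInt$ ruling out \iref{trbeta} is misplaced. Rule \iref{trbeta} is not a standalone administrative step; it is a congruence rule with a premise requiring some other typed reduction to fire after the substitution. For a value $v$ no such underlying reduction exists, so \iref{trbeta} cannot fire regardless, and $\not\tredInt$ gives you no extra information there. The $b=\tau$ conclusion must come purely from inversion on value typing (values receive behaviour $\tau$, up to sequencing and sub-effecting), together with $\stackop{}{b}{\Delta}{}$ and well-stackedness for the stack---which is exactly what the paper gestures at, and exactly the delicate point you already flag as the main obstacle.
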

\begin{proof}
  If a process in $\c{ND}$ is not in one of the three categories then it would be able to take an internal step. Moreover, the
  structure of the stack $\Delta$ and behaviour $b$ of finished processes follows from $\tjParSys{}{\c{ND}}$ and $\wst{\c{ND}}$.
\end{proof} 

From the preceding two lemmas we can easily derive the most part of progress.

\begin{corollary} \label{cor:progress-a}
 Let $ \tjParSys{}{\c S}$ and $ \wst{\c S}$. Then $ \c S \intStep^* (\c F,\c D,\c W,\c B) $ such that:
 \begin{description}
   \item[Processes in $\c F$ are finished:] $ \forall\tconf{\Delta}{b}{e} \in \c F.~ \Delta=\epsilon, b=\tau, e=v$.
   \item[Processes in $\c D$ diverge:] $ \forall\tconf{\Delta}{b}{e} \in \c D.~ \tconf{\Delta}{b}{e} \intStep^\infty$.
   \item[Processes in $\c W$ wait on global channels:] $ \forall\tconf{\Delta}{b}{e} \in \c W.~ e = E[\erequest{l}{c}] $ or
     $ e = E[\eaccept{l}{c}]$.
   \item[Processes in $\c B$ block on a session primitive:]
     $\forall P=\tconf{\Delta}{b}{e} \in \c B.~ e = E[e_0]$ and $e_0 \in \{\esend{v}{}, \erecv{}{}~v, \edeleg{v}, \eresume{}{}~v,
       \eselectnew{L}~v, \ematch{v}{L_i \Rightarrow e_i}{i \in I}\}$.
 \end{description}
\end{corollary}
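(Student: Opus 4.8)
The plan is to read this off as a direct composition of the two preceding lemmas, transporting the typing and well-stackedness hypotheses along the way with the preservation result of \cref{thm:preservation}. First I would apply the splitting lemma to $\c S$ to write $\c S = \c D, \c{ND}$, where every process in $\c D$ diverges and the non-diverging remainder $\c{ND}$ reduces by finitely many internal steps $\tredInt$ to an internally stuck configuration $\c{ND}'$ with $\c{ND}' \not\tredInt$. The internal reductions used here --- those of rules \iref{trend}, \iref{trbeta}, \iref{trift}, \iref{triff}, \iref{trlet}, \iref{trapp}, \iref{trrec}, and \iref{trspwn} --- act locally on a single process (at most spawning new ones), so they extend by congruence to the whole system, yielding $\c S \intStep^{*} (\c D, \c{ND}')$ while leaving the diverging part $\c D$ untouched.

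The next step is to classify $\c{ND}'$ using the second lemma, for which I must discharge its three hypotheses. The condition $\c{ND}' \not\tredInt$ is given by the splitting lemma; the conditions $\tjParSys{}{\c{ND}'}$ and $\wst{\c{ND}'}$ follow from preservation applied along $\c S \intStep^{*} (\c D, \c{ND}')$, together with the observation that per-process typing and the well-stacking relation restrict to the subcollection $\c{ND}'$. The second lemma then gives $\c{ND}' = \c F, \c W, \c B$ with exactly the stated shapes: finished values in $\c F$, processes stuck on $\erequest{l}{c}$ or $\eaccept{l}{c}$ in $\c W$, and processes stuck on a session primitive in $\c B$. Reassembling, the processes of $\c D$ satisfy the divergence clause directly (divergence being an infinite internal sequence $P \intStep^{\infty}$), and $\c F, \c W, \c B$ come from $\c{ND}'$, so $\c S \intStep^{*} (\c F, \c D, \c W, \c B)$ as required.

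The hard part will be the bookkeeping around well-stackedness. Unlike typing, which is a conjunction of independent per-process judgments and therefore trivially restricts to any subcollection, $\wst{}$ is a global relation over the whole collection of stacks (\cref{def:well-stackedness}), built by iteratively removing dual endpoint pairs from the tops of two stacks. Transferring $\wst{\c S}$ to $\wst{\c{ND}'}$ is thus the one genuinely non-mechanical point, and the argument I would give is that the removal order witnessing $\wst{\c S}$ can be chosen so that no dual pair straddles the $\c D$/$\c{ND}'$ partition. This is consistent with the fact that internal reductions and divergence are purely intra-process: they neither open nor close sessions, so they cannot create cross-boundary session dependencies between the diverging and non-diverging parts. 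Once this restriction is justified, everything else is routine --- a congruence argument to lift the local reductions, one appeal to \cref{thm:preservation} for typing and well-stackedness, and a final invocation of the second lemma.
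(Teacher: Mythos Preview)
Your overall approach is exactly the paper's: split $\c S$ into $\c D,\c{ND}$ via the first lemma, internally reduce $\c{ND}$ to $\c{ND}'\not\tredInt$, and classify $\c{ND}'$ with the second lemma. That much is fine, and the paper itself says no more than ``from the preceding two lemmas we can easily derive'' the corollary.

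However, the one step you flag as the hard part is also the step you get wrong. Your claim that ``the removal order witnessing $\wst{\c S}$ can be chosen so that no dual pair straddles the $\c D/\c{ND}'$ partition'' is false. Consider a process that opens a session (via \iref{init}), then enters an infinite internal loop: it lands in $\c D$ with a non-empty stack $(p^l:\eta)\cdot\epsilon$, while its partner holding $\co p$ may well be non-diverging and end up in $\c{ND}'$ (indeed, stuck in $\c B$). The endpoint pair $p,\co p$ necessarily straddles the partition; this is precisely the mechanism by which processes in $\c B$ later transitively depend on $\c D$ in \cref{lem:progress-b}. Your justification---that internal reductions ``neither open nor close sessions, so they cannot \emph{create} cross-boundary dependencies''---conflates creating dependencies with already having them: the partition into $\c D$ and $\c{ND}$ is defined by future divergence behaviour, which is entirely orthogonal to the existing session structure.

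The fix is not to restrict $\wst{}$ to $\c{ND}'$ at all. Look at what the second lemma's proof actually uses: the trichotomy $\c F,\c W,\c B$ comes from per-process progress (``if a process is not in one of the three categories then it would be able to take an internal step''), which needs only $\tjParSys{}{\c{ND}'}$---and that \emph{does} restrict, since it is a conjunction of per-process judgements. The remaining claim, that finished processes have $\Delta=\epsilon$ and $b=\tau$, follows already from the $\stackop{}{b}{\Delta}{}$ component of $\tjParSys{}{}$: a value has behaviour (equivalent to) $\tau$, and $\stackop{}{\tau}{\Delta}{}$ forces $\Delta$ to consist only of $\tend$ frames, hence be empty. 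So the $\wst{\c{ND}}$ hypothesis in the second lemma is not actually needed for its conclusion, and you should simply drop the attempt to restrict well-stackedness; preservation gives you $\wst{(\c D,\c{ND}')}$ on the full system, which is what \cref{lem:progress-b} will need next.
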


What is missing is that when $\c S$ cannot take any more communication steps $(\Rightarrow_{\mathsf c})$ then all processes in $\c
B$ depend on processes in $\c D$ and $\c W$. This follows by well-stackedness of (well-typed) systems. We write
$(\c F, \c D, \c W, \c B)$ for a system whose finished processes are in $\c F$, diverging processes are in $\c D$, waiting
processes are in $\c W$ and blocked processes are in $\c B$.

\begin{lemma} \label{lem:progress-b}
  Let $\tjParSys{}{\c S}$ and $\wst{\c S}$ and $\c S = (\c F, \c D, \c W, \c B)$ and $\c W, \c B \not\intStep$. Then
  \begin{enumerate}
    \item \label{l23.1} If $P, Q \in \c W$ and $P \leftrightharpoons Q$ then $P, Q \comStep \c S'$, for some $\c S'$.
    \item \label{l23.2} The $(\mapsto)$ dependencies in $\c S$ create a directed acyclic graph.
    \item \label{l23.3} If $P \in \c B$ then there exist $Q, R \in \c D, \c W, \c B$ such that $P \Mapsto (Q,R)$.
  \end{enumerate}
\end{lemma}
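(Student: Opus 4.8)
My plan is to derive all three parts from well-stackedness (\cref{def:well-stackedness}). The two facts I will use repeatedly are that $\wst{\c S}$ forces every endpoint occurring on a stack to have its unique dual occurring elsewhere in the system, carrying a dual session type, and that the whole family of stacks is obtained by repeatedly placing a fresh dual pair $(p^l:\eta),(\co p^{l'}:\eta')$ on top of two stacks. These give, respectively, the ``existence of a partner'' used in parts \ref{l23.1} and \ref{l23.3}, and the ordering information used in part \ref{l23.2}.

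For part \ref{l23.1}, since $P,Q\in\c W$ they are both of the form $\tconf{\Delta}{b}{E[\,\cdot\,]}$ whose head redex is $\erequest{l}{c}$ or $\eaccept{l}{c}$; being ready means one is a request and the other an accept on the same global channel $c$. I would then simply fire rule \iref{init} (equivalently \iref{trinit}), which is always enabled for such a matching pair and returns fresh dual endpoints; validity of $C$ (the constraints $\seql{c}{\eta}$, $\seql{\co c}{\eta'}$ with $\eta\dual{}\eta'$) makes the two new top frames dual, so the resulting step is a genuine $\comStep$. This case is routine.

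The core of the lemma, and the main obstacle, is part \ref{l23.2}. The idea is to extract from the well-stackedness derivation a \emph{rank} on dual pairs: number the rule applications $1,2,\dots$ in the order pairs are added, and let $\mathsf{rank}(p)$ be the step at which the pair $\{p,\co p\}$ is introduced. Because every rule application pushes its pair onto the \emph{top} of the two stacks it touches, a routine induction on the derivation shows that within any resulting stack the rank strictly increases from bottom to top; in particular the top frame of any stack carries the maximal rank of that stack. Now if $P\mapsto Q$, the top endpoint $p$ of $P$ has its dual $\co p$ buried strictly below the top $q$ of $Q$, so $\mathsf{rank}(p)=\mathsf{rank}(\co p)<\mathsf{rank}(q)$; setting $\mathsf{rank}(P)$ to be the rank of $P$'s top frame we obtain $P\mapsto Q\implies\mathsf{rank}(P)<\mathsf{rank}(Q)$. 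A cycle $P_1\mapsto\cdots\mapsto P_n\mapsto P_1$ would then force $\mathsf{rank}(P_1)<\mathsf{rank}(P_1)$, a contradiction, so $\mapsto$ is acyclic. The delicate point I expect to spend care on is justifying the rank and its monotonicity: I must verify in the induction that prepending a fresh maximal-rank pair to two stacks preserves ``rank increases toward the top'' for those stacks and leaves all others untouched.

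For part \ref{l23.3}, take $P\in\c B$. By well-typedness and the stack principle its head redex acts on $P$'s top endpoint $p$, so $P$ has a non-empty stack. If $\co p$ sits at the top of some $R$ then $P\leftrightharpoons R$ and hence $P\Mapsto(P,R)$; otherwise $\co p$ is buried and $P\mapsto Q_1$, in which case I follow the $\mapsto$-chain, which is finite and acyclic by part \ref{l23.2} together with finiteness of $\c S$, ending at some $Q^\ast$ with no $\mapsto$-successor. Every process reached along the chain has a non-empty stack, since one can only $\mapsto$-point into a stack containing the buried dual; thus $Q^\ast$ has a top endpoint $q^\ast$, and by well-stackedness $\co{q^\ast}$ occurs somewhere. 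As $Q^\ast$ has no $\mapsto$-successor, $\co{q^\ast}$ cannot be buried, so it is at a top, giving $Q^\ast\leftrightharpoons R$ and therefore $P\mapsto^+Q^\ast\leftrightharpoons R$, i.e. $P\Mapsto(Q^\ast,R)$. Finally, both $Q^\ast$ and $R$ have non-empty stacks, whereas the finished processes $\c F$ have empty stacks, so $Q^\ast,R\in\c D\cup\c W\cup\c B$, which is exactly the required conclusion.
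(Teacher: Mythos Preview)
Your arguments for parts~\ref{l23.2} and~\ref{l23.3} are essentially the paper's approach. The paper proves part~\ref{l23.2} ``by induction on $\wst{\c S}$''; your rank function is simply a clean way to organise that induction. For part~\ref{l23.3} the paper follows the same chain-chasing argument you give (and your version is in fact more carefully stated than the paper's, whose case split contains a couple of apparent slips).

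Your argument for part~\ref{l23.1}, however, rests on a misreading of the definition of $\leftrightharpoons$. You assert that ``being ready means one is a request and the other an accept on the same global channel $c$'', but this is not what readiness says: $P\leftrightharpoons Q$ is defined purely in terms of stacks and means only that the \emph{top frames} of $P$'s and $Q$'s stacks carry dual session endpoints $p$ and $\co p$. It says nothing about the head redexes of the two expressions, and in particular does not force $P$ and $Q$ to be a matching $\erequest{}{c}$/$\eaccept{}{c}$ pair---two processes in $\c W$ can perfectly well have dual top endpoints while requesting on unrelated channels. Your appeal to rule \iref{init} is therefore unjustified. The paper's own proof of this part is the single sentence ``follows from $\tjParSys{}{\c S}$ and $\wst{\c S}$'', so there is little to compare against, but your concrete argument does not go through as written.
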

\begin{proof}
  The first property follows from $\tjParSys{}{\c S}$ and $\wst{\c S}$.
  Property~\ref{l23.2} is proven by induction on $\wst{\c S}$.

  Property~\ref{l23.3}: Because $P \in \c B$ and $P$ is well-typed, the stack of $P$ is non-empty. Thus, by Property~\ref{l23.1},
  there exists $Q \in \c S$ such that $P \mapsto^+ Q$ is the longest sequence of dependencies \emph{without repetitions} (this is
  possible because of Property~\ref{l23.2}).
  We examine two cases: 
  \begin{itemize}
    \item $P \mapsto^* P' \mapsto Q$ and the top-level frame in the stack of $P'$ has an endpoint $p$ and $\co p$ appears in the top frame
      of $Q$; then $P \leftrightharpoons Q$; therefore $P \Mapsto(P',Q)$.
    \item $P = P_1 \mapsto \ldots \mapsto P_n \mapsto Q$ and the top-level frame in the stack of $P'$ has an endpoint $p$ and $\co p$ appears in a
      frame other than the top one in $Q$; then there exists $R$ such that $Q \mapsto R$ (by $\wst{\c S}$); $R$ cannot be one of
      the processes $P_1,\ldots, P_n$ because of Property~\ref{l23.2}. Moreover $R$ cannot be a process in $\c F$ (because
      processes in $\c F$ have empty stacks due to typing), and $R$ cannot be any other processes in $\c D, \c W, \c B$ because
      the sequence of dependencies is the longest. Thus this case is not possible.
      \qedhere
  \end{itemize}
\end{proof}

\emph{Type Soundness is a direct consequence of Cor.~\ref{cor:progress-a} and Lem.~\ref{lem:progress-b}.}


\clearpage
\label{sec:appendix}
\clearpage
\section{Inference algorithms}
\label{sec:algorithms}
\newcommand{\nif}{\mathsf{\textcolor{blue}{if~}}}
\newcommand{\nand}{\mathsf{\textcolor{blue}{and~}}}
\newcommand{\nfresh}{\mathsf{ ~fresh }}

The algorithm to infer session types at second level of our type system is
called Algorithm $ \algoSI $, and it operates on a syntactical sub-language of
session types called \emph{session structures}, which is defined as follows:
\begin{align*}
\eta ::= & \psi
          \bnf \tend 
          \bnf !T.\eta 
          \bnf ?T.\eta
          \bnf !\eta.\eta
          \bnf ?\eta.\eta
          \bnf \psiint
          \bnf \psiext
\end{align*}
Internal and external choices are removed from the syntax of sessions. In their
place, we have two special variables $ \psiint $ and $ \psiext $.
These variables are bound by special \emph{choice constraints}, which
extend the syntax of Constraints in Fig. \ref{fig:type-syntax} as follows:
\begin{align*}
  C ::= &  
  \ldots \bnf 
  \ceq{\sichoice{i}{I}{\eta}}{\psiint}
          \bnf \ceq{\sechoice{i}{I_1}{I_2}{\eta}}{\psiext}
\end{align*}
During session inference, the constraint set $ C $ might be refined to a new
set $ C' $, containing more precise session types for $ \psiint $ and 
$ \psiext $ (for example by adding new labels to an internal choice, or by
moving an active label to inactive in an external choice), 
or new constraints on types (because of a $ \rho!\tint$ behaviour for example).
Choice constraints in $ C $ are ordered according to subtyping:
\begin{center}
\irule[Inf-IChoice][]
  { \sueq{\psiint}{\eta} \quad \coType{\eta}{\eta'} }
  { \sueq{\psiint}{\eta'} }
\\
\irule[Inf-EChoice][]
  { \sueq{\psiext}{\eta} \quad \coType{\eta}{\eta'} }
  { \sueq{\psiext}{\eta'} }
\end{center}
Abstract interpretation transitions can be naturally extended to 
the sub-language just presented, except for the two cases
when $ b $ pops a label $ L_k $, and the top of the stack contains either $
\psiint $ or $ \psiext $. In these two cases, if $ \sueq{\psiext}{\eta} $ or $
\sueq{\psiext} {\eta} $, $ \eta $ substitutes $ \psiext $ or $
\psiint $ on the stack. 

Section \ref{sec:algoSI} presents the main inference algorithm for Stage 2,
whose main core is Algorithm $ \algoMC $ in Section
\ref{sec:algoMC}.
Sections \ref{sec:algoFunc} and \ref{sec:algoSub} contains the 
auxiliary functions for session sub-type inference. 
Section \ref{sec:algoD} shows the inference algorithm for Stage 3. 

As stated in Sec. \ref{sec:inference}, the definition of Algorithm $\algW$ can
be adapted straightforwardly from \cite{ANN}, and therefore its definition is
not included. The soundness of Algorithm $ \algW $ can be
stated as follows:
\begin{theorem}[Soundness of Algorithm $ \algW $]
If $ \algW ([], e) = (\sigma, t, b, C) $ then $\tjrNorm{C}{[]}{e}{t}{b}$.
\end{theorem}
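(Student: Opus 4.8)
The plan is to prove soundness by induction on the structure of the input expression $e$, which mirrors the recursive structure of Algorithm $\algW$. Since the paper establishes that $\algW$ is a straightforward adaptation of the homonymous algorithm in \cite{ANN}, the overall skeleton of the argument follows [ANN, Sec.~2.8]; the genuinely new work lies in the cases for the communication primitives and in discharging the confinement side-conditions of rules \iref{trec} and \iref{tspawn}. First I would fix a strengthened induction hypothesis: whenever $\algW$ is invoked on a subexpression $e_i$ and returns $(\sigma_i, t_i, b_i, C_i)$, the judgement $\tjrNorm{C_i}{\Gamma_i}{e_i}{t_i}{b_i}$ holds, where $\Gamma_i$ is the (substituted) environment of that recursive call. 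The observation that makes the induction go through, exactly as in \cite{ANN}, is that $\algW$ always returns a \emph{fresh} result variable together with a constraint relating it to the concrete computed behaviour, e.g.\ a constraint $\su{b}{\beta}$; the concrete behaviour can then be replaced by the variable via rule \iref{tsub}, so that each algorithmic step corresponds to one application of the matching typing rule of \cref{fig:typing-rules}.

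Then I would proceed case by case. For a variable, $\algW$ instantiates the bound type schema with fresh variables and records the induced constraints in $C$; soundness follows by \iref{tvar} followed by \iref{tins}, the emitted constraints being precisely what makes the schema \emph{solvable} in the sense of Def.~\ref{def:solvability}. For a constant, including the session primitives $\erequest{l}{c}$, $\eaccept{l}{c}$, $\eresume{l}{}$ and the communication operations, $\algW$ produces a fresh instance of the constant's declared type schema; here I must verify that the effect and bookkeeping constraints the algorithm emits, such as $\su{\pusho{l}{\psi}{}}{\beta}$, $\seql{\rho}{l}$, $\seql{c}{\psi}$ and the confinement markers $\cfd{\alpha}$, are exactly those required to apply \iref{tconst} with these schemas. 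The cases for application, conditional, pairs, and match combine the inductive judgements of the subexpressions and close with \iref{tapp}, \iref{tif}, \iref{tpair}, and \iref{tmatch}, the sequential and non-deterministic compositions $\bseq{b_1}{b_2}$ and $b_1 \oplus b_2$ in the rules matching those assembled by the algorithm; lambda closes with \iref{tfun}.

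The \keyw{let} case is where let-polymorphism enters: $\algW$ generalises the schema inferred for $e_1$ over the variables not free in the environment, the accumulated constraints, or the behaviour, and then types $e_2$ under the extended environment. This corresponds to closing the subderivation with \iref{tgen}, discharging its solvability side-condition from the generated constraints, and then applying \iref{tlet}. Throughout, I would maintain as an auxiliary invariant that the returned $C$ remains \emph{well-formed} in the sense of Def.~\ref{def:wf-constraints}, since the typing rules implicitly presuppose this.

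The main obstacle will be the two recursion-related cases, \iref{trec} and \iref{tspawn}. For \iref{trec} I must show that $\algW$ discharges both premisses of the rule: that the computed argument and result types satisfy $C \vdash \confined(T,T')$, and that the emitted recursive constraint has exactly the shape $\su{\orec{b}{\beta}}{\beta}$ demanded by the rule and by the behaviour-compactness clause of Def.~\ref{def:wf-constraints}; the delicate point is verifying that the algorithm's treatment of the self-referential binding for $f$ yields precisely this constraint and introduces no spurious second constraint on $\beta$. The spawn case similarly requires the spawned body to be assigned a $\tfun{\tunit}{\tunit}{\beta}$ type under a confined restriction of the environment. Because these confinement and well-formedness checks are the additions over the plain effect system of \cite{ANN}, I expect essentially all the new verification effort to concentrate here, whereas the purely functional cases transfer almost verbatim from the soundness proof of \cite{ANN}.
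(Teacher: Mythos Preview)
Your proposal is a reasonable plan for actually proving the statement, but you should be aware that the paper does not provide a proof of this theorem at all. The paper explicitly states that $\algW$ is ``a straightforward adaptation of the homonymous algorithm from \cite{ANN}'' and that ``Results of $\algW$'s soundness and completeness follow from \cite{ANN}''; the theorem is simply stated and deferred wholesale to that reference. So there is no paper proof to compare against beyond the citation.

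That said, your outline is the right shape for filling in what the paper omits, and your identification of where the genuinely new verification effort lies---the confinement side-conditions in \iref{trec} and \iref{tspawn}, the bookkeeping constraints emitted for the session primitives, and preservation of the well-formedness conditions of Def.~\ref{def:wf-constraints}---is accurate. One caution: since the paper never spells out $\algW$, you are in effect designing the algorithm's handling of the new constructs at the same time as proving it sound; make sure your claimed invariants (e.g.\ that exactly one constraint $\su{\orec{b}{\beta}}{\beta}$ is emitted per recursive binder, so that behaviour-compactness holds) are things you \emph{impose} on your reconstruction of $\algW$ rather than facts you can read off from the paper.
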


Regarding completeness, let $ jdg^\star $ be any valid typing judgement for an
expression $ e $.
Completeness first show that Algorithm $ \algW $ always
calculates a judgment $ jdg $ for $ e $. Moreover it also shows that $jdg^\star
$ is a \emph{lazy instance} of $ jdg $: in the sense that there exists a
substitution $ \sigma' $ such that $ jdg^\star $ can always be derived by
further instantiating $ jdg $ with $ \sigma' $ and by subtyping. 
This
second property points to the fact that $ \algW $ calculates \emph{principal
types} for $ e $ (see \cite{ANN}, Sec. 1.5.1, p.30). 
Completeness is stated as follows:

\begin{theorem}[Completeness of Algorithm $ \algW $]
If $\tjrNormAt{C^\star}{[]}{e}{t^\star}{b^\star} $ with $ C^\star $ atomic
(i.e. all type constraints in $ C^\star $ have the form $
\su{\alpha_1}{\alpha_2}$), then $ \algW([], e) = (\sigma, b, t, C)$ and there
exists a substitution $ \sigma^\star $ such that:
\begin{itemize}
  \item $ \cj{C^\star}{C\sigma^\star} $
  \item $ \subBase {C^\star} {b\sigma^\star} {b^\star} $
  \item $ \coTypeBase{C^\star} {t\sigma^\star} {t^\star} $
\end{itemize}
\end{theorem}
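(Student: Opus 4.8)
The plan is to prove completeness by structural induction on $e$ --- equivalently, on the height of the derivation of $\tjrNormAt{C^\star}{[]}{e}{t^\star}{b^\star}$ --- mirroring the principal-typing argument of \cite{ANN} on the fragment shared with their system and extending it to the new constants, to rule \iref{tmatch}, and to the confinement side-conditions. Before the induction I would record the standard auxiliary facts: typing is closed under substitution of type, behaviour, region and session variables; it is monotone in the constraint environment under $\cj{}{}$ (if $\cj{C^\star}{C'}$ and $e$ is typable under $C'$, then $e$ is typable under $C^\star$); and behaviour inclusion and functional subtyping are likewise monotone under $\cj{}{}$. The atomicity hypothesis on $C^\star$ is what makes the matching clean: since every type constraint of $C^\star$ has the form $\su{\alpha_1}{\alpha_2}$, the target derivation never hides a nonatomic inclusion that $\algW$ would have to invert, so the constraints $\algW$ emits are already in the normal form against which they are most general.

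For the inductive step I would, in each case, apply the induction hypothesis to the immediate subexpressions to obtain witnessing substitutions, and then assemble the global witness $\sigma^\star$ from them. Because $\algW$ draws fresh type, behaviour, region and session variables at every node, the domains of the per-subexpression witnesses are pairwise disjoint outside the variables inherited from $\Gamma$, so their union is a well-defined substitution and composition causes no clash. The three output conditions are then discharged compositionally: constraint entailment $\cj{C^\star}{C\sigma^\star}$ holds because each inclusion $\algW$ records is exactly the side-condition of the corresponding typing rule, hence a necessary consequence of the target derivation; the behaviour inclusion $\subBase{C^\star}{b\sigma^\star}{b^\star}$ follows from the structure of sequential and nondeterministic behaviour composition, using \iref{tsub} to absorb any sub-effecting slack in $b^\star$; and $\coTypeBase{C^\star}{t\sigma^\star}{t^\star}$ follows by the congruence of functional subtyping (Def.~\ref{def:func-subtyping}) over function, pair and endpoint types.

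The genuinely new cases are three. First, the constant functions carry fixed polymorphic schemas with session, region and behaviour constraints; here I would invoke solvability (Def.~\ref{def:solvability}) to match the instantiation chosen in the target derivation against the fresh instantiation produced by $\algW$, which simultaneously yields the relevant block of $\sigma^\star$ and the emitted constraints $\su{\pusho{l}{\psi}{}}{\beta}$, $\seql{\rho}{l}$, $\seql{c}{\psi}$ (and their analogues for the other primitives). Second, rule \iref{tmatch} assembles the external-choice behaviour $\bechoice{}{\rho}{L_i}{b_i}$ from the per-branch results, all branches sharing the common result type $T$; the witness here is built branchwise, using the fact that the label index set is fixed by the syntax. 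Third, the confinement side-conditions $\cfd\alpha$ in the send and receive schemas and $\under{C}{\confined(T,T')}$ in \iref{trec} are monotone under $\cj{}{}$ and hence preserved once $\cj{C^\star}{C\sigma^\star}$ is established.

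The main obstacle will be showing that the accumulated inclusion-constraint set $C$ is genuinely \emph{least} up to lazy instantiation in the presence of \iref{tsub} together with \iref{tgen} and \iref{tins}. Since $\algW$ commits only to fresh behaviour variables and pushes all behaviour composition into $C$, I must verify that any behaviour $b^\star$ realised by the target derivation is derivable from $C\sigma^\star$ by behaviour inclusion --- that is, that $\algW$ never over-commits to a behaviour it cannot later subsume. For the base language this is precisely the principal-typing property of \cite{ANN}; the extension reduces to checking that the new constant schemas and the external-choice constructor $\bechoice{}{\rho}{L_i}{b_i}$ respect the same ordering, which introduces no essential difficulty beyond the careful bookkeeping of the freshly generated region and session variables together with their equality constraints $\seql{\rho}{l}$ and $\seql{c}{\psi}$.
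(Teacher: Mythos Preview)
The paper does not give its own proof of this theorem: it states that Algorithm $\algW$ is a straightforward adaptation of the homonymous algorithm in \cite{ANN}, and explicitly says that ``Results of $\algW$'s soundness and completeness follow from \cite{ANN}.'' In the appendix the theorem is merely stated, with an informal remark that the inferred judgement is principal in the sense that any valid judgement is a \emph{lazy instance} of it.

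Your proposal therefore goes further than the paper itself: you sketch the actual structural induction that \cite{ANN} carries out and identify the places where the present system diverges from theirs (the new session/region/behaviour constants, rule \iref{tmatch}, and the confinement side-conditions). This is the right approach and the right list of new cases. The paper's own ``proof'' is simply a pointer to \cite{ANN}; your outline is what one would have to write if asked to make that pointer explicit.
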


%
%
%
%
%
%
%
%
%
%
%
%
%
%
%
%
%

\subsection{Algorithm $ \algoSI $}
Let $ K $ be the behavior stack, defined by the following grammar:
\begin{align*}
K ::= \emptyK \bnf \consK{b}{K}
\end{align*}

Let also the application of $ K $ to $ b $, or $ K[b] $, be defined
inductively as follows:
\begin{align*}
    \emptyK[b]       = & b
\\  (\consK{b'}{K})[b] = & K[b;b']
\end{align*}

Since we only work with finite behaviors $ b $, there is always a finite
decomposition $ b = K[b'] $ such that $ b' $ is not the sequential composition
of two sub-behaviors; the decomposition is also unique.
The session inference algorithm $ \algoSI $ is defined as follows:
\label{sec:algoSI}
\lstinputlisting[  backgroundcolor=\color{white},  escapeinside={(*}{*)} ]
  {codeS2}

\subsection{Algorithm $ \algoMC $}
\label{sec:algoMC}
\lstinputlisting[  backgroundcolor=\color{white},  escapeinside={(*}{*)} ]
  {codeSI2}

\subsection{Helper functions} \label{sec:algoFunc}
\begin{lstlisting}[backgroundcolor=\color{white}]
-- checkFresh forces frame l in the input stack to be the closed session 
checkFresh$(l, \epsilon) = (\id, \epsilon) $
checkFresh$(l, \St {l'} \tend) = (\id, \Delta) $
checkFresh$(l, \St l \psi) = (\subst \psi \tend, \Delta\subst \psi \tend) $
checkFresh$(l, \St {l'} \eta) = (\sigma, \stBase {l'} {\eta\sigma} {\Delta'}) $ 
  if $ l' \neq l $ and $ (\sigma, \Delta') $ = checkFresh$ (l, \Delta) $

-- $\color{mygreen}{\mathsf{closeTop}}$ forces the top of the stack to be $ \color{mygreen}{\tend} $ 
closeTop$(\St{l}{\eta})$ = checkFresh$ (l, \St{l}{\eta}) $

-- $\color{mygreen}{\mathsf{finalize}}$ matches the input stack with the empty stack 
finalize $ \epsilon = \id $
finalize $ \St{l}{\tend} = $ finalize $\Delta$
finalize $ \St{l}{\psi}  = \sigma_2\sigma_1 $
  if  $ \sigma_1 = \subst{\psi}{\tend} $
  and $ \sigma_2 = \mathsf{finalize}(\Delta) $
\end{lstlisting}
  
\subsection{Subtype checking}
\label{sec:algoSub}
\begin{lstlisting}[backgroundcolor=\color{white}]
-- end
sub$(\tend, \tend, C) = (\id, C) $

-- send/recv
sub$(!T_1.\eta_1, !T_2.\eta_2, C) = (\sigma_1, C_1 \cup
  \singleSu{T_2}{T_1}) $
  if $(\sigma_1, C_1) = $ sub$ (\eta_1, \eta_2, C) $

sub$(?T_1.\eta_1, ?T_2.\eta_2, C) = (\sigma_1, C_1\cup
    \singleSu{T_1}{T_2})  $
  if $(\sigma_1, C_1) = $ sub$ (\eta_1, \eta_2, C) $

-- deleg/resume
sub$(!\eta_{d1}.\eta_1, !\eta_{d2}.\eta_2, C)
            $ = $(\sigma_2\sigma_1, C_2) $
  if  $ (\sigma_1, C_1) = $ sub$(\eta_{d2}, \eta_{d1}, C)
       $ and $ (\sigma_2, C_2) = $ sub$(\eta_1\sigma_1, \eta_2\sigma_1,
       C_1) $

sub$(?\eta_{r1}.\eta_1, ?\eta_{r2}.\eta_2, C)
            $ = $(\sigma_2\sigma_1, C_2) $
  if  $ (\sigma_1, C_1) = $ sub$(\eta_{r1}, \eta_{r2}, C)
       $ and $ (\sigma_2, C_2) = $ sub$(\eta_1\sigma_1, \eta_2\sigma_1,
       C_1)
       $

-- in. choice
sub$(\psiint_1, \psiint_2, C) = $ f $( I_2, C )$
  if $ (\ceq{\psiint_2}{\sichoiceBase{i}{I_2}{\eta}{2i}})  \in C $ 
  and f $ (\emptyset, C) = (\id, C) $
  and f $ (I\uplus \{k\}, C) $ = f $ (I\uplus \{k\}, C_1) $
        if  $ C =  C' \uplus (\ceq{\psiint_1}{\sichoiceBase{i}{I_1}{\eta}{1i}})
                      \uplus (\ceq{\psiint_2}{\sichoiceBase{i}{I_2}{\eta}{2i}})$
        and $ k \not \in I_1 $ and $ k \in I_2 $
        and $ C_1 =C' \uplus (\ceq{\psiint_1}{\sichoiceBase{i}{I_1
                                \cup \{k\}
                                  }{\eta}{1i}}\oplus \eta_{2_k})
                      \uplus (\ceq{\psiint_2}{\sichoiceBase{i}{I_2}{\eta}{2i}}) 
        $
  and f $ (I\uplus \{k\}, C) = (\sigma_2\sigma_1, C_2)$
        if  $ (\sigma_1, C_1) = $ f $ (I, C) $
        and $ (\ceq{\psiint_1}{\sichoiceBase{i}{I_1}{\eta}{1i}}),
              (\ceq{\psiint_2}{\sichoiceBase{i}{I_2}{\eta}{2i}})  \in C_1 $
        and $ k\in I_1 $ and $ k \in I_2 $
        and $ (\sigma_2, C_2) = $ sub $ 
                                    (\eta_{1k}\sigma_1, \eta_{2k}\sigma_1, C_1)$
                                    
-- ex. choice
sub$(\psiext_1, \psiext_2, C) = $ f $ (J_1 \cup J_2, C) $
  if  $ (\ceq{\psiext_1}{\sechoiceBase{i}{I_1}{I_2}{\eta}{1i}}),
        (\ceq{\psiext_2}{\sechoiceBase{i}{J_1}{J_2}{\eta}{2i}})  \in C $
  and $ I_1 \subseteq J_1 $
  and f $ (\emptyset, C) = (\id, C) $
  and f $ (I\cup \{k\}, C) = (\sigma_2\sigma_1, C_2)$
        if  $ (\ceq{\psiext_1}{\sechoiceBase{i}{I_1}{I_2}{\eta}{1i}}),
              (\ceq{\psiext_2}{\sechoiceBase{i}{J_1}{J_2}{\eta}{2i}})  \in C $
        and ($ k \in I_2 $ or ($k \in I_1 $ and $ k \in J_1 $)  
        and $ (\sigma_1, C_1) = $ f $ (I, C) $
        and $ (\sigma_2, C_2) = $ sub $ 
                                  (\eta_{1k}\sigma_1, \eta_{2k}\sigma_1, C_1) $
  and f $ (I\cup \{k\}, C) = $ f $ (I\cup \{k\}, C_1)$
        if  $ C = C' \uplus
                        (\ceq{\psiext_1}{\sechoiceBase{i}{I_1}{I_2}{\eta}{1i}})
        \uplus (\ceq{\psiext_2}{\sechoiceBase{i}{J_1}{J_2}{\eta}{2i}}) $
        and $ k \in I_1 $ and $ k \in J_2 $  
        and $ C_1 = C' 
        \uplus (\ceq{\psiext_1}{\sechoiceBase{i}{I_1\backslash\{k\}
        }{I_2 \cup \{k\} }{\eta}{1i}})
        \uplus (\ceq{\psiext_2}{\sechoiceBase{i}{J_1}{J_2}{\eta}{2i}}) $ 
  and f $ (I\cup \{k\}, C) = $ f $ (I\cup \{k\}, C_1)$
        if  $ C = C' \uplus
                        (\ceq{\psiext_1}{\sechoiceBase{i}{I_1}{I_2}{\eta}{1i}})
        \uplus (\ceq{\psiext_2}{\sechoiceBase{i}{J_1}{J_2}{\eta}{2i}}) $
        and $ k \not \in I_1 $ and $ k \not \in I_2 $  
        and $ C_1 = C' 
        \uplus (\ceq{\psiext_1}{\sechoiceBase{i}{I_1
        }{I_2 \cup \{k\} }{\eta}{1i}} + ?L_k.\eta_{2k})
        \uplus (\ceq{\psiext_2}{\sechoiceBase{i}{J_1}{J_2}{\eta}{2i}}) $ 
        
-- session inference
sub$(\psi, \eta, C) = (\sigma, C\sigma)$ if $ \sigma = \subst\psi\eta $
sub$(\eta, \psi, C) = (\sigma, C\sigma)$ if $ \sigma = \subst\psi\eta $
\end{lstlisting}

\subsection{Algorithm $ \algoDual $}
\label{sec:algoD}

The duality check algorithm, which we call Algorithm $ \algoDual $, takes the
constraints set $ C $ calculated by the second stage, and returns a larger
constraints set $ C' $. As in Nielson\& Nielson, the algorithm simply halts
when a duality check fails, rather than throwing an exception.

Algorithm $\algoDual$ manipulates a new kind of constraints, called
\emph{duality constraints}. A duality constraint has the form $ (\eta_1 \dual{} \eta_2)
$, where $ \eta_1 $ and $ \eta_2 $ are inference session
types. At the beginning, Algorithm $\DC$ creates a duality
constraint $ \eta_1 \dual{} \eta_2 $ for each channel $ c $ and $ \bar c $
such that session types $ \eta_1 $ and $ \eta_2 $ have been derived,
i.e. such that $ \atomSu{\su{\eta_1}{c}, \su{\eta_2}{\bar c} } \in C $.
For any other channel $ c $ such that $ \singleSu{\eta_1}{c}$ is in $ C $,
but no constraint $ \singleSu{\eta_2}{\bar c} $ is in $ C $, Algorithm $ \DC
$ introduces a constraint $ \eta_1 \dual{} \psi_2$, where $ \psi_2 $ is a
fresh variable. 

After this initial setup, Algorithm $ \DC $ non-deterministically applies one
of the following rules to the configuration $(\id, C)$, until no more rules can
be applied:
$$
\begin{array}{llll}
    (\sigma, C \uplus \singleDualSu{\tend}{\tend})  
  & \hookrightarrow
  & (\sigma, C) 
  &
\\  (\sigma, C \uplus \singleDualSu{!T_1.\eta_1}{?T_2.\eta_2})  
  & \hookrightarrow
  & (\sigma, C \cup \atomSu{\su {T_1 }{T_2}, \duals{\eta_1}{\eta_2} }) 
  &
\\  (\sigma, C \uplus \singleDualSu{!\eta_1'.\eta_1}{?\eta_2'.\eta_2})  
  & \hookrightarrow
  & (\sigma'\sigma, C' \cup \atomSu{\duals{\eta_1\sigma'}{\eta_2\sigma'} }) 
  & \text{if } (\sigma', C') = \mathsf{sub}(C, \eta_1, \eta_2)
\\  (\sigma, C \uplus 
    \singleDualSu{\sichoiceBase{i}{I_0}{\eta}{1i}}
                 {\sechoiceBase{i}{I_1}{I_2}{\eta}{2i}})  
  & \hookrightarrow
  & (\sigma, C \cup \underset{i \in I_0}\bigcup
                                    \atomSu{\duals{\eta_{1i}}{\eta_{2i}} }) 
  & \text{if } I_0 \subseteq I_1
%
\\  (\sigma, C \cup 
    \singleDualSu{\psi_1}
                 {\eta_2})  
  & \hookrightarrow
  & (\sigma'\sigma, C')
  & \text{if } (\sigma', C') = \mathsf{expand}(C, \duals {\psi_1} {\eta_2})
\\  (\sigma, C \cup 
    \singleDualSu{\psiint}
                 {\eta_2})  
  & \hookrightarrow
  & (\sigma'\sigma, C')
  & \text{if } (\sigma', C') = \mathsf{expand}(C, \duals {\psiint} {\eta_2})
\\  (\sigma, C \cup 
    \singleDualSu{\psiext}
                 {\eta_2})  
  & \hookrightarrow
  & (\sigma'\sigma, C')
  & \text{if } (\sigma', C') = \mathsf{expand}(C, \duals {\psiext} {\eta_2})
\end{array}
$$
where the helper function $ \mathsf{sub} $ is the same function from
Algorithm $ \algoSI$ (which returns a substitution and a set of constraints
such that the two input sessions are in the subtyping relation). The helper
function \textsf{expand} is defined as follows:

\begin{longtable}{lll}
   $\mathsf{expand}(C, \duals {\psi_1} \tend  )$ 
     & $= (\sigma, C\sigma)$
     & \text{if } $\sigma = \subst {\psi_1} \tend$
\\ $\mathsf{expand}(C, \duals {\psi_1} {?T.\eta_2} )$ 
     & $= (\sigma, C\sigma)$
     & $\text{if } \sigma = \subst {\psi_1} {!\alpha.\eta_2}$
\\ $\mathsf{expand}(C, \duals {\psi_1} {!T.\eta_2} )$ 
     & $= (\sigma, C\sigma)$
     & $\text{if } \sigma = \subst {\psi_1} {?\alpha.\eta_2}$
\\ $\mathsf{expand}(C, \duals {\psi_1} {!\eta_2'.\eta_2} ) $
     & $= (\sigma, C\sigma)$
     & $\text{if } \sigma = \subst {\psi_1} {!\psi_1'.\psi_1''}$ 
\\ $\mathsf{expand}(C, \duals {\psi_1} {?\eta_2'.\eta_2} ) $
     & $= (\sigma, C\sigma)$
     & $\text{if } \sigma = \subst {\psi_1} {?\psi_1'.\psi_1''}$
\\ $\mathsf{expand}(C, \duals {\psi_1} {\sichoiceBase{i}{I}{\eta}{2i}} )$ 
     & $= (\sigma, C\sigma \cup C')$
     & $\text{if } \sigma = \subst {\psi_1} {\psiext}
       \text{ and } C' = \atomSu  
        { \ceq \psiext {\sechoiceBase{i}{I}{\emptyset}{\psi}{2i}} }$
\\ $\mathsf{expand}(C, \duals {\psi_1} {\sechoiceBase{i}{I_1}{I_2}{\eta}{2i}} )$ 
    & $ = (\sigma, C\sigma \cup C')$
    & $\text{if }  \sigma = \subst {\psi_1} {\psiint}
      \text{ and } C'   = \atomSu {
      \ceq \psiint {\sichoiceBase{i}{I_1}{\psi}{2i}}}$
\\ $\mathsf{expand}(C, \duals \psiext {\sichoiceBase{i}{I_0}{\eta}{2i}})$  
  & $= (\id, C')$
  & $\text{if }~~~ C = C'' \cup \atomSu 
    { \ceq \psiext \sechoiceBase{i}{I_1}{I_2}{\eta}{1i} }$
\\&& $\text{and } C' = C'' \cup \atomSu 
{
  \ceq \psiext {\sechoiceBase {i} {I_1\cup I_0} {I_2\backslash I_0} {\eta} {1i}}
}
                                      \cup \underset{i\in I_0}\bigcup
                                      \singleDualSu{\eta_{1i}}{\eta_{2i}}$ 
\\ $\mathsf{expand}(C, \duals \psiint {\sechoiceBase{i}{I_1}{I_2}{\eta}{2i}})$  
  & $= (\id, C')$
  & $\text{if }~~~~ C = C'' \cup \atomSu { \ceq\psiint
                                          { \sichoiceBase{i}{I_0}{\eta}{1i} } }$
\\&& $\text{and }  C' = C'' \cup \atomSu {
  \ceq \psiext {\sichoiceBase {i} {(I_0\backslash I_2)\cap I_1} {\eta} {1i}}}
                                      \cup \underset{i\in I_0\cap I_1}
                                      \bigcup
                                      \singleDualSu{\eta_{1i}}{\eta_{2i}}$
\\ $\mathsf{expand}(C, \duals \psiint \psiext)$  
  & $= (\id, C')$
  & $\text{if } ~~~~~C = C'' \cup \singleSu {\sichoiceBase{i}{I_0}{\eta}{1i}}
                                      \psiint
                       \cup \singleSu {\sechoiceBase{i}{I_1}{I_2}{\eta}{1i}}
                                      \psiext$
\\ && $\text{ and } C' = C'' 
 \cup \atomSu { \ceq \psiint {\sichoiceBase{i}{ (I_0\backslash I_2) \cup
                                                              I_1}{\eta}{1i} }} 
 \cup \atomSu { \ceq\psiext
  {\sechoiceBase{i}{ I_1 \cap (I_0\backslash I_2) }
                                 {I_2 \cup (I_0\backslash I_1}{\eta}{1i}}}$           
\\ && $\phantom{ and  C' = C''~~ }\cup \underset{i\in (I_0\backslash I_2)\cap
                                      I_1} \bigcup
                                      \singleDualSu{\eta_{1i}}{\eta_{2i}}$
\end{longtable}

where all $ \alpha, \psi, \psiint, \psiext $ variables on the right-hand side
are fresh, and where fresh variables $ \psi_{1i} $ and $ \psi_{2i} $ are
generated, in case index $ i $ is not defined in the starting internal or
external choice. Symmetric rules are omitted.

The duality constraints are increasingly simplified, until no more
simplifications are possible. After Algorithm $\DC$ is finished, the input set
$ C $ fails the duality check when there exists a duality constraint 
$ \suDual{\eta_1}{\eta_2} $ such that neither $ \eta_1 $ or $
\eta_2$ are fresh session type variables $ \psi $, i.e. $\eta_1 \neq
\psi_1$ and $ \eta_2 \neq \psi_2 $.

\clearpage
\section{Finiteness of Abstract Interpretation}
\label{sec:termination}
\subsection{Introduction}
This section proves that the abstract interpretation of a configuration $ \mc
\Delta b $ in a well-formed environment $ C $ always generates a finite
state-space. We first formalize the notion of \emph{behaviour compact} from
Definition \ref{def:wf-constraints}. Then we define a translation from
behaviours with $ \beta $ variables to \emph{ground behaviours}, i.e. behaviours
without $ \beta $s. We show that this translation is fully abstract with
reference to the abstract interpretation semantics. Finally, we show that
a configuration $ \mc \Delta b $ and constraints $ C $ generate a finite
state-space when $ b $ is ground and $ C $ is well-formed. 

\subsection{Finite ground behaviors}
In order to show termination of $ \algoMC $, we show that $ \beta $ variables
can always be represented by an equivalent \emph{finite} behaviour when $ C $ is
well-formed.

Consider the following ordering:
\begin{definition}[Behaviour ordering]\label{def:fin-b-ordering}
For any constraints set $ C $ and behaviour $ b $, the \emph{behaviour ordering}
$ (C, b) \succ (C', b')$ is defined by the following equations:
\begin{align*}
   (C, b_1;b_2) 
&  \succ (C, b_i) 
&  \text{ for } i \in \{1, 2\}
\\ (C, b_1 \oplus b_2)
&  \succ (C, b_i)
&  \text{ for } i \in \{1, 2\}
\\ (C, \espawn b)
& \succ (C, b)
&
\\ (C, \bechoice{}{\rho}{L_i}{b_i})
&  \succ (C, b_i)
&  \text{ for } i \in I
\\ (C, \beta) 
&  \succ (C, b )
&  \text{ if } \singleSu b \beta \in C
\\ (C \uplus \singleSu {b'} \beta, \orec b \beta) 
&  \succ (C \cup \singleSu \tau \beta, b ) 
& 
\end{align*}
\end{definition}

Notice that a behaviour $ b $ structurally decreases on the
right-hand side of $ \succ $, except when $ b $ is a behaviour variable $
\beta$. When $ C $ is not well-formed, a behaviour variable might occur
infinitely often in a chain of $ \succ $ relations. Such is the case when $ b
= \beta $ and $ C = \{ \su \beta \beta \}$, since the pair $ (C, \beta) $ gives
rise to the infinite chain: 
\begin{align*}
(C, \beta) \succ (C, \beta) \succ (C, \beta) \succ \ldots
\end{align*}

However, it can be shown that there is no infinite chain when $ C $ is
well-formed. The reason for this is that in a well-formed $ C $, any cycle on
behaviour constraints must include a recursive constraint of the form
($\su{\orec{b}{\beta}}{\beta'} $). The definition of $ \succ $ replaces such
constraints with the dummy constraint ($ \su \tau \beta $), effectively
breaking the constraint cycle. Since a well-formed $ C $ is a finite set,
eventually there are no more cycles in $ C $. 

This property is crucial to demonstrate that behaviours have a finite
representation in a well-formed $ C $, and it can be proved as follows:

\begin{lemma}[Well-foundedness of $ \succ $]\label{lem:fin-well-founded}
If $ C $ is well-defined, then $ \succ $ is well-founded.
\end{lemma}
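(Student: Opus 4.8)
The plan is to exhibit a measure into a lexicographically ordered product of well-founded orders and check that it strictly decreases along every clause of Definition~\ref{def:fin-b-ordering}. The only clause that alters the constraint set is the recursive one, $(C \uplus (\su{\orec{b}{\beta}}{\beta}),\orec{b}{\beta}) \succ (C \cup (\su{\tau}{\beta}), b)$, which replaces the recursive constraint $(\su{\orec{b}{\beta}}{\beta})$ by the trivial $(\su{\tau}{\beta})$. I would therefore take as first component $n(C)$, the number of recursive constraints $(\su{\orec{b'}{\gamma}}{\gamma}) \in C$: it drops by exactly one in the recursive clause and is unchanged in every other clause (all of which leave $C$ fixed). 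This component absorbs the single step that can syntactically \emph{enlarge} the behaviour by unfolding a recursion.

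For the remaining clauses $C$ is fixed, and I would use a second component $W_C(b)\in\mathbb{N}$, a ``fully unfolded size'' defined by structural recursion: atomic behaviours get weight $1$; $W_C(b_1;b_2)=1+W_C(b_1)+W_C(b_2)$, and likewise for $\oplus$, for $\espawn{b}$, and for external choice (summing over branches); crucially $W_C(\beta)=1+\max\{\,W_C(b')\mid(\su{b'}{\beta})\in C\,\}$, so unfolding a variable strictly lowers the weight, while a recursive behaviour is a \emph{leaf}, $W_C(\orec{b'}{\gamma})=1$ (its unfolding is charged to the first component, not here). With this definition each non-recursive clause strictly decreases $W_C$: the decomposition clauses because every summand has weight at least $1$, and the variable-lookup clause $(C,\beta)\succ(C,b)$ because $W_C(\beta)\ge 1+W_C(b)>W_C(b)$. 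The recursive clause need not decrease $W_C$, since $n(C)$ already drops there. Setting $\mu(C,b)=(n(C),W_C(b))$ ordered lexicographically on $\mathbb{N}\times\mathbb{N}$, every $\succ$-step strictly decreases $\mu$, and as $\mathbb{N}\times\mathbb{N}$ is well-founded I would conclude $\succ$ is well-founded.

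Two facts must be discharged. First, $W_C$ must be well-defined, i.e. its defining recursion must terminate with a finite value; this is exactly where \emph{behaviour-compactness} of Definition~\ref{def:wf-constraints} enters. The recursion in $W_C$ follows variables only through non-recursive constraints, since recursive behaviours are leaves, and a recursive variable $\gamma$ has, by well-formedness, the unique constraint $(\su{\orec{b'}{\gamma}}{\gamma})$, whence $W_C(\gamma)=1+W_C(\orec{b'}{\gamma})=2$ is a sink. Any cycle in the residual variable-dependency graph would be a cycle of behaviour constraints avoiding every recursive constraint, contradicting behaviour-compactness; hence the graph is acyclic, the recursion terminates, and $W_C(b)$ is a finite natural number. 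Second, since $W_{C'}$ is evaluated at each successor, I must note that $\succ$ preserves well-formedness: the only nontrivial case is the recursive clause, where deleting $(\su{\orec{b'}{\beta}}{\beta})$ creates no new constraint cycle and the added $(\su{\tau}{\beta})$ is confined and non-recursive, so all four conditions of Definition~\ref{def:wf-constraints} persist.

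The main obstacle is precisely the well-definedness of $W_C$ under variable lookup: naively $W_C(\beta)$ recurses into the behaviours bound to $\beta$, which may again contain variables, and without behaviour-compactness this can loop forever (as the example $C=\{\su{\beta}{\beta}\}$ preceding this lemma shows). The care required is (i) to define $W_C$ so that it never unfolds a recursive behaviour, deferring all recursion-unfolding to $n(C)$, and (ii) to invoke behaviour-compactness to argue that the remaining non-recursive unfolding runs along a finite acyclic graph. Once these are in place, the lexicographic descent of $\mu$ along every clause is routine.
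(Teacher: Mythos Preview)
Your approach via an explicit lexicographic measure is genuinely different from the paper's, which argues by contradiction: it assumes an infinite descending chain, splits on whether behaviour variables occur finitely or infinitely often along it, and in the infinite case uses pigeonhole to find a repeated variable, then behaviour-compactness to locate a recursive constraint on the resulting cycle that the recursive clause eventually removes. Your direct measure would be cleaner if it worked, and your well-definedness argument for $W_C$ is correct; but there is a genuine gap.

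You have misread the last clause of Definition~\ref{def:fin-b-ordering}. It is
\[
(C \uplus \{b' \subseteq \beta\},\ \orec{b}{\beta})\ \succ\ (C \cup \{\tau \subseteq \beta\},\ b),
\]
with $b'$ an \emph{arbitrary} behaviour bound to $\beta$, not necessarily $\orec{b}{\beta}$. So $n(C)$ need not drop. Concretely, take $C=\{\tau \subseteq \gamma\}$ (well-formed: no cycles, no recursive constraints) and behaviour $\orec{\gamma}{\gamma}$. The recursive clause yields $(C,\orec{\gamma}{\gamma}) \succ (C,\gamma)$ with $C$ unchanged, since we remove $\tau \subseteq \gamma$ and add it back. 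Your measure gives $\mu(C,\orec{\gamma}{\gamma}) = (0,1)$ because recursive behaviours are leaves, but $\mu(C,\gamma) = (0,\,1{+}W_C(\tau)) = (0,2)$: the measure \emph{increases} along this $\succ$-step. The chain from $(C,\orec{\gamma}{\gamma})$ happens to be finite, so $\succ$ is well-founded there; your argument just fails to establish it.

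The difficulty is structural: making $\orec{b}{\gamma}$ a leaf of weight $1$ forces the entire cost of its unfolding onto the first component $n(C)$, and when the removed constraint is non-recursive neither component can absorb the step. The paper's contradiction argument sidesteps this because it only needs the recursive-constraint count to drop \emph{somewhere along each cycle}, not at every application of the recursive clause. To repair your approach you would need either an extra measure component (say, tracking the multiset of non-$\tau$ constraints), or a definition of $W_C(\orec{b}{\gamma})$ that already accounts for the size of $b$ under the updated constraint set---at which point the well-definedness of $W_C$ becomes more delicate.
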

\begin{proof}
We need to show that there are no infinite descending chains in $ \succ $, i.e.
for any pair $ (C, b) $ there are no infinite sequences of the following kind:
\begin{align*}
(C, b) \succ (C_1, b_1) \succ (C_2, b_2) \succ \ldots \succ (C_i, b_i) \succ
\ldots
\end{align*}

We prove the lemma by contradiction.
Suppose that there is indeed such an infinite descending chain in $ \succ $.
Let $ \mathcal I = \set{i}{\exists \beta. b_i = \beta}$
There are two cases to consider: either $\mathcal I$ is finite, i.e. 
$ \beta $ variables occur a finite number of times in the infinite chain,
or $ \mathcal I $ is infinite, i.e. $ \beta $ variables occur infinitely often.

If $\mathcal I $ is finite, let $ k $ be the least element in $ \mathcal I $.
By definition, $ b_k $ is the last behaviour in the infinite chain such that
$ b_k = \beta_k $ for some $ \beta_k $.
By well-formedness of $ C $, $ C $ contains the constraint $ (b_{k+1} \subseteq
\beta_k) $ and $ b_{k+1} $ is a finite, well-defined term. Since $ b_{k+1} $ is
finite and $ b_j $ is not a variable for $ j > k $, then the chain must be
finite, because $ \succ $ structurally decreases $ b $ at every step. 
This contradicts the assumption that there is an infinite chain in $ \succ
$, and the lemma is proved.

If $ \mathcal I $ is infinite, then the set 
$ \mathcal B = \set{b_i}{\exists\beta. b_i = \beta}$
(the set of all behaviour variables occurring in the chain) must be finite,
because there are only a finite number of constraints $ (\su b \beta ) $ in a 
well-formed $ C $. Because this set is finite, by the pigeonhole principle
there exist $ b_i $ and $ b_j $ in the infinite chain such that $ b_i = b_j =
\beta $ for some $ \beta $. This can be illustrated as follows:
\begin{align*}
(C, b) \succ \ldots \succ (C_i, \beta) \succ \ldots \succ (C_k, b_k)
\succ (C_{k+1}, b_{k+1}) \succ \ldots \succ (C_j, \beta) \succ \ldots
\end{align*}

This sequence contains a cycle over $ \beta $:
\begin{align*}
 (C_i, \beta) \succ \ldots \succ (C_k, b_k)
\succ (C_{k+1}, b_{k+1}) \succ \ldots \succ (C_j, \beta)
\end{align*}
By the Behaviour-Compact property of
well-formedness (Def. \ref{def:wf-constraints}.3), one of the behaviours in
the cycle must be recursive, i.e. it must have the form $ \orec
{b'}{\beta'} $ for some $ b' $ and $ \beta' $.
Let $ b_k = \orec {b'} {\beta'} $; by definition of $ \succ $ $ b_{k+1} =
b'$. Moreover, $ C_{k}$ must contain the constraint $ (\su
{\orec{b'}{\beta'}} {\beta'}) $, whereas $ C_{k+1}$ contains 
the dummy constraint $ \su \tau {\beta'} $ in its place.
Because of this, the number of recursive constraints $ \su {\orec b \beta }
\beta $ decrease by one element in $ C_{k+1} $. 

Since $ C $ is a finite set, the
infinite chain in $ \succ $ must contain a constraints set $ C' $ such
that $ C' $ contains no recursive constraints anymore. 
Since all cycles in behaviours constraints must contain at least a recursive
behaviour, and since $ C' $ contains no such constraints, then there can be no
infinite chain after $ C' $. This contradicts the hypothesis that there is an
infinite chain from $ (C, b) $, and the lemma is proved.

\end{proof}

\subsection{Behaviour variables elimination }
The occurrence of a variable $ \beta $ in a behaviour $  b $ creates an indirect
link between $ b $ and the constraints $ C $ where $ \beta $ is defined.
This hidden connections introduces cumbersome technical complications when
proving properties of the abstract interpretation semantics. On the contrary, 
\emph{ground behaviours}, i.e. behaviours that do not
contain $ \beta $ variables, are easier to reason about. This section
introduces a translation from any behaviour $ b $ to the ground behaviour $
\groundify b $, and shows that it is fully abstract w.r.t the
abstract operational semantics (provided that $ C $ is well-defined).

The abstract interpretation semantics treats $ \beta $ variables as
place-holders: Rule \RefTirName{ICh}
replaces a $ \beta $ with
any behaviour $ b $ to which $ \beta $ is bound in $ C $; Rule \RefTirName{Rec}
effectively replaces each $ \beta $ variables inside recursive
behaviours $ \orec b \beta $ with $ \tau $.

These observations suggest that a $ \beta $ variable can be substituted either
with the internal choice of all the behaviour it binds in $ C $, or with a $
\tau $ inside recursive behaviours. Such a translation is defined as follows: 

\begin{definition}[Ground translation]\label{def:fin-groundify}
Let $ C $ be a well-formed constraint set.
The \emph{ground translation} $ \groundify{-} :: \behavDomain \rightarrow
\behavDomain $ is the total function defined by the following equations:\\

\begin{tabular}{lll}
 $\groundify {b}$ 
   &= $ b$ 
   & $\text{ if } \beta \freshfrom b \text{ for any } \beta$
\\ $ \groundify {b_1;b_2} $
   &= $\groundify {b_1};\groundify {b_2}$
   & 
\\ $\groundify {b_1\oplus b_2}$ 
   &= $\groundify {b_1}\oplus\groundify {b_2}$
   &
\\ $\groundify {\spawno{b}{}} $
   & = $\spawno{\groundify b}{}$
   &
\\ $\groundify {\bechoicetext{i\in I}{\rho}{L_i}{b_i}} $
   &= $\bechoicetext{i\in I}{\rho}{L_i}{\groundify {b_i}}$
\\ $\groundify \beta $
   &= $\bigoplus\set {\groundify {b_i}}{\su{b_i}\beta\in C}$
   & 
\\ $\groundifyBase {\orec b \beta } {C \uplus \singleSu{b'}{\beta}}$
   &= $\orec {\groundifyBase b {C\cup \singleSu \tau \beta }} \beta$
   &
\end{tabular}
\end{definition}

We now show that, when a constraint set $ C $ is well-formed, the ground
translation of a behaviour $ b $  in $ C $ 
does not expand $ \beta $ variables infinitely, but it 
constructs a \emph{finite} ground behaviour, i.e. a behaviour with a finite
syntax tree:

\begin{lemma}
Let $ C $ be well-formed and $ b $ be a finite behaviour. For any behaviour $ b
$, $ \groundify b $ is a finite ground term.
\end{lemma}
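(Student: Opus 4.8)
The plan is to prove the statement by well-founded induction on the behaviour ordering $\succ$ of \cref{def:fin-b-ordering}, working with the explicit-constraint form $\groundifyBase{b}{C}$ (so that $\groundify{b}=\groundifyBase{b}{C}$ for the ambient well-formed $C$). The first step is to observe that the defining clauses of the ground translation in \cref{def:fin-groundify} are exactly aligned with the clauses of $\succ$: in every defining equation for $\groundifyBase{b}{C}$, each recursive invocation $\groundifyBase{b'}{C'}$ on the right-hand side satisfies $(C,b)\succ(C',b')$. The sequencing, internal-choice, spawn, and external-choice clauses recurse on structurally smaller behaviours under the same $C$, matching the first four clauses of $\succ$; the variable clause $\groundify{\beta}=\bigoplus\set{\groundify{b_i}}{\su{b_i}{\beta}\in C}$ recurses on each $b_i$ with $\su{b_i}{\beta}\in C$, matching $(C,\beta)\succ(C,b_i)$; and the recursive clause $\groundifyBase{\orec{b}{\beta}}{C\uplus\singleSu{b'}{\beta}}=\orec{\groundifyBase{b}{C\cup\singleSu{\tau}{\beta}}}{\beta}$ recurses on $b$ under the updated constraints, matching the final clause of $\succ$.

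Having established this alignment, I would invoke \cref{lem:fin-well-founded}: since $C$ is well-formed, $\succ$ is well-founded, so there is no infinite descending chain $(C,b)\succ(C_1,b_1)\succ\cdots$. Consider the tree of all recursive calls generated while computing $\groundifyBase{b}{C}$; each downward edge is a step in $\succ$, so no branch of this tree is infinite. Moreover the tree is finitely branching: each clause issues only finitely many recursive calls, since sequencing, internal choice, and spawn have fixed arity, external choice $\bechoice{}{\rho}{L_i}{b_i}$ ranges over the finite index set $I$ of the original finite behaviour $b$, and the variable clause produces the iterated choice $\bigoplus\set{\groundify{b_i}}{\su{b_i}{\beta}\in C}$ over a finite set because $C$ is finite. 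By König's lemma a finitely branching tree with no infinite branch is finite; hence $\groundifyBase{b}{C}$ is a well-defined term whose syntax tree is finite.

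For groundness I would argue, by the same induction, that $\groundifyBase{b}{C}$ contains no behaviour variable requiring resolution through $C$. The base clause returns $b$ only when $b$ already has no $\beta$; the composite clauses preserve groundness by the induction hypothesis; and the variable clause replaces $\beta$ by a finite internal choice of ground sub-behaviours, eliminating the variable. The only delicate case is the recursive clause, where the binder $\beta$ is retained in $\orec{\groundifyBase{b}{C\cup\singleSu{\tau}{\beta}}}{\beta}$. Here I would use the Behaviour-Compact condition of \cref{def:wf-constraints}: the unique constraint with right-hand side $\beta$ is replaced by $\su{\tau}{\beta}$, so every occurrence of $\beta$ inside the body translates to $\groundify{\beta}=\tau$, leaving $\beta$ only as the inert recursion marker of $\orec{-}{\beta}$. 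Thus no occurrence of a behaviour variable in the image depends on $C$, which is the sense in which the result is ground.

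The main obstacle will be the variable-expansion clause $\groundify{\beta}$, which is precisely where a naive structural induction on $b$ fails, since expanding a variable can reintroduce arbitrarily large behaviours. This difficulty is absorbed entirely by routing the induction through $\succ$ and appealing to \cref{lem:fin-well-founded}; the real content is that well-foundedness rests on Behaviour-Compactness, so that each traversal of a behaviour-constraint cycle strictly decreases the number of recursive constraints $\su{\orec{b}{\beta}}{\beta}$ in the current constraint set, and a finite well-formed $C$ contains only finitely many of them. Once those cycles are exhausted the remaining constraint dependencies are acyclic, and the expansion necessarily terminates, yielding the finite ground term as required.
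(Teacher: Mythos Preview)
Your proof is correct and follows essentially the same approach as the paper: both argue by well-founded induction on the ordering $\succ$ of \cref{def:fin-b-ordering} (via \cref{lem:fin-well-founded}), exploiting that each recursive call in the defining clauses of $\groundify{-}$ corresponds to a step of $\succ$. Your K\"onig's-lemma phrasing of finiteness is a harmless reformulation of the same induction, and your explicit discussion of the residual binder $\beta$ in $\orec{-}{\beta}$ (translated to $\tau$ in the body via the updated constraint $\su{\tau}{\beta}$) is a useful clarification that the paper leaves implicit.
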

\begin{proof}
By well-founded induction on $ \succ $.

The base case is when $ b $ is a ground term in $ \{\tau, \pusho{l}{\eta}{},
\popo{\rho}{!L} \ldots \}$. These are all ground terms in $ G $, and for these
terms the translation $ \groundify{b} = b $, which is finite and ground by
hypothesis.

If $ b \in\{ b_1;b_2, b_1\oplus b_2, \spawno{b_1}{}, 
\groundify {\bechoicetext{i\in I}{\rho}{L_i}{b_i}} \}$, then the lemma is proved
by the inductive hypothesis, since for example if $ b = b_1;b_2$, then $
\groundify{b_1} $ and $ \groundify{b_2} $ are finite ground terms, and
therefore $ \groundify{b_1};\groundify{b_2}$ is finite and ground too.

Because of well-formedness, there are two cases to consider when  $ b = \beta $:
either $ \beta $ is bound to a unique constraint $ b \neq \orec {b'} \beta $ in
$ C $, or it is bound to multiple $ b_i $ which are not recursive behaviours.
In the case that $ \su b \beta  $ is the only constraint on $ \beta $ in $ C
$, and we can write $ C $ ad $ C' \uplus \singleSu b \beta$.
By definition of translation we have $ \groundify \beta = \groundifyBase
{\orec{b'} \beta} {C' \uplus \singleSu b \beta } =  \orec{\groundifyBase
{b'} {C' \uplus \singleSu b \tau }} \beta$. Since 
$ (C \uplus \singleSu {\orec {b'} \beta} \beta, \orec {b'} \beta) \succ (C \cup
\singleSu \tau \beta, b' )$ holds by definition of $ \succ $, the lemma is
proved by inductive hypothesis. 
In the latter case, when $ \beta $ is bound to multiple non-recursive
behaviours $ b_i $, 
the set of all such $ b_i$ is finite by well-formedness, and 
the lemma is proved by the inductive hypothesis as in the
case $ b_1;b_2$.
\end{proof}

Having proved that the ground translation of a behaviour $ b $ always exists for
well-formed constraints $ C $, we show some property of the translation w.r.t.
the abstract semantics:
\begin{lemma}
 Let $ C $ be well-formed. 
 \begin{enumerate}
  \item if $ \dstep{\Delta}{b}{\Delta}{b'} $, then $
  \dstep{\Delta}{\groundify b}{\Delta'}{\groundify{b'}}$
   \item If $ \dstep{\Delta}{\groundify b}{\Delta'}{b''} $, then there exists a
   $ b' $ such that $ b'' = \groundify{b'}$.
 \end{enumerate}
\end{lemma}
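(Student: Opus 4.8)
I will prove the two parts separately, both by well-founded induction on the behaviour ordering $\succ$ of \cref{def:fin-b-ordering}, which is well-founded under well-formed $C$ by \cref{lem:fin-well-founded}, with an inner case split on the last transition rule applied. Part~2 is the easy, literal direction: it is really an \emph{invariant} that ground behaviours are closed under $\rightarrow_C$. Since $\groundify b$ is a ground term and $\groundify{-}$ is the identity on ground terms (its first defining clause), it suffices to observe that no rule of \cref{fig:abstract-interpr} ever introduces a free behaviour variable into its target: each right-hand behaviour is either $\tau$ or an immediate subterm of the redex (as in $\RefTirName{Plus}$, $\RefTirName{ICh}$, $\RefTirName{ECh}$, $\RefTirName{Seq}$), and the only rule that fabricates a behaviour out of $C$, namely $\RefTirName{Beta}$, cannot fire because its redex $\beta$ does not occur in a ground term. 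Hence $b''$ is again ground and $b''=\groundify{b''}$, so $b'=b''$ works.

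For Part~1 the workhorse observation is that, away from behaviour variables, $\groundify{-}$ is a homomorphism for every syntactic constructor ($\groundify{b_1;b_2}=\groundify{b_1};\groundify{b_2}$, $\groundify{\espawn b}=\espawn{\groundify b}$, and similarly for $\oplus$ and external choice, while the communication atoms, $\tau$ and $\tend$ are untouched). Every \emph{structural} rule — $\RefTirName{Seq}$, $\RefTirName{Tau}$, $\RefTirName{Push}$, $\RefTirName{Out}$, $\RefTirName{In}$, $\RefTirName{Del}$, $\RefTirName{Res}$, $\RefTirName{ICh}$, $\RefTirName{ECh}$, $\RefTirName{Spn}$, $\RefTirName{End}$, $\RefTirName{Plus}$ — inspects only the head constructor of $b$ and the top of $\Delta$, neither of which $\groundify{-}$ perturbs, so the same rule fires on $\groundify b$ with the same stack effect and the induction hypothesis on the strictly $\succ$-smaller immediate subterm closes the equation $b''=\groundify{b'}$ on the nose. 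The first genuinely inductive case is $\RefTirName{Rec}$ with $b=\orec{b_0}{\beta}$: here $\groundify{-}$ performs exactly the environment surgery demanded by the rule's side condition, $\groundifyBase{\orec{b_0}{\beta}}{C}=\orec{\groundifyBase{b_0}{C'}}{\beta}$ with $C'=(C\removeRHS\beta)\cup(\su{\tau}{\beta})$, using behaviour-compactness (\cref{def:wf-constraints}.3) to know that $\beta$'s only binding is the recursive one, so that $C\removeRHS\beta$ leaves precisely $\su{\tau}{\beta}$. Since $(C,\orec{b_0}{\beta})\succ(C',b_0)$, the induction hypothesis for $(C',b_0)$ — together with $C'$ being again well-formed, which holds because removing the unique recursive constraint destroys a cycle rather than creating one and $\su{\tau}{\beta}$ lies on no cycle — transports the step correspondence to the reachable state spaces and hence to the normalisation predicate, giving $\stackop{'}{b_0}{\epsilon}{}$ iff $\stackop{'}{\groundifyBase{b_0}{C'}}{\epsilon}{}$; both $\RefTirName{Rec}$ steps then emit $\tau=\groundify{\tau}$ and leave $\Delta$ fixed.

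The crux — and the place where the bald statement needs care — is $\RefTirName{Beta}$, since $\groundify{-}$ eagerly eliminates variables via $\groundify{\beta}=\bigoplus\set{\groundify{b_i}}{\su{b_i}{\beta}\in C}$. When $\beta$ carries two or more bindings, a step $\dstep{\Delta}{\beta}{\Delta}{b_j}$ is matched one-for-one by a $\RefTirName{Plus}$ selection $\dstep{\Delta}{\groundify{\beta}}{\Delta}{\groundify{b_j}}$, threading cleanly through the $\RefTirName{Seq}$ case. But when $\beta$ has a \emph{single} binding $b_1$ we have $\groundify{\beta}=\groundify{b_1}$ with no residual $\oplus$, so the administrative $\RefTirName{Beta}$ step has already been absorbed by $\groundify{-}$ and no ground transition corresponds to it: the ground side must \emph{stutter}, with $\groundify{b}=\groundify{b'}$ and $\Delta'=\Delta$. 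Because singleton bindings are pervasive — the constant schemas of \cref{sec:stage-1} each bind $\beta$ to a single atom — I would phrase Part~1 as a \emph{stuttering} (weak) simulation: a matching ground step \emph{or} equality of ground configurations. This is exactly the obstacle I expect to dominate the proof, and the refinement is precisely what the downstream finiteness argument needs: combined with the closure of Part~2 it makes every configuration reachable from $\mcconf{\epsilon}{b}$ a $\groundify{-}$-image of one reachable from $\mcconf{\epsilon}{\groundify b}$ and conversely, so the finite state space of the ground configuration transfers back to $b$.
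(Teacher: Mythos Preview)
Your analysis is considerably more thorough than the paper's, which dispatches the entire lemma with the phrase ``By rule induction.'' You have correctly identified the genuine obstruction in Part~1: when $\beta$ has a single binding $\su{b_1}{\beta}\in C$, the set $\set{\groundify{b_i}}{\su{b_i}{\beta}\in C}$ is a singleton and $\groundify{\beta}$ collapses to $\groundify{b_1}$, so the administrative step $\mcconf{\Delta}{\beta}\rightarrow_C\mcconf{\Delta}{b_1}$ via \RefTirName{Beta} has no one-step counterpart on the ground side. As you note, singleton bindings are the common case (each instantiation of a constant's type schema produces one), so this is not a corner case that can be waved away. Your repair---reading Part~1 as a stuttering simulation, with either a matching ground step or $\groundify{b}=\groundify{b'}$ and $\Delta=\Delta'$---is the right move and is exactly what the downstream size-based termination argument actually consumes.

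Your handling of \RefTirName{Rec} is also more explicit than the paper's: the observation that the defining clause $\groundifyBase{\orec{b_0}{\beta}}{C}=\orec{\groundifyBase{b_0}{C'}}{\beta}$ mirrors the rule's own environment surgery, together with behaviour-compactness forcing the removed constraint to be unique, is the real content here. Carrying the correspondence through the side condition $\stackop{'}{b_0}{\epsilon}{}$ requires recursing into the strictly $\succ$-smaller pair $(C',b_0)$, which your well-founded induction supplies but a plain rule induction on the transition derivation does not obviously license. On Part~2 your closure argument is correct and matches the literal statement, though it is worth flagging that it delivers something weaker than the ``full abstraction'' announced in the surrounding prose: you establish only that the target is ground, not that it is $\groundify{b'}$ for a $b'$ reachable from $b$.
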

\begin{proof}
By rule induction. 
\end{proof}

The termination of Algorithm $ \algoMC$ hinges on the finiteness of its input
behaviour. Even though the algorithm expands the input constraints set $ C $
with additional type and session constraints, the size of a ground term remains
constant:
 \begin{lemma}[Constraint expansion]\label{lem:fin-c-expand}
For any well-formed $ C $, if $ C' $ does not contain behavior constraints,
then $ \groundify{b} = \groundifyBase{b}{C \cup C'}$
for any $ b $.
\end{lemma}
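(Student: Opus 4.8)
The plan is to prove the statement by well-founded induction on the behaviour ordering $\succ$ of \cref{def:fin-b-ordering}, whose well-foundedness over well-formed $C$ is guaranteed by \cref{lem:fin-well-founded}. The key observation driving the whole argument is that the ground translation $\groundify{-}$ consults the ambient constraint set only through its \emph{behaviour} constraints: among the defining equations of \cref{def:fin-groundify}, the only clauses mentioning $C$ are the one for $\groundify{\beta}$, which reads off $\set{b_i}{\su{b_i}{\beta} \in C}$, and the one for $\groundifyBase{\orec{b}{\beta}}{-}$, which splits off and rewrites a constraint of the form $\su{b'}{\beta}$. Since $C'$ contains no behaviour constraints by hypothesis, $C$ and $C \cup C'$ carry exactly the same behaviour constraints, which is what the induction will exploit at each step.

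First I would dispatch the structural cases. When $b$ has no behaviour variables, both $\groundify{b}$ and $\groundifyBase{b}{C \cup C'}$ equal $b$ outright. For the composite behaviours $b_1;b_2$, $b_1 \oplus b_2$, $\spawno{b}{}$ and $\bechoicetext{i \in I}{\rho}{L_i}{b_i}$, the translation recurses into the immediate subterms without inspecting $C$, so the result follows immediately from the induction hypothesis applied to each subterm, using $(C, b) \succ (C, b_i)$ (and likewise for $C \cup C'$) from \cref{def:fin-b-ordering}.

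The interesting cases are the two that reference $C$. For $b = \beta$, the relation $(C, \beta) \succ (C, b_i)$ holds whenever $\su{b_i}{\beta} \in C$, so the induction hypothesis yields $\groundify{b_i} = \groundifyBase{b_i}{C \cup C'}$ for each such $b_i$; since the index set $\set{b_i}{\su{b_i}{\beta} \in C}$ coincides with $\set{b_i}{\su{b_i}{\beta} \in C \cup C'}$ (equal behaviour constraints), the two internal choices agree branch by branch. For $b = \orec{b_0}{\beta}$, I would write $C = C_1 \uplus \singleSu{b'}{\beta}$ and note that, because $C'$ has no behaviour constraints, $C \cup C' = (C_1 \cup C') \uplus \singleSu{b'}{\beta}$ with the \emph{same} recursive constraint $\su{b'}{\beta}$ being separated out. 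Both translations therefore replace this constraint by the dummy $\su{\tau}{\beta}$ and recurse on the body $b_0$, the first under $C_1 \cup \singleSu{\tau}{\beta}$ and the second under $(C_1 \cup C') \cup \singleSu{\tau}{\beta}$. Applying the induction hypothesis to $b_0$ (valid since $(C, \orec{b_0}{\beta}) \succ (C_1 \cup \singleSu{\tau}{\beta}, b_0)$) with the still-behaviour-constraint-free extra constraints $C'$ closes the case.

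The main obstacle will be this last case: one must check that inserting $C'$ commutes with the translation's rewriting of the constraint set — splitting off $\su{b'}{\beta}$ and substituting $\su{\tau}{\beta}$ — and this is precisely where the assumption that $C'$ contributes no behaviour constraints is used, as it guarantees that both the disjoint-union decomposition and the identity of the recursive constraint are preserved. Everything else is a routine matching of the inductive clauses of \cref{def:fin-groundify}.
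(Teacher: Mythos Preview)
Your proof is correct. The paper's own proof is the single line ``By structural induction on $b$,'' which is strictly speaking too coarse: in the clause $\groundify{\beta} = \bigoplus\set{\groundify{b_i}}{\su{b_i}{\beta}\in C}$ the recursion is into behaviours $b_i$ drawn from $C$, not into syntactic subterms of $\beta$, so plain structural induction does not directly justify the inductive hypothesis there. Your choice to induct on the well-founded ordering $\succ$ of \cref{def:fin-b-ordering} is exactly the right fix, and is almost certainly what the paper intends, given that $\succ$ was introduced precisely to reason about the recursion pattern of $\groundify{-}$ (and was used this way in the immediately preceding lemma). Your handling of the $\orec{b_0}{\beta}$ case---checking that adjoining a behaviour-constraint-free $C'$ commutes with splitting off and rewriting the recursive constraint---is the only place where care is needed, and you address it correctly.
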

\begin{proof}
By structural induction on $ b $. 
\end{proof}


\subsection{Finite state-space}
We conclude this section by showing that, given a well-formed $ C $,
all configurations $ \mc \Delta b $ always generate a finite state-space, i.e.
the set of reachable states from $ \mc \Delta b $ is finite.
We prove this result by designing a function that assigns an integer,
or \emph{size}, to any configuration $ \mc \Delta b $, and then show that the
size of a configuration always decreases after taking a step in the abstract
interpretation semantics. Since configurations of size 0 cannot take steps, and
since the size decreases after taking a step in the semantics, the number of
states that a finite configuration $ \mc \Delta b $ can reach is finite.

We first introduce  the size function on behaviours:
\begin{definition}[Behaviour size]\label{def:fin-behav-size}
For any behaviour $ b $, the behaviour size $ \bsize - :: \behavDomain
\rightarrow \mathcal N$ is the total function defined by the following
equations:
\[
\begin{array}{@{}l@{~}l@{\qquad}l@{~}l@{}}
   \bsize \tau &= 0 
 & \bsize \beta&= 0
\\
   \bsize {\pusho{l}{\eta}{}} &= 2
 & \bsize {\popo{\rho}{?\rho'}} &= 2
\\ 
   \bsize {\popo{\rho}{!T}} &= 1
 & \bsize {\popo{\rho}{?T}} &= 1
\\ \bsize {\popo{\rho}{!\rho'}} &= 1
 & \bsize {\popo{\rho}{!L_i}} &= 1
\\ \bsize {\bechoice{i\in I}{\rho}{L_i}{b_i}}  &= 1 + \sum_{i\in I}
     \bsize{b_i} 
 & \bsize{\orec b \beta } &= 1 + \bsize b
\\ 
\bsize{b_1;b_2 } &= 1 + \bsize {b_1} + \bsize {b_2}
 & \bsize{\spawno{b}{} }&=2 + \bsize{b}
\\ 
 \bsize{b_1\oplus b_2 } &= 1 + \bsize {b_1} + \bsize {b_2}
 &&
\end{array}
\]

\end{definition}
According to the definition, $ \tau $ is the behaviour with the smallest size,
zero. Most $ \kpop $ operations have size 1, except for the resume operation $
\popo{\rho}{?\rho'} $, which has size 2. Notice that $ \kpush $ has size 2 as
well. The reason for this difference is that these operations introduce new
frames on the stack in the abstract interpretation semantics, and therefore
have to be counted twice in order for the abstract interpretation semantics to
be always decreasing in size.
The size of the other behaviours is defined inductively.

We now introduce the size of a stack:
\begin{definition}[Stack size]\label{def:stack-size}
The size of a stack $ \Delta $, or $ \bsize{\Delta} $, is defined by the
following equations:
\begin{align*}
  \bsize{\epsilon } = 0 & &\bsize{\St l \eta} = 1 + \bsize \Delta
\end{align*}
\end{definition}

In short, the size of a stack is its length, or total number of frames.
We finally specify the size of configurations:
\begin{definition}[Configuration size]\label{def:conf-size}
The size of a configuration $ \mc \Delta b $, or $ \bsize {\mc \Delta b }$, is
the sum $ \bsize {\mc \Delta b } =   1 + \bsize \Delta + \bsize b $.
\end{definition}

An important property of $ \bsize - $ is that it is invariant to session types:
\begin{lemma}[Session substitution distributivity]\label{lem:fin-subst} 
For any session substitution $ \sigma $, $
\bsize{\mc \Delta b}= \bsize{\mc{\Delta\sigma} {b\sigma}}$. 
\end{lemma}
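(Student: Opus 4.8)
The plan is to reduce the statement to two independent invariance facts and then combine them through the definition of configuration size (Def.~\ref{def:conf-size}), namely $\bsize{\mc{\Delta}{b}} = 1 + \bsize{\Delta} + \bsize{b}$. So first I would prove that $\bsize{\Delta} = \bsize{\Delta\sigma}$ for every stack $\Delta$, and separately that $\bsize{b} = \bsize{b\sigma}$ for every behaviour $b$; the lemma is then immediate, since $\bsize{\mc{\Delta\sigma}{b\sigma}} = 1 + \bsize{\Delta\sigma} + \bsize{b\sigma} = 1 + \bsize{\Delta} + \bsize{b} = \bsize{\mc{\Delta}{b}}$. The conceptual reason both facts hold is that a session substitution $\sigma$ replaces only session variables $\psi$, and these occur solely inside session types $\eta$; in turn, a session type $\eta$ appears in a configuration only as the payload of a stack frame $(l:\eta)$ or inside a push behaviour $\pusho{l}{\eta}{}$, and both the stack-size function (Def.~\ref{def:stack-size}) and the clause $\bsize{\pusho{l}{\eta}{}} = 2$ (Def.~\ref{def:fin-behav-size}) are defined independently of $\eta$.

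For the stack I would argue by induction on the length of $\Delta$. The base case $\Delta = \epsilon$ is trivial because $\epsilon\sigma = \epsilon$. In the step $\Delta = (l:\eta)\cdot\Delta'$, substitution only rewrites the payload, so $\Delta\sigma = (l:\eta\sigma)\cdot\Delta'\sigma$ and hence $\bsize{\Delta\sigma} = 1 + \bsize{\Delta'\sigma} = 1 + \bsize{\Delta'} = \bsize{\Delta}$ by the induction hypothesis; the point is simply that $\sigma$ preserves the number of frames.

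For the behaviour I would use structural induction on $b$. The only leaf affected by $\sigma$ is the push $\pusho{l}{\eta}{}$, where $\sigma$ turns $\eta$ into $\eta\sigma$ but leaves the size at $2$; every other atomic behaviour ($\tau$, $\beta$, and the various $\popo{\rho}{\cdot}$ pops) has as its arguments an ordinary type $T$, a region $\rho$, a label $l$, or a choice label $L_i$, none of which contain session variables, so $\sigma$ acts as the identity on them and their sizes are unchanged. The composite cases $\bseq{b_1}{b_2}$, $b_1 \oplus b_2$, $\orec{b}{\beta}$, $\spawno{b}{}$ and the external choice $\bechoice{i\in I}{\rho}{L_i}{b_i}$ follow directly by the induction hypothesis, since each size clause adds a constant to the sizes of the immediate sub-behaviours and $\sigma$ commutes with these constructors.

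The only step requiring genuine care — rather than bookkeeping — is justifying the claim made in the first paragraph: that $\psi$ variables occur nowhere in a configuration except inside pushed or stacked session types. I would discharge this by inspecting the grammars of \cref{fig:type-syntax}: the type grammar for $T$ produces $\tses{\rho}$ but never a session type $\eta$, so the payloads $T$ of the input and output pops carry no session variables, and the behaviour grammar exhibits $\eta$ only under $\pusho{l}{\eta}{}$. With this observation in hand, both inductions are routine, and notably no interaction with the constraint environment $C$ is needed, because $\bsize{-}$ never consults $C$.
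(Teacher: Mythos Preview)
Your proposal is correct and takes essentially the same approach as the paper, which simply says ``By structural induction on $\mc \Delta b$.'' You have spelled out the details the paper elides: the decomposition via Def.~\ref{def:conf-size} into stack size and behaviour size, and the key observation that session variables $\psi$ occur only inside session types $\eta$, which in turn appear only in stack frames and in $\pusho{l}{\eta}{}$, neither of which contributes to $\bsize{-}$ through its $\eta$ component.
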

\begin{proof}
By structural induction on $ \mc \Delta b$. 
\end{proof}

The size of a configuration decreases strictly after each step in the
abstract interpretation semantics. This result provides a
useful induction principle to reason about Algorithm $ \algoMC $, which will be
used to prove a completeness result for $ \algoMC $. The following lemma
expresses this result:
\begin{lemma}  \label{lem:term-step-size-shrinks}
Let $ C $ be well-formed, and let $ b $ be a ground finite behaviour. If 
$ \dstep{\Delta}{b}{\Delta'}{b'} $, then $ \bsize {\mc \Delta b} > \bsize {\mc
{\Delta'} {b'}}$.
\end{lemma}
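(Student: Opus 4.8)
The plan is to prove the statement by rule induction on the derivation of $\dstep{\Delta}{b}{\Delta'}{b'}$, propagating the hypothesis that $b$ is ground to the sub-behaviours examined in the inductive cases. First I would record two elementary observations that keep the bookkeeping routine. Since $b$ is finite, $\bsize{b}$, $\bsize{\Delta}$ and hence $\bsize{\mc \Delta b}$ are well-defined natural numbers (Def.~\ref{def:fin-behav-size}, Def.~\ref{def:stack-size}, Def.~\ref{def:conf-size}). Moreover, inspecting these definitions, the configuration size depends only on the \emph{syntactic shape} of $b$ and on the \emph{length} of $\Delta$; it is insensitive to the session types stored in the frames of $\Delta$ (this is exactly Lem.~\ref{lem:fin-subst}). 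Consequently every rule that merely rewrites a session type in place without changing the number of frames (\RefTirName{Out}, \RefTirName{In}, \RefTirName{ICh}, \RefTirName{ECh}) need only be checked on its behaviour component.

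The heart of the argument is to see why groundness is required. The only rule that could \emph{increase} the size is \RefTirName{Beta}: it rewrites a bare behaviour variable $\beta$, for which $\bsize{\beta}=0$, into a behaviour $b'$ with $\sub{b'}{\beta}$ in $C$, and $\bsize{b'}$ may be strictly positive. I would discharge this by noting that a ground behaviour contains no occurrence of a bare $\beta$ in reducible position: the only place a behaviour variable survives the ground translation is the binder of a recursive behaviour $\orec{b}{\beta}$, never as a stand-alone reducible behaviour. Since every sub-behaviour into which a derivation descends (e.g.\ the $b_1$ in rule \RefTirName{Seq}, a syntactic subterm of $b$) is again ground, \RefTirName{Beta} is never applicable along the derivation and the problematic case simply does not arise. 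This is the step I expect to be the main obstacle in a conceptual sense: it is the precise reason the lemma is stated for ground behaviours rather than arbitrary ones, and everything after it is arithmetic.

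For the remaining rules I would verify the strict decrease case by case, observing that the size assignment of Def.~\ref{def:fin-behav-size} is calibrated so that each rule is balanced. The two stack-growing behaviours are given size $2$ to pay for the extra frame: in \RefTirName{Push} the behaviour drops from $2$ to $0$ while the stack gains a frame (net $-1$), and in \RefTirName{Res} the behaviour $\popo{\rho}{?l_r}$ drops from $2$ to $0$ while the stack gains a frame (net $-1$). The stack-shrinking rules \RefTirName{End} (behaviour unchanged, one frame removed, net $-1$) and \RefTirName{Del} (behaviour $\popo{\rho}{!\rho_d}$ from $1$ to $0$ and the delegated frame removed, net $-2$) also decrease. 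All other non-inductive rules leave the stack length fixed and strictly shrink the behaviour: \RefTirName{Out}, \RefTirName{In} and \RefTirName{ICh} turn a size-$1$ operation into $\tau$; \RefTirName{Tau} deletes a leading $\tau$; \RefTirName{Plus} and \RefTirName{ECh} discard the outer constructor together with the non-selected branches, so $\bsize{b_1 \oplus b_2} = 1 + \bsize{b_1} + \bsize{b_2} > \bsize{b_i}$ and $\bsize{\bechoice{j\in J}{\rho}{L_j}{b_j}} = 1 + \sum_{j\in J}\bsize{b_j} > \bsize{b_k}$ for $k \in J$; and \RefTirName{Rec}, \RefTirName{Spn} collapse $\orec{b}{\beta}$ and $\espawn{b}$ (of sizes $1+\bsize{b}$ and $2+\bsize{b}$) to $\tau$. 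Finally, in the inductive case \RefTirName{Seq} the premise $\dstep{\Delta}{b_1}{\Delta'}{b_1'}$ concerns the ground subterm $b_1$, so the induction hypothesis gives $\bsize{\Delta}+\bsize{b_1} > \bsize{\Delta'}+\bsize{b_1'}$; since the continuation $b_2$ contributes the same $\bsize{b_2}$ on both sides, we obtain $\bsize{\mc{\Delta}{b_1;b_2}} > \bsize{\mc{\Delta'}{b_1';b_2}}$, which closes the induction.
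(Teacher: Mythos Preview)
Your proposal is correct and follows essentially the same approach as the paper: a rule induction on the abstract-interpretation transition, dismissing \RefTirName{Beta} by groundness and verifying the arithmetic decrease for every other rule using the calibrated size function. Your explanation of why $\pusho{l}{\eta}{}$ and $\popo{\rho}{?l_r}$ carry weight $2$ (to pay for the pushed frame) and your more careful bookkeeping for \RefTirName{Del} are in fact slightly more explicit than the paper's own case analysis, but the structure is identical.
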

\begin{proof}
By rule induction. 

If Rule \RefTirName{End} is applied, then $     \dstep {\St l \tend } {b}
{\Delta} {b}$. By definition of size, $ \bsize {\mc {\St l \tend} b} =
\bsize{\St l \tend} + \bsize b = 1 + \bsize{\Delta} + \bsize{b} = 1 +
\bsize{\mc \Delta b} $, which proves the lemma.

By hypothesis $ b $ is a ground term, therefore Rule \RefTirName{Beta} cannot be 
applied.

If Rule  \RefTirName{Push} is applied, then $ 
\dstep {\Delta}           {\pusho{l}{\eta}{} }
          {\St {l}{\eta}}    {\tau}
$ holds. By definition, $ \bsize {\mc \Delta \pusho{l}{\eta}{ }} = 2 +
\bsize{\Delta} = 1 + \bsize{\St l \eta} = 1 + \bsize{\mc {\St l \eta} \tau} $,
which proves the lemma. The case for Rule \RefTirName{Res} is proved similarly.

If Rule \RefTirName{Out} is applied, then $
\dstep {\St {l}{!T.\eta}} {\popo{\rho}{!T'}} {\St {l}{\eta}} {\tau} $
holds. By definition, $ \bsize {\mc{\St {l}{!T.\eta}} {\popo{\rho}{!T'} }} =
\bsize{\St {l}{!T.\eta}} + 1 = 1 + \bsize{\Delta } + 1 = \bsize{\St l \eta} + 1
= \bsize{\mc {\St l \eta}{\tau}} $, which proves the lemma. The cases for Rule
\RefTirName{In} and \RefTirName{Del} are proved similarly.

If Rule \RefTirName{ICh} is applied, then $ \dstep {\Delta} {b_1\oplus b_2 } 
{\Delta}{b_i} $ holds for $ i \in \{1, 2\}$. By definition of size, then 
 $ \bsize {\mc \Delta {b_1\oplus b_2}} = \bsize{\Delta} + 1 + \bsize{b_1} +
 \bsize {b_2 } = 1 + \bsize{\mc \Delta {b_i}} + \bsize{b_j } $ for $ \{i, j\} =
 \{1,2\}$. The lemma is proved by $ 1 + \bsize{\mc \Delta {b_i}} + \bsize{b_j }
 > \bsize{\mc \Delta {b_i}} $ for $ i \in \{1,2\}$.
The cases for Rule \RefTirName{ECh}, \RefTirName{Rec}, \RefTirName{Spn} and
\RefTirName{Tau} are proved similarly.

If Rule \RefTirName{Seq} is applied, then $
\dstep {\Delta}          {b_1;b_2 
            } {\Delta'}{b_1';b_2}$ only if 
             $ \dstep {\Delta} {b_1}{\Delta'}{b_1'} $ holds.
By rule induction $ \bsize {\mc {\Delta} {b_1}} > \bsize{ \mc {\Delta'}{b_1'}}
$. By definition of size, $ \bsize {\mc {\Delta} {b_1;b_2}} = 1 + \bsize{\Delta}
+ \bsize {b_1} + \bsize{b_2} $;  by the previous inequality we have that 
$1  + \bsize{\Delta} + \bsize {b_1} + \bsize{b_2} >  
 1 + \bsize{\Delta'} + \bsize {b_1'} + \bsize{b_2} 
 = \bsize{\mc {\Delta'}{b_1';b_2}}
$, which proves the lemma.
\end{proof}

\subsection{Termination of $ \algoMC$}

In order to prove termination, some well-formedness properties of algorithm $
\algoMC$ need to be proved first. In particular, we want to show that $ \algoMC
$ returns well-formed outputs, given some well-formed arguments as input. 

A configuration $ \mc\Delta b$ is well-formed in configuration $ C $ when any
free variable occurring in $ \mc\Delta b$ is bound by some constraint in $ C $:

\begin{definition}[Variable well-formedness]
A variable $ \gamma $ is \emph{well-formed} in a constraints set $ C $ when, 
if $ \gamma \in \{\beta, \rho, \psi, \psiint, \psiext \} $, then 
there exists a constraint $ (g \subseteq \gamma) \in C $ such that all the free
variables in $ g $ are well-formed in $ C $.
\end{definition}

\begin{definition}[Configuration well-formedness]
A configuration $ \mcconf{\Delta}{b}$  is \emph{well-formed} 
%
in a constraint set $ C $ when all the free variables in 
$ \mcconf \Delta b $ are well-formed in $ C $.
\end{definition}

Given a well-formed constraints set $ C $ and a well-formed configuration $ \mc
\Delta {K[b]} $ in $ C $, Algorithm $ \algoMC $ returns a \emph{session
inference substitution} $ \sigma $ and a \emph{refined} constraints set $ C' $.

Following the approach of \cite[Sec. 2.2.5, p.51]{ANN}, session
inference substitution (or simply session substitution) is a total function from
session variables to session types, defined as follows:
\begin{definition}[Session inference substitution]
An \emph{inference substitution} $ \sigma $ is a total function from
session variables $ \psi $ to sessions $ \eta $.
The \emph{domain} of an inference substitution $ \sigma $
is $ dom(\sigma) = \set{\psi}{\sigma(\psi) \neq\psi } $ and its range is 
$ rg(\sigma) = \bigcup\set{FV(\sigma(\psi))}{\psi \in dom(\sigma)} $.
\end{definition}

It is easy to verify that Algorithm $ \algoMC $ only returns session
substitutions:
\begin{lemma}
If $ \mc\Delta {K[b]}$ is well-formed in $ C $ and $ \mcclause \Delta b k C =
(\sigma, C') $, then $ \sigma $ is a session inference substitution.
\end{lemma}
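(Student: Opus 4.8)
The plan is to prove this by a well-founded induction following the recursive structure of Algorithm $\algoMC$, justified by the termination guarantee already in place. By Lemma~\ref{lem:term-step-size-shrinks} each abstract-interpretation step strictly decreases configuration size once behaviours are ground-translated, so every recursive call of $\algoMC$ is made on strictly smaller data and the recursion is well-founded; induction on it is therefore sound. The induction hypothesis is that every recursive invocation $\mcclause{\Delta_0}{b_0}{K_0}{C_0} = (\sigma_0, C_0')$ returns a \emph{session inference substitution} $\sigma_0$, that is, a substitution whose domain consists only of session variables $\psi$ and whose range contains only session structures $\eta$.

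First I would isolate two closure facts on which the case analysis rests. (i) Session inference substitutions are closed under composition: if $\sigma$ and $\sigma'$ each map session variables to session structures, then so does $\sigma'\sigma$, since applying $\sigma'$ to a session structure again yields a session structure and $dom(\sigma'\sigma) \subseteq dom(\sigma)\cup dom(\sigma')$. (ii) Each helper function invoked by $\algoMC$---namely $\textsf{checkFresh}$, $\textsf{closeTop}$, $\textsf{finalize}$ and $\textsf{sub}$---returns a session substitution. For the first three this is immediate from their defining equations, which emit only the identity and maps of the form $\subst{\psi}{\tend}$. For $\textsf{sub}$ I would carry out a separate structural induction on its two session-structure arguments: each base equation emits either the identity or a single map $\subst{\psi}{\eta}$, and each recursive clause composes the substitutions returned by its sub-calls, so (i) closes that induction.

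With (i) and (ii) available, the main induction reduces to inspecting each clause of $\algoMC$. The terminal clause returns $\textsf{finalize}\,\Delta$, a session substitution by (ii). The clause for $\pusho{l}{\eta}{}$ composes $\textsf{checkFresh}$ with the substitution of its continuation call. Each communication clause---send and receive, select and offer, delegate and resume---returns either the identity or a single map such as $\subst{\psi}{!\alpha.\psi'}$ with $\alpha,\psi'$ fresh (whose domain is the single session variable $\psi$), optionally precomposed with the output of $\textsf{sub}$, and then composed with the substitution produced by the continuation. The sequencing and branching clauses compose the substitutions of their recursive calls directly, and the clauses for $\orec{b}{\beta}$ and $\espawn{b}$ do likewise after adjusting $C$. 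In every case the returned $\sigma$ is a composition of session substitutions, hence itself a session substitution by (i) together with the induction hypothesis.

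The step I expect to be the main obstacle is fact (ii) for $\textsf{sub}$, and specifically its internal- and external-choice clauses. There the auxiliary routine $\textsf{f}$ recurses over label index sets while rewriting the choice constraints $\ceq{\cdot}{\psiint}$ and $\ceq{\cdot}{\psiext}$ in $C$ and generating fresh $\alpha$, $\psi$, $\psiint$ and $\psiext$ variables. I would need to verify carefully that none of these constraint rewrites ever places a non-session variable in the domain of the emitted substitution: the fresh $\alpha$'s occur only inside session structures on the right-hand side and are never themselves substituted, while the fresh $\psiint$ and $\psiext$ are session-structure variables, so mapping a $\psi$ into them still counts as mapping into a session structure. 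Once this bookkeeping over the choice clauses is discharged, the remaining cases are routine.
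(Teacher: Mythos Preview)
Your proposal is correct and follows the same inductive skeleton as the paper, which disposes of this lemma in a single line: ``By induction on $\bsize{\mc\Delta{K[b]}}$.'' Your write-up is considerably more thorough than the paper's own argument: you make explicit the closure of session substitutions under composition, the behaviour of the helper functions, and the separate structural induction needed for \textsf{sub}, none of which the paper spells out. One minor remark: the well-foundedness you need is really that of Proposition~\ref{prop:mc-termination} (lexicographic on $\bsize{\mc\Delta{\groundify{K[b]}}}$ and $\bsize{\groundify b}$) rather than Lemma~\ref{lem:term-step-size-shrinks} alone, since some clauses of $\algoMC$ (e.g.\ the sequencing clause) keep the configuration size fixed while shrinking only the current behaviour; but this does not affect the validity of your argument.
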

\begin{proof}
By induction on $ \bsize{\mc\Delta {K[b]}}$. 
\end{proof}

During session type inference, Algorithm $ \algoMC $ might need to subsume
some type $ T $ inside an inferred session type to a more general type $ T' $.
For example, suppose that $ \algoMC$ has inferred session
type $ \eta = !T.\tend $ (for some $ T $) and a set $ C $ for an endpoint $ l $.
If the same endpoint is used in the behaviour $ b = \popo{l}{!T'}$, $ b$
respects session type $ \eta $ only if $ \subType {T'} T $ holds, according to
the abstract interpretation semantics of Fig. \ref{fig:abstract-interpr}.
Algorithm $ \algoMC $ deals with type subsumption by adding new type
constraints in the input set $ C $ in the constraints set $ C' $; for example,
it would add the constraint $ (\su {T'} T) $ in the previous example.
We say that $ C' $ is a \emph{refinement} of $ C $, in the sense $ C' $
is a superset of $ C $ that further specifies types and session types.

Constraints refinement is defined as follows:

\begin{definition}[Constraints refinement]\label{def:constr-ref}
A constraint set $ C $ is a refinement of constraint set $ C' $, or  
$ \cj{C}{C'} $, when:
\begin{itemize}
  \item for all constraints $ (\ceq{\psiint}{\eta'})$ in $ C' $, 
        there exists a session $ \eta $ such that $ \coType{\eta}{\eta'} $ and $
        \subEql \psiint \eta $ hold. 
  \item for all constraints $ (\ceq\psiext {\eta'})$ in $ C' $, 
        there exists session $ \eta $ such that $ \coType{\eta}{\eta'} $ and $
        \subEql \psiext \eta $ hold.
  \item if $ \subBase{C'}{\pusho{l}{\eta'}{}} \beta $, then
        $ \sub{\pusho{l}{\eta}{}} \beta $ and
        $ \coType{\eta}{\eta'} $
  \item for all other constraints $(\su{g}{g'} )$ in $ C' $, $ \sub{g}{g'} $
  holds
\end{itemize}
\end{definition}


Constraint refinements are unaffected by session substitutions:

\begin{lemma}[Substitution invariance]        \label{lem:sound-C-to-Csigma}                   
Let $ C $ and $ C ' $ be well-formed, and $ \sigma $ a session inference
substitution. If $ \cj{C}{C'}$, then $ \cj{C\sigma}{C'\sigma}$.
\end{lemma}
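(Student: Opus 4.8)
The plan is to prove the statement by unfolding the four clauses of the refinement relation (Def.~\ref{def:constr-ref}) for the target $\cj{C\sigma}{C'\sigma}$ and discharging each one by transporting along $\sigma$ the witness guaranteed by the hypothesis $\cj{C}{C'}$. Since a session inference substitution replaces only session variables $\psi$ by sessions and acts homomorphically on term structure, every constraint of $C'\sigma$ arises as $c\sigma$ for some $c \in C'$; in particular the choice variables $\psiint,\psiext$ are fixed points of $\sigma$, so a constraint $(\ceq{\psiint}{\eta'}) \in C'$ becomes $(\ceq{\psiint}{\eta'\sigma}) \in C'\sigma$, and likewise for $\psiext$ and for the push constraints tracked by clause~3. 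This bookkeeping lets me match each clause for $C'\sigma$ against the same clause for $C'$.

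The two technical facts I would isolate first, both by induction on the relevant derivation, are: (i) session substitution preserves constraint derivability, i.e.\ $C \vdash J$ implies $C\sigma \vdash J\sigma$ for any constraint $J$; and (ii) session substitution preserves subtyping, i.e.\ $\coType{\eta}{\eta'}$ implies $\coType{\eta\sigma}{\eta'\sigma}$, together with the analogous statement for type subtyping $\coType{T}{T'}$. Both are congruence properties: every rule generating the reflexive–transitive–compatible closure of $C$ (axiom membership, reflexivity, transitivity, and compatibility with each term constructor) and every rule of Def.~\ref{def:func-subtyping} commutes with applying $\sigma$, because $\sigma$ touches only $\psi$ variables and distributes over constructors. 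In the axiom cases I use that $c \in C$ gives $c\sigma \in C\sigma$, so that the substituted axiom is available in $C\sigma$. At this level session structures are finite, which keeps the subtyping congruence a routine structural induction.

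With (i) and (ii) in hand, each clause of $\cj{C\sigma}{C'\sigma}$ is immediate. For a general constraint $(\su{g}{g'}) \in C'\sigma$, trace it to $(\su{g_0}{g_0'}) \in C'$ with $g = g_0\sigma$ and $g' = g_0'\sigma$; the hypothesis gives $C \vdash \su{g_0}{g_0'}$, and (i) yields $C\sigma \vdash \su{g}{g'}$. For a constraint $(\ceq{\psiint}{\eta'}) \in C'\sigma$ coming from $(\ceq{\psiint}{\eta_0'}) \in C'$ with $\eta' = \eta_0'\sigma$, the hypothesis supplies $\eta$ with $\coType{\eta}{\eta_0'}$ and $C \vdash \subEql{\psiint}{\eta}$; then $\eta\sigma$ is the required witness, since (ii) gives $\coType{\eta\sigma}{\eta'}$ and (i) gives $C\sigma \vdash \subEql{\psiint}{\eta\sigma}$. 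The $\psiext$ clause and the clause governed by $\subBase{C'}{\pusho{l}{\eta'}{}}\beta$ are discharged identically, reading off the witness from $\cj{C}{C'}$ and pushing $\sigma$ through with (i) and (ii).

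The main obstacle I anticipate is proving fact (ii) cleanly: subtyping is parameterised by the constraint set, so the correct formulation is $C \vdash \coType{\eta}{\eta'} \Rightarrow C\sigma \vdash \coType{\eta\sigma}{\eta'\sigma}$, and one must verify the axiom rule $(\eta_1 \subseteq \eta_2) \in C$ (which maps to $(\eta_1\sigma \subseteq \eta_2\sigma) \in C\sigma$) together with the external-choice side-conditions on label sets, which are untouched by $\sigma$ because $\sigma$ never acts on labels or on $\psiint,\psiext$. I would also check that the well-formedness hypotheses on $C$ and $C'$ — assumed in the statement — are used only to keep the underlying derivability and subtyping notions well-behaved and survive in the base cases, so that they do not complicate the inductions driving (i) and (ii).
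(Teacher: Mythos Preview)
Your proposal is correct and essentially matches the paper's approach; the paper's own proof is the one-liner ``By induction on the size of $C$,'' which your clause-by-clause unfolding of Def.~\ref{def:constr-ref} together with the two auxiliary closure facts (i) and (ii) makes explicit.
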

\begin{proof}
By induction on the size of $ C $.
\end{proof}

Algorithm $ \algoMC $ progressively refines the input constraints set $ C $
by refining session variables $\psi$, assigning a lower session type to external
and internal choices bound to variables $ \psiext $ and $ \psiint $, and by
adding type constraints $ \su T {T'} $.
When the algorithm terminates, the resulting set $ C' $ is a refinement of $ C
$, after applying the inferred substitution $ \sigma $ to it. 

\begin{lemma}[Constraint refinement on sub-types]        
\label{lem:sound-C-sub-ext} 
If $ \mathsf{sub}(\eta_1, \eta_2, C) = (\sigma_1, C_1) $, 
then $ \cj{C_1}{C\sigma_1}$.
\end{lemma}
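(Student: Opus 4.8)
The plan is to prove the statement by well-founded induction on the recursion tree of the terminating call $\mathsf{sub}(\eta_1,\eta_2,C)=(\sigma_1,C_1)$, with a case analysis on the defining clause of $\mathsf{sub}$ (Section \ref{sec:algoSub}) that fires on the pair $(\eta_1,\eta_2)$. Since we are given that the call returns a value, its computation tree is finite, so every recursive invocation named in a clause's side conditions is a strictly smaller subtree and furnishes an induction hypothesis of the shape $\cj{C_i'}{C_i\sigma_i'}$ for the arguments of that subcall.

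Before the case analysis I would record three elementary closure properties of the refinement relation $\cj{\cdot}{\cdot}$ of Definition \ref{def:constr-ref}: (i) reflexivity, $\cj{C}{C}$, obtained by taking the witness $\eta=\eta'$ in the two choice clauses and invoking reflexivity of $\coType{\cdot}{\cdot}$ and $\sub{\cdot}{\cdot}$; (ii) monotonicity in the left (refining) argument, i.e. $\cj{C'}{C''}$ implies $\cj{C'\cup D}{C''}$, because every derivability obligation discharged on the left-hand set — $\sub{g}{g'}$, the existence of a witness with $\subEql{\psiint}{\eta}$, and the push clause — is preserved when the left set grows; and (iii) transitivity, $\cj{C'}{C''}$ and $\cj{C''}{C'''}$ imply $\cj{C'}{C'''}$, which composes the $\sub{\cdot}{\cdot}$ conditions by lifting $C''$-derivations to $C'$-derivations and composes the choice conditions by transitivity of $\coType{\cdot}{\cdot}$ together with the choice ordering rules. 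The already-available substitution invariance (Lemma \ref{lem:sound-C-to-Csigma}) is the fourth ingredient.

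With these in hand the routine cases go through quickly. The base cases $\mathsf{sub}(\tend,\tend,C)=(\id,C)$ and the two variable clauses $\mathsf{sub}(\psi,\eta,C)=(\sigma,C\sigma)$ give $C_1=C\sigma_1$ outright, so the goal $\cj{C_1}{C\sigma_1}$ is pure reflexivity (i). For $\mathsf{sub}(!T_1.\eta_1,!T_2.\eta_2,C)$ and its receive dual, the inner call yields $\cj{C_1}{C\sigma_1}$ by induction and the clause only adjoins the type constraint $\singleSu{T_2}{T_1}$ (resp. $\singleSu{T_1}{T_2}$) to the left set, so property (ii) finishes the case. The delegation and resumption clauses are the prototype for composing two recursive calls: from the induction hypotheses $\cj{C_1}{C\sigma_1}$ and $\cj{C_2}{C_1\sigma_2}$ I would apply Lemma \ref{lem:sound-C-to-Csigma} to the first to obtain $\cj{C_1\sigma_2}{C\sigma_1\sigma_2}$, observe that under the paper's composition convention the returned substitution satisfies $C(\sigma_2\sigma_1)=C\sigma_1\sigma_2$, and then conclude $\cj{C_2}{C(\sigma_2\sigma_1)}$ by transitivity (iii).

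The hard part will be the internal- and external-choice clauses, whose bodies are the auxiliary iterator $f$ walking the label index set. Here I would run a secondary induction on the size of the label set processed by $f$, proving at each step that the update to the $\psiint$ (resp. $\psiext$) choice constraint yields a \emph{more refined} constraint in the precise sense of Definition \ref{def:constr-ref}: when $f$ adjoins a newly discovered label to an internal choice, or demotes an active label to inactive in an external choice, the resulting choice is a subtype of the old one, so the existential witness $\eta$ required by the two choice clauses of the refinement definition is furnished by the ordering rules Inf-IChoice and Inf-EChoice; the steps of $f$ that instead recurse through $\mathsf{sub}$ on matching branches $\eta_{1k},\eta_{2k}$ are discharged exactly as in the delegation case, via substitution invariance and transitivity. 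The bookkeeping of which labels are active versus inactive, and verifying that no refinement obligation is lost when branches are merged, is where the argument is most delicate and will demand the most care.
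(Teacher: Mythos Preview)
Your proposal is correct and follows essentially the same route as the paper: structural/recursion-tree induction on $\mathsf{sub}$, with a secondary induction on the label index set for the internal- and external-choice iterators $f$, using that label-adjustments produce choice subtypes. The main difference is presentational: you make the closure properties of refinement (reflexivity, left-monotonicity, transitivity) and the appeal to Lemma~\ref{lem:sound-C-to-Csigma} explicit, whereas the paper uses them tacitly and dispatches the delegate/resume cases with a one-line ``directly by inductive hypothesis.''
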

\begin{proof}
By structural induction on $\eta_1$.

Function \textsf{sub} terminates when either $\eta_2$ has the same shape as $
\eta_1$ (i.e. if $ \eta_1 = !T_1.\eta_1'$ then $ \eta_2 = !T_2.\eta_2'$), or $
\eta_2 $ is a variable $ \psi_2 $. In the latter case $ \mathsf(\eta_1,
\psi_2, C) = (\sigma, C\sigma) $ with $ \sigma = \subst {\psi_2}{\eta_1} $, and
the lemma is trivially proved by $ \cj{C\sigma}{C\sigma}$.

Consider the former case, when $ \eta_1 $ and $ \eta_2 $ have the same shape.
When $ \eta_1 = \tend $ the lemma holds trivially. If $ \eta_1 $ is $
!T_1.\eta_1' $, then $ \eta_2 = !T_2.\eta_2' $; by induction we have 
$ \cj{C_1}{C\sigma_1 \cup \singleSu{T_2\sigma_1}{T_1\sigma_1} }$, which implies
$ \cj{C_1}{C\sigma_1} $ by definition of constraint refinement. The case for $
\eta_1 = ?T_1.\eta_1'$ is proved similarly.

The cases for delegate and resume hold directly by inductive hypothesis.

When $ \eta_1 $ is an internal choice, we need to show that $ f (I_2, C) =
(\sigma_1, C_1) $ implies $ \cj{C_1}{C\sigma_1} $. This can be easily proved by
induction of the size of $I_2$. 
The base case (line 21) is trivial. In the inductive case,
if a label $ k $ from $ I_2 $ is missing in $ I_1 $ (lines 22-25), then
$ \sichoicetextBase{i}{I_1 \cup \{k\} }{\eta}{1i}\oplus \eta_{2_k} $
is a subtype of 
$ \sichoicetextBase{i}{I_1  }{\eta}{1i} $ by definition (because $ I_1 \cup
\{k\} \subset I_1 $) and the lemma follows by inductive hypothesis. 
If $ k $ is in $ I_1 $, then the lemma follows directly by inductive hypothesis.

When $ \eta_1 $ is an external choice, we need to show that $ f (J_1 \cup J_2,
C) = (\sigma_1, C_1) $ implies $ \cj{C_1}{C\sigma_1} $. We prove this by
induction on the size of $ J_1 \cup J_2 $. 
Let $ \eta_1 = \sechoicetextBase{i}{I_1}{I_2}{\eta}{1i} $ and 
    $ \eta_2 = \sechoicetextBase{i}{J_1}{J_2}{\eta}{2i} $ be such that $ I_1
    \subseteq J_1 $.
The base case (line 36) is trivial.
In the inductive case, suppose that a label $ k $ is either in the inactive
labels $ J_2 $, or it is in both the active labels $ I_1 $ and $ J_1 $. Then 
it is sufficient to show that 
$ \eta_{1k} $ is a subtype of $ \eta_{2k} $, which holds by inductive hypothesis
on \textsf{sub}. If $ k $ is in the active labels $ I_1 $ but it
is in the inactive labels $ J_2 $, 
then removing $ k $ from $ I_1 $ and adding it
to the inactive labels $ I_2 $ makes $ I_1 \backslash \{k\} $ be a subset of $
J_1 $; under this condition the subsequent call $ \mathsf{f}(I \cup \{k\}, C_1)
$ is proved as in the previous case.
If $ k $ is neither in $ I_1 $ nor $ I_2 $, then $ k $ is added to the inactive
labels $ I_2 $ and the lemma is proved as in the previous case too.

\end{proof}

\begin{lemma}[Constraint refinement]         \label{lem:sound-C-ext}
If $ \algoMC(\mcconf{\Delta}{b}, C, K) = (\sigma_1, C_1) $, 
then $ \cj{C_1}{C\sigma_1}$.
\end{lemma}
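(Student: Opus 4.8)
The plan is to prove Lemma \ref{lem:sound-C-ext} by well-founded induction on the configuration size $\bsize{\mcconf{\Delta}{K[b]}}$ (Definitions \ref{def:fin-behav-size} and \ref{def:conf-size}), case-analysing on the clause of $\algoMC$ that fires. Before the induction I would record three elementary structural facts about the refinement relation of Definition \ref{def:constr-ref}. First, \emph{reflexivity}: $\cj{C}{C}$ holds, since every push and choice constraint refines itself and $\subt{}{}$ is reflexive on the remaining constraints. Second, \emph{transitivity}: from $\cj{C_1}{C_2}$ and $\cj{C_2}{C_3}$ we obtain $\cj{C_1}{C_3}$ by composing the witnessing subtypings clause by clause. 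Third, \emph{monotonicity}: extending $C$ with a single type constraint $(\su{T}{T'})$, or refining one $\psiint$/$\psiext$/push binding to a subtype, yields a set that refines $C$. Together with substitution invariance (Lemma \ref{lem:sound-C-to-Csigma}), these are the only properties of the relation the argument needs.

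For the induction, the base case is the termination clause, which returns $(\sigma_1, C_1) = (\mathsf{finalise}\,\Delta,\, C\sigma_1)$; here $C_1 = C\sigma_1$ and the conclusion is reflexivity. In every recursive clause the configuration passed to the recursive call is strictly smaller by Lemma \ref{lem:term-step-size-shrinks} (the clauses of $\algoMC$ mirror the transitions of Fig.~\ref{fig:abstract-interpr}), so the inductive hypothesis applies. The push clause leaves $C$ untouched and recurses, giving $\cj{C_1}{C\sigma_1}$ directly. A pop clause (send, receive, select) produces an intermediate substitution $\sigma_0 = \subst{\psi}{!\alpha.\psi'}$ (or its analogue) and an intermediate set $C' = C\sigma_0 \cup \{\su{T}{\alpha}\}$, then recurses to obtain $(\sigma_1', C_1)$ with $\cj{C_1}{C'\sigma_1'}$; I would conclude $\cj{C_1}{C(\sigma_1'\sigma_0)}$ by chaining $\cj{C'}{C\sigma_0}$ (monotonicity), its image under $\sigma_1'$ (Lemma \ref{lem:sound-C-to-Csigma}), and transitivity, noting $\sigma_1 = \sigma_1'\sigma_0$. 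The delegation clauses are identical except that the intermediate refinement $\cj{C'}{C\sigma_0}$ is supplied by Lemma \ref{lem:sound-C-sub-ext} applied to the internal \textsf{sub} call. The sequencing and branching clauses compose substitutions iteratively and accumulate constraints, each accumulation being a refinement by the inductive hypothesis plus monotonicity, chained exactly as in the pop case but iterated over the branches.

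The delicate clause is \RefTirName{Rec}: $\algoMC$ first replaces $(\su{\orec{b}{\beta}}{\beta})$ by $(\su{\tau}{\beta})$ to form $C'$, recurses on $\mcconf{\epsilon}{b}$ under $C'$, and then restores the original (now $\sigma_1$-substituted) $\beta$-constraint before continuing. The obstacle is that the inner refinement is derived against the mutated $C'$, whereas the conclusion is stated against $C\sigma_1$ carrying the original $\beta$-binding. I would discharge this by observing that the refinement relation touches $\beta$-constraints only through its ``other constraints'' clause, which is insensitive to whether $\beta$ is bound to $\orec{b}{\beta}$ or to $\tau$ provided the witnessing subtyping is available, and that the session variables $\psi$ refined by the inner call do not occur in the restored constraint; hence restoration preserves refinement and transitivity closes the case. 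The spawn clause is the same with $C$ unchanged, so it is immediate from the inductive hypothesis. The main effort throughout is therefore not any single case but the consistent threading of the composition $\sigma_1 = \sigma_1'\sigma_0$ through transitivity of refinement and Lemma \ref{lem:sound-C-to-Csigma}, and it is the recursive-behaviour clause, with its temporary mutation of $C$, where this threading must be justified most carefully.
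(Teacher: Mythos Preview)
Your overall strategy—case-analysing the $\algoMC$ clauses, threading substitutions through Lemma~\ref{lem:sound-C-to-Csigma}, and closing with transitivity of refinement—is essentially the paper's approach. But your induction measure is wrong, and this is a genuine gap.

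You propose to induct on $\bsize{\mcconf{\Delta}{K[b]}}$ and appeal to Lemma~\ref{lem:term-step-size-shrinks} to claim every recursive call is on a strictly smaller configuration. That lemma, however, is stated only for \emph{ground} behaviours (no $\beta$ variables), and the behaviours $\algoMC$ manipulates are not ground. In particular, the clause for a behaviour variable replaces $\beta$ by $b_0 = \bigoplus\{b_i \mid (\su{b_i}{\beta})\in C\}$, and since $\bsize{\beta}=0$ while $\bsize{b_0}>0$, your measure strictly \emph{increases} there. The sequencing clause is also not covered: it rewrites $\mcconf{\Delta}{K[b_1;b_2]}$ to $\mcconf{\Delta}{(b_2\cdot K)[b_1]}$, which by the definition of $K[-]$ is the same behaviour, so the measure does not decrease. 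Finally, the first two external-choice clauses re-invoke $\algoMC$ on the \emph{same} $b$ with only the stack or choice constraint modified; your measure is unchanged in those self-calls too. You never address the $\beta$ clause in your case list, and you do not explain why sequencing or the external-choice self-calls terminate.

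The paper's fix is to induct on $\bsize{\mcconf{\Delta}{\groundify{K[b]}}}$, the size after applying the ground translation of Definition~\ref{def:fin-groundify} (and, for sequencing, a secondary lexicographic component $\bsize{\groundify{b}}$, as in the termination proof of Proposition~\ref{prop:mc-termination}). Grounding expands each $\beta$ to its full choice up front, so the $\beta$ clause becomes a no-op on the measure; Lemma~\ref{lem:fin-c-expand} ensures that adding fresh type or choice constraints to $C$ does not change the grounded size; and the external-choice self-calls are handled by observing (as the paper does) that after one refinement step only the third clause applies, so the induction can be taken there. Once the measure is repaired, the rest of your case analysis—in particular your treatment of \RefTirName{Rec}, which is more explicit than the paper's—goes through.
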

\begin{proof}
By induction on $ \bsize{\mcconf {\Delta\sigma} {
\groundifyBase{(K[b])\sigma}{C_1}}} $.

Most cases follow directly from the inductive hypothesis, such as the case for
$ \kpush $:
\lstinputlisting[
    backgroundcolor=\color{white}, 
    escapeinside={(*}{*)},
    firstnumber=11,
    firstline=20,
    lastline=28]{codeSI2}
At line 14 we have that $ \mcclause {\stBase {l}{\eta\sigma_1}{\Delta_1}}
\tau {C\sigma_1}{K\sigma_1} = (\sigma_2, C_2) $. By inductive hypothesis, we
obtain directly $  \cj {C_2}{C\sigma_2\sigma_1} $.

The lemma is also trivial when $ b $ is an operation that sends
a type $ T $:
\lstinputlisting[
    backgroundcolor=\color{white}, 
    escapeinside={(*}{*)},
    firstnumber=16,
    firstline=30,
    lastline=42]{codeSI2}
At line 20 we have $ \mcclause { \St l {\psi'}\sigma_1 } \tau {
C\sigma_1 \cup \singleSu{T}{\alpha} } { K\sigma_1 }  = (\sigma_2, C_2) $. By
inductive hypothesis we obtain directly that $  \cj {C_2}{C\sigma_2\sigma_1
\cup \singleSu{T\sigma_2}{\alpha\sigma_2}
} $, which in turn implies  
$ \cj {C_2}{C\sigma_2\sigma_1} $
by definition of constraints refinement; this proves the lemma. 
The lemma is proved similarly when a type $ T $ is received (lines 25-32), and
when a session is resumed (lines 50-61). When 
the behaviour delegates a session (lines 34-48), the lemma is proved by Lem.
\ref{lem:sound-C-sub-ext}.

The case is more interesting when the behaviour $ b $ extends an inferred
internal choice with a new label:
%
%
\lstinputlisting[
    backgroundcolor=\color{white}, 
    escapeinside={(*}{*)},
    firstnumber=73,
    firstline=148,
    lastline=159]{codeSI2}
%
By definition of $ \algoMC $ at lines \ref{code:MC-ich-psiint-not} 
and 75, we have:
\begin{align*}
  C    
    &= C' \uplus \atomSu{
\ceq{\psiint}{\eta_1}}
    &
    \eta_1  &= \sichoice{j}{J}{\eta}
\\ C_1 
&= C' \uplus
\atomSu{\ceq{\psiint}{\eta_1'}}
&
\eta_1' &= \sichoice{j}{J,i}{\eta}
\end{align*} 
where $ i \not\in J $.
Notice that $ \suType{\eta_1'}{\eta_1} $ holds by definition of subtyping,
since $ J \subseteq J \cup \{i\} $.  
By Def. \ref{def:constr-ref} of constraint refinement, 
$ \cj {C_1}{C} $ holds too, because $ C' $ is contained in $ C $ verbatim, and
we have just proved that $ \suType {\eta_1'}{\eta_1}$. 

By the inductive hypothesis, if $ \algoMC $ terminates with  
$ (\sigma_2, C_2) $, then $ \cj {C_2}{C_1\sigma_2} $ holds. By 
Lem. \ref{lem:sound-C-to-Csigma}, the previous result 
$ \cj {C_1}{C} $ implies $ \cj {C_1\sigma_2}{C\sigma_2} $. By transitivity of
constraint refinement, we have $\cj {C_2}{C\sigma_2} $ and the lemma is proved.

Let now $ b  $ be an external choice, with $ b = \bechoicetext{}{\rho}{L_i}{b_i}
$.
For ease of exposition, let us first 
consider the case where the $\algoMC $ clause at line \ref{code:MC-psiext}
applies:
\lstinputlisting[
    backgroundcolor=\color{white}, 
    escapeinside={(*}{*)},
    firstnumber=89,
    firstline=191,
    lastline=203]{codeSI2}
In this case, the final constraints set $ C_n $ is obtained by recursively
refining the input $ C $ through invocations to $ \algoMC $ at line 93. 
By inductive hypothesis, we have that 
each set $ C_1, C_2, \ldots C_n $ is a successive refinement of $ C $,
i.e. $ \cj {C_n}{C_{n-1}\sigma_n} , \ldots, \cj {C_1}{C_0\sigma_0} $ hold.
Since $ C_0 = C $, the last refinement can be written as 
$ \cj {C_1}{C\sigma_0} $. By Lem. \ref{lem:sound-C-to-Csigma} this implies
$ \cj {C_1\sigma_1}{C\sigma_1\sigma_0} $. Since $ \cj{C_2}{C_1\sigma_1} $   
holds by inductive hypothesis, it also follows that
$ \cj{C_2}{C\sigma_1\sigma_0} $ by transitivity of constraint refinement.
By repeating this reasoning, 
we can conclude by transitivity of
constraint refinement that $ \cj{C_k}{C\sigma_k\ldots\sigma_0} $ for any $ k
\leq |I| $.
Since a well-formed behaviour is finite and $ b $ is well-formed by assumption,
the external choice in $ b $ has a finite set of labels $ L_i $ for $ i \in
I$. Therefore $ I $ is a finite set, and we can conclude that 
$ \cj{C_n}{C\sigma_n\ldots\sigma_0} $ for $ n = |I| $, which proves the lemma.

Suppose that either the clause at line \ref{code:MC-psiext-psi} 
or line \ref{code:MC-psiext-not} applies. Notice that in these two cases we
cannot apply the inductive hypothesis directly, since $ \algoMC $ is passed the
entire behaviour $ b $ as input. However, the only clause of $ \algoMC$ that
applies is the third clause at line \ref{code:MC-psiext}. In the first case
(line \ref{code:MC-psiext-psi}) the algorithm creates a new external choice
constraint in $ C $, by associating the session $
\sechoicetext{i}{I}{\emptyset}{\psi} $ to a fresh session variable $ \psiext $.
The first $ \algoMC $ clause cannot be called again, because the stack contains
$ \psiext $ in place of $ \psi $ (line 82). Notice that the set of inactive
labels is the empty set, therefore the second $ \algoMC $ clause does not
apply, since $ I = I_1 $ in this case and therefore $ I_3 = \emptyset$. 
Therefore the only $ \algoMC $ clause that applies is the the third one, and the
lemma is proved as in the first case. By a similar reasoning, the only clause
that applies to the sub-call to $ \algoMC $ in the first clause (line 87) is the
third clause, and the lemma is proved similarly.

\end{proof}

Before proving termination of $ \algoMC $, we prove some further
properties of the auxiliary functions from Sec. \ref{sec:algoFunc}, which $
\algoMC $ uses.
\begin{lemma}[\textsf{finalize}]\label{lem:alg-finalize}
For any well-formed stack $ \Delta $, $\mathsf{finalize}~\Delta $
terminates.
\end{lemma}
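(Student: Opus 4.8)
The plan is to prove termination by induction on the size $\bsize{\Delta}$ of the input stack (Def.~\ref{def:stack-size}), i.e.\ its number of frames. The key observation is that \textsf{finalize} is defined purely by case analysis on the top frame of $\Delta$, and that in each of its clauses the recursive call, if any, is made on the tail of the stack, which has strictly smaller size. Since $\bsize{-}$ takes values in $\mathcal{N}$, this gives a well-founded measure and the recursion must bottom out.

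First I would unfold the defining equations of \textsf{finalize}. If $\Delta = \epsilon$ then $\textsf{finalize}~\Delta = \id$ and the computation halts in one step, which is the base case. Otherwise $\Delta = (l{:}\eta)\cdot\Delta'$ with $\bsize{\Delta} = 1 + \bsize{\Delta'} > \bsize{\Delta'}$, and I would split on $\eta$. If $\eta = \tend$, then $\textsf{finalize}~\Delta$ reduces directly to $\textsf{finalize}~\Delta'$, which terminates by the induction hypothesis applied to the strictly smaller $\Delta'$. If $\eta = \psi$, then \textsf{finalize} computes $\sigma_1 = \subst{\psi}{\tend}$ in a single step, recurses as $\sigma_2 = \textsf{finalize}~\Delta'$ (terminating by the induction hypothesis on $\Delta'$), and returns the composition $\sigma_2\sigma_1$. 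In every recursive clause the argument is strictly smaller under $\bsize{-}$, so the recursion terminates after at most $\bsize{\Delta}$ steps.

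The only point requiring care is that \textsf{finalize} is a \emph{partial} function: no clause matches a top frame $(l{:}\eta)$ whose session type $\eta$ is neither $\tend$ nor a variable $\psi$ (e.g.\ $\eta = \outp T. \eta'$). In that situation \textsf{finalize} raises an error, which is itself a form of halting, so termination still holds; well-formedness of $\Delta$ is what additionally guarantees this error clause is not reached on a genuinely finished stack. The main obstacle is therefore purely a matter of bookkeeping---confirming that the implicit tail in each defining clause is strictly shorter than the input, and that raising an error counts as termination rather than divergence. There is no genuine difficulty, since \textsf{finalize} is structurally recursive on a finite stack.
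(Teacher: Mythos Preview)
Your proposal is correct and follows essentially the same approach as the paper: the paper proves termination by structural induction on $\Delta$, with the same case split on whether the top frame's session is $\tend$, a variable $\psi$, or anything else (in which case no clause matches and the function halts with an error). Your use of $\bsize{\Delta}$ as the induction measure is equivalent to structural induction here since $\bsize{-}$ is just the stack length.
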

\begin{proof}
By structural induction on $ \Delta $. If $ \Delta = \epsilon$, then $
\mathsf{finalize}~\epsilon $ terminates with the empty substitution (line 12).
If $ \Delta = \stBase l \eta {\Delta'}$, then the lemma follows by inductive
hypothesis when $ \eta = \tend$ or $ \eta = \psi $ (lines 13 and 14), or it
terminates with an error otherwise, since no other \textsf{finalize} clause
applies.
\end{proof}

\begin{lemma}[\textsf{checkFresh}]\label{lem:alg-checkFresh}
For any well-formed stack $ \Delta $ and label $ l $,
the function call $\mathsf{checkFresh}(l, \Delta )$ terminates. Moreover, if $ l
$ occurs in $ \Delta $ and the function call $ \mathsf{checkFresh}(l, \Delta ) =
(\sigma, \Delta') $, then $ \bsize{\Delta' } < \bsize{\Delta} $; otherwise $ \bsize{\Delta' } \leq
\bsize{\Delta} $.
\end{lemma}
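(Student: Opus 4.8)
The plan is to prove both claims simultaneously by induction on the number of frames of $\Delta$, i.e.\ on $\bsize{\Delta}$ (Def.~\ref{def:stack-size}), following the defining clauses of $\mathsf{checkFresh}$ in the order in which they are matched. The only property of $\bsize{-}$ that I need is that on stacks it merely counts frames and is therefore unchanged when a session substitution is applied to the sessions stored in the frames; this is immediate from Def.~\ref{def:stack-size} (and a fortiori from Lem.~\ref{lem:fin-subst}). Well-formedness of $\Delta$ is used only to guarantee that $\Delta$ is finite and that the clause analysis is exhaustive, so that the call either matches a clause or halts with an error; in neither situation does it diverge.

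First I would dispatch the three non-recursive clauses. For $\mathsf{checkFresh}(l,\epsilon)=(\id,\epsilon)$ the call returns at once, $\Delta'=\epsilon$, and $l$ does not occur in $\Delta$, so $\bsize{\Delta'}=0=\bsize{\Delta}$, which matches the ``otherwise'' case. For the clause whose top frame carries the finished session $\tend$ and for the clause $\mathsf{checkFresh}(l,(l:\psi)\cdot\Delta_0)=(\subst{\psi}{\tend},\Delta_0\subst{\psi}{\tend})$, the returned stack is the tail, so in both cases $\bsize{\Delta'}=\bsize{\Delta}-1$; since applying $\subst{\psi}{\tend}$ leaves the frame count unchanged, this is a strict decrease. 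The key observation is that a strict decrease discharges both disjuncts at once: it gives $\bsize{\Delta'}<\bsize{\Delta}$ when $l$ occurs in $\Delta$, and \emph{a fortiori} $\bsize{\Delta'}\le\bsize{\Delta}$ when it does not. Hence these clauses require no bookkeeping about whether $l$ occurs.

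The only recursive clause is $\mathsf{checkFresh}(l,(l':\eta)\cdot\Delta_0)=(\sigma,(l':\eta\sigma)\cdot\Delta_0')$ with $l'\neq l$ and $(\sigma,\Delta_0')=\mathsf{checkFresh}(l,\Delta_0)$. For termination, the recursive call is on the strictly shorter tail $\Delta_0$, so the induction on $\bsize{\Delta}$ applies and the call terminates. For the size bound I would invoke the induction hypothesis on $\Delta_0$: since $l'\neq l$, the label $l$ occurs in $\Delta$ iff it occurs in $\Delta_0$. If it does, the hypothesis gives $\bsize{\Delta_0'}<\bsize{\Delta_0}$, whence $\bsize{\Delta'}=1+\bsize{\Delta_0'}<1+\bsize{\Delta_0}=\bsize{\Delta}$; if it does not, the hypothesis gives $\bsize{\Delta_0'}\le\bsize{\Delta_0}$, whence $\bsize{\Delta'}\le\bsize{\Delta}$. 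In both branches the disjunction is preserved, using once more that $\bsize{(l':\eta\sigma)\cdot\Delta_0'}=1+\bsize{\Delta_0'}$ is insensitive to $\sigma$.

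The argument is essentially a routine structural induction, and I expect no serious obstacle. The one point that requires care is keeping the disjunctive conclusion (strict versus non-strict) aligned with whether $l$ occurs in $\Delta$: the cleanest way to handle it, as above, is to notice that the two frame-dropping clauses already yield a strict decrease and therefore discharge \emph{both} disjuncts uniformly, so that the occurrence bookkeeping is needed only in the recursive clause, where the side condition $l'\neq l$ lets the induction hypothesis transport it verbatim.
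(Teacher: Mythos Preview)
Your proposal is correct and follows the same structural-induction approach as the paper's proof. In fact, your treatment of the disjunction between the strict and non-strict size bounds is more careful than the paper's own proof, which glosses over the ``otherwise $\bsize{\Delta'}\le\bsize{\Delta}$'' branch in the recursive case; your observation that the frame-dropping clauses yield a strict decrease and hence discharge both disjuncts uniformly is the cleanest way to organise the argument.
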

\begin{proof}
By structural induction on $ \Delta $. The base case is $ \Delta = \epsilon $,
whereby $ \mathsf{checkFresh} $ terminates with $ (\id, \epsilon) $ (line 2),
and the second part of the lemma is vacuously true, since $ l $ cannot occur in
the empty stack $ \epsilon $. In the inductive case when $ \Delta =
\stBase{l'}{\eta}{\Delta'}$, \textsf{checkFresh} terminates immediately 
with a
smaller stack 
if $ l = l' $ and either $ \eta = \tend $ (line 3) or $ \eta = \psi $ (line
4).
If $ l \neq l' $, then \text{checkFresh}  terminates with a smaller stack by
inductive hypothesis. Otherwise it terminates with an error (because no other
clauses apply) and the second condition of the lemma is vacuously true.
\end{proof}

\begin{corollary}[\textsf{closeTop}]\label{cor:alg-closeTop}
For any well-formed stack $ \Delta $, $\mathsf{closeTop}~ \Delta$
terminates. Moreover, if $ \mathsf{closeTop}~\Delta = (\sigma, \Delta') $,
then $ \bsize{\Delta' } < \bsize{\Delta} $.
\end{corollary}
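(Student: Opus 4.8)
The plan is to observe that $\mathsf{closeTop}$ has a single defining clause which simply delegates to $\mathsf{checkFresh}$, namely $\mathsf{closeTop}(\St{l}{\eta}) = \mathsf{checkFresh}(l, \St{l}{\eta})$. Consequently the entire corollary reduces to the already-established \cref{lem:alg-checkFresh}. So the proof is a short corollary argument: unfold the definition of $\mathsf{closeTop}$, then invoke the termination and size properties of $\mathsf{checkFresh}$ with the right instantiation of its arguments.

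First I would perform a trivial case split on whether the well-formed input stack $\Delta$ is empty. If $\Delta = \epsilon$, no clause of $\mathsf{closeTop}$ matches, so the call terminates (with an error), and since it does not return a pair $(\sigma,\Delta')$ the size claim holds vacuously. The main case is $\Delta = \St{l}{\eta}$, i.e.\ $\Delta = (l{:}\eta)\cdot\Delta_0$ for some top frame. Here $\mathsf{closeTop}\,\Delta$ unfolds to $\mathsf{checkFresh}(l, \St{l}{\eta})$, so termination is immediate from the termination part of \cref{lem:alg-checkFresh}. For the strict size decrease, the key observation is that the label $l$ passed as the first argument to $\mathsf{checkFresh}$ is precisely the label of the top frame of the stack $\St{l}{\eta}$ given as its second argument; hence $l$ \emph{occurs} in that stack. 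This places us in the first disjunct of \cref{lem:alg-checkFresh}, which guarantees that whenever $\mathsf{checkFresh}(l, \St{l}{\eta}) = (\sigma, \Delta')$ we have $\bsize{\Delta'} < \bsize{\Delta}$. Substituting back, this is exactly the claim $\bsize{\Delta'} < \bsize{\Delta}$ for $\mathsf{closeTop}\,\Delta = (\sigma,\Delta')$.

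The only point requiring any care — and what I expect to be the ``hard'' part, though it is genuinely routine — is justifying that we land in the strict-inequality branch of \cref{lem:alg-checkFresh} rather than the weak ($\leq$) branch. This hinges entirely on the syntactic fact that $\mathsf{closeTop}$ feeds $\mathsf{checkFresh}$ the very label sitting at the top of the stack it is inspecting, so the occurrence hypothesis is automatically met. No induction or computation is needed beyond unfolding one definition and reading off the appropriate case of the preceding lemma, so I would keep the proof to a single sentence invoking \cref{lem:alg-checkFresh} after noting $l \in \St{l}{\eta}$.
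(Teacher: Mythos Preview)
Your proposal is correct and follows exactly the paper's approach: the paper's proof is the single line ``By Lem.~\ref{lem:alg-checkFresh},'' and your argument simply unpacks why that lemma applies in its strict-inequality case (namely because $\mathsf{closeTop}$ passes the top label $l$ to $\mathsf{checkFresh}$, so $l$ trivially occurs in the stack). Your case split on $\Delta = \epsilon$ and the explicit justification of the occurrence hypothesis are fine elaborations of what the paper leaves implicit.
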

\begin{proof}
By Lem. \ref{lem:alg-checkFresh}.
\end{proof}

\begin{lemma}[\textsf{sub}]\label{lem:fin-sub}
For any session $ \eta_1, \eta_2 $ and well-formed $ C $, $
\mathsf{sub}(\eta_1, \eta_2, C) $ terminates. 
\end{lemma}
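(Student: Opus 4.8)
The plan is to prove termination by exhibiting a well-founded measure on the arguments that strictly decreases on every recursive call of $\mathsf{sub}$ (and of its local helper $\mathsf{f}$). First I would dispose of the easy clauses: $\mathsf{sub}(\tend,\tend,C)$ and the two variable clauses $\mathsf{sub}(\psi,\eta,C)$ and $\mathsf{sub}(\eta,\psi,C)$ return immediately with no recursive call, so termination is trivial; the send and receive clauses recurse on $(\eta_1,\eta_2)$, which are strict subterms of $(!T_1.\eta_1,!T_2.\eta_2)$ (and dually for $?$) with no substitution applied beforehand, so the structural size strictly decreases and the free–variable count is unchanged there.

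The crux, and the main obstacle, is the delegation and resumption clauses, where the second recursive call is $\mathsf{sub}(\eta_1\sigma_1,\eta_2\sigma_1,C_1)$ with $\sigma_1$ produced by the first call. Since $\sigma_1$ may map a session variable to a compound session, the structural size of $\eta_i\sigma_1$ can exceed that of $\eta_i$, so a naive induction on term size fails. To handle this I would use a lexicographic measure $(V,S)$, where $V$ is the number of distinct free session variables occurring in $(\eta_1,\eta_2,C)$ and $S$ is the structural size of $(\eta_1,\eta_2)$. The key observation is that $\mathsf{sub}$ is a unification-style procedure: every non-identity substitution it produces arises from a base clause $\mathsf{sub}(\psi,\eta,\cdot)=[\psi\mapsto\eta]$, whose range $\eta$ is a subterm of the current arguments and hence ranges only over variables already counted in $V$. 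Consequently, applying such a $\sigma_1$ either leaves $V$ unchanged—in which case $S$ has already strictly dropped because the leading delegation constructor was consumed before the call—or strictly decreases $V$ by eliminating a solved variable. This is the standard termination argument for first-order unification. I would adapt the size bookkeeping of Lemma~\ref{lem:fin-subst}, currently stated for configurations, to session types, observing that session substitution is size-preserving except insofar as it removes a variable occurrence; finiteness of $V$ follows from well-formedness of $C$.

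The remaining cases are the internal and external choice clauses, which delegate to the helper $\mathsf{f}$ ranging over the finite label sets $I_2$ (respectively $J_1\cup J_2$). Here I would run a secondary induction on these index sets, adding a third lexicographic component that counts the labels still to be imported or rebalanced (those in $I_2\setminus I_1$ for internal choice, or still to be moved between the active and inactive partitions for external choice). Each $\mathsf{f}$-step either strictly shrinks the index argument—when it recurses on a branch via $\mathsf{sub}(\eta_{1k}\sigma,\eta_{2k}\sigma,\cdot)$ on structurally smaller continuations—or it keeps the index fixed while strictly decreasing this import count by adjusting the relevant choice constraint in $C$; since $C$ is well-formed, it binds each choice variable to a finite choice, so both counts are finite. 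With every clause shown to strictly decrease the composite lexicographic measure, its well-foundedness yields termination.
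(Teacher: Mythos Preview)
Your proposal is correct and considerably more careful than the paper's own proof, which is literally the single line ``By structural induction on $\eta_1$ and $\eta_2$.'' You have identified exactly the point at which that one-liner is glib: in the delegation and resumption clauses the second recursive call is on $(\eta_1\sigma_1,\eta_2\sigma_1)$, and since $\sigma_1$ may replace a variable by a compound session drawn from $\eta_{d1}$ or $\eta_{d2}$, the sum of term sizes can increase. Your remedy---a lexicographic measure with the number of unsolved session variables as the primary component and structural size as the secondary one---is precisely the standard Robinson-style termination argument for first-order unification, and it is the right way to make the paper's hand-wave rigorous. The paper simply leaves this to the reader.

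Two small remarks. First, your appeal to Lemma~\ref{lem:fin-subst} is slightly misplaced: that lemma concerns the $\bsize{-}$ measure on configurations and stacks, which by design ignores session types entirely (e.g.\ $\bsize{\pusho{l}{\eta}{}}=2$ for every $\eta$), so it does not transfer to a size notion on sessions. You would need to define a separate structural size for session types, which is routine. Second, your argument that a non-identity $\sigma_1$ strictly decreases $V$ tacitly assumes an occurs check: without it, $[\psi\mapsto\eta]$ with $\psi\in\fv(\eta)$ would fail to eliminate $\psi$. The algorithm as written contains no explicit occurs check, so strictly speaking the lemma as stated for \emph{arbitrary} $\eta_1,\eta_2$ needs this caveat; the paper's one-line proof is silent on the point as well, so you are in no worse shape, and in the intended use (sessions produced by $\algoMC$, where fresh variables are introduced for continuations) the issue does not arise.
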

\begin{proof}
By structural induction on $ \eta_1 $ and $ \eta_2 $.
\end{proof}

We can now prove termination of $ \algoMC $:
\begin{proposition}[Termination of $ \algoMC $]\label{prop:mc-termination}
For any well-formed $ C $ and $ \mc \Delta {K[b]}$, $
\mcclause{\Delta}{b}{K}{C} $ terminates.
\end{proposition}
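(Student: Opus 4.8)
The plan is to prove termination by well-founded induction on the size of the \emph{ground translation} of the input configuration, i.e.\ on the natural number $\bsize{\mc{\Delta}{\groundify{K[b]}}} \in \mathcal N$, where $\groundify{-}$ is taken with respect to the well-formed input $C$ (\cref{def:fin-groundify}) and $\bsize{-}$ is the configuration size of \cref{def:conf-size}. The first thing to establish is that this measure is a genuine, finite quantity: since $C$ is well-formed, the finiteness lemma for the ground translation proved earlier in \cref{sec:termination} guarantees that $\groundify{K[b]}$ is a \emph{finite} ground behaviour, so $\bsize{\mc{\Delta}{\groundify{K[b]}}}$ is finite and the induction is over the usual well-founded order on $\mathcal N$.

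Next I would record that this measure is robust under the two operations $\algoMC$ performs on its arguments between recursive calls. First, whenever the algorithm threads a session substitution $\sigma$ through a recursive call, \cref{lem:fin-subst} gives $\bsize{\mc{\Delta\sigma}{b\sigma}} = \bsize{\mc{\Delta}{b}}$, so substitutions never affect the measure. Second, whenever $\algoMC$ refines $C$ to some $C'$ by adding only type and session constraints (the send, receive, resume, delegate and choice clauses, via \textsf{sub} and direct additions, as tracked in \cref{lem:sound-C-ext}), no behaviour constraints are added, so by \cref{lem:fin-c-expand} the ground translation of every behaviour is unchanged, $\groundify{b} = \groundifyBase{b}{C'}$. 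Hence the measure of each recursive call is exactly the configuration size of the \emph{abstract-interpretation successor} of $\mc{\Delta}{\groundify{K[b]}}$.

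With this in place, the induction is carried out by matching each clause of $\algoMC$ against the rules of \cref{fig:abstract-interpr}, using the unique decomposition $K[b]$ into a head that is not a sequential composition. The base clauses terminate outright: successful termination invokes $\textsf{finalize}$, which halts by \cref{lem:alg-finalize}, and the stuck/failure clauses halt immediately. Every head-handling clause mirrors a single abstract step $\dstep{\Delta}{\groundify{b'}}{\Delta'}{\groundify{b''}}$ --- \RefTirName{Push} (calling $\textsf{checkFresh}$, \cref{lem:alg-checkFresh}), \RefTirName{Out}, \RefTirName{In}, \RefTirName{Del} (calling $\textsf{sub}$, \cref{lem:fin-sub}), \RefTirName{Res}, \RefTirName{ICh}, \RefTirName{Seq}, \RefTirName{Plus} and \RefTirName{Tau}. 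In each case \cref{lem:term-step-size-shrinks} applies to the ground behaviour and yields $\bsize{\mc{\Delta}{\groundify{K[b]}}} > \bsize{\mc{\Delta'}{\groundify{K'[b'']}}}$, so the induction hypothesis discharges the recursive call; the auxiliary calls to $\textsf{checkFresh}$, $\textsf{closeTop}$ and $\textsf{finalize}$ terminate and do not increase the measure (\cref{lem:alg-checkFresh,cor:alg-closeTop,lem:alg-finalize}).

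The two delicate cases, which I expect to contain the real work, are the recursive-behaviour clause and the choice-installation clauses. For \RefTirName{Rec} the algorithm replaces $(\su{\orec{b}{\beta}}{\beta})$ by $(\su{\tau}{\beta})$ before recursing on the body $\mc{\epsilon}{b}$; this is exactly the modification built into the ground-translation equation for $\orec{b}{\beta}$ in \cref{def:fin-groundify}, so $\groundifyBase{b}{(C\removeRHS{\beta})\cup\{\su{\tau}{\beta}\}}$ is a strict sub-term of $\groundify{\orec{b}{\beta}}$ and the body call has strictly smaller measure, while the ensuing continuation corresponds to the \RefTirName{Rec} step $\dstep{\Delta}{\orec{b}{\beta}}{\Delta}{\tau}$, again smaller by \cref{lem:term-step-size-shrinks}. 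The genuinely awkward point is the pair of choice-installation clauses (a stack top $\psi$, $\psiint$ or $\psiext$ for which an internal/external choice must first be recorded in $C$): there $\algoMC$ re-invokes itself on the \emph{same} behaviour $b$, so $\bsize{\mc{\Delta}{\groundify{K[b]}}}$ does not decrease. I would handle this by refining the measure to a lexicographic pair whose second component counts the stack frames whose session type is not yet a choice variable; each installation step strictly decreases this component while leaving the primary component fixed, and (as already observed in the proof of \cref{lem:sound-C-ext}) once a $\psiint$/$\psiext$ constraint has been installed only the productive choice clause can fire, which decreases the primary component. Well-foundedness of the lexicographic order then closes the argument.
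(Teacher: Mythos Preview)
Your overall strategy is the paper's, and most of the case analysis is right, but there is a genuine gap in the \emph{sequencing} clause (and, for the same reason, in the $\beta$-variable clause). You list \RefTirName{Seq} among the head-handling clauses that ``mirror a single abstract step'' to which \cref{lem:term-step-size-shrinks} applies, but the $\algoMC$ sequencing clause does not take any abstract step at all: on input $(\mc{\Delta}{b_1;b_2},C,K)$ it simply recurses on $(\mc{\Delta}{b_1},C,b_2\cdot K)$. Since $(b_2\cdot K)[b_1]=K[b_1;b_2]$, the total behaviour $K[b]$ is literally unchanged, so your primary measure $\bsize{\mc{\Delta}{\groundify{K[b]}}}$ is unchanged too. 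Your proposed secondary component (the number of stack frames not yet carrying a choice variable) is also unchanged, because $\Delta$ is untouched; hence neither coordinate of your lexicographic order decreases. The $\beta$ clause has the same defect: $\algoMC$ replaces $\beta$ by $\bigoplus\{b_i\mid(\su{b_i}{\beta})\in C\}$, and since $\groundify{\beta}=\bigoplus\groundify{b_i}$ the ground translation of $K[b]$ is preserved verbatim.

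The paper closes exactly this gap by taking the \emph{second} component of the lexicographic order to be $\bsize{\groundify{b}}$, the size of the \emph{head} behaviour alone (not of $K[b]$). In the sequencing clause this drops from $\bsize{\groundify{b_1;b_2}}$ to $\bsize{\groundify{b_1}}$ while the primary stays constant, and the $\beta$ case is then handled by unrolling one more step to the internal-choice clause. Your treatment of the choice-installation clauses, on the other hand, is essentially what the paper does (the paper, too, argues that after installing a $\psiint$/$\psiext$ constraint only the productive clause can fire); if you want a fully formal measure you can keep your installation counter as a \emph{third} lexicographic component, but you still need $\bsize{\groundify{b}}$ as the second to deal with sequencing.
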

\begin{proof}
By lexicographic induction on $ \bsize{\mc \Delta {\groundify{K[b]}}} $
and $ \bsize{\groundify b} $. We proceed by case analysis on the $
\algoMC $ clauses in Sec. \ref{sec:algoMC}.

\begin{itemize}
\item \textbf{Ended session}:
  \lstinputlisting[
    backgroundcolor=\color{white}, 
    escapeinside={(*}{*)},
    firstline=1,
    lastline=5]{codeSI2}

This case follows by inductive hypothesis, since $ \bsize{\Delta} < \bsize
{\St l \tend } $ by Def. \ref{def:stack-size}.\\

  \item \textbf{Tau}: 
\lstinputlisting[
    backgroundcolor=\color{white}, 
    escapeinside={(*}{*)},
    firstnumber=4,
    firstline=7,
    lastline=12]{codeSI2}

By Lem. \ref{lem:alg-finalize}, \textsf{finalize} terminates for any $ \Delta $,
therefore algorithm $ \algoMC $ always terminates and the proposition is
proved.\\

  \item \textbf{Tau sequence}:
\lstinputlisting[
    backgroundcolor=\color{white}, 
    escapeinside={(*}{*)},
    firstnumber=8,
    firstline=14,
    lastline=18]{codeSI2}

By definition of continuation stack, $ b \cdot K[\tau] = K[\tau;b] $.
By Definition \ref{def:fin-behav-size} of behaviour size, $ \bsize {K[\tau;b]} =
1 + \bsize {K[b]} $.
Therefore the configuration $ \mc \Delta {(b \cdot K)[\tau]}$ has greater size
than configuration $ \mc \Delta {K[b]} $, and the proposition follows by
inductive hypothesis.\\

  \item \textbf{New session push}:
\lstinputlisting[
    backgroundcolor=\color{white}, 
    escapeinside={(*}{*)},
    firstnumber=11,
    firstline=20,
    lastline=28]{codeSI2}

By Lem. \ref{lem:alg-checkFresh}, $ \mathsf{checkFresh}(l, \Delta) $ always
terminates, and if it terminates successfully then $ \Delta_1 $ has smaller or
equal size to $ \Delta $. By Definition \ref{def:conf-size} of configuration
size and by Lem. \ref{lem:fin-subst}:
\begin{align*}
\bsize{\mc
\Delta {K[\pusho{l}{\eta}{}}]}
 &=  1 + 2 + \bsize {\Delta} + \bsize{K[\tau]}
\\& \geq 1 + 2 + \bsize{\Delta_1} 
               + \bsize{ {K\sigma_1[\tau]}}
\\& = 1 + 1 + \bsize{\stBase {l}{\eta\sigma_1}{\Delta_1}} 
               + \bsize{ {K\sigma_1[\tau]}}
\\&> 1 + \bsize{\stBase {l}{\eta\sigma_1}{\Delta_1}} + \bsize{
{K\sigma_1[\tau]}} \\&= \bsize{\mc{\stBase {l}{\eta\sigma_1}{\Delta_1}}
{K\sigma_1[\tau]}}
\end{align*}

and the proposition is proved by the inductive hypothesis.
Recall that, by definition of size, $ 1 + \bsize\Delta = \bsize{\stBase l \eta
\Delta} $ for any $ \eta $ and $ l $, since $ \bsize - $ is invariant to 
session types.
\\

  \item \textbf{Value send}:
\lstinputlisting[
    backgroundcolor=\color{white}, 
    escapeinside={(*}{*)},
    firstnumber=16,
    firstline=30,
    lastline=48]{codeSI2}

By the inductive hypothesis in both cases.
The other cases when $ b $ is $ \popo{\rho}{!T} $,  $ \popo{\rho}{?T} $,
$ \popo{\rho}{!L_k}$, $ \popo{\rho}{?L_k} $,
$ \popo{\rho}{!\rho_d}$ and $ \popo{\rho}{?l_d}$ are all proved similarly.\\ 

\item \textbf{Session delegation}:
\lstinputlisting[
    backgroundcolor=\color{white}, 
    escapeinside={(*}{*)},
    firstnumber=37,
    firstline=78,
    lastline=97]{codeSI2}

By Lem. \ref{lem:fin-sub} and inductive hypothesis in the first clause, and by
\ref{lem:alg-checkFresh} and inductive hypothesis in the second clause.
\\

\item \textbf{External choice}:
  \lstinputlisting[
    backgroundcolor=\color{white}, 
    escapeinside={(*}{*)},
    firstnumber=77,
    firstline=160,
    lastline=203]{codeSI2}

As shown in the proof of Lem. \ref{lem:sound-C-ext},
the first two clauses recursively call $ \algoMC $ with the same input, except
for an expanded $ \psi $ and $ C_1 $ in the first case (line
\ref{code:MC-psiext-psi}), and a modified $ \psiext $ in the second case (line
\ref{code:MC-psiext-not}).
Let us therefore analyse the third clause first.

Let $ b = \bechoicetext{i\in I}{\rho}{L_i}{b_i} $.
If the third clause is applied (line \ref{code:MC-psiext}), the proposition is a
straightforward consequence of the inductive hypothesis and Lem.
\ref{lem:fin-subst}, since each recursive invocation of $ \algoMC $ has as input
$ b_{j_k} $, which is a  sub-behaviour of the input behaviour 
$ b $.

If the first clause is invoked (line \ref{code:MC-psiext-psi}), 
by Lem. \ref{lem:sound-C-ext} and Lem. \ref{lem:fin-c-expand}
we have $ \bsize{\groundify{\mc \Delta b}} =
\bsize{\groundifyBase{\mc {\Delta\sigma_1}{b\sigma_1}}{C\sigma_1 \cup C_1}} $.
Since the size of the behaviour is the same and the only applicable $
\algoMC $ clause is the third one, the proposition is proved as before.

If the second clause is invoked (line \ref{code:MC-psiext-not}), the proof
proceeds as in the previous case, except for the fact that the constraint set $
C $ is refined instead of expanded. 
\\
 
\item \textbf{Behavior sequence}:
\lstinputlisting[
    backgroundcolor=\color{white}, 
    escapeinside={(*}{*)},
    firstnumber=96,
    firstline=205,
    lastline=209]{codeSI2}

By definition of stack continuation, $ K[b_1;b_2] = b_2\cdot K[b_1] $, and
therefore 
$ \bsize{\mc\Delta{K[b_1;b_2]}} = \bsize{\mc\Delta{b_2\cdot K[b_1]}}$.
Because of this, and since $ \bsize{b_1} < \bsize{b_1;b_2}$ for any $ b_2$, the
proposition is proved by the second ordering in the lexicographic order
of the inductive hypothesis. 
\\

\item  \textbf{Behavior choice}:
\lstinputlisting[
    backgroundcolor=\color{white}, 
    escapeinside={(*}{*)},
    firstnumber=99,
    firstline=211,
    lastline=220]{codeSI2}

By inductive hypothesis. 
\\

\item \textbf{Recursive behavior}:
\lstinputlisting[
    backgroundcolor=\color{white}, 
    escapeinside={(*}{*)},
    firstnumber=109,
    firstline=228,
    lastline=237]{codeSI2}
    
By inductive hypothesis. By definition of ground term, 
$ \groundifyBase {\orec b \beta } {C \uplus \singleSu{b'}{\beta}} =
  \orec {\groundifyBase b {C\cup \singleSu \tau \beta }} \beta$ holds, and
  therefore the inductive hypothesis can be applied on the first inner call to $
  \algoMC $ at line 112. The second call to $ \algoMC $ terminates by the
  inductive hypothesis as well, since the size of $ \tau $ is strictly less than
  the size of a recursive term $ \orec b \beta $.
\\
    
\item  \textbf{Behavior variables}:
\lstinputlisting[
    backgroundcolor=\color{white}, 
    escapeinside={(*}{*)},
    firstnumber=115,
    firstline=239,
    lastline=243]{codeSI2}

This case is proved as for the case of behavior choice, after unrolling the
definition of $ \algoMC $. 
\\

\item  \textbf{Forced session termination}:
\lstinputlisting[
    backgroundcolor=\color{white}, 
    escapeinside={(*}{*)},
    firstnumber=119,
    firstline=245,
    lastline=253]{codeSI2}

By Cor. \ref{cor:alg-closeTop} and inductive hypothesis. 
\\

\end{itemize}

\end{proof}

Termination of $ \algoSI $ follows directly from this result:
\begin{theorem}[Termination of $ \algoSI $]
For any well-formed $ C $ and $ b $, $
\algoSI(b, C) $ terminates.
\end{theorem}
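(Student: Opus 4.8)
The plan is to reduce the claim to the termination of the core abstract interpreter $\algoMC$, which has already been established as Proposition~\ref{prop:mc-termination}, and to observe that $\algoSI$ wraps that subroutine in only finitely much, manifestly terminating bookkeeping. First I would make precise what $\algoSI(b,C)$ does: after some preparatory steps it invokes the abstract interpreter on the initial configuration $\mcconf{\epsilon}{b}$ with the empty endpoint stack and the empty continuation stack, i.e.\ it evaluates $\algoMC(\mcconf{\epsilon}{b}, C, \emptyK)$, threads the returned substitution and refined constraint set, and performs any remaining finalisation (the helper \textsf{finalize}, which terminates for every well-formed stack by Lemma~\ref{lem:alg-finalize}). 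Since $C$ is finite, even if $\algoSI$ iterates over the channel constraints $(\seql{c}{\psi})\in C$ rather than issuing a single call, it issues only finitely many calls to $\algoMC$; hence it suffices to show each such call terminates.

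Next I would discharge the hypotheses of Proposition~\ref{prop:mc-termination}. That proposition requires $C$ to be well-formed --- which is assumed --- and the configuration $\mcconf{\Delta}{K[b]}$ to be well-formed in $C$. In our case $\Delta = \epsilon$ and $K = \emptyK$, so $K[b] = b$ and the empty stack contributes no free variables; configuration well-formedness therefore reduces to variable well-formedness of every free behaviour, region and session variable of $b$ in $C$. This holds by construction: $C$ is the constraint set produced by $\algW$, which emits a binding constraint for every variable it introduces into $b$, so every free variable of $b$ is bound by a constraint in $C$. With both preconditions met, Proposition~\ref{prop:mc-termination} yields termination of the $\algoMC$ call, returning some $(\sigma, C')$. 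As the surrounding work of $\algoSI$ --- the finite and unique decomposition of behaviours, the composition of substitutions, and the terminating finalisation step --- is bounded, the whole of $\algoSI(b,C)$ terminates.

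The main obstacle is entirely absorbed into Proposition~\ref{prop:mc-termination}, whose proof supplies the real content: the ground translation $\groundify{b}$ is a finite, $\beta$-free behaviour whenever $C$ is well-formed, the abstract-interpretation transitions (barring the variable-lookup rule, which the ground form removes) strictly shrink the configuration size $\bsize{\cdot}$, and the behaviour ordering $\succ$ is well-founded under behaviour-compactness of $C$ (Def.~\ref{def:wf-constraints}). Consequently the state space explored from $\mcconf{\epsilon}{\groundify{b}}$ is finite and $\algoMC$ recurses on strictly smaller configurations. The only genuinely new things to verify at the level of $\algoSI$ are therefore the two lightweight facts above: that the initial configuration satisfies the well-formedness precondition of the proposition (immediate from the invariant maintained by $\algW$), and that the number of $\algoMC$ invocations and intervening operations is finite (immediate from finiteness of $C$ and of the behaviour decomposition). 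Given these, termination of $\algoSI$ follows directly.
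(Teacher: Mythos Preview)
Your proposal is correct and takes the same approach as the paper: the paper's proof is the single line ``By Prop.~\ref{prop:mc-termination},'' and you have simply (over-)elaborated that reduction. A few of your speculative details about $\algoSI$'s internals are slightly off---\textsf{finalize} is invoked inside $\algoMC$, not by $\algoSI$ itself, $\algoSI$ issues a single call $\algoMC(\mcconf{\epsilon}{b},C,\emptyK)$ followed by the bookkeeping substitution \textsf{choiceVarSubst}, and you need not invoke $\algW$ since well-formedness of $b$ in $C$ is already a hypothesis---but none of this affects the soundness of your termination argument.
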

\begin{proof}
By Prop. \ref{prop:mc-termination}.
\end{proof}


\clearpage
\section{Soundness of Algorithm $\algoSI$ }
\label{sec:soundness}
\newcommand{\mypureBase}[2]{ #1 \vdash \mathit{pure}(#2)}
\newcommand{\mypure}[1]{ \mypure C {#1} }
This section shows a soundness result for Algorithm $ \algoSI $.
Namely we show that, given a well-formed $ b $ and $ C $, if 
$ \algoSI $ terminates with a solution $ (\sigma_1, C_1)$, then the
configuration $ \mc \epsilon {b\sigma_1} $ is strongly normalizing under $ C_1
$. 

The core of the proof revolves around proving that $ \algoMC $ is sound. 
However some extra technical
machinery is required before proceeding with the proof, because of 
two issues: the lazy detection of terminated sessions and constraint refinement.

The first issue is that Algorithm $ \algoMC $ detects the termination of
sessions lazily.
During inference, open session variables $ \psi $ in the stack are assigned the
session $ \tend $ only when no other $ \algoMC $ clause is applicable (line
\ref{code:MC-closetop}), or at the very end of session inference at line
\ref{code:MC-delta-tau}, when the stack is empty and the behaviour is $ \tau $.

Because of this laziness, the proof of soundness of $ \algoMC $ requires 
some flexibility to deal with terminated sessions. We therefore introduce the
following equivalence relation on stacks with terminated sessions:

\begin{definition}[Terminated session equivalence]\label{def:mc-term-delta-eq}
\begin{center}
\irule*[][]
  { }
  {\epsilon \equiv \epsilon}
\irule*[][]
  { \Delta \equiv \Delta' }
  { \St l \tend \equiv \Delta'  }
\irule*[][]
  { \Delta \equiv \Delta' }
  { \St l \eta \equiv \St l \eta ' }
  \condBox{ $ \eta \neq \tend $ } 
\end{center}

\end{definition}

The second issue is that constraints in $ C $ may be refined from time to time
in Algorithm $ \algoMC $, and we need to show that strong normalization is
preserved under constraint refinement. 
For example, let the stack contain a variable $ \psiint
$, and let $ C $ contain the constraint $ \singleSu {\psiint}{
\sichoicetext{i}{I}{\eta} } $.  
During session inference the internal choice $ \sichoicetext{i}{I}{\eta} $ might
be expanded, for example by adding a new label $ L_k $. In such a case the
constraint $ \singleSu {\psiint}{\sichoicetext{i}{I}{\eta} } $ is 
refined to $ \singleSu {\psiint}{ \sichoicetext{i}{I, k}{\eta} } $, and $ C $
is updated to some $ C' $ accordingly.
If a behaviour $ b $ is strongly normalizing under $ C $, it is now unclear if
$ b $ is also strongly normalizing in $ C' $.

To address this issue, we introduce the following form of stack sub-typing:

\newcommand{\suDBase}[3]{{#3}\vdash{#1} <: {#2}}
\newcommand{\suD}[2]{\suDBase{#1}{#2}{C}}
\begin{definition}[Stack sub-typing]
Let $ C $ be well-formed. A stack $ \Delta_1 $ is a subtype of stack $ \Delta_2
$, or $ \suD {\Delta_1}{\Delta_2} $, when the following relations are satisfied:
\begin{align*}
    & \suD \epsilon \epsilon
    &
 \\ & \suD {\stBase l {\eta_1} {\Delta_1}} {\stBase l {\eta_2} {\Delta_2}} 
    & \text{ if }
 \coType{\eta_1}{\eta_2} \text{ and } \suD {\Delta_1} {\Delta_2}
\end{align*} 
\end{definition}

The following theorem is crucial in proving that 
that strong normalization is preserved under constraint refinement: 
\begin{theorem}[Liskov's substitution principle]        
  \label{prop:liskov}
    Let $ \cj{C_2}{C_1\sigma} $, $ \suDBase{\Delta_2}{\Delta_1\sigma}{C_2}$.
    If $ \dstepBase{\Delta_1}{b}{\Delta_1'}{b'}{C_1}$, 
    then 
      $ \dstepBase {\Delta_2} {b\sigma} {\Delta_2'} {b'\sigma_1}{C_2} $
      and  
      $ \suDBase{\Delta_2'}{\Delta_1'\sigma}{C_2} $.
  \end{theorem}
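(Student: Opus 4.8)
The plan is to prove the statement by induction on the derivation of the transition $\dstepBase{\Delta_1}{b}{\Delta_1'}{b'}{C_1}$ (equivalently, on configuration size via \cref{lem:term-step-size-shrinks}), exploiting three structural facts. First, $\sigma$ is a \emph{session} substitution: it maps only session variables $\psi$, so it commutes with every behaviour constructor and leaves the region, behaviour, and choice skeleton of $b$ untouched; in particular $(\pusho{l}{\eta}{})\sigma = \pusho{l}{\eta\sigma}{}$, $(\popo{\rho}{\outp T})\sigma = \popo{\rho}{\outp T\sigma}$, and $\beta\sigma = \beta$. Second, by \cref{def:constr-ref} the refinement $\cj{C_2}{C_1\sigma}$ transfers every constraint of $C_1\sigma$ into $C_2$, possibly replacing a session type by a subtype and each $\psiint$/$\psiext$ binding by a choice that is smaller in the choice ordering. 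Third, $\suDBase{\Delta_2}{\Delta_1\sigma}{C_2}$ preserves the sequence of labels and relates corresponding frames by $\coType{\cdot}{\cdot}$, so $\Delta_2.\mathsf{labels} = (\Delta_1\sigma).\mathsf{labels}$ and the two stacks have equal height.

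For the structural and bookkeeping rules --- \RefTirName{Tau}, \RefTirName{Seq}, \RefTirName{Plus}, \RefTirName{Spn}, \RefTirName{End}, and \RefTirName{Push} --- the matching $C_2$-step is the same rule applied to $b\sigma$. Its side conditions survive because subtyping preserves labels: the freshness premise of \RefTirName{Push} ($l \notin \Delta.\mathsf{labels}$) carries over verbatim, and the $(l:\tend)$ top frame required by \RefTirName{End} is forced on the $C_2$ side because $\tend$ has no proper subtype. In each case $\Delta_2'$ is related to $\Delta_1'\sigma$ by reflexivity and congruence of $\suDBase{\cdot}{\cdot}{C_2}$ --- for \RefTirName{Push} the freshly pushed frame carries $\eta\sigma$ on both sides, and for \RefTirName{Seq} the conclusion follows from the induction hypothesis applied to the premise transition on the smaller configuration $\mc{\Delta_1}{b_1}$.

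The communication rules \RefTirName{Out}, \RefTirName{In}, \RefTirName{Del}, and \RefTirName{Res} are where subtyping does genuine work. Writing the top frame of $\Delta_2$ as $(l:\hat\eta)$ with $\coType{\hat\eta}{(\outp T.\eta_0)\sigma}$ in the \RefTirName{Out} case, I would invert Gay--Hole subtyping to obtain $\hat\eta = \outp \hat T.\hat\eta_0$ with $\subt{T\sigma}{\hat T}$ (output is contravariant in the carried type) and $\coType{\hat\eta_0}{\eta_0\sigma}$; transitivity of subtyping, using the original side condition $\subt{T'\sigma}{T\sigma}$ transported through the refinement, yields $\subt{T'\sigma}{\hat T}$, so the rule fires and $\suDBase{\Delta_2'}{\Delta_1'\sigma}{C_2}$ holds because the new top frame satisfies $\coType{\hat\eta_0}{\eta_0\sigma}$ while the tail is unchanged. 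The \RefTirName{Del} and \RefTirName{Res} cases are analogous but touch two frames: for \RefTirName{Del} the re-derivation of $\subt{\hat\eta_d'}{\hat\eta_d}$ uses a chain $\hat\eta_d' <: \eta_d'\sigma <: \eta_d\sigma <: \hat\eta_d$ that threads the second-frame subtyping, the transported side condition $\subt{\eta_d'}{\eta_d}$, and the contravariance of delegation in the top frame.

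The crux --- and the step I expect to be the main obstacle --- is the interaction of \RefTirName{ICh}, \RefTirName{ECh}, and \RefTirName{Beta} with the fact that refinement actively rewrites choice constraints (it may add labels to an internal choice, or move a label from active to inactive in an external choice). For \RefTirName{ICh} I must check that the selected $L_j$ still indexes a branch of the possibly-enlarged internal choice on $\Delta_2$ and that its continuation stays subtype-related; for \RefTirName{ECh} I must check that the active set remains covered ($I_1 \subseteq J$) after labels have been relocated and that every offered branch's continuation stays subtype-related --- both discharged by the $\psiint$/$\psiext$ clauses of \cref{def:constr-ref}. Rule \RefTirName{Beta} transfers since $C_1 \vdash \sub{b}{\beta}$ gives $C_1\sigma \vdash \sub{b\sigma}{\beta}$ and hence $C_2 \vdash \sub{b\sigma}{\beta}$; the only delicacy is that a push body acquires a subtype session annotation under refinement, which is precisely absorbed by the stack-subtyping in the conclusion once the subsequent \RefTirName{Push} fires. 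Finally \RefTirName{Rec} updates both constraint sets identically, replacing the recursion constraint by $\su{\tau}{\beta}$, so refinement is preserved across the update, and its embedded strong-normalization premise for the body is lifted to the refined side by iterating this very simulation along every path from $\mc{\epsilon}{b\sigma}$, which terminates by \cref{lem:term-step-size-shrinks}. Assembling the cases gives the result.
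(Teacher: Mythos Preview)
Your approach---rule induction on the transition derivation---is the same as the paper's, and for the communication and choice rules (\RefTirName{Out}, \RefTirName{In}, \RefTirName{Del}, \RefTirName{Res}, \RefTirName{ICh}, \RefTirName{ECh}) and the structural rules (\RefTirName{End}, \RefTirName{Beta}, \RefTirName{Plus}, \RefTirName{Push}) your case analyses match the paper's almost point for point. You are in fact more thorough: the paper's proof omits \RefTirName{Seq}, \RefTirName{Tau}, \RefTirName{Spn} and \RefTirName{Rec} entirely.

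There is, however, a gap in your handling of \RefTirName{Spn} and \RefTirName{Rec}. Both carry a side condition of the form $\stackop{}{b}{\epsilon}{}$, and you propose to discharge it on the $C_2$ side ``by iterating this very simulation along every path from $\mc{\epsilon}{b\sigma}$.'' But the present theorem establishes a \emph{forward} simulation: a $C_1$-step induces a $C_2$-step. Lifting $\Downarrow$ requires the converse---every maximal $C_2$-path must be the image of some $C_1$-path so that the $C_1$-side hypothesis constrains its terminus---and refinement as defined in \cref{def:constr-ref} allows $C_2$ to entail behaviour constraints $\su{b'}{\beta}$ with no counterpart in $C_1\sigma$, hence \RefTirName{Beta}-transitions on the $C_2$ side with no $C_1$ preimage. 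In the intended use of the theorem the two constraint sets in fact share their behaviour constraints (the inference algorithm only adds type and choice constraints, never $\su{b}{\beta}$), so both simulation directions coincide and your argument then goes through; but that assumption must be made explicit, either as a hypothesis or by tightening the notion of refinement. The paper's proof simply does not treat these two rules, so once patched your argument would improve on it.
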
 
\begin{proof}

By rule induction. The proof of 3 is a trivial consequence of the hypothesis
$ \mypureBase{C_2}{\Delta_2} $,
since the continuation of a pure session is itself pure. We only prove 1 and 2:

\textbf{\textsf{Case} }\RefTirName{End}\textbf{\textsf :}
Suppose that $ \dstepBase {\stBase l \tend {\Delta_1}} {b } {\Delta_1}{b} {C_1}
$. 
By definition of substitution $ (\stBase{l}{\tend}{ \Delta_1})\sigma = 
\stBase{l}{\tend}{ \Delta_1\sigma} $; by definition of stack sub-typing
the hypothesis $ \suDBase{\Delta_2}{(\stBase{l}{\tend}{
\Delta_1})\sigma}{C_2}$ implies that  $
\Delta_2 = \stBase{l}{\tend}{ \Delta_2'} $ for some $ \Delta_2'$ such that $
\suDBase{\Delta_2'}{\Delta_1\sigma}{C_2}$.
Rule \RefTirName{End} yields
$ \dstepBase {\Delta_2} {b\sigma} {\Delta_2'} {b\sigma} {C_2}$, which proves the
proposition together with $ \suDBase{\Delta_2'}{\Delta_1\sigma}{C_2}$

\textbf{\textsf{Case} }\RefTirName{Beta}\textbf{\textsf :}
Let $\dstepBase {\Delta_1} {\beta } {\Delta_1}{b}{C_1} $, assuming that $
\subBase{b}{\beta}{C_1}$.
By definition of refinement, $ \cj {C_2}{C_1\sigma} $ implies that $
\subBase{C_2}{b\sigma}{\beta\sigma} $.
Since $ \beta\sigma = \beta $ by definition of inference substitution,
$ \subBase{C_2}{b\sigma}{\beta} $ holds, and 
therefore the proposition is proved
by applying Rule \RefTirName{Beta} on $ \mc {\Delta_2} {\beta} $.
             
\textbf{\textsf{Case} }\RefTirName{Plus}\textbf{\textsf :}
Let $ \dstepBase {\Delta_1} {b_1 \oplus b_2} {\Delta_1}{b_i} {C_1} $ with $ i
\in \{1, 2\} $.
By definition of substitution $ (b_1\oplus b_2) \sigma = b_1\sigma \oplus
b_2\sigma$. The proposition is proved by straightforward application of Rule
\RefTirName{Plus}, which yields $ \dstepBase {\Delta_2}{b_1\sigma \oplus
b_2\sigma } {\Delta_2}{b_i\sigma}{C_2} $.

\textbf{\textsf{Case} }\RefTirName{Push}\textbf{\textsf :}
Let $ \dstepBase {\Delta_1} {\pusho{l}{\eta}{} } {\stBase {l}{\eta}{\Delta_1}}
{\tau} {C_1}$ with $ l \freshfrom \Delta $.
By application of Rule \RefTirName{Push}, 
$ \dstepBase {\Delta_2} {\pusho{l}{\eta}{}\sigma } 
             {\stBase {l} {\eta\sigma} {\Delta_2}} {\tau}{C_2} $ holds because 
$ \pusho{l}{\eta}{}\sigma = \pusho{l}{\eta\sigma}{} $.
By definition of subtyping 
$ \coTypeBase {C_2}{\eta\sigma}{\eta\sigma} $ holds by reflexivity,
and therefore $ \suDBase
{ \stBase l {\eta\sigma} {\Delta_2}} 
{ \stBase l {\eta\sigma} {\Delta_1\sigma}}
{C_2} $ holds, because $ 
\suDBase{\Delta_2} 
{\Delta_1\sigma}{C_2} $ holds by hypothesis; therefore the proposition is
proved.

\textbf{\textsf{Case} }\RefTirName{Out}\textbf{\textsf :}
Let $ \dstepBase {\stBase {l}{!T.\eta}{\Delta_1}} {\popo{l}{!T_0} }
                 {\stBase {l}{\eta}   {\Delta_1}} {\tau}            {C_1}$,
with  $ \mypureBase{C_1}{T_0}$ and $
\coTypeBase{C_1}{T_0}{T} $.
By definition of substitution $ (\stBase {l}{!T.\eta}{\Delta_1})\sigma = 
\stBase {l}{!T\sigma.\eta\sigma}{\Delta_1\sigma} $ holds.
By definition of stack sub-typing,  
$ \suDBase {\Delta_2}{  \stBase {l}{!T\sigma.\eta\sigma}{\Delta_1\sigma} }{C_2}
$ implies that $ \Delta_2 = \stBase {l}{!T'.\eta'}{\Delta_2'} $ such that 
$ \coTypeBase{C_2}{T\sigma}{T'} $
and 
$ \suDBase{\Delta_2'}{\Delta_1\sigma}{C_2} $.

Since $ \cj{C_2}{C_1\sigma} $ holds by hypothesis, then 
$ \coTypeBase{C_1}{T_0}{T} $ implies $ \coTypeBase{C_2}{T_0\sigma}{T\sigma} $.
By transitivity $ \coTypeBase{C_2}{T_0\sigma}{T\sigma} $ and $
\coTypeBase{C_2}{T\sigma}{T'} $ imply $ \coTypeBase{C_2}{T_0}{T'} $.
Since $ \popo{l}{!T_0}\sigma = \popo{l}{!(T_0\sigma)}$, an application Rule
\RefTirName{Out} yields 
$ \dstepBase {\stBase {l}{!T'.\eta'}{\Delta_2'}} {\popo{l}{!T_0\sigma} }
             {\stBase {l}{\eta'}    {\Delta_2'}} {\tau}                 {C_2}$,
which proves 1.
By definition of session sub-typing, the hypothesis 
$ \suDBase {\stBase {l}{!T'.\eta'}{\Delta_2'} }{  \stBase
{l}{!T\sigma.\eta\sigma}{\Delta_1\sigma} }{C_2} $ implies 
$ \coTypeBase {\eta'}{\eta\sigma} {C_2} $. We have already proved that
$ \suDBase{\Delta_2'}{\Delta_1\sigma}{C_2} $ holds, therefore we can infer
$ \suDBase {\stBase {l}{\eta'}{\Delta_2'} }{  \stBase
{l}{\eta\sigma}{\Delta_1\sigma} }{C_2} $, which proves 2.
 
\textbf{\textsf{Cases} }\RefTirName{In, Del, Res}\textbf{\textsf :}
proved as in case \RefTirName{Out}, using the fact that sub-typing is covariant
for \RefTirName{In, Res}, and it is contravariant for \RefTirName{Del}.

\textbf{\textsf{Case} }\RefTirName{ICh}\textbf{\textsf :}
Let $ \dstepBase {\stBase {l}{\sichoicetext{i}{I}{\eta}}{\Delta_1}}
            {\popo{l}{!L_k} 
            }
            {\stBase l {\eta_k} {\Delta_1} }    {\tau}{C_1}$ 
with $ k \in I$.
By definition of internal choice variables, configuration
$\mc { \stBase l {\sichoicetext{i}{I}{\eta}} {\Delta_1} }
     { \popo{\rho}{!L_k}}
     $
is equivalent to 
$ \mc {\stBase {l}{\psiint}{\Delta_1}}
      {\popo{\rho}{!L_k} }
$ 
with $ \sueqBase{C_1}{\psiint}{ \sichoicetext{i}{I}{\eta}} $ for some $ \psiint
$.
Since $ \cj {C_2}{C_1\sigma} $ holds by hypothesis,
by definition of constraint refinement there exists a session $ \eta' = 
 \sichoicetext{j}{J}{\eta'}
 $
such that 
$ \sueqBase{C_2}{\psiint}{\eta'} $ 
and 
$ 
\coTypeBase{C_2}{\sichoicetext{j}{J}{\eta'}}{\sichoicetext{i}{I}{\eta\sigma}}
$ 
hold, with $ J \subseteq I $.

Since $ J \subseteq I $, $ k \in I $ implies $ k \in J $;
and since $ (\popo{\rho}{!L_k})\sigma = \popo{\rho}{!L_k}$, 
an application of Rule \RefTirName{ICh} yields 
$ \dstepBase {\stBase {l}{\sichoicetext{j}{J}{\eta'}}{\Delta_2'}}
             {(\popo{\rho}{!L_k})\sigma}
             {\stBase {l}{\eta_k'}{\Delta_2'} }    {\tau}{C_2} $, 
which proves 1.
By definition of session sub-typing $ 
\coTypeBase{C_2}{\sichoicetext{j}{J}{\eta'}}{\sichoicetext{i}{I}{\eta}}
$ implies $ \coTypeBase {C_2}{\eta_k'}{\eta_k} $; since 
$ \coTypeBase{C_2}{\Delta_2'}{\Delta_1\sigma} $ holds by hypothesis, then
$ \coTypeBase{C_2}{\stBase l {\eta_k'}\Delta_2'}{\Delta_1\sigma} $ holds too,
which proves 2.

\textbf{\textsf{Case} }\RefTirName{ECh}\textbf{\textsf :} similar to the case
for \RefTirName{ICh}.
%
\end{proof}

Before proving soundness, we also need a lemma about the session sub-type
inference. The $ \mathsf{sub} $ function infers whether a session $ \eta_1 $ is
a sub-type of session $ \eta_2$:
\begin{lemma}[Soundness of $\mathsf{sub}$]\label{lem:sound-sub}
If $ \mathsf{sub}(\eta_1, \eta_2, C) = (\sigma_1, C_1) $, then $ \coTypeBase
{C_1}{\eta_1\sigma_1}{\eta_2\sigma_1}$.
\end{lemma}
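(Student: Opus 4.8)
The plan is to prove the statement by well-founded induction on the combined syntactic size of the pair $(\eta_1,\eta_2)$, reasoning by case analysis on which defining clause of $\mathsf{sub}$ fires; the two choice clauses additionally require an inner induction on the cardinality of the label set traversed by the helper function $\mathsf f$. Throughout I would rely on three routine auxiliary facts about session subtyping: reflexivity of $\coTypeBase{C}{\cdot}{\cdot}$; \emph{weakening}, i.e.\ that a derivation of $\coTypeBase{C}{\eta}{\eta'}$ remains valid after enlarging $C$; and \emph{substitution invariance}, i.e.\ that applying a session substitution $\sigma$ to both sides preserves derivability (the session analogue of Lem.~\ref{lem:sound-C-to-Csigma}). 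I would also invoke Lem.~\ref{lem:sound-C-sub-ext}, which guarantees that every constraint set returned by $\mathsf{sub}$ refines the input, so that subtyping facts established for continuations survive the passage to the final $C_1$.

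First I would dispatch the base cases. For $\mathsf{sub}(\tend,\tend,C)=(\id,C)$ the goal $\coTypeBase{C}{\tend}{\tend}$ is immediate by reflexivity. For the two variable clauses, say $\mathsf{sub}(\psi,\eta,C)=(\sigma,C\sigma)$ with $\sigma=[\psi\mapsto\eta]$, I would observe that $\psi\sigma=\eta$ while $\eta\sigma=\eta$ (since, for the finite contractive sessions considered here, $\psi$ does not occur in $\eta$); hence the goal $\coTypeBase{C\sigma}{\eta}{\eta}$ again follows by reflexivity, and the symmetric orientation is identical. Next come the prefix clauses (send, receive, delegate, resume). Here $\mathsf{sub}$ recurses on strictly smaller sub-sessions, so the induction hypothesis yields subtyping of the continuations under the intermediate constraint set, which I then lift to the final set by weakening and substitution invariance. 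For send I would use the freshly added constraint $T_2\subseteq T_1$ together with the contravariant output rule; for receive the covariant input rule; and for the two endpoint-transfer clauses the recursive subtyping of the carried session returned by the inner $\mathsf{sub}$ call, contravariantly for delegation and covariantly for resumption, exactly mirroring the Gay--Hole rules underlying Def.~\ref{def:duality}. In each case composing the two substitutions $\sigma_2\sigma_1$ and accumulating constraints is handled uniformly by the two auxiliary facts.

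The substantive work, and the step I expect to be the main obstacle, is the external-choice clause. There the helper $\mathsf f$ iterates over the labels of the two choices and, unlike the internal-choice clause (which merely augments the left choice with extra branches that are subtypes by the order of rule Inf-IChoice), it performs genuine bookkeeping: it may relocate a label from the active set $I_1$ to the inactive set $I_2$, or introduce a fresh inactive branch $?L_k.\eta_{2k}$. I would carry out an inner induction on $|J_1\cup J_2|$ establishing, as a loop invariant, the precise external-choice subtyping conditions demanded by the extended Gay--Hole rule of Def.~\ref{def:duality}: $I_1\subseteq J_1$, $J_1\cup J_2\subseteq I_1\cup I_2$, and $\coTypeBase{C}{\eta_{1i}}{\eta_{2i}}$ for every $i\in J_1\cup J_2$. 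The delicate points are (i) checking that each relabelling and each branch insertion preserves these inclusions, using the refinement ordering of rule Inf-EChoice together with Lem.~\ref{lem:sound-C-sub-ext}, and (ii) threading the substitutions and constraint refinements so that the branchwise subtyping facts proved earlier in the iteration remain derivable after later steps. Once the invariant is maintained through all labels, the external-choice subtyping rule assembles the final conclusion $\coTypeBase{C_1}{\eta_1\sigma_1}{\eta_2\sigma_1}$; the internal-choice clause is the same argument with the simpler invariant $J\subseteq I$ and no relocation.
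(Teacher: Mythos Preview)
Your proposal is correct and follows essentially the same approach as the paper: an induction on the structure of the sessions (the paper inducts on $\eta_1$ alone, you on the combined size of the pair, which is harmless), case analysis on the defining clause of $\mathsf{sub}$, and an inner induction over the label set for the choice clauses. Your treatment is in fact considerably more explicit than the paper's, which handles the external-choice case with a one-line ``similar'' and does not spell out the auxiliary weakening and substitution-invariance facts you correctly identify as needed to thread the accumulated substitutions and constraint refinements through the argument.
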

\begin{proof}
By structural induction on $ \eta_1$. 
The structure of the proof is similar to the structure of the proof for Lem.
\ref{lem:sound-C-sub-ext}.

The lemma is trivial when $ \eta_1 = \eta_2 = \tend $.
When $ \eta_1 = !T_1.\eta_1' $ and $ \eta_2 =
!T_2.\eta_2'$, $ C $ is expanded with the constraint $\singleSu{T_2}{T_1}$.   
By structural induction $ \eta_1' $ is a subtype of $ \eta_2' $; 
since $ C_1 $ contains the constraint $\singleSu{T_2}{T_1}$,
$ \eta_1 $ is a subtype of $ \eta_2 $ by definition of subtyping. The case for
receive is similar; the cases for delegation and resume follow straightforwardly
by structural induction.

When $ \eta_1 $ is an internal 
choice variable $ \psiint $, function $ \mathsf{sub}$ calls an internal function
$ \mathsf f $ which inspects each label $ L_i $ in $ \eta_2 $, and either
expands $ \eta_1 $ if $ \eta_1 $ does not contain $ L_i $, 
or it just recursively invokes $ \mathsf{sub} $ on each common branch of
the internal choices of $ \eta_1 $ and $ \eta_2 $. The lemma is proved
straightforwardly by the definition of session sub-typing in the former
case, and by the inductive hypothesis in the latter case.

The proof is similar when $ \eta_1 $ is an external choice variable $ \psiext $.

When $ \eta_1 $ is a session variable $ \psi $, 
\textsf{sub} produces the substitution substitutes $ \subst {\psi}{\eta_2} $, 
and the lemma is proved straightforwardly by reflexivity. Similarly when $
\eta_2 = \psi$ holds. 
\end{proof}

Soundness of session inference depends on the following central result:
\begin{proposition}[Soundness of Algorithm $ \algoMC$]\label{prop:mc-soundness} 
Let $ C $ be well-formed in $ \mcconf{\Delta}{b} $.
If $ \algoMC(\mcconf{\Delta}{b}, C, K) = (\sigma_1, C_1) $,
then there exists $ \Delta' $ such that $ \Delta\sigma_1 \equiv \Delta' $ and
$\stackopBase {\Delta'}{K[b]\sigma_1}{C_1}{}$.
\end{proposition}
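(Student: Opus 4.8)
The plan is to prove the statement by well-founded lexicographic induction on the pair $(\bsize{\mc \Delta {\groundify{K[b]}}}, \bsize{\groundify{b}})$, the very measure used to establish termination of $\algoMC$ in Prop.~\ref{prop:mc-termination}: every recursive call of the algorithm strictly decreases this pair, so the induction is well-founded. I would then case-split on the clause of $\algoMC$ that fires, matching each clause to its corresponding abstract-interpretation rule in \cref{fig:abstract-interpr}. The guiding idea is that each algorithm clause reproduces exactly one reduction step of $\Downarrow_C$, the recursive call certifies strong normalization of the residual configuration under the constraints it returns, and that conclusion is then transported back across the single step.

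In the base case $b = \tau$ and $K = \emptyK$, where $\algoMC$ returns $(\textsf{finalize}(\Delta), C)$, the argument is immediate: since \textsf{finalize} (Lem.~\ref{lem:alg-finalize}) maps every residual variable $\psi$ on $\Delta$ to $\tend$, the stack $\Delta\sigma_1$ contains only frames $(l:\tend)$, so $\Delta\sigma_1 \equiv \epsilon$ by Def.~\ref{def:mc-term-delta-eq}. Taking $\Delta' = \epsilon$ and observing $K[b]\sigma_1 = \tau$, the goal $\stackopBase{\epsilon}{\tau}{C}{}$ holds trivially because $\mc \epsilon \tau$ is terminal. For each inductive clause I would (i) invoke the induction hypothesis on the recursive call(s) to obtain strong normalization of the continuation configuration; (ii) lift that conclusion across the one reduction step to the full configuration under $C_1$ using Liskov's substitution principle (Thm.~\ref{prop:liskov}) together with the refinement lemma (Lem.~\ref{lem:sound-C-ext}); and (iii) reconcile the algorithm's lazily-finalized stack with the eagerly-pruned stack of the semantics through $\equiv$. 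Concretely, the pop clauses (\RefTirName{Out}, \RefTirName{In}, \RefTirName{ICh}, \RefTirName{Res}) discharge their subtyping side-condition via soundness of \textsf{sub} (Lem.~\ref{lem:sound-sub}) in the subcase where the top frame already carries an inferred session; the \RefTirName{Push} clause relies on the \textsf{checkFresh} lemma (Lem.~\ref{lem:alg-checkFresh}) to certify that the freshly pushed label respects stack linearity; and the \RefTirName{Rec} clause reproduces the bookkeeping $C' = (C \removeRHS \beta) \cup (\su{\tau}{\beta})$ before recursing on the empty stack.

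The delicate point — and the step I expect to be the main obstacle — is that $\stackopBase{\Delta'}{K[b]\sigma_1}{C_1}{}$ is quantified over \emph{all} maximal reduction paths, whereas the algorithm builds $\sigma_1$ as a composite $\sigma_n\cdots\sigma_0$ and successively \emph{refines} rather than consumes the constraint set. In the sequencing, internal/external-choice, and behaviour-variable clauses the behaviour need not shrink at the recursive call; instead $C$ is refined (a new label is added to a $\psiint$ constraint, or an active label is moved to inactive in a $\psiext$ constraint), so strong normalization proved for a branch under an intermediate $C_i$ must be chained forward to the final $C_1$. I would handle this exactly as in the proof of Lem.~\ref{lem:sound-C-ext}: the successive calls yield refinement links $\cj{C_{i+1}}{C_i\sigma_i}$, which by substitution-invariance of refinement and transitivity combine into $\cj{C_1}{C_i\sigma_i\cdots\sigma_0}$, and applying Thm.~\ref{prop:liskov} at each link transports the per-branch normalization under the composed substitution and the induced stack subtyping. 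Because $\algoMC$ is the abstract interpreter and explores every enabled branch, success on all branches yields that every maximal $\Downarrow_C$ path terminates at $\mc \epsilon \tau$, which is precisely strong normalization.

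For the external-choice clause I would separate the three sub-clauses exactly as in the termination argument: the two that merely introduce or refine a $\psiext$ constraint and re-invoke $\algoMC$ on the same behaviour (where the behaviour measure is unchanged, but the only subsequently applicable sub-clause is the third, so either the second lexicographic component or a direct appeal to Thm.~\ref{prop:liskov} and Lem.~\ref{lem:sound-C-ext} closes the gap), and the genuinely recursive third sub-clause that descends into each branch $b_i$. Threading all of this simultaneously — composing substitutions, pushing constraint refinements through Thm.~\ref{prop:liskov}, and tracking the $\equiv$-slack between the algorithm's lazy handling of $\tend$ and the semantics' eager \RefTirName{End} pruning — is where the bulk of the technical effort lies; the remaining clauses (\RefTirName{Beta}-excluded by groundness, \RefTirName{Plus}, \RefTirName{Seq}, \RefTirName{Tau}, \RefTirName{Spn}, and forced termination via \textsf{closeTop}, Cor.~\ref{cor:alg-closeTop}) follow routinely by the induction hypothesis.
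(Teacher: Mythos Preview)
Your proposal is correct and follows essentially the same approach as the paper: lexicographic induction on $(\bsize{\mc \Delta {\groundify{K[b]}}}, \bsize{\groundify{b}})$, case analysis on the $\algoMC$ clause that fires, with the choice/branching clauses discharged via Lem.~\ref{lem:sound-C-ext} and Thm.~\ref{prop:liskov}, the delegation clause via Lem.~\ref{lem:sound-sub}, and the lazy-$\tend$ slack absorbed by the $\equiv$ relation. One small slip: the $\beta$-variable case is \emph{not} excluded by groundness---the algorithm has a clause for it (it unfolds $\beta$ to $\bigoplus\{b_i \mid (\su{b_i}{\beta}) \in C\}$), and the soundness argument there is handled exactly like the internal-choice clause after one unrolling; groundification is only used for the termination measure, not to rule out the clause.
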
 

\begin{proof}

As in the proof of Prop. \ref{prop:mc-termination},
the proposition is proved by lexicographic induction on $ \bsize{\mc \Delta
{\groundify{K[b]}}} $ and $ \bsize{\groundify b} $.
 
%
%
Let us proceed by case analysis on $ \algoMC(\mcconf{\Delta}{b}, C)$.

\begin{itemize}
\item 
  \lstinputlisting[
    backgroundcolor=\color{white}, 
    escapeinside={(*}{*)},
    firstline=1,
    lastline=5]{codeSI2}

Suppose that $ \mcclause {\St l \tend} b C K = (\sigma_1, C_1)$. By the
inductive hypothesis 
on the clause $ \mcclause \Delta b C K $, 
there exists $ \Delta' $ such that $ \Delta\sigma_1 \equiv \Delta' $ and
$\stackopBase {\Delta'}{K[b]\sigma_1}{C_1}{}$.
By Def. \ref{def:mc-term-delta-eq} of stack equivalence, we have $ \stBase l
\eta {\Delta\sigma_1} \equiv {\Delta\sigma_1} $.
Since $ \Delta\sigma_1 \equiv \Delta'$ holds by inductive hypothesis, it
follows that $ (\stBase l \eta \Delta)\sigma_1 \equiv \Delta' $. Therefore 
the proposition is proved by $ \Delta' $.

  \item 
\lstinputlisting[
    backgroundcolor=\color{white}, 
    escapeinside={(*}{*)},
    firstnumber=4,
    firstline=7,
    lastline=12]{codeSI2}

By definition of \textsf{finalize}, a $ \Delta\sigma $ is equivalent to the
empty stack $ \epsilon $, since 
\textsf{finalize} succeeds only if it replaces all variables $ \psi $ in $
\Delta $ with $ \tend $, and only $ \tend $ sessions are left in it. 
The lemma is therefore proved by taking $ \Delta' = \epsilon $, because
$ \stackop {} {\tau\sigma} \epsilon {} $ holds trivially for any $ \sigma $ and
$ C $.

  \item 
\lstinputlisting[
    backgroundcolor=\color{white}, 
    escapeinside={(*}{*)},
    firstnumber=8,
    firstline=14,
    lastline=18]{codeSI2}
By inductive hypothesis.

  \item 
\lstinputlisting[
    backgroundcolor=\color{white}, 
    escapeinside={(*}{*)},
    firstnumber=11,
    firstline=20,
    lastline=28]{codeSI2}

If this clause succeeds, then also the inner calls to \textsf{checkFresh} does,
which implies $ \Delta\sigma_1 \equiv \Delta_1$. 
Notice that $ \Delta_1 $ contain no frame of the form $ (l:\eta) $ for any
$ \eta $, i.e. $ l \freshfrom \Delta_1 $. Because of this and by taking $
\Delta' = \Delta_1 $, the abstract interpretation semantics allows the following
transition:
\begin{align*}
  \stackstep{_2}{\Delta'}{K[\pusho{l}{\eta}{}]}{  \stBase l \eta {\Delta'}}{
  K[\tau]}{}
\end{align*}

Since the inner call $ 
    \mcclause {\stBase {l}{\eta\sigma_1}{\Delta_1}} \tau {C\sigma_1}{K\sigma_1}
$ succeeds too, the lemma follows by inductive
hypothesis.

  \item
\lstinputlisting[
    backgroundcolor=\color{white}, 
    escapeinside={(*}{*)},
    firstnumber=16,
    firstline=30,
    lastline=42]{codeSI2}

Let us assume that the above clause has been used. 
Because the domain of inference substitutions is session variables $ \psi $, 
in this case we have:
\begin{align*}
\mc { \big(\stBase l \psi \Delta \big)\sigma_2\sigma_1 }
    { K[\popo{\rho}{!T}]\sigma_2\sigma_1 }
 = &~
\mc { \stBase l {!\alpha.(\psi'\sigma_2)} {\Delta\sigma_2\sigma_1} }
    { (K\sigma_2\sigma_1)[\popo{\rho}{!T}] }
\end{align*}
From the algorithm we have that $ \subEql{l}{\rho} $ holds.
The inner call to $ \algoMC $ has  
$ C_1 = C\sigma_1 \cup \singleSu{T}{\alpha} $ as input constraints set. By 
Lem. \ref{lem:sound-C-ext}, $ \cj {C_2} {C_1\sigma_2} $, and therefore
$ \subEqlBase{l}{\rho}{C_2} $ and 
$ \cj {C_2} {(\su{T}{\alpha})\sigma_2}$ hold. The latter implies  
 $ \cj {C_2} {\su{T}{\alpha}} $ because $ \sigma_2 $ is an inference
 substitution (and therefore any type variable $ \alpha' $ in $ T $ is fresh
 from the domain of $ \sigma_2 $).
Since $ \subEqlBase{l}{\rho}{C_2} $ and $ \cj {C_2} {\su{T}{\alpha}} $ both
hold, the abstract interpretation semantics allows the following transition:
\begin{align*}
  \stackstepStatBase
  { \stBase l {!\alpha.(\psi'\sigma_2)} {\Delta\sigma_2\sigma_1} }{
  (K\sigma_2\sigma_1)\popo{\rho}{!T} } { \stBase l {\psi'\sigma_2}
  {\Delta\sigma_2\sigma_1} } { (K\sigma_2\sigma_1)[\tau] } {C_2}
\end{align*} 
By inductive hypothesis, $ \stackopBase { \stBase l {\psi'\sigma_2}
  {\Delta\sigma_2\sigma_1} } { (K\sigma_2\sigma_1)[\tau] } {C_2} {} $ holds.
Because of this, and since $\stackstepStatBase
  { \stBase l {!\alpha.(\psi'\sigma_2)} {\Delta\sigma_2\sigma_1} }{
  (K\sigma_2\sigma_1)\popo{\rho}{!T_c} } { \stBase l {\psi'\sigma_2}
  {\Delta\sigma_2\sigma_1} } { (K\sigma_2\sigma_1)[\tau] } {C_2}$ is the only
  transition allowed by the abstract interpretation semantics, it follows that $ 
\stackopBase
  { \big(\stBase l \psi \Delta \big)\sigma_2\sigma_1 }
  { K[\popo{\rho}{!T}]\sigma_2\sigma_1 }
  {C_2}{}
$ holds, which proves the lemma.

\item The other cases for $ \popo{\rho}{!T} $,  $ \popo{\rho}{?T} $,
$ \popo{\rho}{!L_k}$, $ \popo{\rho}{?L_k} $,
$ \popo{\rho}{!\rho_d}$ and $ \popo{\rho}{?l_d}$ are all proved similarly,
except for this case:
\lstinputlisting[
    backgroundcolor=\color{white}, 
    escapeinside={(*}{*)},
    firstnumber=37,
    firstline=78,
    lastline=87]{codeSI2}


The only difference with the previous case is that we have to prove that $
\coTypeBase{C'}{\eta_{d'}\sigma}{\eta_d\sigma} $, which follows 
by Lem. \ref{lem:sound-sub}.

\item Suppose that the following $\algoMC $ clause has been applied:
\lstinputlisting[
    backgroundcolor=\color{white}, 
    escapeinside={(*}{*)},
    firstnumber=96,
    firstline=205,
    lastline=209]{codeSI2}

The lemma is proved directly by inductive hypothesis.

\item  Suppose that the following $ \algoMC $ clause has been used:
\lstinputlisting[
    backgroundcolor=\color{white}, 
    escapeinside={(*}{*)},
    firstnumber=99,
    firstline=211,
    lastline=220]{codeSI2}
    
This case is proved by Lem. \ref{lem:sound-C-ext}, inductive hypothesis and
Prop. \ref{prop:liskov}.

\end{itemize}
\end{proof}

Soundness of Algorithm $ \algoSI $ can be stated as follows:

\begin{theorem}[Soundness of Algorithm $\algoSI$]
Let $ C $ be a well-formed constraints set.
If $ \algoSI(b, C) = (\sigma_1, C_1) $, then $
\stackopBase{\epsilon}{b\sigma_1}{C_1}{}$.
\end{theorem}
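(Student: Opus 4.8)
The plan is to obtain this statement as an immediate corollary of the soundness of $\algoMC$ (Prop.~\ref{prop:mc-soundness}), which has already been established. First I would unfold the definition of $\algoSI$: on input $(b, C)$ the algorithm launches the abstract interpreter on the empty endpoint stack with the empty continuation stack, so that $\algoSI(b, C) = (\sigma_1, C_1)$ is obtained precisely from a successful call $\algoMC(\mcconf{\epsilon}{b}, C, \emptyK) = (\sigma_1, C_1)$. The hypothesis that $C$ is well-formed, together with the fact that $b$ is produced by $\algW$ and hence that every free behaviour, region and session variable of $b$ is bound in $C$, gives that $\mcconf{\epsilon}{b}$ is well-formed in $C$; thus the side conditions of Prop.~\ref{prop:mc-soundness} are met.

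Next I would instantiate Prop.~\ref{prop:mc-soundness} with $\Delta = \epsilon$ and $K = \emptyK$. This yields a stack $\Delta'$ with $\epsilon\sigma_1 \equiv \Delta'$ and $\stackopBase{\Delta'}{\emptyK[b]\sigma_1}{C_1}{}$. Two simplifications then collapse this to the desired form: applying a substitution to the empty stack leaves it empty, so $\epsilon\sigma_1 = \epsilon$; and by the definition of the continuation stack $\emptyK[b] = b$, hence $\emptyK[b]\sigma_1 = b\sigma_1$. At this point I have $\epsilon \equiv \Delta'$ together with $\stackopBase{\Delta'}{b\sigma_1}{C_1}{}$.

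The one delicate point is pinning down $\Delta'$. Here I would invert the rules of terminated-session equivalence (Def.~\ref{def:mc-term-delta-eq}): the only rule whose left-hand side is the empty stack is the reflexivity axiom, since each of the other two rules places a non-empty frame on the left. Hence $\epsilon \equiv \Delta'$ forces $\Delta' = \epsilon$. (Should one instead read $\equiv$ symmetrically, $\Delta'$ can only be a stack of finished frames $(l_i{:}\tend)$; these are successively discarded by the rule \RefTirName{End} without touching $b\sigma_1$, so strong normalisation from $\mcconf{\Delta'}{b\sigma_1}$ and from $\mcconf{\epsilon}{b\sigma_1}$ coincide.) Substituting $\Delta' = \epsilon$ into $\stackopBase{\Delta'}{b\sigma_1}{C_1}{}$ yields exactly $\stackopBase{\epsilon}{b\sigma_1}{C_1}{}$, which is the claim. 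I expect the main, and essentially only, obstacle to be this bookkeeping about finished frames in the equivalence relation; all the substantive content—that every abstract-interpretation path explored by $\algoMC$ genuinely normalises under the refined constraints $C_1$—is already discharged by Prop.~\ref{prop:mc-soundness}.
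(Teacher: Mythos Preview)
Your reduction to Prop.~\ref{prop:mc-soundness} with $\Delta=\epsilon$ and $K=\emptyK$, and the handling of $\epsilon\equiv\Delta'$, are fine and match the paper. The gap is in your first step: you assert that $\algoSI(b,C)=(\sigma_1,C_1)$ is \emph{precisely} the result of $\algoMC(\mcconf{\epsilon}{b},C,\emptyK)$, but that is not the definition of $\algoSI$. After calling $\algoMC$, the algorithm performs a post-processing pass \textsf{choiceVarSubst} that replaces each $\psiint,\psiext$ by the concrete choice type bound to it, producing a further substitution $\sigma_2$ and a new constraint set $C_2=C_1\sigma_2$; the overall output is $(\sigma_2\sigma_1,C_2)$. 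So what Prop.~\ref{prop:mc-soundness} gives you is normalisation under $C_1$ from $b\sigma_1$, not under the $C$ and $\sigma$ that $\algoSI$ actually returns.

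The paper closes this gap by observing that $C_2=C_1\sigma_2$ immediately gives the refinement $\cj{C_2}{C_1\sigma_2}$, and then appealing to the substitution-preservation machinery (constraint refinement together with the Liskov-style transfer in Prop.~\ref{prop:liskov}) to carry $\stackopBase{\epsilon}{b\sigma_1}{C_1}{}$ over to $\stackopBase{\epsilon}{b\sigma_2\sigma_1}{C_2}{}$. You need to insert exactly this step; without it your argument proves soundness of the intermediate $\algoMC$ result, not of $\algoSI$'s actual output.
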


\begin{proof}
The definition of Algorithm $ \algoSI $ is:
\lstinputlisting[
    backgroundcolor=\color{white}, 
    escapeinside={(*}{*)},
    firstnumber=1,
    firstline=1,
    lastline=4]{codeS2}

By definition of \textsf{choiceVarSubst}, $
C_2 = C_1\sigma_2 $, which implies $ \cj{C_2}{C_1\sigma_2} $.
The theorem follows directly by soundness of $ \algoMC $, i.e. Prop.
\ref{prop:mc-soundness}, and by Lem. \ref{lem:sound-C-ext}, 
since $ \epsilon \equiv \epsilon $ holds. 
\end{proof}

\section{Completeness of Algorithm $\algoSI$ }
\label{sec:completeness}
\newcommand{\refines}[2]{ #1 \vdash #2}
\newcommand{\nfeq}[1]{=_{\mathsf{NF}(#1)}}

\begin{lemma}   \label{lem:compl-sub-sigma}
If $ \cj {C}{C'\sigma} $ and $ \forall \psi.
\coType{\sigma(\psi)}{\sigma'(\psi)}$, then $ \cj{C}{C'\sigma'}$.
\end{lemma}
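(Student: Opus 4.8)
The plan is to unfold the definition of constraint refinement (\cref{def:constr-ref}), which is a clause-wise condition on the constraints appearing in its \emph{second} argument, and to verify it for $C'\sigma'$ one constraint at a time. The key observation is that $C'\sigma$ and $C'\sigma'$ contain exactly the same underlying constraints of $C'$, differing only by which session substitution, $\sigma$ or $\sigma'$, has been applied. Hence it suffices to show, for each constraint $\kappa \in C'$, that whatever refinement clause $\cj{C}{C'\sigma}$ establishes for $\kappa\sigma$ continues to hold for $\kappa\sigma'$. I would organise this as an induction on the size of $C'$, peeling off one constraint at a time and appealing to the induction hypothesis for the remainder, in the same bookkeeping style as the companion substitution-invariance lemma (\cref{lem:sound-C-to-Csigma}).

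For the per-constraint step I would case-split on the shape of $\kappa$. Type and region constraints carry no session subterms (session types occur only inside $\pusho{l}{\eta}{}$ and inside the choice and channel constraints), so $\kappa\sigma = \kappa\sigma'$ and the refinement condition transfers verbatim from the hypothesis. For a choice constraint binding $\psiint$ to an internal choice session $\eta_0$, the hypothesis $\cj{C}{C'\sigma}$ supplies a witness session $\eta$ with $\subEql{\psiint}{\eta}$ in $C$ and $\coType{\eta}{\eta_0\sigma}$; I reuse the \emph{same} $\eta$, since the constraint $\subEql{\psiint}{\eta}$ lives in $C$ and is untouched. As the $C$-indexed subtyping $\coType{\cdot}{\cdot}$ is transitive (by \cref{def:func-subtyping}, and likewise for the Gay--Hole session ordering), the remaining goal $\coType{\eta}{\eta_0\sigma'}$ reduces to $\coType{\eta_0\sigma}{\eta_0\sigma'}$. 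The push clause is identical: the witness $\eta$ with $C \vdash \su{\pusho{l}{\eta}{}}{\beta}$ is reused, and $\coType{\eta}{\eta_0\sigma'}$ is again reduced by transitivity to $\coType{\eta_0\sigma}{\eta_0\sigma'}$.

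Everything therefore hinges on a single monotonicity sub-lemma: if $\coType{\sigma(\psi)}{\sigma'(\psi)}$ for every session variable $\psi$, then $\coType{\eta_0\sigma}{\eta_0\sigma'}$ for the session types $\eta_0$ occurring in $C'$. I expect this to be the main obstacle, because session subtyping has \emph{mixed} variance: it is covariant in continuations and in input payloads, but \emph{contravariant} in output payloads and in delegated sessions (cf.\ the \textsf{sub} clauses, where $\coType{\outp T_1.\eta_1}{\outp T_2.\eta_2}$ demands $\coType{T_2}{T_1}$). A naive congruence argument thus fails in contravariant positions. The resolution I would rely on is the invariant that the session variables produced and tracked by $\algoSI$ occur only in \emph{covariant} (tail/continuation) positions: the substitutions generated by $\algoMC$ always extend a session at its head, as in $\subst{\psi}{\outp\alpha.\psi'}$ or $\subst{\psi}{\tend}$, leaving a fresh tail variable, and never bind a free session variable into a contravariant delegated position. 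Under this invariant a direct structural induction on $\eta_0$ yields $\coType{\eta_0\sigma}{\eta_0\sigma'}$ from the pointwise hypothesis, the problematic delegation case contributing no free variable whose $\sigma$- and $\sigma'$-images differ.

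Finally I would assemble the pieces: transitivity of $\coType{\cdot}{\cdot}$ turns $\coType{\eta}{\eta_0\sigma}$ together with $\coType{\eta_0\sigma}{\eta_0\sigma'}$ into $\coType{\eta}{\eta_0\sigma'}$, discharging the choice and push clauses; the variable-free clauses transfer directly; and the induction over the size of $C'$ closes the argument, giving $\cj{C}{C'\sigma'}$. The only genuinely new ingredient beyond \cref{lem:sound-C-to-Csigma} is the covariance invariant needed to make substitution monotone with respect to $\coType{\cdot}{\cdot}$.
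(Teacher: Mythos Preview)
Your approach is essentially the paper's: the paper's entire proof is the single line ``By structural induction on constraints over $\beta$, $c$, $\psiint$ and $\psiext$,'' i.e.\ exactly the per-constraint case analysis you describe, and for each session-bearing constraint the natural move is to reuse the witness from $\cj{C}{C'\sigma}$ and close the gap by transitivity.

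Where you add value over the paper's sketch is in isolating the monotonicity sub-lemma $\coType{\eta_0\sigma}{\eta_0\sigma'}$ and flagging the mixed-variance issue in delegated payloads. The paper's one-line proof simply does not engage with this point. Your diagnosis is sound and your proposed resolution---that the relevant $\psi$'s sit only in covariant (tail) positions in the sessions recorded in $C'$---is the right kind of invariant; it is consistent with how $\algoMC$ manufactures substitutions of the shape $\subst{\psi}{\tend}$, $\subst{\psi}{{!}\alpha.\psi'}$, $\subst{\psi}{{?}\alpha.\psi'}$, leaving fresh continuation variables at the tail. One caution: this invariant is not part of the lemma's hypotheses, so if you want the lemma to stand as stated (for arbitrary $C'$), you either need to tighten the statement to the class of constraint sets produced by the algorithm, or argue that the delegated-payload positions in the $\psiint$/$\psiext$/push constraints of $C'$ are variable-free by construction. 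Either way, your plan is the correct elaboration of what the paper leaves implicit.
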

\begin{proof}
By structural induction on constraints over $ \beta $, $c $, $\psiint $ and $
\psiext$.
\end{proof}
\begin{lemma}[Completeness of $\mathsf{sub}$] \label{lem:compl-sub}
Let $ \eta $ and $ \eta' $ be well-formed in $ C$ and $ C_0\sigma$. 
If $ \coType{\eta_1\sigma}{\eta_2\sigma} $ and
$ \cj {C}{C_0\sigma}$, then $ \mathsf{sub}(\eta_1,
\eta_2, C_0) = (\sigma_1, C_1) $ terminates, and there exists a substitution 
$ \sigma' $
such that $ \forall \psi \in dom(\sigma).
\coType{\sigma(\psi)}{\sigma'(\sigma_1(\psi))} $ and $ \cj C {C_1\sigma'}$.
\end{lemma}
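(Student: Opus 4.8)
The plan is to prove the statement by induction following the case structure of $\mathsf{sub}$ (hence the recursion pattern whose termination is already established in Lem.\ \ref{lem:fin-sub}), performing inversion on the last rule of the subtyping derivation $\coType{\eta_1\sigma}{\eta_2\sigma}$ to pin down the shape of $\eta_2$ up to the instantiation of a trailing variable. Termination itself is not at issue: Lem.\ \ref{lem:fin-sub} guarantees $\mathsf{sub}(\eta_1,\eta_2,C_0)$ halts for well-formed inputs, so the work lies entirely in exhibiting the witness $\sigma'$ and discharging the two conclusions $\coType{\sigma(\psi)}{\sigma'(\sigma_1(\psi))}$ (for $\psi\in dom(\sigma)$) and $\cj{C}{C_1\sigma'}$. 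The argument is the completeness counterpart of the soundness proofs in Lem.\ \ref{lem:sound-sub} and Lem.\ \ref{lem:sound-C-sub-ext}, and proceeds along the same skeleton of cases.

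First I would dispatch the leaf and variable clauses. If $\eta_1=\tend$, subtyping forces $\eta_2\sigma=\tend$: either $\eta_2=\tend$, where $\mathsf{sub}$ returns $(\id,C_0)$ and $\sigma'=\sigma$ works, or $\eta_2=\psi$ with $\sigma(\psi)=\tend$, where $\mathsf{sub}$ returns $(\subst\psi\tend,\,C_0\subst\psi\tend)$ and again $\sigma'=\sigma$ discharges both obligations. For the variable clauses $\mathsf{sub}(\psi,\eta_2,C_0)$ and $\mathsf{sub}(\eta_1,\psi,C_0)$, the algorithm emits $\sigma_1=\subst\psi{\eta_2}$ (resp.\ $\subst\psi{\eta_1}$); taking $\sigma'=\sigma$ makes the subtyping obligation coincide with the premise, while the refinement obligation $\cj{C}{C_1\sigma'}$ follows from $\cj{C}{C_0\sigma}$ together with the head subtyping via Lem.\ \ref{lem:compl-sub-sigma}. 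For the prefix cases $\eta_1={!T_1}.\eta_1'$ and $\eta_1={?T_1}.\eta_1'$, I would invert subtyping to extract the head constraint ($T_2\subseteq T_1$ for output, $T_1\subseteq T_2$ for input) and $\coType{\eta_1'\sigma}{\eta_2'\sigma}$, apply the induction hypothesis to the tails to obtain $(\sigma_1,C_1,\sigma')$, and note that $\mathsf{sub}$ merely augments $C_1$ with the head constraint, which $\sigma'$ continues to refine.

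The delegation/resumption and the two choice clauses are the interesting ones, since they compose substitutions. For $\eta_1={!\eta_{d1}}.\eta_1'$ (and dually resume) $\mathsf{sub}$ first calls $\mathsf{sub}(\eta_{d2},\eta_{d1},C_0)$, then $\mathsf{sub}(\eta_1'\sigma_1,\eta_2'\sigma_1,C_1)$, returning $\sigma_2\sigma_1$. I would apply the IH to the first call using the contravariant premise $\coType{\eta_{d2}\sigma}{\eta_{d1}\sigma}$, obtaining $\sigma_1,C_1$ and a witness $\sigma_1'$; then re-establish the premise of the second call from $\coType{\eta_1'\sigma}{\eta_2'\sigma}$ and the pointwise factorisation $\coType{\sigma(\psi)}{\sigma_1'(\sigma_1(\psi))}$, apply the IH a second time, compose the witnesses, and propagate the refinement along the composite substitution by Lem.\ \ref{lem:compl-sub-sigma}. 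The internal- and external-choice clauses are handled by an inner induction on the label set driven by the helper $f$, mirroring exactly the size induction of Lem.\ \ref{lem:sound-C-sub-ext}: at each label the algorithm either registers a newly discovered branch in the $\psiint$/$\psiext$ constraint or recurses through $\mathsf{sub}$ on a shared branch, with the active/inactive bookkeeping of the external-choice case matching the corresponding side-condition of the subtyping relation, and the accumulated substitution and constraint set shown to refine $C$ by iterated use of Lem.\ \ref{lem:compl-sub-sigma}.

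The hard part will be precisely the compositional step in the delegation and choice clauses, namely deriving the premise of the next recursive $\mathsf{sub}$ call from the premise of the original. The obstacle is that substitution is \emph{not} monotone under session subtyping — input and delegated-session positions are contravariant — so I cannot simply push the factorisation $\coType{\sigma(\psi)}{\sigma_1'(\sigma_1(\psi))}$ through an arbitrary context to recover $\coType{(\eta_1'\sigma_1)\sigma_1'}{(\eta_2'\sigma_1)\sigma_1'}$. The resolution is to exploit that the conclusion is \emph{one-directional} (only $\coType{\sigma(\psi)}{\sigma'(\sigma_1(\psi))}$ is demanded) and that inference substitutions introduce only fresh variables in their range: this lets me retain $\sigma$ as the witness on the still-unconstrained variables while transporting \emph{refinement} — rather than raw subtyping — across $\sigma_1$ through Lem.\ \ref{lem:compl-sub-sigma}, which is exactly engineered to replace a substitution by a pointwise-larger one inside a $\cj{}{}$ judgement. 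Getting the fresh-variable bookkeeping and the direction of each subtyping obligation right in these composite cases is the crux; the remaining cases are routine once the leaf, prefix, and choice skeletons above are in place.
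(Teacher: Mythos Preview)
Your proposal is correct and follows essentially the same approach as the paper's proof: structural induction on $\eta_1$ (equivalently, on the recursion of $\mathsf{sub}$), with the variable clauses discharged by an explicit witness, the prefix clauses by the IH plus the added head constraint, delegation/resume by two chained IH applications, and the choice clauses by an inner induction over the helper $f$'s label set. Your treatment is in fact more careful than the paper's in two places: the paper constructs $\sigma'$ in the variable case by splitting $\sigma$ as $\subst{\psi_1}{\eta_1'}\cdot\sigma''$ and taking $\sigma'=\sigma''$ (your choice $\sigma'=\sigma$ works equally well), and where you explicitly flag the contravariance bookkeeping in the compositional delegation step, the paper simply appeals to ``transitivity of the session subtyping relation and of constraint refinement'' without further comment.
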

\begin{proof}
By structural induction on $ \eta_1 $. Let $ \eta_1' = \eta_1\sigma $
and $ \eta_2' = \eta_2\sigma$ be such that $ \coType {\eta_1'}{\eta_2'}$.

If $ \eta_1 $ is a variable $ \psi_1 $, then \textsf{sub} terminates at line
52 with $ \sigma_1 = \subst {\psi_1}{\eta_2} $ and $ C_1 = C_0\sigma_1 $,
regardless of the shape of $ \eta_2 $. 
Moreover $ \sigma $ can be decomposed into $ \sigma = \subst {\psi_1} {\eta_1'}
\cdot \sigma''$ for some $ \sigma'' $. The
lemma is proved by taking $ \sigma' = \sigma''$, because
$ \coType{\sigma(\psi)}{\sigma'(\sigma_1(\psi))} $ holds for all $
\psi \in dom(\sigma) $: for $\psi = \psi_1$ we have $ \sigma(\psi_1) = \eta_1'$,
which is a subtype of $ \sigma'(\sigma_1(\psi)) = \sigma'(\eta_2) = \eta_2'$ by 
hypothesis;
for any other $ \psi $, $ \sigma(\psi) $ and $ \sigma'(\psi
)$ are equal by
definition. Moreover $ \cj{C}{C_1\sigma'} $ follows by Lem. 
\ref{lem:compl-sub-sigma}.

If $ \eta_1 $ is a session of the form $ !T_1.\eta_1''$, then either $ \eta_2 =
\psi_2 $, or $ \eta_2 = !T_2.\eta_2'' $ with $ \coType{T_2}{T_1} $ holding by
definition of subtyping. The first case is proved similarly to the case for $
\eta_1 = \psi_1$, but using the clause at line 53. The second case is proved by
inductive hypothesis, since \textsf{sub} in this case only adds a new constraint 
$\singleSu{T_2}{T_1}$ to $ C_0 $ in the clause at line 5. 
The proof for the case $ \eta_1 = ?T_1.\eta'' $ is proved similarly at line 8,
considering that session sub-typing is covariant instead of contravariant.
When $ \eta_1 = !\eta_{1d}.\eta_1'' $ and $ \eta_2 \neq \psi_2$, the inner
clauses at line 13 terminates by inductive hypothesis, and the lemma is proved
similarly by transitivity of the session subtyping relation and of constraint 
refinement.
Similarly, when $ \eta_1 = ?\eta_{1r}.\eta_1''$ and $ \eta_2 \neq \psi_2 $, 
the clauses at line 16 terminate by inductive hypothesis and 
the proof is by inductive hypothesis and transitivity too.
When $ \eta_1 = \psiint_1$ and $ \eta_2 \neq \psi_2$, the lemma is proved by a
induction on the size of the internal choice indexes $ I_2 $,
where $ \eta_2 = \psiint_2 $ and $
\seql{\psiint_2}{\sichoicetext{i}{I_2}{\eta}}$ holds. Similarly,
when $ \eta_1 = \psiext_1$ and $ \eta_2 \neq \psi_2$, then $ \eta_2 = \psiext_2
$ and $ \seql{\psiext_2}{\sechoicetext{j}{J_1}{J_2}{\eta}}$ hold, and the proof
follows by induction on the size of $ J_1 \cup J_2 $ over the inner call to
function \textsf{f}.
\end{proof}

The completeness of Algorithm $ \algoSI $ relies on the completeness
of Algorithm $\algoMC $, stated as follows:
\begin{theorem}[Completeness of Algorithm $\algoMC $]
\label{prop:mc-completeness}
Let $ \mc \Delta {K[b]} $ be well-formed in $ C $ and in $ C_0\sigma $. If
$ \stackopBase
{ (\Delta}
{ K[b])\sigma }
{ C }
{ } $
and $ \cj { C } { C_0\sigma }$, then there exists $ \sigma' $ such that
$ \algoMC(\mcconf{\Delta}{b}, C_0)
= (\sigma_1, C_1) $ terminates,
$ \forall \psi \in dom(\sigma).
\coType{\sigma(\psi)}{\sigma'(\sigma_1(\psi))}
$ and
$ \cj {C} {C_1\sigma'} $
\end{theorem}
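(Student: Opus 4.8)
The plan is to prove the statement by the same lexicographic induction that underlies termination (Prop.~\ref{prop:mc-termination}) and soundness (Prop.~\ref{prop:mc-soundness}): I would induct on the pair $\big(\bsize{\mcconf{\Delta}{\groundify{K[b]}}},\ \bsize{\groundify b}\big)$ and perform the case analysis on the decomposition $K[b]$ exactly as $\algoMC$ does. Plain termination is already supplied by Prop.~\ref{prop:mc-termination}, so the only substantive content of ``$\algoMC$ terminates'' here is that it never falls into an error clause; this follows because each error clause of $\algoMC$ corresponds to an abstract configuration that is stuck and distinct from $\mcconf{\epsilon}{\tau}$, which the hypothesis $\stackopBase{(\Delta}{K[b])\sigma}{C}{}$ rules out. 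Hence in every case I would first read off, from the first transition taken by the normalizing concrete configuration $(\mcconf{\Delta}{K[b]})\sigma$ under $C$ (using the rules of Fig.~\ref{fig:abstract-interpr}), which clause of $\algoMC$ fires, then apply the induction hypothesis to the strictly smaller continuation, and finally compose the witness substitutions.

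For the pure-bookkeeping cases ($\tau$, $\tau$-sequencing, behaviour sequencing and behaviour choice, i.e.\ \RefTirName{Seq}/\RefTirName{Plus}) the recursive call is on a behaviour of strictly smaller size, strong normalization of $(\mcconf{\Delta}{K[b]})\sigma$ descends to each branch (for $\oplus$, because strong normalization requires \emph{all} paths to normalize), and the witness comes directly from the hypothesis; for the terminal clause I would use that $(\mcconf{\Delta}{\tau})\sigma$ normalizes, so every frame of $\Delta\sigma$ is a finished session, whence $\mathsf{finalize}\,\Delta$ succeeds and returns exactly the substitution forcing the residual $\psi$'s to $\tend$, which $\sigma$ must also do. The interesting cases are the stack-consuming ones. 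For a send behaviour $\popo{\rho}{!T}$ with a variable $\psi$ at the top of the stack, Rule \RefTirName{Out} forces $\sigma(\psi)=\,!T_0.\eta_0$ with $\coType{T\sigma}{T_0}$; I would run the induction hypothesis on the recursive call with the substitution extending $\sigma$ by $\alpha\mapsto T_0$ and $\psi'\mapsto\eta_0$ (the fresh variables introduced by $\algoMC$'s substitution $\subst{\psi}{!\alpha.\psi'}$), noting that the added constraint $\su{T}{\alpha}$ is satisfied by this extension precisely because $\coType{T\sigma}{T_0}$, and that $\cj{C}{C_0\sigma}$ therefore lifts to the refined input constraints. Composing $\subst{\psi}{!\alpha.\psi'}$ with the substitution returned by the recursive call, and setting $\sigma'(\alpha)=T_0$ on the fresh $\alpha$, makes $\coType{\sigma(\psi)}{\sigma'(\sigma_1(\psi))}$ hold by reflexivity on the head type and by the inductive subtyping on the tail, while $\cj{C}{C_1\sigma'}$ follows by transitivity of refinement together with Lem.~\ref{lem:sound-C-to-Csigma} and Lem.~\ref{lem:compl-sub-sigma}. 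The cases for receive, selection, resume and push are analogous, with the subtyping variance dictated by the relevant rule and the frame manipulation matching \RefTirName{In}, \RefTirName{ICh}, \RefTirName{Res} and \RefTirName{Push}; the check $\mathsf{checkFresh}$ succeeds because the normalizing reduction does push $l$, so $l\notin\Delta\sigma.\mathsf{labels}$, and finished-session frames are absorbed exactly as in the soundness proof, i.e.\ modulo the equivalence of Def.~\ref{def:mc-term-delta-eq}.

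The genuinely delicate cases, which I expect to be the main obstacle, are those where $\algoMC$ invokes session-subtyping inference or rewrites choice constraints, and where refinement must be transported through a transition. For delegation I would appeal to Lem.~\ref{lem:compl-sub}: Rule \RefTirName{Del} gives $\coType{\eta_d'\sigma}{\eta_d\sigma}$, so $\mathsf{sub}$ terminates and yields a \emph{most general} subtyping witness, which I then compose with the continuation witness via Lem.~\ref{lem:compl-sub-sigma}. For internal and external choice I would match $\algoMC$'s lazy expansion of the $\psiint$/$\psiext$ constraints against the concrete choice forced by $\sigma$: the selected label (resp.\ the active/inactive split $I_1\subseteq J\subseteq I_1\cup I_2$) determines whether a new branch must be added or a label moved between active and inactive, and in each sub-clause the newly installed choice constraint is a subtype of the previous one, so $\cj{\cdot}{\cdot}$-refinement is preserved and the concrete choice remains a refinement of the inferred one. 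The crux is then to carry strong normalization of the continuation across this refinement of $C_0$; this is exactly what Prop.~\ref{prop:liskov} (Liskov's substitution principle) provides, letting me replay the concrete continuation's reductions over the refined constraints and stacks so that the induction hypothesis applies. The recursive-behaviour clause follows the same template but additionally performs the constraint surgery $C_0'=(C_0\removeRHS\beta)\cup(\su{\tau}{\beta})$ mirroring Rule \RefTirName{Rec}; there I would apply the induction hypothesis once to the self-contained body $\mcconf{\epsilon}{b}$ under $C_0'$, restore the $\beta$-constraint, propagate the inferred substitution, and recurse on the continuation, while the $\beta$-lookup (\RefTirName{Beta}) and spawn cases reduce to the internal-choice and recursion templates respectively. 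Throughout, the principal difficulty is the careful composition and freshness management of the witnesses so that a \emph{single} $\sigma'$ simultaneously validates the subtyping conditions for all $\psi\in dom(\sigma)$ and the global refinement $\cj{C}{C_1\sigma'}$.
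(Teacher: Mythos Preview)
Your proposal is essentially correct and follows the same approach as the paper: a lexicographic induction combined with a case analysis driven by the first abstract-interpretation transition of the substituted configuration, using Lem.~\ref{lem:compl-sub} for delegation, the refinement lemmas (Lem.~\ref{lem:sound-C-to-Csigma}, Lem.~\ref{lem:compl-sub-sigma}) to thread $\cj{C}{C_1\sigma'}$, and subtyping of choice constraints for the $\psiint/\psiext$ updates. Two minor remarks: the paper inducts on the \emph{execution size} of the concrete configuration rather than on $\bsize{\mcconf{\Delta}{\groundify{K[b]}}}$, but both measures work; and your appeal to Prop.~\ref{prop:liskov} is slightly misplaced---it is really needed in the $\oplus$ case (to justify that after the first recursive call the second branch still normalises under the more general $\sigma'\sigma_1$, which the paper handles by observing the algorithm only produces supertypes), whereas the choice-constraint updates only require that the refined $C_0'$ is still dominated by $C$, which follows directly from subtyping of choices without replaying any reductions.
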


\begin{proof}

The lemma is proved by induction of the lexicographic order between the
execution size $ \execSize{\Delta\sigma_1}{K[b]\sigma_1}{C_1} $ and length of a
behavior $ b $ (i.e. structural induction).


{\hskip\labelsep \color{darkgray}\sffamily\bfseries {Base case.}}
The base case is when the execution size is equal to 1, that is, when
$ \execStates {\Delta}{K[b]\sigma} C = \{ \mc \Delta {K[b]\sigma}\} $, and
$ \mc \Delta {K[b]\sigma} \not\rightarrow_{C} $. Since  $ \mc \Delta
{K[b]\sigma} $ is strongly normalizing by hypothesis, then $ \Delta = \epsilon
$ and $ K[b] = \tau $, which implies that $ K = \emptyK $ and $ b = \tau $.
In such a case, the clause $ \algoMC(\mc \epsilon \tau, C_0, \emptyK) = (\id,
C_0) $ at line \ref{code:MC-delta-tau} terminates trivially.
The lemma is proved by taking $ \sigma' = \sigma $, because
$ \sigma'(\id(\psi)) = \sigma'(\psi) = \sigma(\psi) $ for any $ \psi $ and
therefore
$\coType{\sigma(\psi)}{\sigma'(\sigma_1(\psi))}$ holds by reflexivity, 
and because $ \cj C {C_0\id = C_0} $ holds by hypothesis.

{\hskip\labelsep \color{darkgray}\sffamily\bfseries {Inductive case. }}
In the inductive case, the execution size $ \execSize {\Delta\sigma}{K[b]\sigma}
C $ is greater than 1, and therefore there exists a configuration
$ \mc {\Delta'}{b'} $ such that $ \dstep {\Delta\sigma}{K[b]\sigma}{\Delta'}{b'}
$ holds. 
It is easy to show that $ \dstep {\Delta\sigma}{K[b]\sigma}{\Delta'}{b'}
$ holds if and only if $ b $ is a $ \kpop, \kpush $
We proceed by rule induction:
\begin{itemize}
  \item [\RefTirName{End}:]
  Suppose that Rule \RefTirName{End} has been used:
  \begin{align*}
\dstep {\stBase l \tend {\Delta\sigma}} {K[b]\sigma}
{\Delta\sigma}{K[b]\sigma}
\end{align*}
There are two cases to consider: either $ \Delta = \stBase l {\tend}
{\Delta'} $, or $ \Delta = \stBase l {\psi} {\Delta'}
$ and $ \psi\sigma = \tend $.
In the first case, the clause $ \mcclause {\stBase l \tend {\Delta}}
b {C_0} K = \mcclause \Delta b {C_0} K $
and the lemma is proved by the inductive hypothesis. In the second case,
we need to show that $ \mcclause {\stBase l \psi {\Delta}} b {C_0} K $
terminates; this can be proved by structural induction on $ K[b] $.
If $ b = \tau $ and $ K = \emptyK $, then $ \algoMC $ terminates by applying $
\mathsf{finalize} $ on $ \Delta $, and the proposition is proved as in the
base case. If $ b = \tau $ and $ K \neq \emptyK $, then the proposition is
proved by inductive hypothesis on the clause at line \ref{code:MC-delta-ktau}.

If $ b = \pusho{l}{\eta'}{} $, then $ \mathsf{checkFresh} $ closes $ \psi $
in the clause at line \ref{code:MC-push}, because a well-formed stack cannot
push the same label $ l $ twice. In all other cases either the proposition
follows by induction, or $ K[b] $ must have the form of a $ \kpop $
operation. In the latter case, $ l $ cannot be contained in region $ \rho $
of the $ \kpop $ operation by well-formedness of the stack, and therefore the
only applicable clause in the last one at line \ref{code:MC-closetop}, that
calls \textsf{closeTop}.

\item[\RefTirName{Push}:] Suppose that the following transition is taken:
\begin{align*}
\dstep {\Delta\sigma}  {(K[\pusho{l}{\eta}{}])\sigma }
{\stBase {l}{\eta\sigma}{\Delta\sigma}}   {K[\tau]}
&& \text{if } l \freshfrom \Delta
\end{align*}

We first need to show that $ \mcclause \Delta {\pusho{l}{\eta}{}} {C_0}
{K} $ terminates.
Suppose that the frames with closed sessions $ \tend $ are removed
from the top of the stack, as in the case for \RefTirName{End}.
The first clause that matches $ \mcclause \Delta {\pusho{l}{\eta}{}} {C_0}
{K} $ is the one at line \ref{code:MC-push}.
By hypothesis $ l \freshfrom \Delta $ holds, therefore the inner call to
\textsf{checkFresh} can only return the identity substitution.
Therefore $ \sigma_1 = \id $ and $ \Delta_1 = \Delta$ hold, and there exists a
substitution $ \sigma '' $ such that the starting substitution $ \sigma $ can
be split in the composition of $ \sigma_1 $ and $ \sigma'' $, i.e.
$\sigma = \sigma''\sigma_1 = \sigma''$.
The lemma is then proved by inductive hypothesis on the inner call to $
\algoMC $ with the smaller configuration $ \mc {\stBase
{l}{(\eta\sigma_1)\sigma''}{(\Delta_1\sigma_1)\sigma''}} {\tau}
= \mc {\stBase{l}{\eta\sigma}{\Delta\sigma}} {\tau} $.
%

\item[\RefTirName{Out}:]
Let $ (\mc{\Delta}{K[b]})\sigma = \mc {\stBase l {!T.\eta\sigma}
{\Delta'\sigma}} {\popo{\rho}{!T'} } $, and suppose that the following
transition is taken:
\begin{align*}
\dstep {\stBase l {!T.\eta\sigma} {\Delta'\sigma}} {K[\popo{\rho}{!T'}]}
{\stBase l {\eta\sigma}{\Delta'\sigma}} {K[\tau]}
&& \text{if } \nbox{ C\vdash\seql{\rho}{l},~ \mathit{pure}(T'),~
\subt{T'}{T}}
\end{align*}

We must consider two cases: $ \Delta = \stBase l {\psi} {\Delta'} $ and $
\sigma(\psi) = !T.\eta $, or $ \stBase l {!T.\eta}{\Delta'} $.

In the first case the call to
$ \mcclause {\stBase l {\psi}{\Delta'}} {\popo{\rho}{!T'}} {C_0} K $ is
matched by the clause at line \ref{code:MC-send-psi}, and it produces the
substitution $\sigma_1 = \subst{\psi}{!\alpha.\psi'} $ and a new constraint $
\su{T'}{\alpha}$ by definition, where $ \psi' $ and $ \alpha' $ are fresh
variables.
By hypothesis $ \sigma $ has the form $ \subst \psi
{!T.\eta'} $, and it can be decomposed into $ \sigma'' = \subst \psi
{!T.\psi'} $ and $ \sigma'''= \subst {\psi'} {\eta' } $.
The lemma is proved by applying the inductive hypothesis on the
inner call to $ \algoMC $ and by taking $ \sigma' = \sigma'''\subst{\alpha} T$,
since $ \sigma'' $ can be further decomposed into $ \sigma_1$ and $\subst \alpha
T $ and therefore the substitution $ \subst {\alpha} T $ guarantees that
$\sigma'\sigma_1  = \sigma $; therefore $ \forall \psi\in dom(\sigma).
\coType{\sigma(\psi)}{\sigma'(\sigma_1(\psi))}$ follows by reflexivity.
Notice that $ \cj{C}{C_1\sigma'} $ holds too by definition of constraints
refinement, because the new constraint $ \su{T'}{\alpha}\sigma $ becomes $
\su{T'}{T} $ when $ \sigma' = \sigma'''\subst{\alpha}{T} $ is applied to
it, and because $ \coType{T'}{T} $ holds by hypothesis from the side-conditions
on Rule \RefTirName{Out}.

If $ \Delta = \stBase l {!T.\eta}{\Delta'} $, then the
call $ \mcclause \Delta {\popo{\rho}{!T'}} {C_0} K $
is matched by the clause at line \ref{code:MC-send-eta}. The lemma is proved
straightforwardly by taking $ \sigma' = \sigma $, since Algorithm $ \algoMC$
returns the identity substitution in this case.

\item[\RefTirName{In}:] This case is proved similarly to [\RefTirName{Out}],
using the clauses at lines \ref{code:MC-recv-psi} and \ref{code:MC-recv-eta},
recalling that session sub-typing is covariant instead of contravariant for
inputs.

\item[\RefTirName{Del}:] This case is proved similarly to [\RefTirName{Out}],
using the clauses at lines \ref{code:MC-del-psi} and \ref{code:MC-del-eta},
together with Lem.
\ref{lem:compl-sub} in the latter case. 
Notice that $ \seql{l_d}{\rho_d}$ holds by hypothesis, and
therefore the clause at line \ref{code:MC-del-checkFresh} is not applicable.

\item[\RefTirName{Res}:] This case is proved similarly to [\RefTirName{Out}],
using the clauses at lines \ref{code:MC-res-psi} and \ref{code:MC-res-eta}.
Notice that the clause at line \ref{code:MC-res-checkFresh} cannot be called,
because the labels in the behavior $ b = \popo{\rho}{?l_r} $ match the label in
the stack $ \Delta $ by hypothesis.

%


\item[\RefTirName{ICh}:] 
Let $ (\mc{\Delta}{K[b]})\sigma = 
\mc {\stBase l {\psiint}
{\Delta'\sigma}} {K\sigma[\popo{\rho}{!L_j}] } $, and suppose that the following
transition is taken:
\begin{align*}
\dstep {\stBase {l}{\sichoice{i}{I}{\eta}}{\Delta'}}
            {K\sigma[\popo{\rho}{!L_k}] 
            & }
            {\stBase {l}{\eta_k}{\Delta'} }    {K\sigma[\tau]}
            & \text{if } (j \in I),~ C\vdash\seql{\rho}{l}
\end{align*}

There are two cases to consider: either 
$ \Delta = \stBase l \psi {\Delta'} $, or $ \Delta = \stBase l \psiint
{\Delta'} $.

If $ \Delta = \stBase l \psi {\Delta'} $, then the only clause that
matches the call to $ \mcclause{\Delta}{b}{K}{C_0}$ is the one at line \ref{code:MC-ich-psi}.
Since $ \psi\sigma = \psiint $, substitution $ \sigma $ can be decomposed as $
\sigma = \sigma'\subst{\psi}{\psiint} $. Moreover $
C \vdash \seql{\psiint}{\sichoicetext{i}{J}{\eta}} $ and $ k \in J $ hold by
the side conditions of Rule \RefTirName{ICh}.
Algorithm $ \algoMC $ first produces the substitution $ \sigma_1 = \subst \psi
\psiint $, and introduces a new constraint $
\seql{\psiint}{\sichoicetext{i}{\{k\}}{\psi}}$ in $ C_0 $, where $ \psi_k$ is a
fresh variable. Let $ \sigma'' =
\sigma'\subst{\psi_k}{\eta_k}\sigma_1$.
Since $ \mc{\stBase l {\psi} {\Delta'}}{K\sigma''[\popo{\rho}{L_k}]}
\Downarrow_C$ holds by hypothesis, and since $ (\mc{\stBase l {\psi_k}
{\Delta'}}{K[\tau]}) = \mc {\stBase l
{\eta_k}{\Delta'\sigma'\sigma_1}}{K\sigma'\sigma_1[\tau]}
$ holds because $ \psi_k$ is a fresh variable, then 
$ \mc {\stBase l {\eta_k}{\Delta'\sigma'\sigma_1}}{K\sigma'\sigma_1[\tau]}
\Downarrow_C$ holds too. 
Moreover, since $ \cj C {C_0\sigma} $ holds by hypothesis and 
$ C \vdash \seql{\psiint}{\sichoicetext{i}{J}{\eta}}$ where $ J $ contains $ k $
both hold by the side condition of Rule \RefTirName{ICh},
then $ \cj C {C_0\sigma'' \cup \singleSu{\psiint}{\psi_k\sigma''}} $ holds too,
because by definition of constraint refinement and of session sub-typing 
$ \coType{\sichoicetext{j}{J}{\eta}}{\sichoicetext{i}{\{k\}}{\eta}}$.
Therefore the inductive hypothesis applies to the inner call of 
$ \mcclause{\stBase l {\psi_k} {\Delta'}}{\tau}{K}{C_0 \cup
\singleSu{\psiint}{\psi_k}} $, and which also proves the theorem, since $
\psi_k $ is not in the domain of $ \sigma$.

In the case that $ \Delta = \stBase l \psiint {\Delta'} $, then $ C_0 $ must
contain a constraint $ \seql{\psiint}{\sichoicetext{i}{I}{\eta}} $ by definition
of constraint well-formedness. There are two sub-cases to consider: either $ k $
is in $ I $, or it is not. In the former case the theorem is proved
straightforwardly by inductive hypothesis on the inner clause at line
\ref{code:MC-ich-psiint}. If $ k $ is not in $ I $, then the only applicable
clause of Algorithm $ \algoMC $ is the one at line \ref{code:MC-ich-psiint-not},
which adds $ k $ to $ \psiint $, and the theorem follows by inductive hypothesis
on the inner call to $ \algoMC $ with the extended internal choice.

\item[\RefTirName{ECh}:] Suppose that the following transition is taken:
\begin{align*}
\dstep {\St {l}{\sechoice{i}{I_1}{I_2}{\eta}}}
           {\bechoice{j\in J}{\rho}{L_j}{b_j} 
           &}
            {\St {l}{\eta_k} }{b_k}
            &\text{if }\nbox{
             k\in J,~ C\vdash\seql{\rho}{l},~
             \\I_1\subseteq J \subseteq I_1 \cup I_2
            } 
\end{align*}
The proof of this case is similar to the proof for Rule \RefTirName{ICh}: if 
$ \Delta = \stBase l \psi {\Delta'} $, then Algorithm $ \algoMC $ terminates by
applying the clause at line \ref{code:MC-psiext-psi}, which first creates a new
constraint on $ \psiext $, and then recursively calls the clause at line
\ref{code:MC-psiext}, which is proved by inductive
hypothesis and transitivity of session sub-typing. 
If $ \Delta = \stBase l {\sechoice{i}{I_1}{I_2}{\eta}} {\Delta'} $, then either
the clause at line \ref{code:MC-psiext-not} is called, in case either some
labels in $ J $ from the behavior are missing from $ I_1 $ or $ I_2 $, or in
case the active labels in $ I_1 $ are more than the labels in $ J$; or the
clause at line \ref{code:MC-psiext} is called, in case the session type
designated by $ \psiext$ contains all the labels $ J$in the behavior, and the
active labels $ I_1 $ are included in $ J $.
In the former case the constraint in the session type are adjusted appropriately
and the theorem is proved as in the first case, because the clause at line
\ref{code:MC-psiext} is the only clause that matches the adjusted session type.
In the latter case the proof is by inductive hypothesis and transitivity as in
the first case.

\ref{code:MC-psiext-not}
\ref{code:MC-psiext}

\item[\RefTirName{Spn}:] Suppose that the following transition is taken:
\begin{align*}
\dstep {\Delta} { K[\espawn{b}] &} {\Delta} {K[\tau]}
          & \text{if } \stackop {} {b} {\epsilon}{}
\end{align*}

Assume that the frames with closed sessions $ \tend $ are removed from the top
of the stack, as in the case for \RefTirName{End}, resulting in a stack $
\Delta'$. 
The clause $ \mcclause{\Delta'}{\espawn{b}}{K}{C_0}$ is matched at line
\ref{code:MC-spawn}. Since $ \mc \epsilon b $ and $ \mc {\Delta'}{K[\tau]} $ are
smaller configurations than $ \mc {\Delta'}{\espawn{b}}$, the theorem follows by
inductive hypothesis and transitivity of session sub-typing.

\item[\RefTirName{Rec}:] This case is proved similarly to the case for Rule
\RefTirName{Spn}, with the exception that the clause at line \ref{code:MC-rec}
is called, and that the environment $ C_0 $ is properly manipulated to swap the
recursive constraint $ \su{\orec{\beta}b}{\beta}$ with $ \su{\tau}{\beta}$.


\item[\RefTirName{Plus}:] Suppose that the following transition is taken:
\begin{align*}
\dstep {\Delta} {K[b_1 \oplus b_2]} {\Delta}{K[b_i]}
&&\text{if } i \in \{1, 2\}
\end{align*}

Assume that the frames with closed sessions $ \tend $ are removed from the top
of the stack, as in the case for \RefTirName{End}.
The only applicable clause in this case is the one at line \ref{code:MC-plus}.
The inductive hypothesis can be applied directly on the first inner call to $
\algoMC $, which implies that $ \sigma $ can be refined into $ \sigma'\sigma_1$.
By inductive hypothesis, $ \coType{\sigma(\psi)}{\sigma'(\sigma_1(\psi))}$ holds
for all $ \psi $ in the domain of $ \sigma $. Notice that, by construction of
algorithms \textsf{sub} and $ \algoMC $, $ \sigma'\sigma_1$ produces super types
only in the case of delegation, therefore it is easy to show that 
$ (\mc{\Delta}{K[b_1 \oplus b_2})\sigma'\sigma_1 \Downarrow_C$ holds as well.
Therefore $ (\mc{\Delta}{K[b_2})\sigma'\sigma_1 \Downarrow_C$ holds too, and the
theorem is proved by the inductive hypothesis.

\item[\RefTirName{Beta}:] Suppose that the following transition is taken:
\begin{align*}
\dstep {\Delta\sigma} {\beta\sigma}
{\Delta\sigma} {b}
&&         \text{if } \sub{b}{\beta}
\end{align*}

Since $ \mc \Delta \beta\sigma $ is
well-formed in both $ C $ and $ C_0\sigma$, variable $ \beta $
is constrained not only in $ C $ (by assumption $\sub{b}{\beta}$ holds), but it
is also constrained to some $ b' $ in $ C_0\sigma $, i.e. $
\subBase{C_0\sigma}{b'}{\beta\sigma}$. By definition of constraint refinement,
$ \cj C {C_0\sigma} $ implies that  $ b' = b\sigma $, because there is no
sub-typing relation defined for behaviors, therefore $ b\sigma $ and $ b'$ must
be equal. The lemma is then proved directly by inductive hypothesis.

Assume that the frames with closed sessions $ \tend $ are removed from the top
of the stack, as in the case for \RefTirName{End}.
If the top of the stack in $ \Delta $ is not $ \tend$, the only applicable
clause is the one at line \ref{code:MC-beta}, whereby
$ \mcclause \Delta \beta {C_0} K = \mcclause \Delta {b_0} {C_0} K $
with $ b_0 = \bigoplus \{b_i \where \exists i. ~(b_i \subseteq \beta) \in
C\}$. By definition of $ \algoMC $, the only clause that matches the inner
call $ \mcclause \Delta {b_0} {C_0} K $ is the clause for internal choice at
line \ref{code:MC-plus}. Since the execution size of $ \mc {\Delta\sigma}
{\beta\sigma} $ is greater than the execution size of each $ \mc
{\Delta\sigma} {b_i\sigma} $ component, the proof follows by inductive
hypothesis, in the same way as for Rule \RefTirName{Plus}.

\item[\RefTirName{Seq}:] Suppose that the following transition is taken:
\begin{align*}
\dstep {\Delta}          {K[b_1;b_2]
&} {\Delta'}{K[b_1';b_2]}
& \text{if } \dstep {\Delta} {b_1}{\Delta'}{b_1'}
\end{align*}

Assume that all closed $\tend $ sessions are removed from $ \Delta $, resulting
in $ \Delta' $, as for the case \RefTirName{End}. 
The clause $ \mcclause{\Delta'}{b_1;b_2}{K}{C_0}$ becomes
$ \mcclause{\Delta'}{b_1}{b_2 \cdot K}{C_0}$ at line \ref{code:MC-seq}, and the
lemma is proved by inductive hypothesis, because the execution size of 
$\mc {\Delta'}{K[b_1;b_2]} $ is equal to the execution size of 
$\mc {\Delta'}{b_2\cdot K[b_1]} $, but the size of $ b_1 $ is smaller than the
size of $ b_1;b_2$.

\item[\RefTirName{Tau}:] 
Suppose that the following transition is taken:
\begin{align*}
\dstep {\Delta}          {K[\tau;b]
&} {\Delta}{K[b]}
&\end{align*}

Assuming that all closed session are removed from $\Delta$ as in the case for 
\RefTirName{End} and that a stack $ \Delta'$ is returned, the clause at line
\ref{code:MC-seq} is called first, whereby $ b_2 $ is pushed on the stack $ K $. Then the clause 
at line \ref{code:MC-delta-ktau} is called recursively, whereby the $ \tau $
behavior is discarded and $ b_2 $ is popped back from the stack. The lemma 
follows by inductive hypothesis on the smaller configuration $ \mc {\Delta'}
{K[b_2]}$.
\end{itemize}
\end{proof}

Completeness for session type inference can be stated as follows:
\begin{theorem}[Session type inference completeness]
Let $ \mc \epsilon {b\sigma^\star} $ and $ C^\star $ be well-formed.
If $ \stackopStatBase{\epsilon}{b\sigma^\star}{C^\star}$ and $
\refines{C^\star}{ C} $, then $ \algoSI(b, C) = (\sigma_1, C_1) $ and there
exists $\sigma $ such that $ \refines{C^\star}{C_1\sigma} $ and $
\forall \psi \in dom(\sigma^\star).
\sub{\sigma^\star(\psi)}{\sigma\sigma_1(\psi)}
$.
\end{theorem}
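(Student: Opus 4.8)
The plan is to obtain the top‑level completeness of $\algoSI$ as a thin corollary of the completeness of the core interpreter $\algoMC$ (Thm.~\ref{prop:mc-completeness}), exactly mirroring the way the soundness of $\algoSI$ was derived from Prop.~\ref{prop:mc-soundness}. Recall the structure of the algorithm: $\algoSI(b,C)$ first invokes $\algoMC(\mc{\epsilon}{b}, C, \emptyK)$ to obtain a pair $(\sigma_1', C_1')$, and then applies \textsf{choiceVarSubst}, a substitution $\sigma_2$ that replaces each choice placeholder $\psiint,\psiext$ by the concrete internal/external choice session recorded for it in $C_1'$, yielding the final output with $\sigma_1 = \sigma_2\sigma_1'$ and $C_1 = C_1'\sigma_2$. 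So the task is to show that the principal solution delivered by $\algoMC$ survives this last renaming step.

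First I would instantiate Thm.~\ref{prop:mc-completeness} with $\Delta := \epsilon$, $K := \emptyK$, input $C_0 := C$, the ``ground truth'' environment $:= C^\star$, and the witness $:= \sigma^\star$. Its premises are discharged as follows: $\mc{\epsilon}{b}$ is well-formed in $C$ (by the guarantees inherited from $\algW$) and in $C\sigma^\star$, since $\sigma^\star$ is a session substitution and only instantiates fresh $\psi$ variables already bound in $C$; the normalization premise $\stackopBase{\epsilon}{b\sigma^\star}{C^\star}{}$ coincides with the hypothesis $\stackopStatBase{\epsilon}{b\sigma^\star}{C^\star}$ (using $\epsilon\sigma^\star=\epsilon$); and $\cj{C^\star}{C\sigma^\star}$ follows from the assumed $\refines{C^\star}{C}$ because $\sigma^\star$ merely realizes in $C^\star$ the channel session variables that $C$ leaves fresh. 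The conclusion then yields a substitution $\sigma'$ such that $\algoMC(\mc{\epsilon}{b}, C, \emptyK)$ terminates with $(\sigma_1', C_1')$, satisfies $\cj{C^\star}{C_1'\sigma'}$, and gives $\coType{\sigma^\star(\psi)}{\sigma'(\sigma_1'(\psi))}$ for all $\psi\in dom(\sigma^\star)$. Termination is already available from Prop.~\ref{prop:mc-termination}.

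Next I would thread these three facts through \textsf{choiceVarSubst}. Since $\sigma_2$ substitutes only the fresh choice variables introduced during inference and $C_1 = C_1'\sigma_2$, I would define the reconciling witness $\sigma$ of the theorem by extending $\sigma'$ so that it inverts $\sigma_2$ on the choice placeholders (these are simply renamed into the explicit $\Sigma/\&$ constructors they already denote in $C_1'$, so the extension is immaterial for both refinement and subtyping). Applying the substitution-invariance of refinement, Lem.~\ref{lem:sound-C-to-Csigma}, to $\cj{C^\star}{C_1'\sigma'}$ delivers $\cj{C^\star}{C_1\sigma}=\refines{C^\star}{C_1\sigma}$; and since $\sigma_1 = \sigma_2\sigma_1'$, the per-variable inequality $\coType{\sigma^\star(\psi)}{\sigma'(\sigma_1'(\psi))}$ transports to $\sub{\sigma^\star(\psi)}{\sigma\sigma_1(\psi)}$ for every $\psi\in dom(\sigma^\star)$, which is exactly the stated conclusion.

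The hard part will be the substitution bookkeeping around the choice placeholders. I must verify that passing from $C_1'$ to $C_1'\sigma_2$ preserves the \emph{principality} guarantee of $\algoMC$: concretely, that expanding $\psiint/\psiext$ into their recorded choices keeps both the refinement $\cj{C^\star}{\cdot}$ and the subtyping \emph{direction} intact. This rests on the fact that $\algoMC$ already computed the \emph{largest} (least‑committed) choice session, so that any admissible $\sigma^\star$ instantiates to a subtype; making this precise requires lining up the subtyping order on internal/external choices (active vs.\ inactive labels, as in Def.~\ref{def:constr-ref}) with the expansions performed inside \textsf{choiceVarSubst}. I anticipate that Lem.~\ref{lem:compl-sub-sigma} and Lem.~\ref{lem:compl-sub} do the real work here, and that the only genuine care needed is ensuring the extended $\sigma$ does not clash with the fresh variables in $dom(\sigma^\star)$.
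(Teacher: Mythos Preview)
Your proposal is correct and follows essentially the same approach as the paper: derive $\algoSI$ completeness as a direct corollary of $\algoMC$ completeness (Thm.~\ref{prop:mc-completeness}) instantiated at $\Delta=\epsilon$, $K=\emptyK$, $C_0=C$, and then argue that the final \textsf{choiceVarSubst} step preserves the refinement and subtyping guarantees. Your treatment of the choice-placeholder bookkeeping is in fact more careful than the paper's own proof, which dispatches that step in a single sentence (``does not change the typing of sessions'').
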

\begin{proof}
The proof follows by applying Proposition \ref{prop:mc-completeness}  on the
configuration $ \mc \epsilon b $ under $ C $ first.
Algorithm $\algoMC $ returns $ \sigma_1 $ and $ C_1 $ such that there exists a $
\sigma'$ such that $ \sub{\sigma^\star(\psi)}{\sigma'\sigma_1(\psi)} $ for any 
$\psi$ in the domain of $ \sigma^\star $, and $ \cj{C}{C_1\sigma'}$.
Since the call to \textsf{choiceSubst} simply substitutes $ \psiint $ and $
\psiext $ variables with their (unique) relative internal and external
choices in $ C_1 $, the new substitution $ \sigma_2 $ and $ C_2 $ that this
function returns does not change the typing of sessions, and therefore
$ \forall \psi \in dom(\sigma). \coType{\sigma(\psi)}{\sigma'
(\sigma_2(\sigma_1(\psi)))} $ holds, and 
$ \cj{C}{C_2\sigma'} $ follows by $ \cj{C}{C_1\sigma'}$, which proves the
lemma.
\end{proof}

%
%

%
%

\end{document}